\documentclass[longbibliography]{amsart}
\usepackage[foot]{amsaddr}
\usepackage[unicode=true,pdfusetitle, bookmarks=true,bookmarksnumbered=false,bookmarksopen=false, breaklinks=false,pdfborder={0 0 0},backref=false,colorlinks=false]{hyperref}
\hypersetup{colorlinks,linkcolor=myurlcolor,citecolor=myurlcolor,urlcolor=myurlcolor}
\usepackage{graphics,epstopdf,graphicx,amsthm,amsmath,amssymb,mathptmx,braket,colortbl,color,bm,framed,mathrsfs}
\usepackage[up]{subfigure}
\usepackage{tikz}
\usepackage{multirow}
\usepackage{enumerate}
\usepackage{float}
\usepackage{cite}
\usepackage{fontawesome}
\usepackage{cleveref}
\usepackage{orcidlink}

\usepackage{arydshln}

\usepackage{tabularx}
\usepackage{stmaryrd}

\usepackage{booktabs} % Required for better horizontal rules in tables
\let\hbar\undefined
\usepackage{fouriernc} % Use the New Century Schoolbook font

\usepackage{algorithm}
\usepackage{algpseudocode}

\definecolor{myurlcolor}{rgb}{0,0,0.9}

\newcommand{\proj}[1]{| #1\rangle\!\langle #1 |}

\newcommand{\iinner}[2]{\langle #1 | #2\rangle}

\DeclareMathOperator{\trace}{Tr}
\DeclareMathOperator*{\Expect}{\mathop{\mathbb{E}}}
\newcommand{\Ptr}[2]{\trace_{#1}\Pa{#2}}
\newcommand{\Tr}[1]{\Ptr{}{#1}}

\newcommand{\Pa}[1]{\left[#1\right]}

\newcommand{\norm}[1]{\left\lVert #1 \right\rVert}

\theoremstyle{plain}
\newtheorem{thm}{Theorem}
\newtheorem{lem}[thm]{Lemma}
\newtheorem{prop}[thm]{Proposition}
\newtheorem{cor}[thm]{Corollary}

\newtheorem{Def}[thm]{Definition}
\newtheorem{Rem}[thm]{Remark}

\newtheorem{Examp}[thm]{Example}

\newcommand*{\myproofname}{Proof}
\newenvironment{mproof}[1][\myproofname]{\begin{proof}[#1]}{\end{proof}}
\def\ot{\otimes}

\def\R{\mathbb{R}}
\def\Z{\mathbb{Z}}

\newcommand{\CDD}{\mathcal D}

\newcommand{\CEE}{\mathcal E}

\newcommand{\BFF}{\mathbb F}
\newcommand{\CFF}{\mathcal F}

\newcommand{\CNN}{\mathcal N}

\newcommand{\Bvv}{\mathbf{v}}

\newcommand{\Bxx}{\mathbf{x}}

\newcommand{\Byy}{\mathbf{y}}

\newcommand{\bmu}{{\boldsymbol{\mu}}}

\newcommand{\be}{\begin{equation}}
\newcommand{\ee}{\end{equation}}

\renewcommand{\ge}{\geqslant}
\renewcommand{\geq}{\geqslant}
\renewcommand{\leq}{\leqslant}
\renewcommand{\le}{\leqslant}

\newcommand{\wt}{\operatorname{wt}}

\newcommand{\Dec}{\mathsf{Dec}}

\DeclareMathAlphabet{\mathcal}{OMS}{cmsy}{m}{n}

\makeatother
\title{Multivariate Decoded Quantum Interferometry for Weighted Optimization}
\author{Kaifeng Bu$^{1,2}$\,}
\email{ bu.115@osu.edu (K.Bu, corresponding author)}
 \address{$^1$\textnormal{Department of Mathematics, The Ohio State University, Columbus, Ohio 43210, USA}}
\address{$^2$\textnormal{Department of Physics, Harvard University, Cambridge, Massachusetts 02138, USA}}

\author{Weichen Gu$^1$}

\author{Xiang Li$^1$}

\begin{document}
\begin{abstract}

Decoded Quantum Interferometry (DQI) is a recently introduced quantum algorithm that reduces discrete optimization to decoding
with potential advantages over the best-known polynomial-time classical algorithms for certain Max-LINSAT problems. 
In its original formulation, however, DQI treats all constraints uniformly and cannot exploit the weight structure present in most optimization problems of interest. In this work, we develop multivariate Decoded Quantum Interferometry (multivariate DQI) for weighted optimization problems, focusing on the weighted Max-LINSAT problem over a prime field. Grouping constraints into $N$ blocks by distinct weights, we introduce multivariate DQI states built from $N$-variable polynomials of bounded total degree, and derive a closed-form asymptotic expression for both their optimal expectation value and their concentration behavior. 
We give an explicit preparation circuit using a single decoder call, and extend the analysis to imperfect decoding.
We also show that, for certain weighted OPI problems, multivariate DQI outperforms a natural weighted analogue of Prange's algorithm, which serves as the weighted counterpart of the classical benchmark used in the unweighted setting.
Finally, we extend the ideas to Hamiltonian DQI, obtaining approximate Gibbs states for commuting Pauli Hamiltonians with block structure.

\end{abstract}

\maketitle

\tableofcontents

\section{Introduction}

Quantum optimization---the task of using quantum algorithms to identify optimal or near-optimal solutions from an exponentially large space of feasible configurations---has emerged as one of the most promising pathways toward achieving practical quantum advantage~\cite{abbas2024challenges,leng2025sub,pirnay2024principle,huang2025vast}. Several families of quantum optimization algorithms have been extensively investigated over the past two decades, including amplitude-amplification-based methods such as Grover's algorithm~\cite{grover1996fast}, adiabatic and annealing-based quantum algorithms~\cite{farhi2000quantum,Roland2002QuantumSearch,aharonov2008adiabatic,Das2008QuantumAnnealing,albash2018adiabatic}, and variational methods such as
Variational quantum eigensolver (VQE)~\cite{peruzzo2014variational,mcclean2016theory,kandala2017hardware,cao2019quantum,cerezo2021variational,tilly2022variational}  and Quantum Approximate Optimization Algorithm (QAOA)~\cite{farhi2014quantum,ZhouQAOA20,herrman2022multi,vijendran2023expressive,shi2022multiangle,zhao2025symmetry}. Although these methods have shown encouraging empirical performance, their theoretical guarantees remain limited, and in many regimes classical algorithms can match or even surpass their performance. Understanding when quantum optimization algorithms provide a provable advantage therefore remains a central open problem.

A significant step in this direction was recently achieved by Jordan et al., who introduced \emph{Decoded Quantum Interferometry} (DQI)~\cite{jordan2024optimization}. DQI is a quantum algorithm that reduces a class of discrete optimization problems to the decoding of a classical linear code,
based on quantum Fourier transform.
Such ideas have also been used in other works~\cite{regev2009lattices,aharonov2003adiabatic,yamakawa2024verifiable,chen2022quantum}. 
The original DQI prepares a superposition whose amplitudes are specified by a suitably chosen univariate polynomial of the objective function; after applying a syndrome decoder, the resulting measurement statistics concentrate on high-quality solutions.
These features point to a scalable framework with the potential to deliver speedups for certain problem classes~\cite{khattar2025verifiable}. The computational complexity of DQI has since been studied in a variety of settings \cite{marwaha2025complexity,sabater2025towards,Parekh2025no,hillel2025optimization,gu2025algebraic,Anschuetz2025decoded,wang2025kernelized,rosmanis2026nearly,gao2026hidden,kramer2026tight}, and its robustness to noise and imperfect operations has also begun to be explored \cite{bu2025decoded}, further highlighting its promise for both near-term and fault-tolerant quantum devices. The DQI paradigm has also been extended beyond optimization to the preparation of polynomial functions of Hamiltonians as approximations to Gibbs states, giving rise to \emph{Hamiltonian DQI}~\cite{schmidhuber2025hamiltonian,bu2026hamiltonian}.

The Max-LINSAT problem over a prime field $\mathbb{F}_p$ is the task of maximizing the number of linear constraints $\mathbf{b}_i \cdot \mathbf{x} \in L_i$ that are satisfied, for a matrix $B \in \mathbb{F}_p^{m \times n}$ and target sets $L_i \subseteq \mathbb{F}_p$. For this problem, DQI achieves an asymptotic satisfaction ratio that strictly exceeds what is known to be achievable in polynomial time classically, in a well-defined parameter regime. One instance of this advantage arises in the Optimal Polynomial Intersection (OPI) problem, where the original DQI provably beats Prange's algorithm, the best known polynomial-time classical benchmark in certain regimes~\cite{jordan2024optimization,khattar2025verifiable}. 

In many applications of combinatorial optimization, constraints are not equally important. Instead, each constraint $i$ is assigned a positive weight $c_i>0$, and the objective is to maximize the weighted sum
\[
F_{\mathbf{c}}(\mathbf{x}) = \sum_{i=1}^m c_i\, f_i(\mathbf{b}_i \cdot \mathbf{x}).
\]
The original DQI analysis applies only to the uniform-weight case $c_i \equiv  1$, where the amplitudes of the DQI state are determined by a single univariate polynomial in the scalar objective
$F(\mathbf{x})=\sum_{i=1}^m f_i(\mathbf{b}_i\cdot \mathbf{x}).
$ This uniform-weight structure also leads naturally to Dicke states in the ancilla registers, which is a key ingredient in the efficient quantum preparation procedure.
When the weights take several distinct values, however, this uniform treatment is generally suboptimal: clauses with larger weights should be prioritized.
Hence, finding the optimal univariate polynomial for the weighted objective becomes a genuinely new optimization problem.

In this work, we develop a multivariate DQI framework for weighted optimization problems, with a focus on weighted Max-LINSAT problems with block structure. Specifically, we partition the index set $[m]$ as
$[m]=S_1\sqcup \cdots \sqcup S_N,$
and assume that the constraints in block $S_t$ all carry the same weight $g_t$, that is, $c_j=g_t$ for every $j\in S_t$. For each block, we define the partial objective
$F_t(\mathbf{x})=\sum_{i\in S_t} f_i(\mathbf{b}_i\cdot \mathbf{x}),$
and then the objective function becomes 
\[
F_{\mathbf{g}}(\mathbf{x}) = \sum^N_{t=1}g_t F_t(\mathbf{x}).
\]
We  consider multivariate DQI states of the form
\[
|P(F_1,  \dots,F_N)\rangle
=
\sum_{\mathbf{x}\in\mathbb{F}_p^n}
P\bigl(F_1(\mathbf{x}), \dots,F_N(\mathbf{x})\bigr)\,|\mathbf{x}\rangle,
\]
where $P$ is a suitably normalized polynomial in $N$ variables of  total degree $l$.
Our goal is to determine the optimal expected weighted objective value achievable by such multivariate DQI states.

\subsection{Summary of main results}
We list our main results as follows. 

\textbf{1. Performance of multivariate DQI (Section~\ref{sec:perform_MDQI}).} Our main result, Theorem~\ref{thm:pf_semicircle_general}, determines the asymptotically optimal expectation value achievable by any multivariate DQI state of total degree at most \(l\). Under the minimum-distance condition \(2l+1<d^\perp\), we prove that
\[
\frac{\langle s_{\mathbf{g}} \rangle}{m} = \Bigl(\tfrac{2r}{p} - 1\Bigr)\frac{1}{m}\sum_{t=1}^N g_t m_t + \frac{2\sqrt{r(p-r)}}{p}\, \Gamma_{\mathbf{g},\mathbf{m},\kappa}(\mu) + O_{\kappa}\!\left(\frac{1}{m}\sum_t g_t m_t^{3/4}\right),
\]
where \(\Gamma_{\mathbf{g},\mathbf{m},\kappa}(\mu)\) is an explicit variational functional that describes how the decoding budget is optimally distributed across the \(N\) weight blocks. For the two-block case \(N=2\), we further analyze three scaling regimes in Proposition~\ref{prop:pf_two_point_crossover}. In addition, we show that measurements of the multivariate DQI state are concentrated on a set of solutions whose objective values achieve the asymptotically optimal expectation value. 
In addition, we compare multivariate DQI with the original (univariate) DQI framework and show that multivariate DQI states can outperform univariate DQI states for non-uniform weight profiles (see Section~\ref{subsec:compare-univariate}).

\textbf{2. Preparation algorithm (Section~\ref{sec:gen_MDQI}).} We provide an explicit quantum algorithm that prepares the multivariate DQI state, extending the standard DQI circuit to a blockwise fashion. The algorithm uses $N$ weight registers, $N$ error registers, and a single syndrome register, and exploits the product form of the optimal coefficients $w_{\mathbf{j}} = \prod_t a^{(t)}_{j_t}$. A single call to a syndrome decoder for the dual code $C^\perp = \{\mathbf{y} \in \mathbb{F}_p^m : B^\top \mathbf{y} = 0\}$ suffices, exactly as in the unweighted case. We describe the $\mathbb{F}_2$ protocol in detail and then extend it to general prime $p$ via an appropriate use of the quantum Fourier transform over $\mathbb{F}_p$.

\textbf{3. Robustness to imperfect decoding (Section~\ref{sec:no_min_dist}).} The condition $2l + 1 < d^\perp$ is a sufficient condition for bounded-distance decoding, but is not always satisfied in practice. In Theorem~\ref{thm:f2_nd_average}, we give a performance guarantee for the multivariate DQI state prepared with an \emph{imperfect} decoder, bounding the loss from the ideal value by blockwise effective decoder failure rates $\widetilde{\gamma}_{\max}$ defined on block-weight layers of the error space.

\textbf{4. Weighted OPI: comparison with a natural weighted Prange benchmark (Section~\ref{sec:opi}).} We apply our framework to the weighted OPI problem with two balanced weight blocks, in which the dual code is a Reed--Solomon code and a Berlekamp--Massey decoder can be used inside DQI. 
We compare multivariate DQI with a natural weighted analogue of Prange's algorithm, obtained by prioritizing the heavier block.
We show in Proposition~\ref{prop:weighted_opi_dqi_beats_prange} that for every $g \geq 1$ and every $x = n/p \in (0,1)$,
\[
R^{\mathrm{DQI}}_g(x) > R^{\mathrm{Pr}}_g(x),
\]
with the symmetric statement holding for $0 < g \leq 1$. Thus, the strict asymptotic advantage of multivariate DQI over weighted Prange's algorithm in balanced OPI 
persists across the entire non-trivial regime of weight imbalances, extending well beyond the case of uniform weights.

\textbf{5. Block-structured  Hamiltonian DQI (Section~\ref{sec:HDQI}).} 
The block-polynomial viewpoint is not limited to classical objective functions. 
It also applies naturally to commuting Pauli Hamiltonians with block structure. 
Specifically, for a commuting Pauli Hamiltonian
\[
H_{\mathbf g}=\sum_{t=1}^N g_t H_t,\qquad 
H_t=\sum_{a\in[m_t]}P_{t,a},
\]
we prove that for any real polynomial $P$ of degree $l$, the state $\rho_{P}(H_{\mathbf{g}}) =\frac{P(H_{\mathbf{g}})^2}{ \mathrm{Tr}[P(H_{\mathbf{g}})^2]}$ can be prepared with a single call to a weight-$l$ decoder for the classical linear code associated with $H_{\mathbf{g}}$ (see Theorem~\ref{thm:block_pauli_poly_state}).
This leads to a simplified Hamiltonian DQI protocol for block-structured commuting Pauli Hamiltonians: the reference state is described by linear combinations of product Dicke states, rather than by a general matrix-product-state representation.

\subsection{Organization}
In Section~\ref{sec:preliminary},  we introduce the basic notation and background. We study the performance of multivariate DQI states in Section~\ref{sec:perform_MDQI}, and provide their preparation procedure in Section~\ref{sec:gen_MDQI}. In addition, we discuss imperfect decoding in Section~\ref{sec:no_min_dist}, and study the potential advantage in weighted OPI in Section~\ref{sec:opi}. Finally, in Section~\ref{sec:HDQI},  we extend the ideas  to block-structured Hamiltonian DQI.  The detailed proofs of the main results are collected in the appendix.

\section{Preliminary}\label{sec:preliminary} 
\begin{Def}[Weighted Max-LINSAT Problem]\label{Def:1}
Let $\mathbb{F}_p$ be a finite field and let $B \in \mathbb{F}_p^{m \times n}$. 
For each $i=1,\ldots,m$, let $c_i\in \R$ and let $L_i \subset \mathbb{F}_p$ be an arbitrary subset of $\mathbb{F}_p$.
The weighted Max-LINSAT problem is to find $\mathbf{x} \in \mathbb{F}_p^n$ that maximizes the function
\begin{align}\label{def:targ_eq1}
    F_c(\Bxx) = \sum_{i=1}^m c_i f_i\left(\mathbf{b}_i \cdot \mathbf{x} \right),
\end{align}
where $\mathbf{b}_i$ denotes the $i$-th row of the matrix $B$, $\mathbf{b}_i \cdot \mathbf{x}=\sum_{j=1}^n B_{ij} x_j $, and  each function $f_i$ is defined as follows
\begin{align}\label{eq:def1}
f_i(y) = \left\{ 
\begin{aligned}
&1, && \text{ if } y\in L_i,\\
&-1, && \text{ if } y\not\in L_i.
\end{aligned}\right.    
\end{align}
\end{Def} 

 Without loss of generality, we assume $c_i>0$ throughout the paper.
 Indeed,
if $c_i<0$ for some index $i$, we may replace $c_i$ by $|c_i|$ and simultaneously replace the set $L_i$ by its complement $\mathbb{F}_p \setminus L_i$.
By \eqref{eq:def1}, this flips the sign of $f_i$ and therefore leaves the objective function unchanged.

In the Boolean domain where $p=2$, we have $|L_i|=1$ for every index $i$. Let $v_i$ denote the unique element in $L_i$. The function $f_i: \mathbb{F}_2 \to \{1, -1\}$ is 
\begin{align} \label{eq:vec_f2}
f_i(y) = \left\{ 
\begin{aligned}
&1, && \text{ if } y= v_i;\\
&-1, && \text{ if } y\neq v_i.
\end{aligned}\right.    
\end{align}
For \(p=2\), Definition~1 becomes the weighted Max-XORSAT problem, whose goal is to
find a vector $\mathbf{x} \in \mathbb{F}_2^n$ that maximizes the objective function:
\begin{align}\label{260202eq3}
F_c(\mathbf{x}) = \sum_{i=1}^m c_i f_i(\mathbf{b}_i \cdot \mathbf{x})
=\sum_{i=1}^m c_i (-1)^{\mathbf{b}_i \cdot \mathbf{x}+v_i}.
\end{align}

Note that, throughout the paper,  we adopt the following standard conventions for asymptotic notation.
For two nonnegative  functions $f(x)$ and $g(x)$, 
$f = O(g)$ means there exists an absolute constant $C>0$ such that 
$f(x)\leq Cg(x)$ for all sufficiently large $x$. 
$f = o(g)$ means $f(x)/g(x)\to 0$ as $x\to \infty$. 
$f = \Theta(g)$ means $f=O(g)$ and $g=O(f)$.
$f = O_d(g)$ means there exists a constant $C_d
>0$ that may depend on the parameter $d$ such that $f(x)\leq C_dg(x)$ for all sufficiently large $x$.

\section{Performance of Multivariate DQI States}\label{sec:perform_MDQI}
Assume that the weights $c_i$ are restricted to $N$ distinct positive values. Specifically,  let $g_1, \dots, g_N$ be a set of given constants, and let $[m] = S_1 \sqcup \dots \sqcup S_N$ be a partition of the index set $[m] = \{1, \dots, m\}$ with size  $m_t := |S_t|$.
Then the weight $c_j=g_t$ for $j\in S_t$.
Hence, the target function in \eqref{def:targ_eq1} can be written as follows
\begin{align}
    F_{\mathbf{g}}(\Bxx)
    =\sum^N_{t=1}g_tF_t(\mathbf{x} ),
\end{align}
where $F_t(\mathbf{x} )=\sum_{i\in S_t}f_i\left(\mathbf{b}_i \cdot \mathbf{x} \right)$
for each $t\in [N]$.

Here, we will use a  multivariate variant of the DQI algorithm  based  on the following code
\begin{align}
  C^\perp = \{ \mathbf{y} \in \mathbb{F}_p^m : B^\top \mathbf{y} = \mathbf{0} \} .
\end{align}
If the above code $C^\perp$ can be efficiently decoded for up to $l$ errors, i.e., there exists a reversible decoder  $\Dec_B^{(l)}$ such that 
\begin{align}\label{eq:decoder}
\Dec_B^{(l)}:\ket{\mathbf y}\ket{B^\top \mathbf y}
\to \ket{0}\ket{B^\top \mathbf y}, \quad\forall |\mathbf y|\leq l, 
\end{align}
 we can efficiently  generate the following 
multivariate DQI state
\begin{align}
    \ket{P(F_1, \ldots, F_N)}
    =\sum_{\Bxx \in \mathbb{F}^n_p}
   P(F_1 (\Bxx), \ldots, F_N(\Bxx))\ket{\mathbf x},
\end{align}
where $P$ is an appropriately normalized multivariate polynomial 
in $N$ variables with  a total degree at most $l$, i.e., $\sum_{\Bxx \in \mathbb{F}^n_p} |P(F_1 (\Bxx), \ldots, F_N(\Bxx))|^2=1$.
The dependence on the weights enters through the block decomposition \(S_1, \ldots, S_N\), and hence through the block objectives \(F_1, \ldots, F_N\). After taking the measurement in the computational basis, the output probability 
of a bit string $\Bxx$ is 
$ |P(F_1 (\Bxx), \ldots, F_N(\Bxx))|^2$. 
The goal is to choose the multivariate polynomial $P$ so that the output probability distribution concentrates on bitstrings $\mathbf{x}$ that correspond to large objective values.

\subsection{Asymptotic performance of the optimal expectation}
Let us first study the asymptotic performance of the expectation value of the weighted Max-LINSAT objective function obtained by measuring the corresponding multivariate DQI state in the limit \(m\to\infty\).

\begin{thm}[Asymptotically Optimal Expectation Value]
\label{thm:pf_semicircle_general}
Let $p$ be a prime, $B \in \mathbb{F}_p^{m \times n}$ be a matrix, and  let  $ f_i:\mathbb{F}_p \to \{\pm 1\}$ be functions with $|f_i^{-1}(+1)|=r$ for all $i$. 
Consider a partition of the index set $[m] = S_1 \sqcup \dots \sqcup S_N$ with subset sizes $m_t := |S_t|$, and let $F_{\mathbf{g}}(\mathbf{x})$ be the weighted Max-LINSAT objective function:
$$F_{\mathbf{g}}(\mathbf{x}) = \sum_{t=1}^N g_t F_t(\mathbf{x}), \quad F_t(\mathbf{x}) = \sum_{i \in S_t} f_i(\mathbf{b}_i \cdot \mathbf{x}),\quad g_t>0, \forall t\in [N].$$
Given an $N$-variable multivariate polynomial $P$ with total degree at most $l$,
let $\langle s_{\mathbf{g}} \rangle$ denote the expectation value obtained upon measuring the corresponding multivariate  DQI state $\ket{P(F_1,\ldots, F_N)}$.
Suppose $2l + 1 < d^\perp$, where $d^\perp$ is the minimum distance of the dual code $C^\perp = \{ \mathbf{d} \in \mathbb{F}_p^m : B^\top \mathbf{d} = \mathbf{0} \}$.
With  fixed ratio $\mu = l/m \in (0, 1/2)$,  for the  optimal choice of multivariate polynomial $P$,
the normalized expectation value satisfies

\begin{equation}
\label{eq:pf_semicircle_general}
\frac{\langle s_{\mathbf g} \rangle}{m}
=
\left(\frac{2r}{p}-1\right)\frac{1}{m}\sum_{t=1}^N g_t m_t
+
\frac{2\sqrt{r(p-r)}}{p}\,\Gamma_{\mathbf g,\mathbf m,\kappa}(\mu)
+
O_{\kappa}\!\left(
\frac{1}{m}\sum_{t=1}^N g_t m_t^{3/4} 
\right),
\end{equation}
where $\kappa=\frac{p-2r}{\sqrt{r(p-r)}}$, and
\begin{align}\label{eq:im_form}
    \Gamma_{\mathbf g,\mathbf m,\kappa}(\mu):=
\sup_{\substack{0\le \alpha_t\le 1\ \forall t\in [N]\\ \sum_{t=1}^N \frac{m_t}{m}\alpha_t\le \mu}}
\sum_{t=1}^N \frac{m_t}{m}g_t\left(\kappa\alpha_t+2\sqrt{\alpha_t(1-\alpha_t)}\right).
\end{align}
Suppose further that $N$ is fixed and $\max_t g_t = O(1)$,  the error term in \eqref{eq:pf_semicircle_general} is $o_\kappa(1)$
as $m \to \infty$.
\end{thm}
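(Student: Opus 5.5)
We follow the original DQI analysis of Jordan et al., adapted blockwise. The first step is to rewrite the multivariate DQI state in the error basis. For each $i$ set $g_i(y) = (f_i(y) - \bar f)/\varphi$, where $\bar f = \frac{2r}{p}-1$ and $\varphi = \frac{2\sqrt{r(p-r)}}{p}$. This normalization makes the Fourier transform $\tilde g_i$ vanish at $0$ and have unit norm. The centered block objective $(F_t - m_t\bar f)/\varphi$ is then the first elementary symmetric polynomial in the $g_i$ with $i\in S_t$.

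Since $2l+1<d^\perp$, products of distinct $g_i$ reduce to distinct $g_i$. So any polynomial of total degree $\le l$ in $F_1,\dots,F_N$ can be rewritten as $\sum_{\mathbf k} w_{\mathbf k} \prod_t \binom{m_t}{k_t}^{-1/2} P^{(k_t)}_t$ with $|\mathbf k|\le l$. Here $P^{(k)}_t$ is the degree-$k$ elementary symmetric polynomial on block $S_t$. Under QFT, these terms map to orthonormal states $|\mathbf k\rangle$-labelled superpositions of errors $\mathbf y$ with $|\mathbf y\cap S_t| = k_t$. Orthonormality holds because distinct error patterns of weight $\le l<d^\perp/2$ have distinct syndromes.

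Next we compute $\langle s_{\mathbf g}\rangle = \sum_t g_t\langle F_t\rangle$. By the same calculation as Lemma 9.2 of Jordan et al., each $F_t$ acts on these block Dicke-like states as a tridiagonal operator in $k_t$ alone, leaving the other $k_{t'}$ fixed. This is valid because $2l+1<d^\perp$ keeps cross terms orthogonal. We get $\langle F_t\rangle = m_t\bar f + \varphi\, \mathbf w^\dagger (I\otimes A^{(m_t,\kappa)}\otimes I)\mathbf w$, where $A^{(m_t,\kappa)}$ is the $(l+1)$-truncated matrix with diagonal $\kappa k$ and off-diagonals $\sqrt{k(m_t-k+1)}$. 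Thus the optimal value is the top eigenvalue of $\sum_t g_t A_t$ restricted to the simplex $\{|\mathbf k|\le l,\ 0\le k_t\le m_t\}$, to be maximized over unit $\mathbf w$.

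The main step is the asymptotics of this eigenvalue.

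\emph{Lower bound.} Fix $\alpha$ feasible in the supremum. Choose $\mathbf w = \bigotimes_t a^{(t)}$, where $a^{(t)}$ is the approximate top eigenvector of the one-block matrix truncated at $l_t = \lfloor \alpha_t m_t\rfloor$. The budget constraint $\sum_t l_t\le l$ makes the support legal. The product form makes the quadratic form split into $\sum_t g_t \lambda_{\max}(A^{(m_t,\kappa)}_{l_t})$. The univariate DQI semicircle theorem gives $\lambda_{\max} = m_t(\kappa\alpha_t + 2\sqrt{\alpha_t(1-\alpha_t)}) + O_\kappa(m_t^{3/4})$, taking $\alpha_t\le 1/2$ regime appropriately. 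Optimizing over $\alpha$ yields $\Gamma$ up to the stated error. This product form is also what later gives the preparation circuit.

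\emph{Upper bound.} This is the obstacle, because the optimal $\mathbf w$ need not be a product. My plan is to compare $\mathbf w^\dagger(\sum_t g_tA_t)\mathbf w$ with a fibered bound. Let $\pi$ be the distribution $|w_{\mathbf k}|^2$ and $\beta_t = \mathbb E_\pi[k_t]/m_t$; then $\sum_t m_t\beta_t\le l$. For each $t$, bound the $A_t$ term by conditioning on the other coordinates. Within each fiber, the off-diagonal contribution is at most $2\sum_k |w_k w_{k+1}|\sqrt{k(m_t-k)}$. Applying Cauchy--Schwarz and the concavity of $x\mapsto\sqrt{x(1-x)}$ (Jensen) bounds the $t$-th term by $g_t m_t(\kappa\beta_t + 2\sqrt{\beta_t(1-\beta_t)}) + O(g_t\sqrt{m_t})$-type boundary errors. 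A refined $m_t^{3/4}$ version is obtained by smoothing $w$ over windows of width $m_t^{1/2}$, as in the univariate proof. Summing and taking the supremum over feasible $\beta$ gives $\Gamma$. I expect the fiberwise Jensen step, and controlling the edge $k_t$ near $0$ or $m_t$, to require care.

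Finally, the $o_\kappa(1)$ claim follows since $\frac1m\sum_t g_t m_t^{3/4}\le \max_t g_t\, N m^{-1/4}\to 0$.
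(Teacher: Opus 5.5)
Your reduction to the top eigenvalue of $\sum_t g_tA_t$ on $T_l$ is essentially the paper's, and so is your product-form lower bound with budget split $l_t=\lfloor\alpha_tm_t\rfloor$. The genuine difference is the upper bound, and your route is correct.

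\textbf{The paper's upper bound.} It conjugates $A^{(\mathbf g,l,\kappa)}$ by $D=\mathrm{diag}\bigl(\prod_t\binom{m_t}{j_t}^{1/2}\bigr)$, which turns the off-diagonal entries into $g_tj_t$ and $g_t(m_t-j_t)$. It then applies Collatz--Wielandt with the geometric test vector $u_{\mathbf j}=\prod_t x_t^{j_t}$ and swaps $\inf_x$ and $\sup_\alpha$ via Sion's minimax theorem. This dual-certificate argument gives the exact inequality $\lambda_{\max}\le m\,\Gamma_{\mathbf g,\mathbf m,\kappa}(\mu)$.

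\textbf{Your upper bound.} It is a direct primal bound on the Rayleigh quotient and is more elementary. Set $\pi_{\mathbf k}=|w_{\mathbf k}|^2$. On each edge use $2\,\mathrm{Re}(\overline{w_{\mathbf k}}w_{\mathbf k+\mathbf e_t})\le\pi_{\mathbf k}+\pi_{\mathbf k+\mathbf e_t}$, together with $\sqrt{(k+1)(m_t-k)}+\sqrt{k(m_t-k+1)}\le2\sqrt{k(m_t-k)}+2\sqrt{m_t}$. For every unit $w$ this gives
\[
w^\dagger\Bigl(\sum_t g_tA_t\Bigr)w\le\sum_{\mathbf k}\pi_{\mathbf k}\sum_t g_t\bigl(\kappa k_t+2\sqrt{k_t(m_t-k_t)}\bigr)+2\sum_t g_t\sqrt{m_t}.
\]
A single global Jensen step then puts $\beta_t=\mathbb E_\pi[k_t]/m_t$ into the feasible set of $\Gamma$, because every $\mathbf k\in T_l$ has $|\mathbf k|\le l$.

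\textbf{Comparison.} Your route costs an additive $O(\sum_t g_t\sqrt{m_t})$, well inside the stated error. It needs neither Perron--Frobenius nor a minimax theorem. It also shows transparently that the constraint $\sum_t\frac{m_t}{m}\alpha_t\le\mu$ is just the mean block profile of the eigenvector. Several things you anticipated are unnecessary: the fiberwise conditioning, special care for $k_t$ near $0$ or $m_t$, and smoothing over windows. The edge inequality is uniform in $k$, and the bound is termwise in $\pi_{\mathbf k}$.

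\textbf{Two small corrections.}
\begin{enumerate}
\item Repeated factors collapse because $h_i^2=1+\kappa h_i$, which follows from $f_i^2=1$, not because of $2l+1<d^\perp$. The distance condition is needed only for orthonormality, including against the degree-$(l+1)$ terms produced by $H_t$.
\item For the lower bound, citing the pointwise univariate semicircle limit is not quite enough. The optimizing $\alpha_t$ depend on $m$ through $m_t/m$, and they can exceed $1/2$ when $\kappa>0$. So you need $\lambda_{\max}\ge m_t\bigl(\kappa\alpha_t+2\sqrt{\alpha_t(1-\alpha_t)}\bigr)-O_\kappa(m_t^{3/4})$ uniformly in $\alpha_t\in[0,1]$. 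The paper obtains this by computing directly the Rayleigh quotient of the flat window $a^{(t)}_j=r_t^{-1/2}$ on $J_t-r_t+1\le j\le J_t$, with $J_t=\lfloor\alpha_tm_t\rfloor$ and $r_t=\lfloor\sqrt{J_t}\rfloor$. That computation is the source of the $m_t^{3/4}$ error term.
\end{enumerate}
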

For the special case \(N=1\), corresponding to the uniform-weight case, this recovers the result of~\cite{jordan2024optimization}.

\begin{mproof}[Proof sketch]
    The detailed proof of Theorem \ref{thm:pf_semicircle_general} is presented in Appendix~\ref{append:A}, which consists of  three parts: (a) reduce expectation to a matrix eigenvalue problem, (b) upper bound via Collatz–Wielandt, (c) lower bound via explicit product test vector.
    Here we provide a proof sketch.
    
    All functions $f_i$ have the same mean value $\bar f:=\frac1p\sum_{y\in \mathbb F_p} f_i(y)=\frac{2r}{p}-1.$
Let us define the normalization constant 
$\varphi:=\left(\sum_{y\in \mathbb F_p}|f_i(y)-\bar f|^2\right)^{1/2}
=
\sqrt{4r\left(1-\frac{r}{p}\right)},$
and define the rescaled centered functions
$h_i(y):=\frac{\sqrt p}{\varphi}\bigl(f_i(y)-\bar f\bigr)$ for $i\in [m]$.
Then $h_i$ satisfies the following properties:
$h_i(y)^2 = 1+\kappa\,h_i(y)$, with $\kappa=\frac{p-2r}{\sqrt{r(p-r)}}$.
Hence,
\begin{align*}
f_i(y)=\bar f+\frac{\varphi}{\sqrt p}\,h_i(y)
=
\left(\frac{2r}{p}-1\right)+\frac{2\sqrt{r(p-r)}}{p}\,h_i(y).
\end{align*}

For each $t\in [N]$ and each $k\geq 0$, denote 
\begin{align*}
P^{(k)}_{H_t}
:=
\sum_{\mathbf{x} \in \mathbb{F}_p^n}
\left(
\sum_{\substack{s_1,\ldots,s_k\in S_t\\ s_1<\cdots<s_k}}
h_{s_1}\!\left(\mathbf{b}_{s_1} \cdot \mathbf{x} \right)\cdots
h_{s_k}\!\left(\mathbf{b}_{s_k} \cdot \mathbf{x} \right)
\right)
\ket{\mathbf{x}}\bra{\mathbf{x}}.
\end{align*}
Let us consider the block-symmetric polynomial state
\begin{align}
    \ket{\mathbf{P}(\mathbf j)}:
    =\frac{1}{\sqrt{\prod^N_{t=1}\binom{m_t}{j_t}}}\frac{1}{\sqrt{p^n}}\sum_{\Bxx\in \mathbb{F}^n_p}\prod^N_{t=1}P^{(j_t)}_{H_t}\ket{\Bxx},
\end{align}
where each integer $j_t\geq 0$. 
Let $T_l$ denote the set of indices bounded by the total degree $l$, that is, 
$T_l = \left\{(j_1,\ldots,j_N)\in \mathbb Z_{\ge 0}^N: j_t\le m_t\text{ for all }t,\text{ and } j_1+\cdots+j_N\le l\right\}.$

We establish two key properties. 
First, we can show that any multivariate DQI state $ \ket{P(F_1, \ldots, F_N)}$ for a multivariate polynomial $P$ with degree at most $l$ can be 
written as a linear combination of block-symmetric polynomial states, that is, 
$$\ket{P(F_1, \ldots, F_N)}=\sum_{\mathbf j\in T_l}w_{\mathbf j} \ket{\mathbf{P}(\mathbf j)}.$$
Second, the set
$\{\ket{\mathbf P(\mathbf j)} : \sum^N_{t=1}j_t\leq l\}$ is orthonormal.
Hence, finding 
the optimal expected value is equivalent to finding the largest eigenvalue of a specific matrix, where the optimal state is determined by the corresponding eigenvector $w = (w_{\mathbf{j}})_{\mathbf{j} \in T_l}$. 

We provide an analytical method for determining this maximum value in the limit \(m\to\infty\). In particular, we derive an upper bound using the Collatz--Wielandt formula~\cite{HornJohnson2012}, and we construct a matching lower bound by choosing coefficients with a product-form decomposition,
$w_{\mathbf j}=\prod_{t=1}^N a^{(t)}_{j_t}.$
This product ansatz achieves the optimal expectation value in the asymptotic limit.

\end{mproof}

The general formula in Theorem  \ref{thm:pf_semicircle_general}  simplifies considerably under a constant-density scaling assumption, in which each block size $m_t$
grows linearly with $m$.

\begin{cor}[Constant Density Scaling]
Under the same assumptions as Theorem \ref{thm:pf_semicircle_general}, 
suppose further that 
$N$ is fixed and the weights $\set{g_t}_t$ are constants independent of $m$
and that the size of each subset $S_t$ satisfies $m_t = \theta_t m$ for fixed positive constants $\theta_t$. In the limit $m \to \infty$, the optimal expectation value in \eqref{eq:pf_semicircle_general} reduces to
\begin{align}
\lim_{m\to \infty}   \frac{\langle s_{\mathbf g} \rangle}{m} 
=\left(\frac{2r}{p}-1\right)\sum_{t=1}^N g_t \theta_t
+\frac{2\sqrt{r(p-r)}}{p}
\sup_{\substack{0\le \alpha_t\le 1\ \forall t\in [N]\\ \sum_{t=1}^N \theta_t\alpha_t\le \mu}}
\sum_{t=1}^N g_t\theta_t\left(\kappa\alpha_t+2\sqrt{\alpha_t(1-\alpha_t)}\right).
\end{align}

In the specific case of the finite field $\mathbb{F}_2$ (where $p=2$ and $r=1$), the coefficient $\kappa$ vanishes. Under these conditions, the formula simplifies further to
\begin{align}\label{eq:simp_F_2}
\lim_{m\to \infty}   \frac{\langle s_{\mathbf g} \rangle}{m} 
=
\sup_{\substack{0\le \alpha_t\le 1\ \forall t\in [N]\\ \sum_{t=1}^N \theta_t\alpha_t\le \mu}}
2\sum_{t=1}^N g_t\theta_t\sqrt{\alpha_t(1-\alpha_t)}.
\end{align}
\end{cor}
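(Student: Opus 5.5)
The plan is to specialize Theorem~\ref{thm:pf_semicircle_general} directly, with no new machinery. Substituting $m_t=\theta_t m$ into \eqref{eq:pf_semicircle_general} turns the first term into $\bigl(\tfrac{2r}{p}-1\bigr)\sum_t g_t\theta_t$, which does not depend on $m$. In the functional \eqref{eq:im_form} the ratios $m_t/m$ become exactly $\theta_t$. Hence $\Gamma_{\mathbf g,\mathbf m,\kappa}(\mu)$ is the $m$-independent supremum
\[
\sup_{\substack{0\le \alpha_t\le 1\\ \sum_t\theta_t\alpha_t\le\mu}}\sum_{t=1}^N g_t\theta_t\bigl(\kappa\alpha_t+2\sqrt{\alpha_t(1-\alpha_t)}\bigr).
\]
There is one subtlety: $m_t=\theta_t m$ must be an integer. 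I would either restrict to a sequence of $m$ along which it is, or note that $m_t/m\to\theta_t$ suffices. The second option works because the supremum is continuous in the parameters $(\theta_t)$. The objective is jointly continuous on the compact box $[0,1]^N$. The feasible set $\{\sum_t\theta_t\alpha_t\le\mu\}$ varies continuously in the Hausdorff sense, since $\mu>0$ makes it contain a neighborhood of the origin within the box.

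The main step is showing the error term vanishes. With $N$ and $g_t$ fixed, it is
\[
O_\kappa\Bigl(\tfrac1m\sum_t g_t(\theta_t m)^{3/4}\Bigr)=O_\kappa\bigl(m^{-1/4}\bigr)\sum_t g_t\theta_t^{3/4}\to0.
\]
This is exactly the final clause of Theorem~\ref{thm:pf_semicircle_general} with $\max_t g_t=O(1)$. Taking $m\to\infty$ then gives the stated limit. Here $\kappa$ is fixed because $p$ and $r$ are fixed, so the implicit constant is uniform in $m$.

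For the $\mathbb F_2$ case, set $p=2$ and $r=1$. Then $\tfrac{2r}{p}-1=0$, so the first term drops. Also $\kappa=(p-2r)/\sqrt{r(p-r)}=0$, which removes the linear term $\kappa\alpha_t$ inside the supremum. Finally $\tfrac{2\sqrt{r(p-r)}}{p}=1$, which leaves exactly \eqref{eq:simp_F_2}.

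The only place needing care is the integrality and continuity point above. Everything else is direct substitution into the theorem.
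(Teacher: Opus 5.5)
Your proposal is correct and is the direct specialization of Theorem~\ref{thm:pf_semicircle_general} that the paper intends; the paper gives no separate argument. It amounts to substituting $m_t/m=\theta_t$, noting that the error term is $O_\kappa(m^{-1/4})$, and setting $p=2$, $r=1$ so that $\tfrac{2r}{p}-1=\kappa=0$ and $\tfrac{2\sqrt{r(p-r)}}{p}=1$. Your continuity remark for non-integral $\theta_t m$ is a harmless refinement, and the same continuity of the supremum, now in $\mu$, also handles the integrality of $l=\mu m$.
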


The performance of the optimal expectation value in \eqref{eq:pf_semicircle_general} is characterized by the functional $\Gamma_{\mathbf{g}, \mathbf{m}, \kappa}(\mu)$, which is generally difficult to evaluate for arbitrary $N$.
Here, let us focus on the special case $N=2$. 
In this setting, the target function in \eqref{def:targ_eq1} becomes
\begin{align}
    F_{\mathbf{g}}
    =g_1F_1(\mathbf{x} )+g_2F_2(\mathbf{x} ),
\end{align}
where $F_t(\mathbf{x} )=\sum_{i\in S_t}f_i\left(\mathbf{b}_i \cdot \mathbf{x} \right)$ for $t=1,2$.
Without loss of generality, we set $g_1=1$ and $g_2=g>0$. The optimal expected value in \eqref{eq:pf_semicircle_general} then reduces to
\begin{align}
    \frac{\langle s_{\mathbf g} \rangle}{m}
=
\left(\frac{2r}{p}-1\right)\frac{m_1+gm_2}{m}
+
\frac{2\sqrt{r(p-r)}}{p}\,\Gamma_{\mathbf g,\mathbf m,\kappa}(\mu)
+
O_{\kappa}\!\left(
\frac{m_1^{3/4}+gm_2^{3/4}}{m}
\right),
\end{align}
where $\Gamma_{\mathbf g,\mathbf m,\kappa}(\mu)$ is defined as in \eqref{eq:im_form} for $N=2$.
Under these conditions, the following result refines the asymptotic formula by analyzing how the relative weight $g$
governs which block dominates the expectation value and by identifying three qualitatively distinct regimes.

\begin{prop}[Scaling Regimes for Different Weights]\label{prop:pf_two_point_crossover}
    Under the same assumptions as Theorem \ref{thm:pf_semicircle_general} with $N=2$, $g_1=1$, $g_2=g$, and 
    $\theta_t=m_t/m$. Then the normalized expected value  $\langle s_{\mathbf g}\rangle/m$  has the following regimes
    based on the choice of $g$:

 \begin{enumerate}[(I)]
 \item \textbf{Weak-Weight Regime ($g \ll \theta_1/\theta_2$, i.e., $g\theta_2=o(\theta_1)$):} the expectation value is dominated by the first block,

\begin{equation}\label{eq:pf_two_point_small_regime}
\frac{\langle s_g\rangle}{m}
=
\left(\frac{2r}{p}-1\right) \theta_1 
+  \frac{2\sqrt{r(p-r)}}{p} \theta_1
\sup_{0\le \alpha\le \min\{1,\mu/\theta_1\}}
\phi_{\kappa}(\alpha) 
+
O_{\kappa}\!\left(
g\theta_2+\Delta_m
\right),
\end{equation}
with $\Delta_m=\frac{m_1^{3/4}+g\,m_2^{3/4}}{m}$, 
and $\phi_{\kappa}(\alpha):=\kappa\alpha+2\sqrt{\alpha(1-\alpha)}.$ \\

\item \textbf{Crossover Regime ($g \asymp \theta_1/\theta_2$, i.e., $g\theta_2=\Theta(\theta_1)$):}
both blocks contribute to the bulk properties, 
\begin{align}\label{eq:pf_two_point_crossover_regime}
\frac{\langle s_g\rangle}{m}
=
\left(\frac{2r}{p}-1\right)(\theta_1+g\theta_2)
+
\frac{2\sqrt{r(p-r)}}{p}\,\Gamma_g^{(2)}(\mu)
+
O_{\kappa}\!\left(
\Delta_m
\right),
\end{align}
where $   \Gamma_g^{(2)}(\mu):=
\sup_{\substack{0\le \alpha_1,\alpha_2\le 1\\ \theta_1\alpha_1+\theta_2\alpha_2\le \mu}}
\Bigl(\theta_1\phi_{\kappa}(\alpha_1)+g\theta_2\phi_{\kappa}(\alpha_2)\Bigr).$ 

\item \textbf{Strong-Weight Regime ($g \gg \theta_1/\theta_2$, i.e., $\theta_1=o(g\theta_2)$):}
the expectation value is dominated by the second block, 
\begin{equation}
\frac{\langle s_g\rangle}{m}
=\left(\frac{2r}{p}-1\right)
g\theta_2 
+ \frac{2\sqrt{r(p-r)}}{p}g \theta_2
\sup_{0\le \alpha\le \min\{1,\mu/\theta_2\}}
\phi_{\kappa}(\alpha)
+
O_{\kappa}\!\left(
\theta_1+\Delta_m
\right).
\label{eq:pf_two_point_large_regime}
\end{equation}
% with
% $\theta_1+\Delta_m\ll g\theta_2 .$ \red{(Gu: here we can delete "$\theta_1+\Delta_m\ll g\theta_2 .$")}
\end{enumerate}

\end{prop}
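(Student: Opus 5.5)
The plan is to derive all three regimes directly from Theorem~\ref{thm:pf_semicircle_general} with $N=2$, $g_1=1$, $g_2=g$. There the main term is
\[
\left(\tfrac{2r}{p}-1\right)(\theta_1+g\theta_2)+\tfrac{2\sqrt{r(p-r)}}{p}\,\Gamma^{(2)}_g(\mu),
\]
with error $O_\kappa(\Delta_m)$. This already gives the crossover regime (II) verbatim, since $\Gamma_{\mathbf g,\mathbf m,\kappa}(\mu)=\Gamma^{(2)}_g(\mu)$ by definition. So the work is to compare $\Gamma^{(2)}_g(\mu)$ with the single-block supremum in regimes (I) and (III), with an error of order $g\theta_2$ (respectively $\theta_1$).

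The basic tool is that $\phi_\kappa$ is bounded on $[0,1]$. Since $\phi_\kappa(\alpha)=\kappa\alpha+2\sqrt{\alpha(1-\alpha)}$, we have $|\phi_\kappa(\alpha)|\le |\kappa|+1$, and $\phi_\kappa\ge 0$ on $[0,1]$ because $\kappa\ge -2$ once one checks the sign. It is safest simply to use $|\phi_\kappa|\le C_\kappa$.

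For regime (I), I sandwich $\Gamma^{(2)}_g(\mu)$.

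\emph{Lower bound.} Take $\alpha_2=0$. The constraint becomes $\theta_1\alpha_1\le\mu$, i.e.\ $\alpha_1\le\min\{1,\mu/\theta_1\}$. This gives $\Gamma^{(2)}_g(\mu)\ge\theta_1\sup_{0\le\alpha\le\min\{1,\mu/\theta_1\}}\phi_\kappa(\alpha)$, using $\phi_\kappa(0)=0$.

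\emph{Upper bound.} For any feasible $(\alpha_1,\alpha_2)$ we have $\theta_1\alpha_1\le\mu$, so $\alpha_1$ lies in the same interval. Then $\theta_1\phi_\kappa(\alpha_1)+g\theta_2\phi_\kappa(\alpha_2)\le \theta_1\sup\phi_\kappa+C_\kappa g\theta_2$.

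Hence $\Gamma^{(2)}_g(\mu)=\theta_1\sup\phi_\kappa+O_\kappa(g\theta_2)$. The linear term $(\frac{2r}{p}-1)g\theta_2$ is also $O(g\theta_2)$, since $|\frac{2r}{p}-1|\le1$, and so is absorbed. Substituting into the theorem yields \eqref{eq:pf_two_point_small_regime}.

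Regime (III) is symmetric. Set $\alpha_1=0$ for the lower bound and bound $\theta_1\phi_\kappa(\alpha_1)\le C_\kappa\theta_1$ for the upper bound. The feasibility constraint gives $\alpha_2\le\min\{1,\mu/\theta_2\}$. The $\theta_1$ part of the linear term is absorbed into $O(\theta_1)$, which yields \eqref{eq:pf_two_point_large_regime}.

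The only mildly delicate point is bookkeeping. The $O_\kappa$ constants must depend only on $\kappa$, which holds because $C_\kappa$ does, and the prefactor $\frac{2\sqrt{r(p-r)}}{p}\le 1$ is bounded. The asymptotic hypotheses $g\theta_2=o(\theta_1)$ (resp.\ $\theta_1=o(g\theta_2)$) are used only to interpret the displayed error as lower order than the main term. They are not needed for the estimates themselves, which hold uniformly in $g$. I expect no real obstacle beyond making sure the error term of Theorem~\ref{thm:pf_semicircle_general}, namely $\frac{m_1^{3/4}+gm_2^{3/4}}{m}=\Delta_m$, is carried through unchanged.
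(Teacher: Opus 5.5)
Your proposal is correct and follows the paper's argument essentially verbatim. Case (II) is Theorem~\ref{thm:pf_semicircle_general} itself. Cases (I) and (III) come from sandwiching $\Gamma^{(2)}_g(\mu)$ between two quantities: the single-block supremum (obtained by setting the other $\alpha$ to zero, which uses $\phi_\kappa(0)=0$), and that supremum plus $\sup_{[0,1]}|\phi_\kappa|$ times the other block's weight. Your explicit absorption of the leftover linear term $(\frac{2r}{p}-1)g\theta_2$ (resp.\ $\theta_1$) is a detail the paper leaves implicit. One aside is wrong but unused: $\phi_\kappa\ge 0$ on $[0,1]$ fails when $r>p/2$, since then $\phi_\kappa(1)=\kappa<0$, so you were right to fall back on the bound $|\phi_\kappa|\le C_\kappa$.
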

\begin{proof}

Applying Theorem~\ref{thm:pf_semicircle_general}, we obtain the master formula:
\begin{equation}
\frac{\langle s_g \rangle}{m} = \left(\frac{2r}{p}-1\right)(\theta_1+g\theta_2) + \frac{2\sqrt{r(p-r)}}{p}\,\Gamma_g^{(2)}(\mu) + O_{\kappa} \left( \Delta_m \right),
\label{eq:pf_two_point_master_formula}
\end{equation}
where $\Delta_m = \frac{m_1^{3/4}+g\,m_2^{3/4}}{m}$. Let $M_{\kappa} := \sup_{0\le \alpha\le 1} |\phi_{\kappa}(\alpha)|$, which is finite due to the continuity of $\phi_{\kappa}$ on the compact interval $[0,1]$.

Case I: The weak-weight regime ($g \ll \theta_1/\theta_2$).
For every feasible pair
$(\alpha_1,\alpha_2)$ in the definition of $\Gamma_g^{(2)}(\mu)$, the pair
$(\alpha_1,0)$ is also feasible. It follows that 
\[
\theta_1\phi_{\kappa}(\alpha_1)+g\theta_2\phi_{\kappa}(\alpha_2)
\le
\Gamma_0^{(2)}(\mu)+M_{\kappa}\,g\theta_2.
\]
Taking the supremum over all feasible pairs, we find 
\[
0\le \Gamma_g^{(2)}(\mu)-\Gamma_0^{(2)}(\mu)\le M_{\kappa}\,g\theta_2.
\]
Noting that
\[
\Gamma_0^{(2)}(\mu)
=
\theta_1\sup_{0\le \alpha\le \min\{1,\mu/\theta_1\}}\phi_{\kappa}(\alpha),
\]
substitution into  \eqref{eq:pf_two_point_master_formula} yields
\eqref{eq:pf_two_point_small_regime}.

Case II: The crossover regime ($g \asymp \theta_1/\theta_2$). In this regime, \eqref{eq:pf_two_point_crossover_regime} is equivalent to formula \eqref{eq:pf_two_point_master_formula}. Because $g \theta_2$ and $\theta_1$ are of the same order, both blocks contribute at the leading order.

   Case III: The strong-weight regime ($g \gg \theta_1/\theta_2$). 
Let
$\Phi_2(\mu):=
\sup_{0\le \alpha\le \min\{1,\mu/\theta_2\}}\phi_{\kappa}(\alpha).$
Since every feasible pair $(\alpha_1,\alpha_2)$ satisfies
\[
0\le \alpha_2\le \min\{1,\mu/\theta_2\},
\]
we have
\[
\theta_1\phi_{\kappa}(\alpha_1)+g\theta_2\phi_{\kappa}(\alpha_2)
\le
M_{\kappa}\,\theta_1+g\theta_2\Phi_2(\mu).
\]
Taking the supremum gives
\[
\Gamma_g^{(2)}(\mu)\le g\theta_2\Phi_2(\mu)+M_{\kappa}\,\theta_1.
\]
Conversely, if $\alpha_0$ attains the supremum in the definition of
$\Phi_2(\mu)$, then $(0,\alpha_0)$ is feasible, and therefore
\[
\Gamma_g^{(2)}(\mu)\ge g\theta_2\Phi_2(\mu).
\]
Hence
\[
\Gamma_g^{(2)}(\mu)
=
g\theta_2\Phi_2(\mu)+O_{\kappa}(\theta_1).
\]
Substituting this into \eqref{eq:pf_two_point_master_formula} yields
\eqref{eq:pf_two_point_large_regime}.

\end{proof}

\begin{Examp}

\begin{figure}[h!]
    \centering
    \includegraphics[width=1\textwidth]{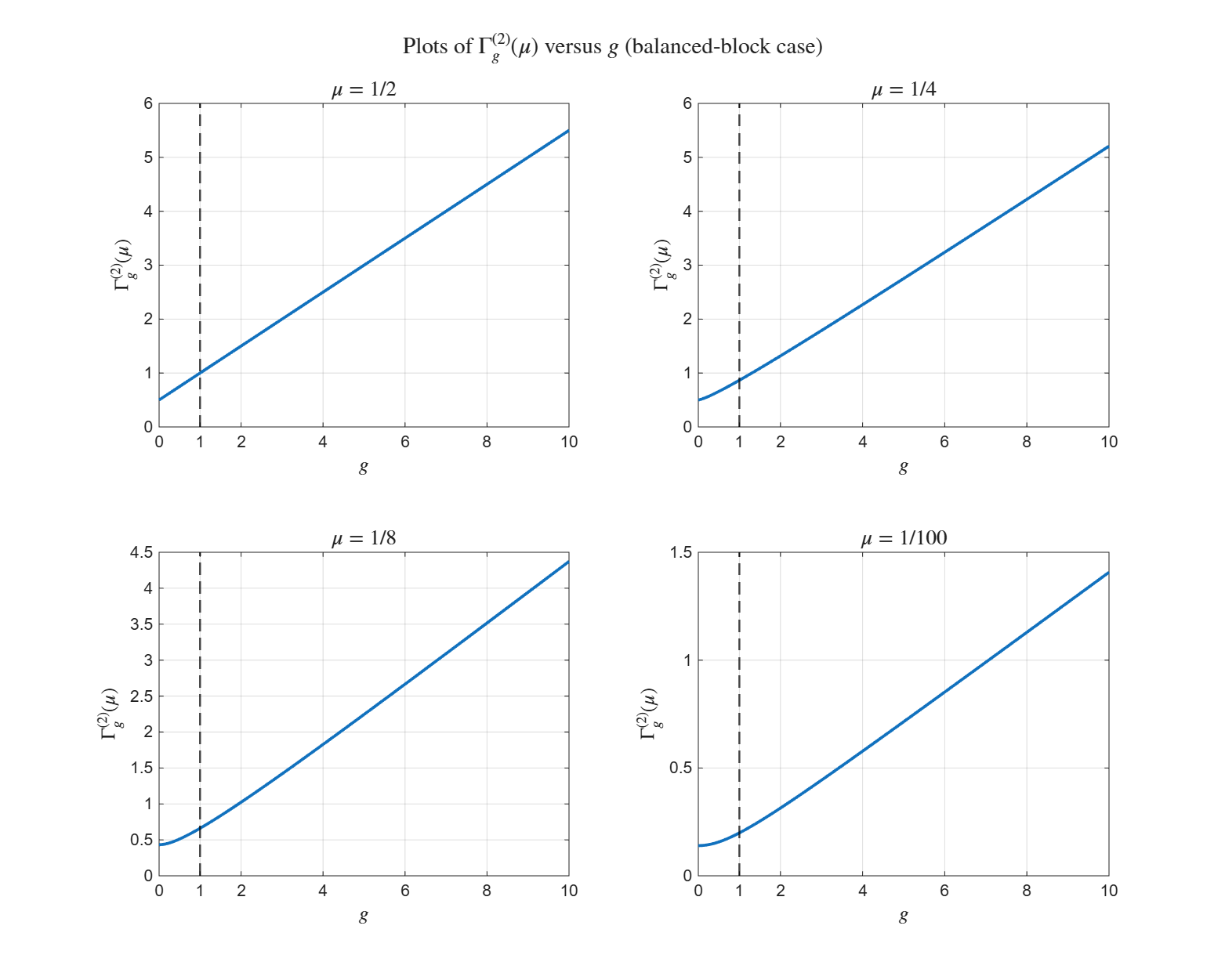}
    \caption{
    Given $\mu \in \set{ \frac{1}{2}, \frac{1}{4}, \frac{1}{8}, \frac{1}{100} }$,
    the  functional $\Gamma_g^{(2)}(\mu)$ is shown as a function of the weight $g$ in the balanced subset case ($m_1 = m_2 = m/2$) over the finite field $\mathbb{F}_2$. Here, the black dashed line indicates \(g=1\). }
    \label{fig:pf_two_point_gamma_vs_g}
\end{figure}

\begin{figure}[h!]
    \centering
    \includegraphics[width=1\textwidth]{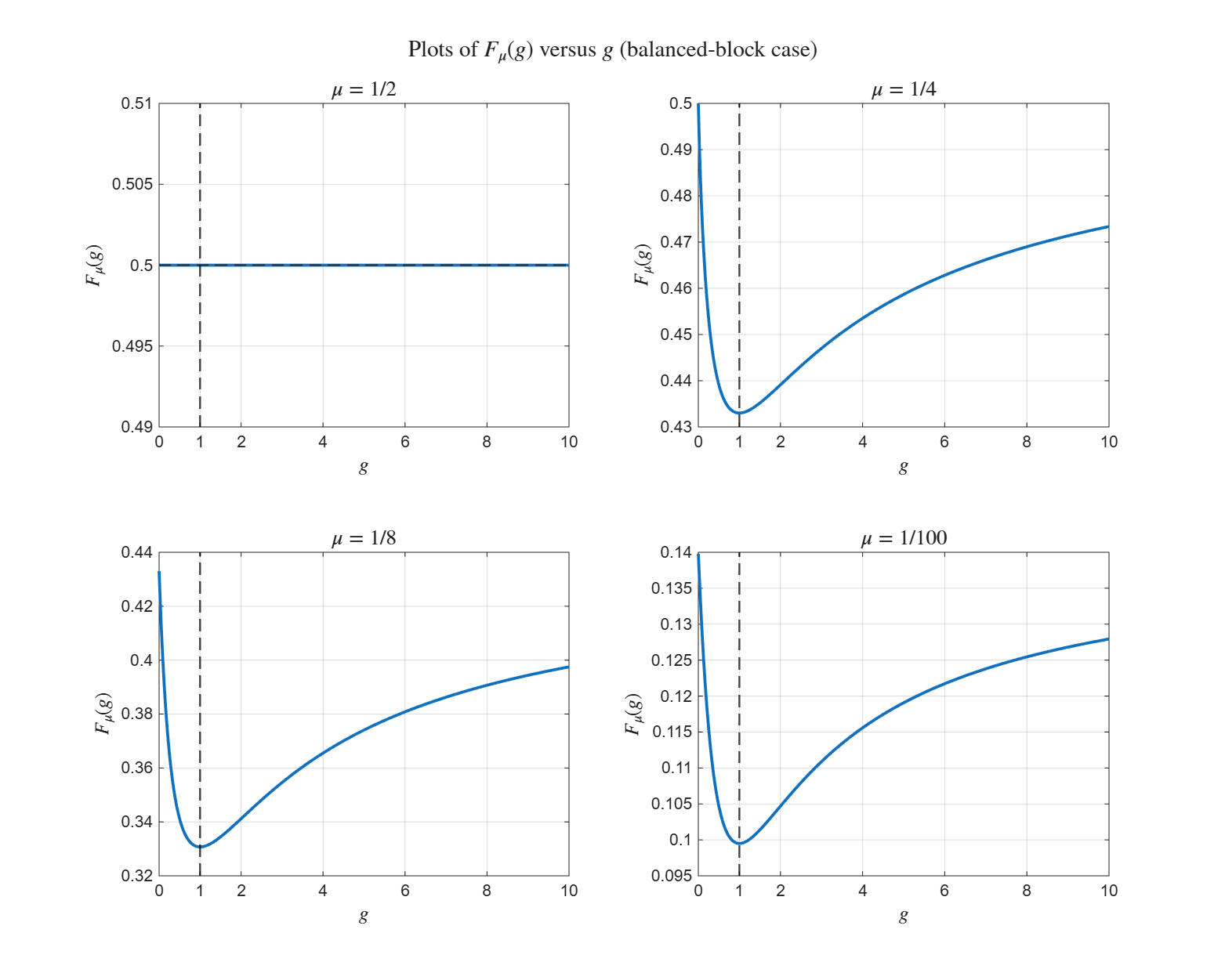}
    \caption{  Given $\mu \in \set{ \frac{1}{2}, \frac{1}{4}, \frac{1}{8}, \frac{1}{100} }$, the normalized quantity $F_\mu(g)$ is shown as a function of the weight $g$ for the balanced subset case ($m_1 = m_2 = m/2$) over the finite field $\mathbb{F}_2$.  Here, the black dashed line indicates \(g=1\).
    }
    \label{fig:pf_two_point_gamma_norm_vs_g}
\end{figure}

We now consider the special case of the finite field $\mathbb{F}_2$ ($p=2, r=1$) with balanced subset sizes $m_1 = m_2 = m/2$. Under these conditions, the coefficient $\kappa$ vanishes, and the asymptotic expectation value simplifies to:
\begin{align}\label{eq:asym_f2}
\lim_{m \to \infty} \frac{\langle s_{g} \rangle}{m} = \Gamma_g^{(2)}(\mu) = \sup_{\substack{0\le \alpha_1,\alpha_2\le 1\\ (\alpha_1+\alpha_2)/2\le \mu}}
\left(
\sqrt{\alpha_1(1-\alpha_1)}
+
g\sqrt{\alpha_2(1-\alpha_2)}
\right).
\end{align}

Figure~\ref{fig:pf_two_point_gamma_vs_g} illustrates $\Gamma_g^{(2)}(\mu)$ as a function of the weight $g$ for various degree ratios $\mu \in \set{ \frac{1}{2}, \frac{1}{4}, \frac{1}{8}, \frac{1}{100} }$. To highlight the transition behavior predicted in Proposition~\ref{prop:pf_two_point_crossover}, we plot the normalized gain functional, $F_\mu(g) := \Gamma_g^{(2)}(\mu) / (1+g)$, in Figure~\ref{fig:pf_two_point_gamma_norm_vs_g}. The numerical results reveal a clear transition point at $g=1$.

Specifically, for $0 < \mu < 1/2$, the normalized function $F_\mu(g)$ attains its global minimum at $g=1$:
\begin{equation}
F_\mu(g) \ge F_\mu(1) = \sqrt{\mu(1-\mu)} \quad \text{for all } g > 0.
\end{equation}
Indeed,
by the concavity of the function $\sqrt{x(1-x)}$,  we have 
\[ F_\mu(1) = \frac12 \sup_{\substack{0\le \alpha_1,\alpha_2\le 1\\ (\alpha_1+\alpha_2)/2\le \mu}}
\left(
\sqrt{\alpha_1(1-\alpha_1)}
+
 \sqrt{\alpha_2(1-\alpha_2)}
\right) = \sqrt{\mu(1-\mu)},\]
and
\begin{align*}
F_\mu(g) = & \frac1{1+g} \sup_{\substack{0\le \alpha_1,\alpha_2\le 1\\ (\alpha_1+\alpha_2)/2\le \mu}}
\left(
\sqrt{\alpha_1(1-\alpha_1)}
+
g\sqrt{\alpha_2(1-\alpha_2)}
\right) \\
\ge & \frac1{1+g} 
\left(
\sqrt{\mu(1-\mu)}
+
g\sqrt{\mu(1-\mu)}
\right) \\
=& F_\mu (1).
\end{align*}

\end{Examp}

\subsection{Concentration phenomenon}

We have studied the optimal expectation value achieved by multivariate DQI states in the asymptotic regime.
We now turn to the concentration behavior of the output probability distribution over $\mathbf{x}$ resulting from measurements of these states. For simplicity, we focus on the case of $\mathbb{F}^n_2$ in this section.

Let us first consider a family of states with equal weights in the decomposition of the block-symmetric polynomial state \(\ket{\mathbf{P}(\mathbf{j})}\). More precisely, let \(J_t\) and \(r_t\) be integers such that
\[
J_t=(\alpha_t+o(1))m_t
\qquad\text{for some }0<\alpha_t<1,
\]
with
\[
r_t\to\infty,\qquad r_t=o(m_t),\qquad J_t+r_t<m_t,
\]
and
\[
|\mathbf{J}|+\sum_{t=1}^N r_t+1<\frac{d^\perp}{2}.
\]
Here we write
\[
\mathbf{J}:=(J_1,\ldots,J_N),
\qquad
\mathcal{R}:=\prod_{t=1}^N\{J_t+1,J_t+2,\ldots,J_t+r_t-1\}.
\]
We then define the state \(\ket{\rho_{\mathbf{J},\mathbf{r}}}\) by
\begin{equation}\label{eq:rectangular_weighted_DQI_state}
\ket{\rho_{\mathbf J,\mathbf r}}
:=
\frac{1}{\sqrt{|\mathcal R|}}
\sum_{\mathbf j\in\mathcal R}
\ket{\mathbf P(\mathbf j)}.
\end{equation}
It is easy to see that the multivariate DQI states constructed in Theorem~\ref{thm:pf_semicircle_general}, which achieve the asymptotically optimal expectation value, are also of the form \eqref{eq:rectangular_weighted_DQI_state}.

In the following, a statement of the form
\[
|(B\mathbf x-\mathbf v)_{S_t}|=(\beta_t+o(1))m_t,
\]
means that there exists a sequence \(\varepsilon_m\to0\) such that
\[
\left|
\frac{|(B\mathbf x-\mathbf v)_{S_t}|}{m_t}-\beta_t
\right|\le \varepsilon_m,
\]
for all \(t\in[N]\). Here $|(\Bvv)_{S_t}|$ denotes the Hamming weight of the restriction of the vector \(\mathbf v\) to \(S_t\).

\begin{thm}\label{thm:concentration}
Let $N>0$ be a fixed constant. Assume that \(0<\alpha_t<\tfrac12\), and define
$\beta_t:=\frac12-\sqrt{\alpha_t(1-\alpha_t)},
\forall  t\in [N].$
Then the measurement distribution of the state \(\ket{\rho_{\mathbf J,\mathbf r}}\) defined in
\eqref{eq:rectangular_weighted_DQI_state} is asymptotically concentrated on the set
\begin{align}\label{def:supp_set}
S
:=
\left\{
\Bxx\in\mathbb F_2^n:
\left|(B\Bxx-\Bvv)_{S_t}\right|
=
(\beta_t+o(1))m_t,\ \forall\, t\in[N]
\right\}.
\end{align}
Equivalently, defining $\Pr_{\rho_{\mathbf{J},\mathbf{r}}}(\mathbf{x}) := |\langle \mathbf{x} | \rho_{\mathbf{J},\mathbf{r}} \rangle|^2$
as the probability of observing $\mathbf{x}$
 when measuring $\ket{\rho_{\mathbf{J},\mathbf{r}}}$ in the computational basis, we
have
\begin{align}
    \text{Pr}_{\rho_{\mathbf J,\mathbf r}}(\Bxx\in S)
    =1-o(1).
\end{align}

\end{thm}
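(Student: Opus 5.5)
Our plan is to reduce the concentration statement to $N$ separate one-block concentration statements, one for each block $S_t$, and then combine them with a union bound. The natural tool is a second-moment (Chebyshev) argument applied to each block objective $F_t$ under the measurement distribution of $\ket{\rho_{\mathbf J,\mathbf r}}$.

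Over $\mathbb F_2$ with $r=1$, $p=2$ we have $\bar f=0$ and $h_i=f_i$. Hence
\[
|(B\mathbf x-\mathbf v)_{S_t}|=\frac{m_t-F_t(\mathbf x)}{2}.
\]
So membership in $S$ is equivalent to
\[
F_t(\mathbf x)/m_t=2\sqrt{\alpha_t(1-\alpha_t)}+o(1)\quad\text{for every } t.
\]
It therefore suffices to show, for each fixed $t$,
\[
\langle F_t\rangle=(2\sqrt{\alpha_t(1-\alpha_t)}+o(1))m_t,
\qquad
\langle F_t^2\rangle-\langle F_t\rangle^2=o(m_t^2),
\]
where the moments are taken in $\ket{\rho_{\mathbf J,\mathbf r}}$. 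Chebyshev's inequality then gives $\Pr\bigl(|F_t-\langle F_t\rangle|>\varepsilon m_t\bigr)\to0$ for each $\varepsilon$. Since $N$ is fixed, a union bound over $t$ and a diagonal choice $\varepsilon_m\to0$ finish the argument.

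To compute these moments, we reuse the machinery from the proof of Theorem~\ref{thm:pf_semicircle_general} (Appendix~\ref{append:A}): the states $\ket{\mathbf P(\mathbf j)}$ are orthonormal for $|\mathbf j|\le l$. The key algebraic fact is that multiplication by the diagonal operator $F_t=\sum_{i\in S_t}h_i$ acts on block-symmetric states as a tridiagonal operator in the $t$-th index. Using $h_i^2=1$, we have
\[
F_t\,P^{(j)}_{H_t}=(j+1)P^{(j+1)}_{H_t}+(m_t-j+1)P^{(j-1)}_{H_t}.
\]
After normalization this gives
\[
F_t\ket{\mathbf P(\mathbf j)}=\sqrt{(j_t+1)(m_t-j_t)}\,\ket{\mathbf P(\mathbf j+e_t)}+\sqrt{j_t(m_t-j_t+1)}\,\ket{\mathbf P(\mathbf j-e_t)},
\]
with the other indices unchanged. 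Applying $F_t$ twice shifts $j_t$ by at most $2$. The hypothesis $|\mathbf J|+\sum_t r_t+1<d^\perp/2$ guarantees that every state $\ket{\mathbf P(\mathbf j\pm 2e_t)}$ that arises still lies in the orthonormal regime.

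The moments then reduce to explicit sums over $\mathcal R$:
\begin{itemize}
\item $\langle F_t\rangle=\frac{1}{|\mathcal R|}\sum_{\mathbf j}2\sqrt{(j_t+1)(m_t-j_t)}$ restricted to pairs $\mathbf j,\mathbf j+e_t$ both in $\mathcal R$. This loses only a fraction $O(1/r_t)$ of the terms, and since $j_t=(\alpha_t+o(1))m_t$ on $\mathcal R$, it equals $(2\sqrt{\alpha_t(1-\alpha_t)}+o(1))m_t$.
\item $\langle F_t^2\rangle=\|F_t\rho\|^2$ is a sum of squared amplitudes of the form $(j_t+1)(m_t-j_t)+j_t(m_t-j_t+1)$ plus cross terms. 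Over the bulk of $\mathcal R$, where consecutive coefficients of $\ket{\rho_{\mathbf J,\mathbf r}}$ are equal, this is $(4\alpha_t(1-\alpha_t)+o(1))m_t^2$, i.e.\ $\langle F_t\rangle^2+o(m_t^2)$.
\end{itemize}
The requirements $r_t\to\infty$ and $r_t=o(m_t)$ are exactly what make the boundary contributions $O(m_t^2/r_t)$ negligible while keeping $j_t/m_t\to\alpha_t$ uniformly on $\mathcal R$.

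The main obstacle is the variance computation. One must verify carefully that $F_t\ket{\rho_{\mathbf J,\mathbf r}}$ is close in norm to $\langle F_t\rangle\ket{\rho_{\mathbf J,\mathbf r}}$ up to $o(m_t)$, i.e.\ that the state is an approximate eigenvector of the tridiagonal operator. Concretely, we will write
\[
F_t\rho-\lambda_t\rho,\qquad \lambda_t:=2\sqrt{\alpha_t(1-\alpha_t)}\,m_t .
\]
Its norm squared splits into a boundary part from $j_t\in\{J_t,J_t+1,J_t+r_t-1,J_t+r_t\}$, of size $O(m_t^2/r_t)$, and a bulk part controlled by the variation of $\sqrt{j_t(m_t-j_t)}$ over a window of width $r_t$, which is $O(r_t)$ per step. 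Both parts are $o(m_t^2)$, and the bound $\mathrm{Var}(F_t)\le\|F_t\rho-\lambda_t\rho\|^2$ then completes the argument.
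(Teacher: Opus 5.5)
Your proposal is correct and follows essentially the same route as the paper. The paper likewise uses the tridiagonal action of $H_t=F_t$ on the block-symmetric states to show $\bigl\|(H_t-2\sqrt{\alpha_t(1-\alpha_t)}\,m_t)\ket{\rho_{\mathbf J,\mathbf r}}\bigr\|=o(m_t)$, with boundary error $O(m_t/\sqrt{r_t})$ and bulk error $O(r_t)$ plus the $J_t/m_t\to\alpha_t$ term, and then applies Markov's inequality to the squared deviation and a union bound over the $N$ blocks. The paper packages the boundary estimate through shift operators $U_t^{\pm}$ rather than your direct coefficient computation, and your separate mean/variance step is unnecessary once the approximate-eigenvector bound is in hand.
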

 % In particular, it shows that the support of the state \(\ket{\rho_{\mathbf J,\mathbf r}}\) in the computational basis is concentrated on the set \(S\)
 % defined in \eqref{def:supp_set}.
The proof of the above theorem is presented in Appendix~\ref{Appendix:concen}.
 In addition, for any vector \(\Bxx \in S\), the value of the objective function is
\begin{align*}
F_{\mathbf{g}}(\Bxx)
&=
\sum_{t=1}^N g_t\bigl(m_t-2|(B\Bxx-\Bvv)_{S_t}|\bigr) \\
&=
\sum_{t=1}^N 2g_t m_t \sqrt{\alpha_t(1-\alpha_t)}
+o\!\left(\sum_{t=1}^N g_t m_t\right).
\end{align*}
Therefore,
\begin{align}
\frac{F_{\mathbf{g}}(\Bxx)}{m}
=
\sum_{t=1}^N 2g_t\theta_t\sqrt{\alpha_t(1-\alpha_t)}
+o\!\left(\sum_{t=1}^N g_t\theta_t\right),
\end{align}
where \(\theta_t=m_t/m\).
Hence, for constant coefficients \(\{g_t\}_t\) and suitable choices of \(\alpha_t\), the expectation value of \(\frac{F_{\mathbf{g}}(\Bxx)}{m}\) is a good approximation to the optimal expectation value in \eqref{eq:simp_F_2}.
In other words, with probability \(1-o(1)\), measuring \(\ket{\rho_{\mathbf J,\mathbf r}}\) (or, more specifically, the corresponding multivariate DQI state) yields a bitstring \(\Bxx\) such that \(\frac{F_{\mathbf g}(\Bxx)}{m}\) is close to the optimal expectation value up to lower-order terms.

Now, let us consider the size of the set $S$ on which the multivariate DQI state is concentrated.
Specifically, we have the following result.
\begin{thm}\label{thm:weighted_many_solutions}
Assume that $B:\mathbb F_2^n\to\mathbb F_2^m$ has rank $n$,  $\Bvv\in \BFF_2^m$ is a fixed vector, the number of blocks $N$ is a fixed constant
and $m_t=\theta_t m+o(m), \forall t\in [N],
\sum_{t=1}^N\theta_t=1$.
Let $C=\{B\Bxx: \Bxx\in\BFF_2^n\}\subseteq\BFF_2^m 
$ be a code in $\BFF_2^m$ and let $d^\perp=d(C^\perp)$ be the dual distance with $d^\perp=\delta^\perp m+o(m)$ for some $\delta^\perp\in(0,1]$.  
Assume also that $0<\alpha_t<\frac12,
\forall t\in [N]$,
and that there is a constant $\gamma>0$ such that
\begin{equation}\label{eq:weighted_many_solutions_gap}
\sum_{t=1}^N\theta_t\alpha_t
\le
\frac{\delta^\perp}{2}-\gamma .
\end{equation}
Then, the size of the set $S$ defined in Theorem \ref{thm:concentration}, on which the state \(|\rho_{\mathbf J,\mathbf r}\rangle\) is concentrated,
has the following lower and upper bounds: there exist constants $0<a<b<1$ such that 
\begin{align}
   2^{an}\leq |S|\leq 2^{bn}.
\end{align}
\end{thm}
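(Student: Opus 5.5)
Two quantities have to be controlled: the number of codewords $\mathbf y=B\mathbf x\in C$ whose blockwise distances to $\mathbf v$ are prescribed, and the size of the code $|C|=2^n$ itself. Since $B$ has rank $n$, the map $\mathbf x\mapsto B\mathbf x$ is injective, so $|S|$ equals the number of codewords $\mathbf y\in C$ with $|(\mathbf y-\mathbf v)_{S_t}|=(\beta_t+o(1))m_t$ for all $t$.

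\emph{Upper bound.} For the upper bound I would simply bound $|S|$ by the number of such words in $\mathbb F_2^m$:
\[
|S|\le\prod_t\binom{m_t}{(\beta_t+o(1))m_t}\le 2^{m\sum_t\theta_t H_2(\beta_t)+o(m)},
\]
with $H_2$ the binary entropy. Because $0<\alpha_t<1/2$, we have $\beta_t\in(0,1/2)$, so $H_2(\beta_t)<1$ and the exponent is $m(1-\eta)$ for some constant $\eta>0$. To convert this into $2^{bn}$ with $b<1$, I would compare it with $|C|=2^n$. This comparison is the subtle point, because we need $n$ to be a constant fraction of $m$. A cleaner route is to combine the trivial bound $|S|\le 2^n$ with the concentration argument. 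The probability mass of $\ket{\rho_{\mathbf J,\mathbf r}}$ on each $\mathbf x\in S$ can be computed explicitly: the amplitude depends only on the block weight profile, namely $\sum_{\mathbf j}\prod_t K_{j_t}(w_t)/\sqrt{\binom{m_t}{j_t}}$ with Krawtchouk polynomials. So I would show that the amplitude squared on $S$ is at least $2^{-(1-a')n}$ and at most $2^{-(1-b')n}$ up to $2^{o(n)}$ factors. With $\Pr(S)=1-o(1)$ from Theorem~\ref{thm:concentration}, this gives $2^{an}\le|S|\le 2^{bn}$ directly.

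\emph{Pointwise amplitude estimate.} The key step is therefore to estimate $|\langle\mathbf x|\rho_{\mathbf J,\mathbf r}\rangle|^2$ for $\mathbf x\in S$. Writing $\ket{\mathbf P(\mathbf j)}$ in the Fourier picture, its amplitude is
\[
2^{-n/2}\prod_t \binom{m_t}{j_t}^{-1/2}K_{j_t}^{(m_t)}(w_t),\qquad w_t=|(B\mathbf x-\mathbf v)_{S_t}|.
\]
I would then invoke standard Krawtchouk asymptotics in the oscillatory region. For $w_t/m_t=\beta_t$, which is exactly the edge $\tfrac12-\sqrt{\alpha_t(1-\alpha_t)}$ of the oscillatory region for $j_t=\alpha_t m_t$, one has
\[
K_{j}(w)^2/\binom{m}{j}=\binom{m}{w}^{-1}2^{m}\cdot 2^{-o(m)},
\]
at least on average over the window $\mathcal R$. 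Summing over $\mathcal R$, of size $2^{o(m)}$, keeps the estimate at this order. This yields
\[
\Pr(\mathbf x)=2^{-n}\,2^{m\sum_t\theta_t(1-H_2(\beta_t))\pm o(m)}.
\]
Hence $|S|=2^{n-m\sum_t\theta_t(1-H_2(\beta_t))\pm o(m)}$. The exponent is strictly less than $n$, which gives $b<1$.

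\emph{Lower bound: positivity of the exponent.} For the lower bound we need $n-m\sum_t\theta_t(1-H_2(\beta_t))\ge an$ with $a>0$. Here the hypothesis \eqref{eq:weighted_many_solutions_gap} and the dual distance enter. A code whose dual has relative distance $\delta^\perp$ behaves like a $\delta^\perp m$-wise independent code, and by the linear-programming (MRRW-type) bound, or equivalently the dual-distance sphere argument, the rate satisfies $n/m\ge H_2$-type quantities determined by $\delta^\perp$. Combined with $\sum_t\theta_t\alpha_t\le\delta^\perp/2-\gamma$ and concavity, this should give $\sum_t\theta_t(1-H_2(\beta_t))\le n/m-c\gamma$ for some $c>0$.

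I expect the main obstacle to be this step: making the rate lower bound from the dual distance precise, and obtaining the Krawtchouk estimate uniformly at the turning point $\beta_t$ rather than strictly inside the oscillatory region. If the pointwise amplitude analysis proves too delicate, I would fall back on a second-moment count. In that approach one expands the indicator of $S$ in characters and uses that all nonzero dual words have weight at least $d^\perp>2\sum_t J_t$, so that the low-degree part of the count matches the random-code prediction.
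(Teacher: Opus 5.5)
\textbf{The upper bound has a genuine gap; the lower bound is the paper's argument with two pieces still to supply.}

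\emph{Lower bound.} Your structure matches the paper's: an upper bound on $\Pr_{\rho_{\mathbf J,\mathbf r}}(\mathbf x)$ on $S$, then $\Pr(S)=1-o(1)$, then the sphere-packing bound $n\ge m\,h(\delta^\perp/2)-o(m)$ for $C^\perp$. To close it:
(i) Drop the Krawtchouk asymptotics and use the exact orthogonality bound $K_j(q;m)^2\le 2^m\binom{m}{j}/\binom{m}{q}$. This gives $\Pr(\mathbf x)\le|\mathcal R|\,2^{m-n}/\prod_t\binom{m_t}{q_t}$ uniformly on $S$.
(ii) Supply the step you left open, $h(\beta_t)>1-h(\alpha_t)$. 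It follows from $h(x)>4x(1-x)$ for $x\neq\frac12$ together with $4\beta_t(1-\beta_t)=(1-2\alpha_t)^2$. Concavity of $h$ and the gap hypothesis then give $\sum_t\theta_t(1-h(\beta_t))<h(\delta^\perp/2-\gamma)<h(\delta^\perp/2)$.

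\emph{Upper bound.} Your route needs $\Pr_{\rho_{\mathbf J,\mathbf r}}(\mathbf x)\ge 2^{-n+\eta m-o(m)}$ for \emph{every} $\mathbf x\in S$. But $\langle\mathbf x|\rho_{\mathbf J,\mathbf r}\rangle$ is proportional to $\prod_t\sum_{j=J_t+1}^{J_t+r_t-1}K_j(w_t;m_t)\binom{m_t}{j}^{-1/2}$, which is a \emph{signed} sum.
Profiles in $S$ may sit $o(m_t)$ inside the oscillatory region. For $w_t=(\beta_t+\epsilon)m_t$ the terms oscillate in $j$ with local period of order $\epsilon^{-1/2}$. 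This can be far shorter than the window, e.g.\ $\epsilon\sim m^{-1/2}$ with $r_t\sim\sqrt m$. Individual terms are also small near the zeros of $K_j$, the smallest of which lies at the turning point.
An estimate ``on average over $\mathcal R$'' controls $\sum_{\mathbf j}a_{\mathbf j}^2$, not $|\sum_{\mathbf j}a_{\mathbf j}|^2$. A single badly cancelling profile is enough to break $|S|\le 1/\min_S\Pr$.
Your fallback has the same defect. The indicator of $S$ has Krawtchouk components of degree $\ge d^\perp$, and the dual distance says nothing about those.

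The missing idea is a Delsarte/LP-type bound with a \emph{nonnegative} test function of degree $<d^\perp$. For $2|\boldsymbol\ell|<d^\perp$ the paper proves the block identity
\[
2^n\prod_t\binom{m_t}{\ell_t}=\sum_{\mathbf i}A_{\mathbf i}(\mathbf v)\prod_tK_{\ell_t}(i_t;m_t)^2 ,
\]
which gives $A_{\mathbf q}(\mathbf v)\le 2^n\prod_t\binom{m_t}{\ell_t}K_{\ell_t}(q_t;m_t)^{-2}$.
For each profile $\mathbf q$, choose $\ell_t$ with $\ell_t/m_t\to\frac12-\sqrt{\beta_t(1-\beta_t)}=\alpha_t$, as in Krasikov--Litsyn. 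This makes each factor $2^{m_t}\binom{m_t}{\ell_t}/(\binom{m_t}{q_t}K_{\ell_t}(q_t;m_t)^2)=O(m_t)$. The gap hypothesis is exactly what makes this $\boldsymbol\ell$ admissible, i.e.\ $2|\boldsymbol\ell|<d^\perp$.
Summing over the polynomially many profiles gives $|S|\le 2^{n-\eta m+o(m)}$, and $n=\Theta(m)$ then yields $b<1$.

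This is in fact your own ``total probability $\le 1$'' idea, applied to the single state $\ket{\mathbf P(\boldsymbol\ell)}$ instead of $\ket{\rho_{\mathbf J,\mathbf r}}$. Its normalization is precisely the identity above, and its amplitude is a single product, so nothing cancels.
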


The proof of the above theorem is presented in Appendix~\ref{Appendix:concen}.

\subsection{Comparison with the original (univariate) DQI framework}\label{subsec:compare-univariate}

The multivariate DQI framework developed above is strictly more general than the original (univariate) DQI framework of~\cite{jordan2024optimization}. In this subsection, we make this comparison precise: we first determine the optimal expectation value achievable by the original DQI state when applied to a weighted objective, then show that multivariate DQI strictly dominates univariate DQI for non-uniform weight profiles, and finally discuss a natural weight-aware univariate alternative whose analysis we leave open.

Let us first consider the univariate DQI on weighted objectives.
For the weighted Max-LINSAT problem $F_{\mathbf{g}}(x) = \sum_{t} g_t F_t(x)$, where $F_t = \sum_{i \in S_t} f_i(\mathbf{b}_i \cdot x)$, one may consider the expectation value of $F_{\mathbf{g}}(x)$ with respect to the original DQI state constructed from the \emph{unweighted} objective function $\sum_{i \in [m]} f_i(\mathbf{b}_i \cdot x)$, namely
\[
    \sum_{x} f\Bigl(\sum_{i} f_i(\mathbf{b}_i \cdot x)\Bigr) \ket{x},
\]
as introduced in~\cite{jordan2024optimization}. Here $f$ is a univariate polynomial of degree at most $l$. The following proposition determines the optimal performance of this state for weighted objectives; the proof is given at the end of Appendix~\ref{append:A}.

\begin{prop}\label{prop:univariate-on-weighted-main}
Given a matrix $B \in \mathbb{F}_2^{m \times n}$ and coefficients $c_1,...,c_m \in \R$. 
Let $$F_{\mathbf c}(\mathbf{x}) = \sum_{i=1}^m c_if_i (\sum_{j=1}^n B_{ij} x_j)$$ be a weighted max-XORSAT objective function. 
Consider the original (univariate) DQI state $\ket{f(F)}=\sum_{\Bxx}f(F)\ket{\Bxx}$ with $F(\mathbf{x}) )= \sum_{i=1}^m f_i (\sum_{j=1}^n B_{ij} x_j)$ being 
the unweighted objective function, let $\langle s \rangle$ be the expected value of the objective function $F_{\mathbf c}(\mathbf{x})$ for the symbol string $\Bxx$ obtained upon measuring the state $\ket{f(F)}$ in the computational basis. 
Suppose $2l +1< d^\perp$, where $d^\perp$ is the minimum distance of the code $C^\perp = \{ \mathbf{y} \in \mathbb{F}_2^m : B^\top \mathbf{y} = \mathbf{0} \}$, \textit{i.e.} the minimum Hamming weight of any nonzero codeword in $C^\perp$. 
In the limit $m \to \infty$, with $\mu = l/m\in (0,1/2)$ fixed, the optimal choice of the univariate polynomial $f$ to maximize $\langle s \rangle$ yields
\begin{equation}\label{eq:DQI_form}
\lim_{\substack{m,l\to\infty\\l/m=\mu}} \frac{\langle s \rangle}{m} = 2 (\Expect_{i} c_i ) \sqrt{\mu\left(1-\mu \right)}.
\end{equation}  
\end{prop}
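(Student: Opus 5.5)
The plan is to reduce to the univariate DQI machinery of~\cite{jordan2024optimization} by expanding the weighted objective in the same orthonormal basis the unweighted analysis uses. Over $\mathbb{F}_2$ we have $f_i(\mathbf b_i\cdot\mathbf x)=(-1)^{\mathbf b_i\cdot \mathbf x+v_i}$ and $h_i=f_i$ (since $\bar f=0$, $\kappa=0$). Any univariate DQI state with $\deg f\le l$ can be written as $\ket{f(F)}=\sum_{k=0}^l w_k\ket{P^{(k)}}$, where $\ket{P^{(k)}}$ are the normalized elementary-symmetric states in all $m$ variables. By the $N=1$ case of the orthonormality statement used in the proof of Theorem~\ref{thm:pf_semicircle_general}, these are orthonormal when $2l<d^\perp$, so $\sum_k|w_k|^2=1$.

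Next I compute $\langle s\rangle=\sum_i c_i\bra{f(F)}\widehat{f_i}\ket{f(F)}$, where $\widehat{f_i}$ is diagonal multiplication by $f_i(\mathbf b_i\cdot\mathbf x)$. In the Fourier picture, $\ket{P^{(k)}}$ is (up to Hadamard) the state $\binom{m}{k}^{-1/2}\sum_{|\mathbf y|=k}(-1)^{\mathbf v\cdot \mathbf y}\ket{B^\top\mathbf y}$. Multiplication by $f_i$ shifts $\mathbf y\mapsto \mathbf y+\mathbf e_i$. Because $2l+1<d^\perp$, the vectors $B^\top\mathbf y$ with $|\mathbf y|\le l+1$ are distinct, so the matrix element $\bra{P^{(k')}}\widehat{f_i}\ket{P^{(k)}}$ vanishes unless $|k-k'|=1$. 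For $k'=k+1$ it counts the $\mathbf y$ with $|\mathbf y|=k$, $y_i=0$, giving $\binom{m-1}{k}\big/\sqrt{\binom mk\binom m{k+1}}$. Since this is independent of $i$, summing over $i$ with weights $c_i$ gives
\[
\langle s\rangle=\Bigl(\sum_i c_i\Bigr)\frac1m\,w^\dagger A^{(m,l)} w,
\]
where $A^{(m,l)}$ is exactly the tridiagonal matrix of the unweighted problem, with off-diagonal entries $\sqrt{(k+1)(m-k)}$. In other words, the weighted expectation is the unweighted one multiplied by $\Expect_i c_i$. The key point is symmetry: the univariate states cannot see the block structure, so only the average weight survives.

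Finally, I optimize over $w$. The maximum is $(\sum_ic_i)\lambda_{\max}(A^{(m,l)})/m$, and the known asymptotic $\lambda_{\max}(A^{(m,l)})/m\to 2\sqrt{\mu(1-\mu)}$ for $\mu<1/2$, from~\cite{jordan2024optimization} or the $N=1$, $\kappa=0$ case of Theorem~\ref{thm:pf_semicircle_general}, gives~\eqref{eq:DQI_form} after dividing by $m$. This requires $\Expect_i c_i$ to be treated as fixed, or bounded so the limit makes sense.

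The main obstacle is verifying that the diagonal and cross terms are independent of $i$, i.e.\ that no $\mathbf y$ collisions occur. This is exactly where $2l+1<d^\perp$ is needed: it guarantees distinct syndromes up to weight $l+1$ on each side of the shift. The same bookkeeping handles arbitrary real $c_i$, including negative ones, since $c_i$ enter only linearly through their sum.
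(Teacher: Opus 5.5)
Your proposal is correct and follows the paper's route. Like the paper, you reduce to $\langle s\rangle=(\Expect_i c_i)\,\mathbf w^\dagger A_0\mathbf w$ with the unweighted tridiagonal matrix $A_0$ and then invoke the semicircle asymptotics of \cite{jordan2024optimization}; the paper cites prior techniques for the matrix elements, whereas you compute them explicitly via the syndrome shift $\mathbf y\mapsto\mathbf y+\mathbf e_i$, as in Appendix~\ref{append:B}. Two small caveats: the collision-freeness you need is only between a weight-$\le l$ and a weight-$\le l+1$ vector, which is exactly what $2l+1<d^\perp$ gives, not distinctness of all weight-$\le l+1$ syndromes. Also, your remark on negative weights holds only when $\Expect_i c_i\ge 0$, since otherwise the optimum is $|\Expect_i c_i|\,\lambda_{\max}(A_0)$ by the symmetric spectrum of $A_0$; this is consistent with the paper's standing assumption $c_i>0$.
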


Now, let us study the strict gain of multivariate over univariate DQI.
Since any univariate polynomial $f$ of degree at most $l$ can be regarded as a special case of a multivariate polynomial of total degree at most $l$ via $P(x_1, \ldots, x_N) = f\bigl(\sum_{t=1}^{N} x_t\bigr)$, the optimal performance achievable with multivariate DQI states is at least as good as, and in general strictly better than, that achievable with the original DQI states.

As an illustration, consider the special case over the finite field $\mathbb{F}_2$ (that is, $p = 2$ and $r = 1$) with balanced block sizes $m_1 = m_2 = m/2$. We compare the optimal value $\langle s_{\mathbf{g}} \rangle / m$ over all multivariate polynomials, given in~\eqref{eq:asym_f2} , with the optimal value $\langle s \rangle / m$ over all univariate polynomials, given in~\eqref{eq:DQI_form}, in the asymptotic regime $m \to \infty$. See Figure~\ref{Figure:comp_DQI} for several choices of weights $g$. The figure shows that the multivariate DQI performance strictly exceeds the univariate baseline for all $g \in \{2, 3, 5, 8\}$ and all $\mu \in (0, 1/2)$, with the gap widening as $g$ moves away from the uniform value $g = 1$.

\begin{figure}[htbp]
\centering
    \begin{minipage}{0.5\textwidth}
        \centering
        \includegraphics[width=\textwidth]{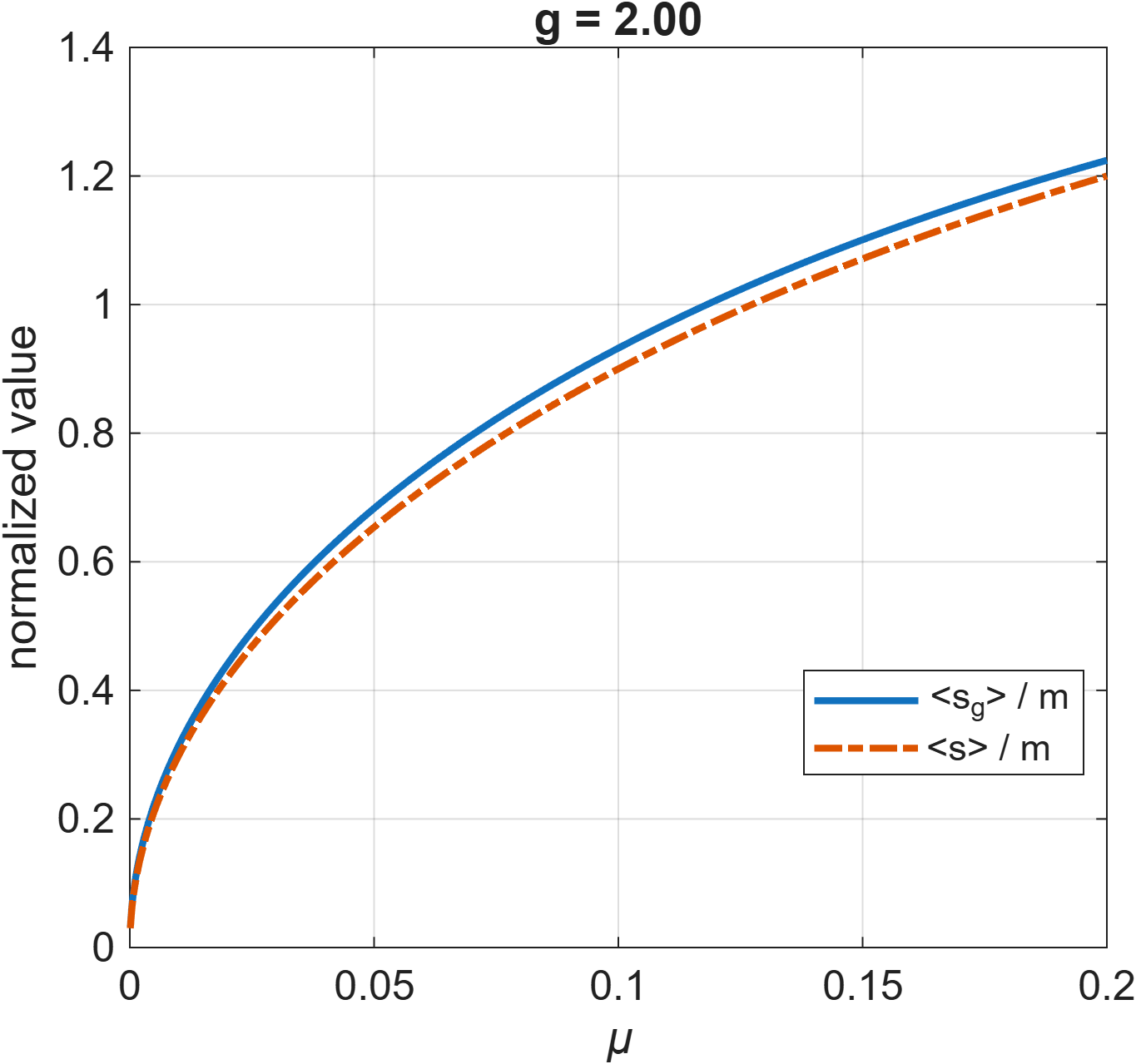}
    \end{minipage}\hfill
    \begin{minipage}{0.5\textwidth}
        \centering
        \includegraphics[width=\textwidth]{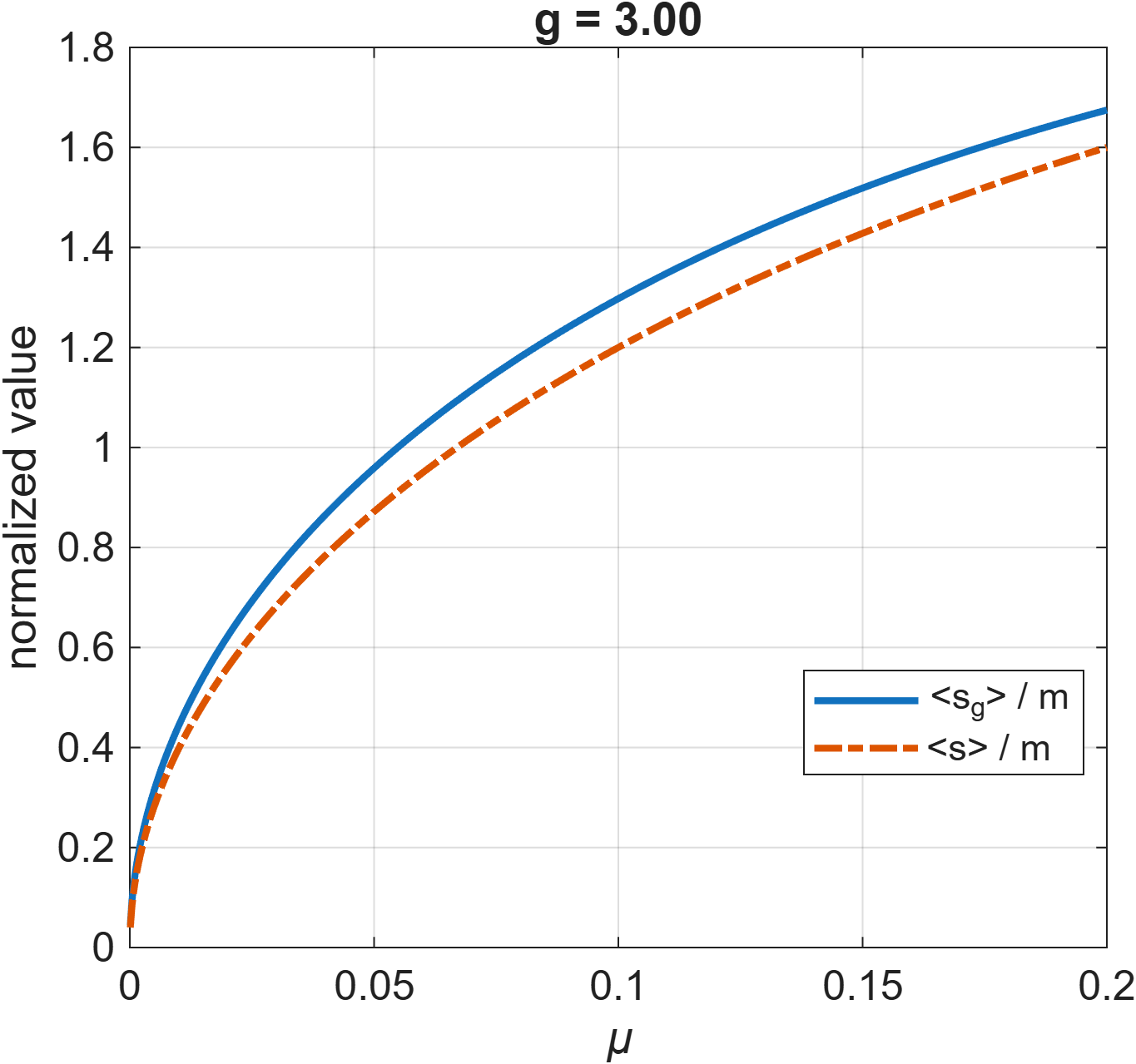}
    \end{minipage}\hfill  
        \begin{minipage}{0.5\textwidth}
        \centering
        \includegraphics[width=\textwidth]{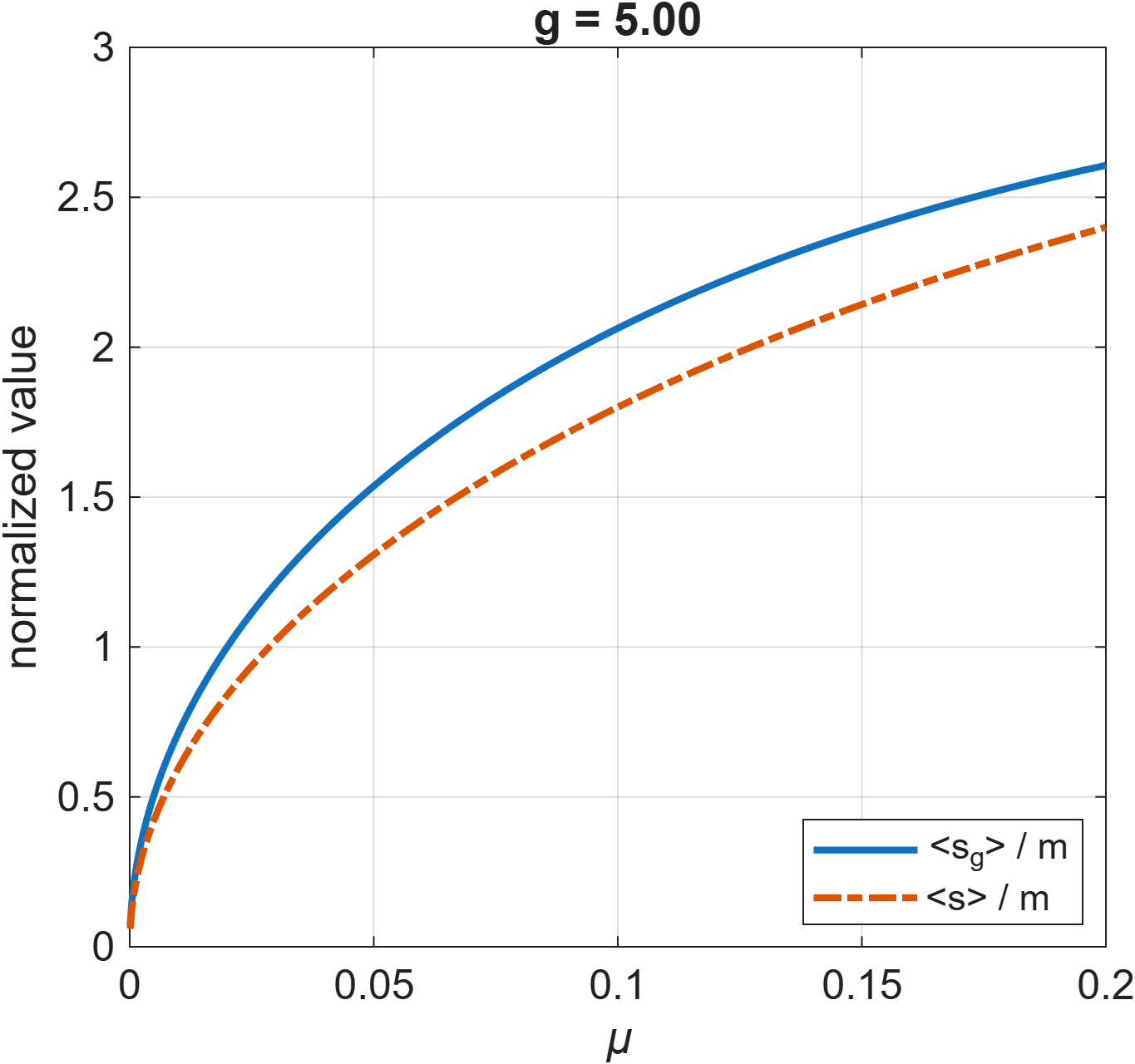}
    \end{minipage}\hfill
    \begin{minipage}{0.5\textwidth}
        \centering
        \includegraphics[width=\textwidth]{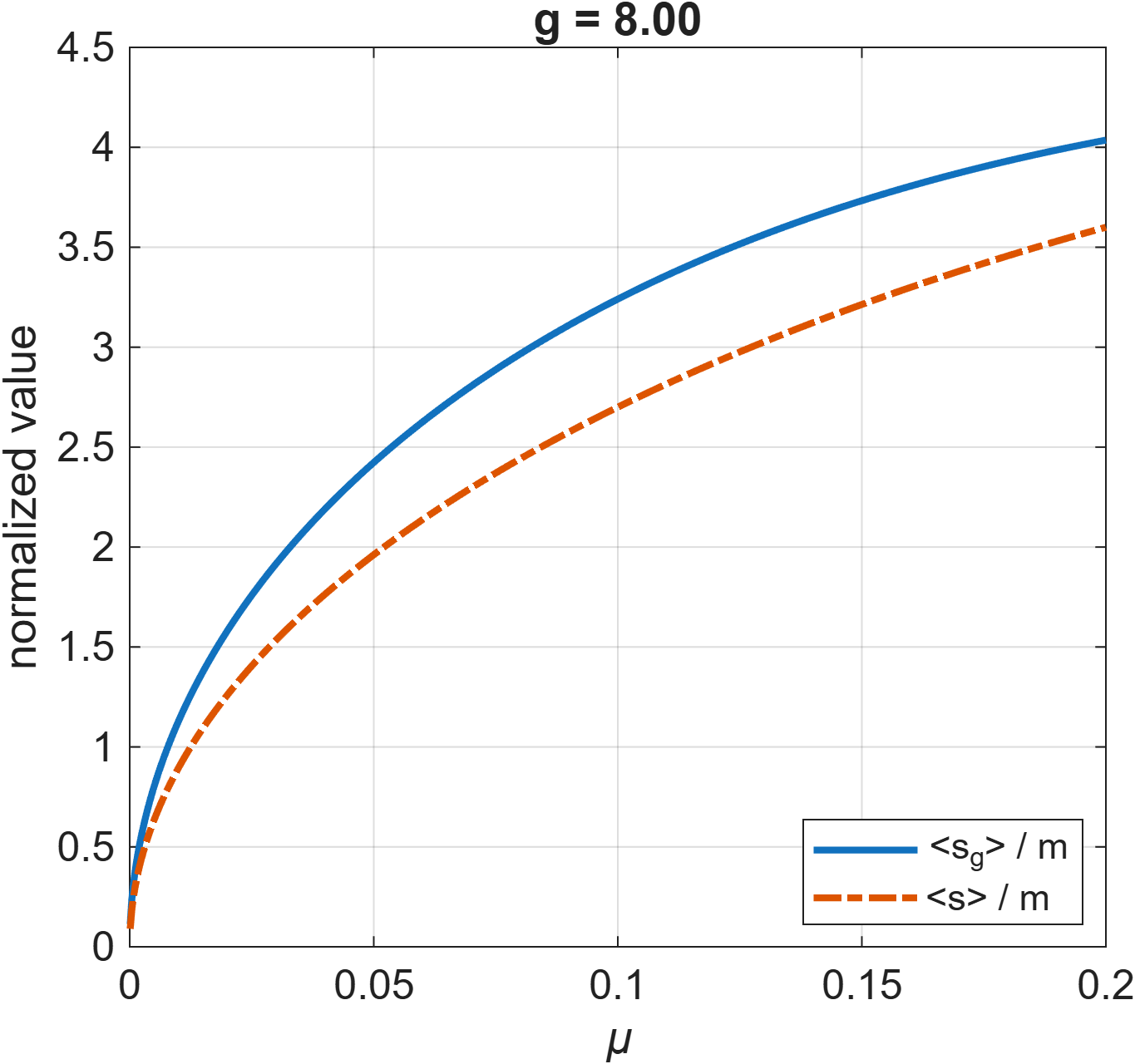}
    \end{minipage}\hfill 
\caption{  Multivariate DQI (blue) strictly outperforms univariate DQI (red dashed) for all weights  \(g \in \{2,3,5,8\}\) in the balanced two-block setting.}
    
\label{Figure:comp_DQI}
\end{figure}

Finally, let us discuss an alternative by a weight-aware univariate ansatz.
The comparison above concerns the weighted objective $F_{\mathbf{g}}$ evaluated after measurement of two different states: the multivariate DQI state, and the original DQI state constructed from the \emph{unweighted} objective. Another natural possibility is to construct a univariate DQI state directly from the \emph{weighted} objective, namely
\begin{equation}\label{eq:weight-aware-univariate}
    \sum_{x} f\Biggl(\sum_{i} c_i f_i(\mathbf{b}_i \cdot x)\Biggr) \ket{x},
\end{equation}
where $f$ is a univariate polynomial. Optimizing over states of the form~\eqref{eq:weight-aware-univariate} appears to be substantially more difficult than either of the two ansätze compared above, since the weighted objective $\sum_i c_i f_i(\mathbf{b}_i \cdot x)$ generally lacks the block-symmetric structure exploited in our multivariate framework. In particular, the reduction to a tridiagonal matrix eigenvalue problem used in the univariate case~\cite{jordan2024optimization} and the block-symmetric reduction used in our multivariate analysis (Theorem \ref{thm:pf_semicircle_general})  both rely on symmetries that are broken by the weighted sum. We leave a systematic study of this weight-aware univariate ansatz, and in particular a comparison with our multivariate framework, for future work.

\section{Preparation of Multivariate DQI States}\label{sec:gen_MDQI}
In this section, we will give the quantum process 
to generate the multivariate DQI state, which can be viewed as a weighted block generalization of the original DQI protocol of \cite{jordan2024optimization}.

\subsection{Multivariate DQI States for $\mathbb{F}_2$}
Let us start with the case where $p=2$ and $r=1$, which corresponds to the weighted XORSAT problem. In this setting, the constituent functions $f_i$ take the form $f_i(\mathbf{b}_i \cdot \mathbf{x}) = (-1)^{v_i + \mathbf{b}_i \cdot \mathbf{x}}$. 
For each block $S_t$, we let $B_t \in \mathbb{F}_2^{m_t \times n}$ denote the submatrix of $B$ restricted to the rows in $S_t$, and $v^{(t)} \in \mathbb{F}_2^{m_t}$ denote the restriction of the target vector $v$ to the same block. For any vector $y \in \mathbb{F}_2^m$, let $y^{(t)}$ be its restriction to block $S_t$. 
Then, we consider the Fourier transform of the block-symmetric polynomial state
\begin{align}
    \ket{\widetilde{\mathbf{P}(\mathbf j)}}
    :=H^{\ot n}\ket{\mathbf{P}(\mathbf j)}
    =\frac{1}{\sqrt{\prod_{t=1}^N \binom{m_t}{j_t}}}
\sum_{\substack{y^{(t)}\in \mathbb F_2^{m_t}\\ |y^{(t)}|=j_t,\ \forall t}}
(-1)^{\sum_{t=1}^N v^{(t)}\cdot y^{(t)}}
\ket{\sum_{t=1}^N B_t^\top y^{(t)}}.
\label{eq:tensor_basis_fourier}
\end{align}
Here $|y^{(t)}|$ denotes the Hamming weight of the vector $y^{(t)}\in \mathbb{F}^{m_t}_2$, for any $t\in [N]$.

The multivariate DQI state
$\ket{P(F_1, \ldots, F_N)}=\sum_{\mathbf j\in T_l}w_{\mathbf j} \ket{\mathbf{P}(\mathbf j)}$,
with $w_{\mathbf j}=\prod_{t=1}^N a^{(t)}_{j_t}$, and $(a^{(1)}_{j_1},\cdots, a^{(N)}_{j_N})$ is contained in a rectangle $I_1\times\cdots\times I_N\subset T_l$. Extending each $a_j^{(t)}$ by zero outside $I_t$,
we have
\begin{align}
   \nonumber \ket{\widetilde{P(F_1, \ldots, F_N)}}
    :=&H^{\ot n}\ket{P(F_1, \ldots, F_N)}\\
    =&
\sum_{y^{(1)},...,y^{(N)}}
\left(\prod_{t=1}^N \frac{a^{(t)}_{|y^{(t)}|}}{\sqrt{\binom{m_t}{|y^{(t)}|}}}\right)
(-1)^{\sum_{t=1}^N v^{(t)}\cdot y^{(t)}}
\ket{\sum_{t=1}^N B_t^\top y^{(t)}}.
\label{eq:tensor_rho_fourier}
\end{align}

Now, let us consider the quantum algorithm to prepare the multivariate DQI state.
The product structure of the coefficients simplifies the preparation of the state.
The preparation algorithm uses $N$ weight registers $W_1,\ldots,W_N$, $N$ error registers $E_1,\ldots,E_N$, and one syndrome register $S$ of $n$ qubits.

\textbf{Step 1.}
For each $t\in [N]$, prepare the weight register $W_t$ in the state
\[
\sum_{j\ge 0} a^{(t)}_j\ket{j}.
\]
The coefficient $a^{(t)}_j$ is defined as follows
$$a^{(t)}_j:=
\begin{cases}
r_t^{-1/2}, & J_t-r_t+1\le j\le J_t,\\
0, & \text{otherwise.}
\end{cases}$$
Here $r_t:=\lfloor \sqrt{J_t}\rfloor$, and  $J_t$ is defined  in Appendix~\ref{append:A},
each weight register has at most $\lceil \log_2(m_t+1)\rceil$ qubits.
Since the coefficients $a_j^{(t)}$ can be computed classically, the total time required for this step is $O(\sum_t  m_t)=O(m)$.

\textbf{Step 2.}
Conditioned on the value $j$ stored in $W_t$, prepare the error register $E_t$ in the Dicke state
\[
\ket{D_{m_t,j}}
=
\frac{1}{\sqrt{\binom{m_t}{j}}}
\sum_{\substack{y^{(t)}\in\mathbb F_2^{m_t}\\ |y^{(t)}|=j}}
\ket{y^{(t)}}.
\]
Then uncompute $W_t$ from the Hamming weight of $E_t$. After doing this for all blocks, the joint state of the error registers is
\begin{align}
\bigotimes_{t=1}^N
\left(
\sum_{j\ge 0} a^{(t)}_j
\frac{1}{\sqrt{\binom{m_t}{j}}}
\sum_{\substack{y^{(t)}\in\mathbb F_2^{m_t}\\ |y^{(t)}|=j}}
\ket{y^{(t)}}
\right).
\label{eq:block_dicke_tensor}
\end{align}

The superposition of Dicke states can be prepared using $O(\sum_t m^2_t)$ gates by applying the techniques from \cite{bartschi2022short, wang2024quantum}.
Note that this step could potentially be further optimized using the sparse Dicke state preparation methods introduced in \cite{khattar2025verifiable}.

\textbf{Step 3.}
For each block $t$, apply the phase operator $Z^{v^{(t)}}$ to $E_t$. This multiplies the basis vector $\ket{y^{(t)}}$ by $(-1)^{v^{(t)}\cdot y^{(t)}}$, and turns \eqref{eq:block_dicke_tensor} into
\[
\bigotimes_{t=1}^N
\left(
\sum_{j\ge 0} a^{(t)}_j
\frac{1}{\sqrt{\binom{m_t}{j}}}
\sum_{\substack{y^{(t)}\in\mathbb F_2^{m_t}\\ |y^{(t)}|=j}}
(-1)^{v^{(t)}\cdot y^{(t)}}\ket{y^{(t)}}
\right).
\]
This step requires $O(m)$ Pauli Z gates. 

\textbf{Step 4.}
Reversibly compute the syndrome
\[
\sum_{t=1}^N B_t^\top y^{(t)} = B^\top y
\]
into the register $S$. The global state becomes
\[
\sum_{y^{(1)},...,y^{(N)}}
\left(\prod_{t=1}^N \frac{a^{(t)}_{|y^{(t)}|}}{\sqrt{\binom{m_t}{|y^{(t)}|}}}\right)
(-1)^{\sum_{t=1}^N v^{(t)}\cdot y^{(t)}}
\ket{y^{(1)}}\cdots\ket{y^{(N)}}\ket{B^\top y}.
\]

\textbf{Step 5.}
Apply a decoder in \eqref{eq:decoder} for the code $C^\perp=\{z\in \mathbb F_2^m:B^\top z=0\}$ to recover $y$ from the syndrome $B^\top y$, and then uncompute all error registers. Because the support of $w$ is contained in $I_1\times\cdots\times I_N\subset T_l$, every basis state appearing above satisfies
\[
|y|=\sum_{t=1}^N |y^{(t)}|\le l.
\]
Thus, under the assumption $2l+1<d^\perp$, the error is uniquely determined by its syndrome, exactly as in the original DQI construction. After discarding the zeroed-out error registers, the state of the syndrome register is $\ket{\widetilde{P(F_1, \ldots, F_N)}}$ in \eqref{eq:tensor_rho_fourier}.

\textbf{Step 6.}
Finally, apply Hadamard gates $H^{\otimes n}$ to the syndrome register. This yields the desired state
\[
H^{\otimes n}\ket{\widetilde{P(F_1, \ldots, F_N)}}=\ket{P(F_1, \ldots, F_N)}.
\]
This step requires $O(n)$ Hadamard gates.

In summary, Steps 1–4 and Step 6 contribute a total cost of $O(m^2 + mn)$
quantum gates, all of which are efficient. The overall efficiency of the preparation procedure is therefore reduced to that of the syndrome decoder in Step 5. When the dual code $\mathcal{C}^\perp$
admits a polynomial-time decoder up to weight $\leq l$, the multivariate DQI state can be prepared in polynomial time.

\subsection{Multivariate DQI States for general $\mathbb{F}_p$}

We now extend the preparation to the general case over a prime field $\mathbb{F}_p$.
Recall that
$h_i(y)=\frac{\sqrt p}{\varphi}\bigl(f_i(y)-\bar f\bigr), $  and $
\chi_i(y)=\frac{1}{\sqrt p}h_i(y)$. Let $\widetilde\chi_i(a)
=\frac{1}{\sqrt p}
\sum_{y\in\mathbb F_p}
\omega_p^{ay}\chi_i(y)$
be its Fourier coefficient. Then  $\tilde \chi_i(0)=0$ and $\sum_{a\in\mathbb F_p^\ast}|\widetilde\chi_i(a)|^2=1$. We also consider the Fourier transform of the block-symmetric polynomial state
over $\mathbb{F}_p$, 
\begin{align}
     \ket{\widetilde{\mathbf{P}(\mathbf j)}}
    :=F^{\ot n}\ket{\mathbf{P}(\mathbf j)}
    =\frac{1}{\sqrt{\prod_{t=1}^N \binom{m_t}{j_t}}}
\sum_{\substack{y^{(t)}\in \mathbb F_p^{m_t}\\ |y^{(t)}|=j_t,\ \forall t}}
\left(
\prod_{t=1}^N \prod_{\substack{i\in S_t\\ y_i\neq 0}}\tilde \chi_i(y_i)
\right)
\ket{\sum_{t=1}^N B_t^\top y^{(t)}}.
\label{eq:pf_tensor_basis_fourier}
\end{align}
Here, $|y^{(t)}|$ denotes the Hamming weight of the vector $y^{(t)} \in \mathbb{F}_p^{m_t}$, defined as the number of nonzero entries in $y^{(t)}$.

Similarly, for the multivariate DQI state $\ket{P(F_1, \ldots, F_N)}=\sum_{\mathbf j\in T_l}w_{\mathbf j} \ket{\mathbf{P}(\mathbf j)}$,
its Fourier transform will become 
\begin{align}
\ket{\widetilde{P(F_1, \ldots, F_N)}}=
    \sum_{y^{(1)},\ldots,y^{(N)}}
\left(\prod_{t=1}^N \frac{a^{(t)}_{|y^{(t)}|}}{\sqrt{\binom{m_t}{|y^{(t)}|}}}\right)
\left(
\prod_{t=1}^N \prod_{\substack{i\in S_t\\ y_i\neq 0}}\tilde \chi_i(y_i)
\right)
\ket{\sum_{t=1}^N B_t^\top y^{(t)}}.
\label{eq:pf_tensor_rho_fourier}
\end{align}

Hence, the preparation process is similar to the $\mathbb{F}_2$ case. 
We  use $N$ weight registers $W_1,\ldots,W_N$, $N$ error registers $E_1,\ldots,E_N$, and one syndrome register $S$.
For each $t$, the register $E_t$ consists of $m_t$ subregisters of $\lceil\log_2 p\rceil$ qubits, one for each coordinate in the block $S_t$.
We encode the field element $1\in\mathbb F_p$ by the computational basis state $\ket{1}$ in each subregister.

\textbf{Step 1.}
For each $t\in [N]$, prepare the weight register $W_t$ in the state
\[
\sum_{j\ge 0} a^{(t)}_j\ket{j}.
\]
As $a_j^{(t)} = 0$ when $j>J_t$ with  $J_t$ defined  in Appendix~\ref{append:A},
each weight register has at most $\lceil \log_2(m_t+1)\rceil$ qubits, so this step is efficient once the coefficients $a_j^{(t)}$ have been classically computed.

\textbf{Step 2.}
Conditioned on $\ket{j}$ stored in $W_t$, prepare the register of the block $E_t$ in the Dicke state
\[
\ket{D_{m_t,j}}
=
\frac{1}{\sqrt{\binom{m_t}{j}}}
\sum_{\substack{\mu^{(t)}\in\{0,1\}^{m_t}\\ |\mu^{(t)}|=j}}
\ket{\mu^{(t)}}.
\]
Equivalently, the subregisters of $E_t$ are placed in a superposition of strings over $\mathbb F_p$ with entries only in $\{0,1\}$ and with Hamming weight $j$.
Then uncompute $W_t$ from the Hamming weight of the mask register.
After doing this for all blocks, the joint state of the error registers is
\begin{align}
\bigotimes_{t=1}^N
\left(
\sum_{j\ge 0} a^{(t)}_j
\frac{1}{\sqrt{\binom{m_t}{j}}}
\sum_{\substack{\mu^{(t)}\in\{0,1\}^{m_t}\\ |\mu^{(t)}|=j}}
\ket{\mu^{(t)}}
\right).
\label{eq:pf_block_dicke_tensor}
\end{align}

\textbf{Step 3.}
For each coordinate $i=1,\ldots,m$, apply to the corresponding subregister of the error register a unitary $G_i$ such that
\[
G_i\ket{0}=\ket{0},
\qquad
G_i\ket{1}=\sum_{a\in\mathbb F_p^\ast}\tilde \chi_i(a)\ket{a}.
\]
Because $\tilde \chi_i(0)=0$ and $\sum_{a\in \mathbb F_p^\ast}|\tilde \chi_i(a)|^2=1$, such a unitary exists.
Moreover, $G_i$ preserves Hamming weight.
Thus \eqref{eq:pf_block_dicke_tensor} is transformed into
\[
\sum_{y^{(1)},\ldots,y^{(N)}}
\left(\prod_{t=1}^N \frac{a^{(t)}_{|y^{(t)}|}}{\sqrt{\binom{m_t}{|y^{(t)}|}}}\right)
\left(
\prod_{t=1}^N \prod_{\substack{i\in S_t\\ y_i\neq 0}}\tilde \chi_i(y_i)
\right)
\ket{y^{(1)}}\cdots\ket{y^{(N)}}.
\]

\textbf{Step 4.}
Reversibly compute the syndrome
$\sum_{t=1}^N B_t^\top y^{(t)} = B^\top y$
into the register $S$.
The state becomes
\[
\sum_{y^{(1)},\ldots,y^{(N)}}
\left(\prod_{t=1}^N \frac{a^{(t)}_{|y^{(t)}|}}{\sqrt{\binom{m_t}{|y^{(t)}|}}}\right)
\left(
\prod_{t=1}^N \prod_{\substack{i\in S_t\\ y_i\neq 0}}\tilde \chi_i(y_i)
\right)
\ket{y^{(1)}}\cdots\ket{y^{(N)}}\ket{B^\top y}.
\]

\textbf{Step 5.}
Apply the decoder $\Dec_B^{(l)}$ in \eqref{eq:decoder}  to the registers $E_t$ and the register $S$, which maps
each $\ket{y^{(1)}}\cdots\ket{y^{(N)}}\ket{B^\top y}$ to $\ket{0}\cdots\ket{0}\ket{B^\top y}$,
then we discard the registers $E_t$,
obtaining the state $\ket{\widetilde{P(F_1, \ldots, F_N)}}$ in \eqref{eq:pf_tensor_rho_fourier}.

\textbf{Step 6.}
Finally, apply the inverse quantum Fourier transform $(F^{-1})^{\otimes n}$ on the syndrome register.
This yields the desired state
\[
(F^{-1})^{\otimes n}\ket{\widetilde{P(F_1, \ldots, F_N)}}=\ket{P(F_1, \ldots, F_N)}.
\]

\section{Performance Without Minimum Distance Assumption}\label{sec:no_min_dist}

The above results assume that decoding is performed perfectly under the condition $2l + 1 < d^\perp$. We now examine the regime where this condition is not satisfied.
We focus on $p=2$ for clarity; the extension to arbitrary $p$ is straightforward.

Without the minimum distance assumption, the decoding process is inherently imperfect. Consequently, we must account for potential decoding errors in our analysis.

For each $\mathbf j=(j_1,\ldots,j_N)\in T_l$, define
\[
\CEE_{\mathbf j}
:=
\left\{\mathbf y=(y^{(1)},\ldots,y^{(N)})\in \mathbb F_2^m:
|y^{(t)}|_H=j_t \text{ for every }t\right\}.
\]
Then
\[
|\CEE_{\mathbf j}|=\prod_{t=1}^N \binom{m_t}{j_t}.
\]

When the reversible decoder is no longer exact, we model its performance by partitioning the error space $\mathbb{F}_2^m$ into two disjoint subsets:
\[
\BFF_2^m = \CDD \cup \CFF.
\]
Here, $\mathcal{D}$ contains the error vectors $\mathbf{y}$ that are correctly decoded by their syndrome $B^\top \mathbf{y}$, while $\mathcal{F}$ contains the errors that lead to decoding failures. By defining the intersections $\mathcal{D}_{\mathbf{j}} := \mathcal{D} \cap \mathcal{E}_{\mathbf{j}}$ and $\mathcal{F}_{\mathbf{j}} := \mathcal{F} \cap \mathcal{E}_{\mathbf{j}}$, we obtain the decomposition $\mathcal{E}_{\mathbf{j}} = \mathcal{D}_{\mathbf{j}} \cup \mathcal{F}_{\mathbf{j}}$ for each $\mathbf{j} \in T_l$. Finally, we define the decoder failure rate on the block-weight layer $\mathcal{E}_{\mathbf{j}}$ as
\[
\gamma_{\mathbf j}:=\frac{|\CFF_{\mathbf j}|}{|\CEE_{\mathbf j}|},
\qquad
\gamma_{\max}:=\max_{\mathbf j\in T_l}\gamma_{\mathbf j}.
\]

Given a coefficient vector $\mathbf w=(w_{\mathbf j})_{\mathbf j\in T_l}$, the state obtained after the decoding step and postselection on the error register  is
\begin{equation}\label{eq:f2_nd_imperfect_tilde}
\ket{\widetilde\rho^{(\CDD)}}
:=
\sum_{\mathbf j\in T_l}
\frac{w_{\mathbf j}}{\sqrt{|\CEE_{\mathbf j}|}}
\sum_{\mathbf y\in \CDD_{\mathbf j}}
(-1)^{\mathbf v\cdot \mathbf y}
\ket{B^\top \mathbf y}.
\end{equation}
Because two distinct correctly decoded errors cannot have the same syndrome, we have
\begin{equation}\label{eq:f2_nd_norm}
\braket{\widetilde\rho^{(\CDD)}|\widetilde\rho^{(\CDD)}}
=
\sum_{\mathbf j\in T_l}|w_{\mathbf j}|^2(1-\gamma_{\mathbf j}).
\end{equation}
Hence, the output state after the action of Hadamard gates is 
\begin{equation}\label{eq:f2_nd_imperfect_state}
\ket{\rho^{(\CDD)}}:=H^{\otimes n}\ket{\widetilde\rho^{(\CDD)}}.
\end{equation}
Denote by $\langle s_{\mathbf g}^{(\CDD)}\rangle$ the expected value  obtained by measuring
the normalized state $\frac{\ket{\rho^{(\CDD)}}}{\|\ket{\rho^{(\CDD)}}\|}$ in the computational basis.

For $i\in S_t$ and $\mathbf j+\mathbf e_t\in T_l$, define
\[
\CEE_{\mathbf j}^{(0,i)}:=\{\mathbf y\in \CEE_{\mathbf j}: y_i=0\},
\qquad
\CEE_{\mathbf j+\mathbf e_t}^{(1,i)}:=\{\mathbf y\in \CEE_{\mathbf j+\mathbf e_t}: y_i=1\},
\]
\[
\CFF_{\mathbf j}^{(0,i)}:=\CFF_{\mathbf j}\cap \CEE_{\mathbf j}^{(0,i)},
\qquad
\CFF_{\mathbf j+\mathbf e_t}^{(1,i)}:=\CFF_{\mathbf j+\mathbf e_t}\cap \CEE_{\mathbf j+\mathbf e_t}^{(1,i)}.
\]
We then define the blockwise effective failure rates
\begin{equation}\label{eq:f2_nd_gamma0}
\widetilde\gamma_{\mathbf j,t}^{(0)}
:=
\frac{\sum_{i\in S_t}|\CFF_{\mathbf j}^{(0,i)}|}{(m_t-j_t)|\CEE_{\mathbf j}|},
\qquad
\widetilde\gamma_{\mathbf j+\mathbf e_t,t}^{(1)}
:=
\frac{\sum_{i\in S_t}|\CFF_{\mathbf j+\mathbf e_t}^{(1,i)}|}{(j_t+1)|\CEE_{\mathbf j+\mathbf e_t}|},
\end{equation}
for every admissible pair $(\mathbf j,t)$ with $\mathbf j+\mathbf e_t\in T_l$, and set
\begin{equation}\label{eq:f2_nd_gamma_max_1}
\widetilde\gamma_{\max}
:=
\max_{\substack{\mathbf j\in T_l,\ 1\le t\le N\\ \mathbf j+\mathbf e_t\in T_l}}
\frac{\widetilde\gamma_{\mathbf j,t}^{(0)}+\widetilde\gamma_{\mathbf j+\mathbf e_t,t}^{(1)}}{2}.
\end{equation}

\begin{thm}\label{thm:f2_nd_average}
Let $\mathbf w=(w_{\mathbf j})_{\mathbf j\in T_l}$ be a coefficient vector with nonnegative entries, and let $\ket{\rho^{(\CDD)}}$ be the corresponding 
unnormalized  multivariate DQI state using an imperfect decoder.
Let $\langle s_{\mathbf g}^{(\CDD)} \rangle$ be the expected value obtained upon measuring the normalized state $\frac{\ket{\rho^{(\CDD)}}}{\|\ket{\rho^{(\CDD)}}\|}$  in the computational basis.
Assume $\gamma_{\max}<1$.
If $v_1, \ldots, v_m$ are chosen independently and uniformly at random from $\mathbb{F}_2$, then
\[
\Expect_{v_1,\ldots,v_m}\langle s_{\mathbf g}^{(\CDD)}\rangle
\ge
\frac{\mathbf w^\dagger A^{(\mathbf g,l)}\mathbf w}{\|\mathbf w\|_2^2}
-
2\,\frac{\widetilde\gamma_{\max}}{1-\gamma_{\max}}
\sum_{t=1}^N g_t(m_t+1).
\]
Here 
$ \widetilde\gamma_{\max}$ is defined in \eqref{eq:f2_nd_gamma_max_1}, and 
$A^{(\mathbf{g},l)}$ is a $|T_l|\times |T_l|$ matrix whose columns and rows are labeled by $T_l$,
and the $(\mathbf{j},\mathbf{k})$-entry of $A^{(\mathbf{g},l)}$ is
\begin{align}
A^{(\mathbf{g},l)}(\mathbf{j},\mathbf{k}) = \left\{
\begin{aligned}
&g_t\sqrt{j_t(m_t-j_t+1)}, && \text{ if } \mathbf{j}=\mathbf{k}+\mathbf{e}_t \text{ for some }t,\\
&g_t\sqrt{k_t(m_t-k_t+1)}, && \text{ if } \mathbf{j}=\mathbf{k}-\mathbf{e}_t \text{ for some }t,\\
&0, && \text{ otherwise. }
\end{aligned}
\right.
\end{align}
\end{thm}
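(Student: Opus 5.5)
Following the imperfect-decoding analysis of the original DQI paper, I would compute the numerator $\langle\rho^{(\CDD)}|F_{\mathbf g}|\rho^{(\CDD)}\rangle$ in the Fourier basis. There the objective acts as a shift operator on errors. Since $F_{\mathbf g}=\sum_t g_t\sum_{i\in S_t}(-1)^{v_i}(-1)^{\mathbf b_i\cdot\mathbf x}$ and $H^{\otimes n}$ conjugates multiplication by $(-1)^{\mathbf b_i\cdot \mathbf x}$ into the shift $X^{\mathbf b_i}$ on the syndrome register, we get
\[
\langle s_{\mathbf g}^{(\CDD)}\rangle\,\|\widetilde\rho^{(\CDD)}\|^2=\sum_t g_t\sum_{i\in S_t}(-1)^{v_i}\langle\widetilde\rho^{(\CDD)}|X^{\mathbf b_i}|\widetilde\rho^{(\CDD)}\rangle .
\]
Expanding, $X^{\mathbf b_i}\ket{B^\top\mathbf y}=\ket{B^\top(\mathbf y+\mathbf e_i)}$. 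Pairs $(\mathbf y,\mathbf y')$ with $\mathbf y,\mathbf y'\in\CDD$ and $B^\top\mathbf y'=B^\top(\mathbf y+\mathbf e_i)$ contribute.

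The first step is to average over the $v$'s. Each term carries the phase $(-1)^{v_i+\mathbf v\cdot\mathbf y+\mathbf v\cdot\mathbf y'}$, whose average is nonzero only if $\mathbf y'=\mathbf y+\mathbf e_i$ exactly. The norm \eqref{eq:f2_nd_norm} does not depend on $\mathbf v$. This kills all spurious syndrome collisions, which is the whole point of randomizing $\mathbf v$. What remains is a sum over $\mathbf y\in\CDD_{\mathbf j}$, $i\in S_t$, with $\mathbf y+\mathbf e_i\in\CDD$, of $g_t\,w_{\mathbf j}w_{\mathbf j\pm\mathbf e_t}/\sqrt{|\CEE_{\mathbf j}||\CEE_{\mathbf j\pm\mathbf e_t}|}$.

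The second step is to compare with the perfect decoder. If $\CDD$ were everything, counting pairs $(\mathbf y,i)$ with $y_i=0$, $\mathbf y\in\CEE_{\mathbf j}$ gives $(m_t-j_t)|\CEE_{\mathbf j}|$. Using $|\CEE_{\mathbf j+\mathbf e_t}|/|\CEE_{\mathbf j}|=(m_t-j_t)/(j_t+1)$, this produces $2\sum g_t w_{\mathbf j}w_{\mathbf j+\mathbf e_t}\sqrt{(j_t+1)(m_t-j_t)}=\mathbf w^\dagger A^{(\mathbf g,l)}\mathbf w$. The missing pairs are those where $\mathbf y\in\CFF_{\mathbf j}^{(0,i)}$ or $\mathbf y+\mathbf e_i\in\CFF^{(1,i)}_{\mathbf j+\mathbf e_t}$. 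By a union bound, their number is at most
\[
\sum_{i\in S_t}\bigl(|\CFF_{\mathbf j}^{(0,i)}|+|\CFF_{\mathbf j+\mathbf e_t}^{(1,i)}|\bigr)=\widetilde\gamma^{(0)}_{\mathbf j,t}(m_t-j_t)|\CEE_{\mathbf j}|+\widetilde\gamma^{(1)}_{\mathbf j+\mathbf e_t,t}(j_t+1)|\CEE_{\mathbf j+\mathbf e_t}|.
\]
Each missing pair is weighted by $1/\sqrt{|\CEE_{\mathbf j}||\CEE_{\mathbf j+\mathbf e_t}|}$. Both terms then become $\sqrt{(m_t-j_t)(j_t+1)}$ times the respective rate, so the total loss is at most $2\widetilde\gamma_{\max}\sum_{\mathbf j,t}g_t w_{\mathbf j}w_{\mathbf j+\mathbf e_t}\sqrt{(m_t-j_t)(j_t+1)}$. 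This is where nonnegativity of $\mathbf w$ is used: all dropped terms have the same sign. I would then bound $\sqrt{(m_t-j_t)(j_t+1)}\le (m_t+1)/2$ and $2w_{\mathbf j}w_{\mathbf j+\mathbf e_t}\le w_{\mathbf j}^2+w_{\mathbf j+\mathbf e_t}^2$. Summing over $\mathbf j$, each $w_{\mathbf j}^2$ appears at most twice per $t$, which gives a loss of at most $2\widetilde\gamma_{\max}\|\mathbf w\|_2^2\sum_t g_t(m_t+1)$.

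The third step is normalization, which I expect to be the delicate part. The numerator is $\mathbf w^\dagger A\mathbf w-\text{loss}$, and we divide by $\|\widetilde\rho^{(\CDD)}\|^2=\sum|w_{\mathbf j}|^2(1-\gamma_{\mathbf j})\in[(1-\gamma_{\max})\|\mathbf w\|^2,\|\mathbf w\|^2]$. Since $A$ has nonnegative entries and $\mathbf w\ge0$, $\mathbf w^\dagger A\mathbf w\ge0$, so dividing by something at most $\|\mathbf w\|^2$ only increases the main term, giving $\mathbf w^\dagger A\mathbf w/\|\mathbf w\|^2$. The loss is divided by something at least $(1-\gamma_{\max})\|\mathbf w\|^2$, giving $2\widetilde\gamma_{\max}\sum_t g_t(m_t+1)/(1-\gamma_{\max})$. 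The main obstacle is checking the constants in the loss bound, since the stated factor $2$ (rather than $1$) must absorb the double counting and the $(m_t+1)/2$ estimate. I would also confirm the sign convention $f_i=(-1)^{v_i+\mathbf b_i\cdot\mathbf x}$ so that the surviving diagonal phase is $+1$.
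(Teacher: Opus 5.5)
Your proposal is correct and follows the paper's own proof. You average over $\mathbf v$ to keep only exact shifts $\mathbf y_2=\mathbf y_1+\mathbf e_i$, compare with the perfect-decoder matrix $A^{(\mathbf g,l)}$, and bound the lost pairs on each edge $\mathbf j\leftrightarrow\mathbf j+\mathbf e_t$ by $\sum_{i\in S_t}\bigl(|\mathcal F^{(0,i)}_{\mathbf j}|+|\mathcal F^{(1,i)}_{\mathbf j+\mathbf e_t}|\bigr)$. The $\mathbf v$-independent denominator is then handled via $\mathbf w\ge 0$ and $(1-\gamma_{\max})\|\mathbf w\|_2^2\le\|\widetilde\rho^{(\mathcal D)}\|^2\le\|\mathbf w\|_2^2$, with your AM--GM estimate playing the role of the paper's Gershgorin bound on $\|D^{(\mathcal F)}\|$.

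On the constant you flagged: each edge loses pairs from both of its ordered entries, namely your $y_i=0$ terms at level $\mathbf j$ and your $y_i=1$ terms at level $\mathbf j+\mathbf e_t$. So the intermediate bound should read $4\widetilde\gamma_{\max}\sum_{\mathbf j,t}g_t w_{\mathbf j}w_{\mathbf j+\mathbf e_t}\sqrt{(m_t-j_t)(j_t+1)}$, and your two estimates turn this into exactly the stated $2\widetilde\gamma_{\max}\|\mathbf w\|_2^2\sum_t g_t(m_t+1)$. Applied to the factor $2$ you actually wrote, those estimates give only $\widetilde\gamma_{\max}\|\mathbf w\|_2^2\sum_t g_t(m_t+1)$, so your final constant is right only because your two slips cancel.
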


The details of the proof are presented in Appendix \ref{append:B}. Moreover, we obtain the following corollary by choosing a specific coefficient vector
$w=(w_{\mathbf j})_{\mathbf j\in T_l}$.

\begin{cor}\label{cor:f2_nd_semicircle}
Assume that $m_t=(\theta_t+o(1))m$, $\forall t\in [N]$, and 
$l=(\mu+o(1))m$ with $0<\mu<\frac12$. 
Then there exists some vector $w=(w_{\mathbf{j}})_{\mathbf j\in T_l
}$ such that 
\[
\frac1m\Expect_{v_1,\ldots,v_m}\langle s_{\mathbf g}^{(\CDD)}\rangle
\ge
\Gamma_{\mathbf g,\boldsymbol\theta}(\mu)
-
2\,\frac{\widetilde\gamma_{\max}}{1-\gamma_{\max}}
\left(\sum_{t=1}^N g_t\theta_t\right)
-
o(1).
\] 
Here  \(\Gamma_{\mathbf g,\boldsymbol\theta}(\mu)\) denotes the specialization of
\eqref{eq:im_form} to the case \(p=2\), where \(\kappa=0\).

\end{cor}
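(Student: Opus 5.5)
The plan is to apply Theorem~\ref{thm:f2_nd_average} with the product-form test vector that gives the lower bound in Theorem~\ref{thm:pf_semicircle_general}, specialized to $p=2$ (so $\kappa=0$). The imperfect-decoder penalty term is independent of $\mathbf w$. It is therefore enough to exhibit $\mathbf w$ with
\[
\frac{1}{m}\,\frac{\mathbf w^\dagger A^{(\mathbf g,l)}\mathbf w}{\|\mathbf w\|_2^2}\ \ge\ \Gamma_{\mathbf g,\boldsymbol\theta}(\mu)-o(1),
\]
and to check that
\[
\frac{2}{m}\sum_t g_t(m_t+1)=2\sum_t g_t\theta_t+o(1).
\]
Multiplying the latter by $\widetilde\gamma_{\max}/(1-\gamma_{\max})$ is harmless when this factor is $O(1)$. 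If the factor is unbounded, the stated lower bound is $-\infty$ in the limit and the claim is vacuous, so I would state it under the implicit assumption that $\gamma_{\max}$ stays bounded away from $1$. Otherwise I would absorb the factor, since the $o(1)$ contributions $2g_t/m$ and $o(1)\cdot g_t$ are multiplied by the same coefficient; I would keep track of this so the $o(1)$ is honest.

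For the test vector, first pick $(\alpha_t)_t$ nearly optimal for
\[
\Gamma_{\mathbf g,\boldsymbol\theta}(\mu)=\sup\Bigl\{2\sum_t g_t\theta_t\sqrt{\alpha_t(1-\alpha_t)}\Bigr\}.
\]
By continuity, we may assume $0<\alpha_t<1/2$ and $\sum_t\theta_t\alpha_t<\mu$ strictly, losing only an arbitrarily small $\varepsilon$. The strict slack is what absorbs the $o(m)$ fluctuations in $m_t$ and $l$. Set $J_t=\lfloor\alpha_t m_t\rfloor$ and $r_t=\lfloor\sqrt{J_t}\rfloor$, and define
\[
w_{\mathbf j}=\prod_t a^{(t)}_{j_t},
\]
where each $a^{(t)}$ is the normalized flat vector on the window $[J_t-r_t+1,J_t]$, as in Step~1 of Section~\ref{sec:gen_MDQI}. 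For large $m$ we have $\sum_t J_t\le l$, so the support lies in $T_l$. Because $A^{(\mathbf g,l)}$ couples only nearest neighbours $\mathbf j\leftrightarrow\mathbf j\pm\mathbf e_t$, and the product vector has unit norm, the quadratic form splits as
\[
\mathbf w^\dagger A\mathbf w=\sum_t g_t\cdot 2\sum_{j}a^{(t)}_{j}a^{(t)}_{j+1}\sqrt{(j+1)(m_t-j)}.
\]
Each summand is the one-dimensional tridiagonal form from the univariate analysis. With the flat window it equals
\[
2\,\frac{r_t-1}{r_t}\,\sqrt{J_t(m_t-J_t)}\,\bigl(1+O(r_t/J_t)\bigr)=2m_t\sqrt{\alpha_t(1-\alpha_t)}+O(m_t^{1/2}).
\]

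Dividing by $m$ and letting $\varepsilon\to0$ along a sequence gives the main term $\Gamma_{\mathbf g,\boldsymbol\theta}(\mu)-o(1)$. Combined with Theorem~\ref{thm:f2_nd_average}, this yields the corollary. The main obstacle I expect is not the quadratic-form estimate, which is routine, but bookkeeping the $o(1)$ uniformly. First, $\mathbf w$ must remain supported in $T_l$ despite $m_t=(\theta_t+o(1))m$ and $l=(\mu+o(1))m$; the strict slack handles this. Second, the near-optimizer $\alpha_t$ must be chosen away from the endpoints so that the window stays inside $[0,m_t]$. If some $\theta_t$ block has an optimal $\alpha_t=0$, I would simply set that block's $a^{(t)}=\delta_0$, which contributes zero, consistent with $\sqrt{\alpha_t(1-\alpha_t)}=0$.
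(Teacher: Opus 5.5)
Your proposal is correct and follows the paper's route. The paper obtains the corollary by plugging the product-form flat-window vector of Lemma~\ref{lem:pf_eigenvalue} (with $\kappa=0$) into Theorem~\ref{thm:f2_nd_average}. Your extra bookkeeping (strict slack $\sum_t\theta_t\alpha_t<\mu$ to absorb the $o(m)$ fluctuations of $m_t$ and $l$) fills in details the paper leaves implicit. When $\widetilde\gamma_{\max}/(1-\gamma_{\max})$ is large, do not ``absorb'' the factor into the $o(1)$, which is only valid when that factor is bounded. Instead, split into cases for each $m$ and use the trivial bound $\langle s_{\mathbf g}^{(\CDD)}\rangle\ge-\sum_t g_tm_t$.
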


\section{Application to Weighted OPI Problems}\label{sec:opi}
We now turn to the weighted optimal polynomial intersection (weighted OPI) problem, which serves as a natural test case for multivariate DQI for three reasons. First, assigning weights to OPI constraints captures heterogeneous reliability or priority across evaluation points. This setting may arise whenever constraints carry unequal informational value, such as in polynomial reconstruction with non-uniform measurement noise. Second, the Reed–Solomon structure of the dual code admits efficient decoding via the Berlekamp–Massey algorithm, satisfying the key prerequisite for an efficient DQI implementation. Third, the unweighted version of this problem is the canonical setting in which the original  DQI's asymptotic advantage over the strongest known classical benchmark (Prange's algorithm) has been established~\cite{jordan2024optimization}, making it the natural starting point for asking whether this advantage survives in the weighted regime.

Given a prime $p$, an integer $n<p-1$,
and a subset $F_y\subseteq \mathbb F_p$ for each nonzero element
$y\in \mathbb F_p^*$.
The weighted OPI problem seeks a polynomial 
$Q(t) = \sum_{j=0}^{n-1} q_j t^j \in \mathbb{F}_p[t]$
of degree at most $n-1$ that maximizes the objective function:
\[
S_{\mathrm{wOPI}}(Q)
:=
\sum_{y\in \mathbb F_p^*} c_y\,\mathbf 1_{\{Q(y)\in F_y\}},
\]
where $c_y > 0$ are weight coefficients.
That is, $m=p-1$ in this OPI problem.
For simplicity, 
we consider the special two-weight case in which half of the $c_y$'s are equal to $1$ and the other half are equal to a fixed constant $g$.
Specifically, let $\gamma$ be a primitive element of $\mathbb{F}_p^*$, and index the nonzero field elements as $\gamma^0, \gamma^1, \dots, \gamma^{p-2}$. We define a balanced OPI problem with two weight blocks of equal size: for a given polynomial $Q(y)$, the weighted sum of satisfied constraints is given by
$$S_g(Q) := \sum_{i \in S_1} \mathbf{1}_{\{Q(\gamma^i) \in F_{\gamma^i}\}} + g \sum_{i \in S_2} \mathbf{1}_{\{Q(\gamma^i) \in F_{\gamma^i}\}},$$
where $g>0$ and $S_1, S_2\subseteq \{0,...,p-2\}$ with $|S_1|=|S_2| = \frac{p-1}{2}$.
To evaluate performance, we analyze the normalized weighted satisfaction ratio:
$$R_g(Q) := \frac{S_g(Q)}{\frac{m}{2}(1 + g)}.$$

As in the unweighted OPI case, the dual code is a Reed–Solomon code with distance $d^\perp = n+1$. Consequently, by employing multivariate DQI with a Berlekamp–Massey decoder, we may set the decoding radius to
$l = \left\lfloor \frac{n-1}{2} \right\rfloor.$

Let $x := n/p \in (0,1)$ denote the code rate. 
Since the problem instance is balanced (i.e., $|F_y| = \lfloor p/2 \rfloor$ for all $y$), the affine term in the expectation vanishes. 
Theorem \ref{thm:pf_semicircle_general} then yields the normalized satisfaction ratio
\begin{equation}
R_g^{\mathrm{DQI}}(x) = \frac{1}{2} + \frac{1}{1+g} \Gamma_g(x),
\label{eq:weighted_opi_dqi_ratio}
\end{equation}
where
\begin{equation}
\Gamma_g(x):=
\sup_{\substack{0\le \alpha_1,\alpha_2\le 1\\ \alpha_1+\alpha_2\le x}}
\left(
\sqrt{\alpha_1(1-\alpha_1)}
+
g\sqrt{\alpha_2(1-\alpha_2)}
\right).\label{eq:weighted_opi_gamma}
\end{equation}

In the original DQI paper \cite{jordan2024optimization}, the authors compared  the original DQI with several classical algorithms and heuristics, including local-search methods (such as simulated annealing and greedy optimization), Prange's algorithm, and—specifically for OPI—algebraic approaches like Reed–Solomon list-recovery and lattice-based heuristics. For the balanced OPI regime (where $|F_y| \approx p/2$ for all $y$), they concluded that the strongest known polynomial-time classical benchmark is Prange's algorithm. In the unweighted case, this algorithm achieves the asymptotic satisfaction ratio
$\frac{\langle s_{\mathrm{sat}}\rangle_{\mathrm{Prange}}}{p} = \frac{1}{2} + \frac{n}{2p}.$

Following the approach of~\cite{jordan2024optimization}, we compare multivariate DQI with a natural weighted analogue of Prange's algorithm in this section. This comparison is consistent with the unweighted setting, since multivariate DQI reduces to the original DQI framework when \(N=1\).

Regarding Prange’s algorithm for the weighted OPI problem for $g\geq 1$, the optimal strategy is to prioritize the $n$ exactly satisfied constraints on the heavier block $S_2$. This yields the following satisfaction ratio:

\begin{equation}
R_g^{\mathrm{Pr}}(x)
=
\begin{cases}
\displaystyle
\frac12+\frac{g}{1+g}\,x,
& 0\le x\le \frac12,\\[1.2ex]
\displaystyle
\frac{g+x}{1+g},
& \frac12\le x\le 1.
\end{cases}
\label{eq:weighted_opi_prange_ratio}
\end{equation}

\begin{prop}[Comparison with the weighted Prange's benchmark]
\label{prop:weighted_opi_dqi_beats_prange}
For any fixed $g \ge 1$ and every $x \in (0,1)$,
$$R_g^{\mathrm{DQI}}(x) > R_g^{\mathrm{Pr}}(x).$$
Hence, in this weighted OPI model, multivariate DQI maintains a strict asymptotic advantage over Prange's algorithm across the entire non-trivial regime $0 < n/p < 1$.
\end{prop}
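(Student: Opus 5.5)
The plan is to reduce the proposition to an elementary inequality about the variational functional $\Gamma_g(x)$ in \eqref{eq:weighted_opi_gamma}, treating \eqref{eq:weighted_opi_dqi_ratio} and \eqref{eq:weighted_opi_prange_ratio} as given. Write $s(\alpha):=\sqrt{\alpha(1-\alpha)}$. Subtracting $\tfrac12$ and multiplying by $1+g$ turns the claim into two cases:
\begin{align*}
\Gamma_g(x) &> g\,x && \text{for } 0<x\le \tfrac12,\\
\Gamma_g(x) &> \tfrac{g-1}{2}+x && \text{for } \tfrac12\le x<1 .
\end{align*}
Since $\Gamma_g$ is a supremum, it suffices in each case to exhibit one feasible pair $(\alpha_1,\alpha_2)$ with $\alpha_1+\alpha_2\le x$ that strictly beats the right-hand side.

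For $0<x<\tfrac12$, take $(\alpha_1,\alpha_2)=(0,x)$. This gives $\Gamma_g(x)\ge g\,s(x)=g\sqrt{x(1-x)}$, which exceeds $gx$ because $1-x>x$. For $\tfrac12<x<1$, take $\alpha_2=\tfrac12$ and $\alpha_1=x-\tfrac12\in(0,\tfrac12)$. This gives $\Gamma_g(x)\ge \tfrac g2+s(u)$ with $u=x-\tfrac12$. The required inequality becomes $s(u)>u$, i.e.\ $\sqrt{u(1-u)}>u$, which holds for $0<u<\tfrac12$. Note that the hypothesis $g\ge 1$ enters only through the form of the Prange ratio \eqref{eq:weighted_opi_prange_ratio}, namely the choice to saturate the heavier block first.

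The main obstacle is the boundary point $x=\tfrac12$. There both test vectors give exactly equality: $g\,s(\tfrac12)=\tfrac g2=gx$ on the first branch, and $s(0)=0$ on the second. To get strictness I will use a mass-splitting perturbation. Take $(\alpha_1,\alpha_2)=(\varepsilon,\tfrac12-\varepsilon)$, which is feasible for small $\varepsilon>0$. Then
\[
s(\varepsilon)+g\,s(\tfrac12-\varepsilon)=\sqrt{\varepsilon}\,(1+o(1))+\tfrac g2-O(g\varepsilon^2),
\]
since $s$ has a maximum at $\tfrac12$, so $s'(\tfrac12)=0$. For sufficiently small $\varepsilon$ the $\sqrt\varepsilon$ gain dominates, so $\Gamma_g(\tfrac12)>\tfrac g2$, which is the required strict inequality at $x=\tfrac12$ on both branches.

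The same observation can be phrased uniformly. Because $s'(0^+)=+\infty$, moving an infinitesimal amount of budget into an empty block always strictly increases the objective. This gives a robust fallback if any boundary case needs strictness.
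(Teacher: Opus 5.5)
Your proposal is correct and follows the paper's proof essentially verbatim: it uses the same three feasible test points, $(0,x)$ for $x<\tfrac12$, $(x-\tfrac12,\tfrac12)$ for $x>\tfrac12$, and $(\varepsilon,\tfrac12-\varepsilon)$ at $x=\tfrac12$. Your expansion at $x=\tfrac12$, where the $\sqrt{\varepsilon}$ gain beats the $O(g\varepsilon^2)$ loss, just makes explicit the small-$\varepsilon$ step that the paper asserts without detail.
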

\begin{proof}
This result follows directly from comparing the performance of the Multivariate DQI in \eqref{eq:weighted_opi_dqi_ratio} with that of Prange’s algorithm for the weighted OPI problem in \eqref{eq:weighted_opi_prange_ratio}.

 Specifically, if $0<x<1/2$, we choose $(\alpha_1,\alpha_2)=(0,x)$.
Then we have $\Gamma_g(x)\ge g\sqrt{x(1-x)}>gx $ for $0<x<1/2$.
Hence
\[
R_g^{\mathrm{DQI}}(x)
>
\frac12+\frac{gx}{1+g}
=
R_g^{\mathrm{Pr}}(x).
\]

If $1/2<x<1$, we choose $(\alpha_1,\alpha_2)= \left(x-\frac12,\frac12\right)$.
Then $\Gamma_g(x)\geq \sqrt{(x-\frac{1}{2})(\frac{3}{2}-x)}+\frac{g}{2}>(x-\frac{1}{2})+\frac{g}{2}$ for $1/2<x<1$.
Hence, 
\[
R_g^{\mathrm{DQI}}(x)
>\frac12+\frac{ x-\frac12 +\frac{g}{2}}{1+g} = \frac{g+x}{1+g} = R_g^{\mathrm{Pr}}(x).
\]

If $x=1/2$, we choose $(\alpha_1,\alpha_2)=(\varepsilon,1/2-\varepsilon)$ with $\varepsilon>0$ small.
Then $\Gamma_g\left(\frac12 \right) \ge \sqrt{\varepsilon(1-\varepsilon)}+g\sqrt{(\frac{1}{2}-\varepsilon)(\frac{1}{2}+\varepsilon)} > \frac g2$,
hence
\[ R_g^{\mathrm{DQI}}\left(\frac12 \right) 
\ge \frac12 + \frac{\Gamma_g\left(\frac12 \right)}{1+g}
> \frac12+\frac{g/2}{1+g} = R_g^{\mathrm{Pr}}\left(\frac12 \right).\]
Hence for every $x \in (0,1)$ we have
$R_g^{\mathrm{DQI}}(x) > R_g^{\mathrm{Pr}}(x)$.

\end{proof}

\begin{figure}[t]
    \centering
    \includegraphics[width=0.95\textwidth]{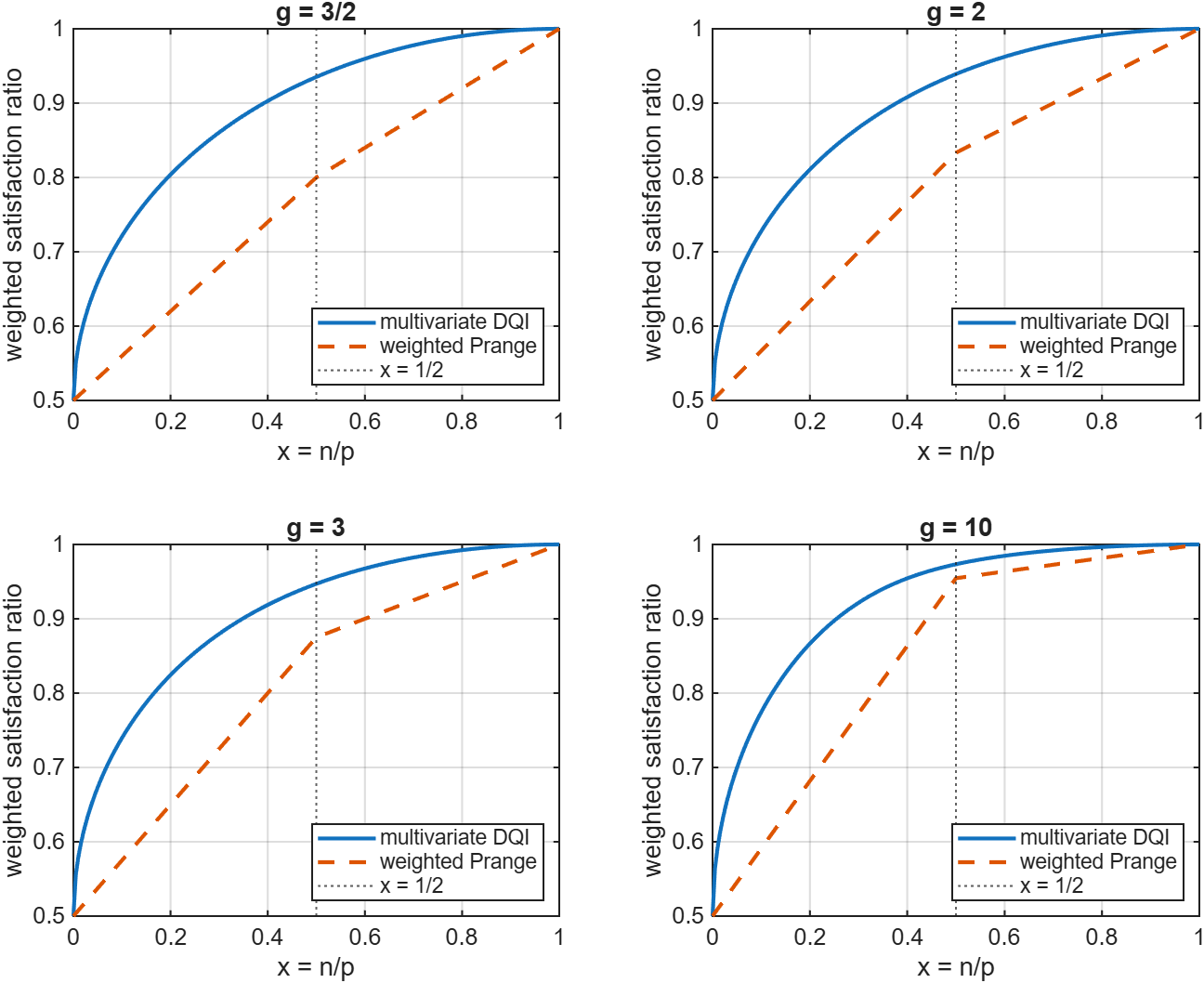}
    \caption{Weighted OPI with two balanced-size blocks of weights $1$ and $g$ for $g\in \{3/2, 2, 3, 10\}$. 
    Each panel plots the asymptotic weighted satisfaction ratio versus $x=n/p$. 
    The blue solid curve is the multivariate DQI prediction, and the dashed  orange  curve is the weighted 
    Prange's benchmark. The vertical dotted line indicates $x=1/2$.}
    \label{fig:weighted_opi_dqi_vs_prange}
\end{figure}

 Figure \ref{fig:weighted_opi_dqi_vs_prange} illustrates $R_g^{\mathrm{DQI}}(x)$ and $R_g^{\mathrm{Pr}}(x)$ for several fixed values of $g \in \{1.5, 2, 3, 10\}$, plotting the asymptotic weighted satisfaction ratio as a function of $x$. 
The solid curves represent the DQI prediction,
$$R_g^{\mathrm{DQI}}(x) = \frac{1}{2} + \frac{1}{1+g} \Gamma_g(x),$$
where $\Gamma_g(x)$ is computed by numerically optimizing  \eqref{eq:weighted_opi_gamma}. 
The dashed curves correspond to the weighted Prange's benchmark from \eqref{eq:weighted_opi_prange_ratio}. 
The kink in the Prange curves at $x = 1/2$ reflects the algorithm's greedy strategy: it first allocates its budget of $n$ exactly satisfied constraints to the heavier block (weighted by $g$); only after this block is exhausted does it begin to allocate the remaining budget to the lighter block.
Hence, the Prange's baseline strengthens as $g$ increases. Nevertheless, the figure demonstrates that in all four cases examined, the multivariate DQI performance remains strictly above the corresponding Prange's benchmark throughout the entire plotted range.

\begin{Rem}[The regime $0<g\le 1$.]
Although the discussion above was written in the case $g\ge 1$,
the case $0<g\le 1$ is completely symmetric.
Indeed, the two blocks $S_1$ and $S_2$ have the same size $m/2$, and
interchanging their roles replaces the pair of weights $(1,g)$ by $(1/g,1)$,
or equivalently replaces $g$ by $1/g$.
After normalization by the total weight $\frac{m}{2}(1+g)$, this symmetry leaves
the weighted satisfaction ratio unchanged.
Thus the regime $0<g\le 1$ does not introduce any genuinely new phenomenon;
it is the mirror image of the regime $g\ge 1$.

More concretely, for the multivariate DQI prediction we still have
\[
R_g^{\mathrm{DQI}}(x)
=
\frac12+\frac{1}{1+g}\,\Gamma_g(x),
\]
with
$\Gamma_g(x):=
\sup_{\substack{0\le \alpha_1,\alpha_2\le 1\\ \alpha_1+\alpha_2\le x}}
\left(
\sqrt{\alpha_1(1-\alpha_1)}
+
g\sqrt{\alpha_2(1-\alpha_2)}
\right).$
The symmetry is encoded in the identity $\Gamma_g(x)=g\,\Gamma_{1/g}(x), g>0$.
Because the constraint set
$
\{(\alpha_1,\alpha_2):0\le \alpha_1,\alpha_2\le 1,\ \alpha_1+\alpha_2\le x\}
$
is symmetric under exchanging $\alpha_1$ and $\alpha_2$.
Consequently,
\begin{equation}
R_g^{\mathrm{DQI}}(x)
=
R_{1/g}^{\mathrm{DQI}}(x),
\qquad g>0.
\label{eq:weighted_opi_dqi_symmetry}
\end{equation}

For the weighted analogue of Prange's algorithm, when $0<g\le 1$ the heavier
block is now $S_1$, not $S_2$.
Hence the optimal strategy is to use the $n$ exactly-satisfied constraints first
on $S_1$.
This gives
\begin{equation}
R_g^{\mathrm{Pr}}(x)
=
\begin{cases}
\displaystyle
\frac12+\frac{1}{1+g}\,x,
& 0\le x\le \frac12,\\[1.2ex]
\displaystyle
\frac{1+gx}{1+g},
& \frac12\le x\le 1.
\end{cases}
\label{eq:weighted_opi_prange_ratio_smallg}
\end{equation}
Equivalently, this is exactly the $g\mapsto 1/g$ image of
\eqref{eq:weighted_opi_prange_ratio}, since
\begin{equation}
R_g^{\mathrm{Pr}}(x)
=
R_{1/g}^{\mathrm{Pr}}(x),
\qquad g>0.
\label{eq:weighted_opi_prange_symmetry}
\end{equation}

Therefore, the comparison between multivariate DQI and Prange's algorithm for $0<g\le 1$ follows
immediately from the case $g\ge 1$.
Specifically, if $0<g\le 1$, then $1/g\ge 1$.
Applying Proposition \ref{prop:weighted_opi_dqi_beats_prange} to $1/g$ gives $R_{1/g}^{\mathrm{DQI}}(x) > R_{1/g}^{\mathrm{Pr}}(x)$ for $x \in (0, 1)$. Invoking the symmetries in Eqs. \eqref{eq:weighted_opi_dqi_symmetry} and \eqref{eq:weighted_opi_prange_symmetry}, we conclude that
\[
R_g^{\mathrm{DQI}}(x)>R_g^{\mathrm{Pr}}(x),
\qquad 0<x<1,
\]
for every $0<g\le 1$ as well.

Figure~\ref{fig:weighted_opi_small_g} illustrates this comparison for values $g \in \{1/2, 1/4, 1/8, 1/10\}$. As before, the horizontal axis denotes $x = n/p$ and the vertical axis denotes the normalized weighted satisfaction ratio.
\begin{figure}[h]
    \centering
    \includegraphics[width=0.95\textwidth]{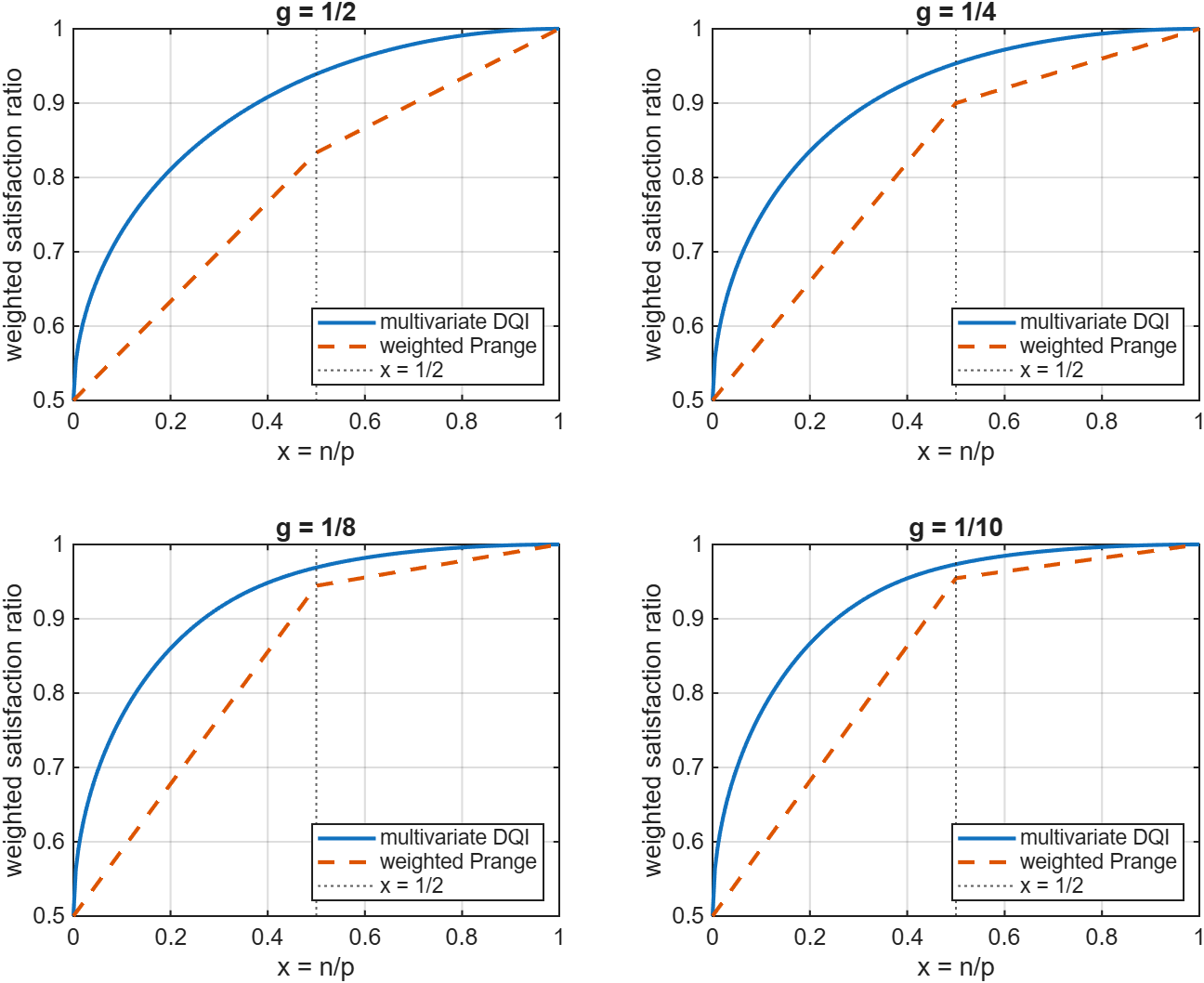}
    \caption{
    Weighted OPI with two balanced-size blocks of weights $1$ and $g$ for $g\in \{1/2,1/4,1/8,1/10\}$. 
    Each panel plots the asymptotic weighted satisfaction ratio versus $x=n/p$. 
    The blue solid curve is the multivariate DQI prediction, and the dashed  orange  curve is the weighted 
    Prange's benchmark. The vertical dotted line indicates $x=1/2$.
    }
    \label{fig:weighted_opi_small_g}
\end{figure}

\end{Rem}

\section{Block-Structured Hamiltonian DQI}\label{sec:HDQI}

The multivariate DQI framework developed above can also be viewed as a block-polynomial method for state preparation. 
We now illustrate this perspective in the setting of Hamiltonian DQI. 
While we concern weighted classical optimization problems in Sections~\ref{sec:perform_MDQI}--\ref{sec:opi}, the same block structure appears for commuting Pauli Hamiltonians.
Recall that the goal of Hamiltonian DQI is 
to generate a density matrix $\rho_{P}(H)=P(H)^2/\Tr{P(H)^2}$ given a commuting $n$-qubit Hamiltonian $H$ and 
a polynomial $P$ with real coefficients~\cite{schmidhuber2025hamiltonian,bu2026hamiltonian}.

First, let us recall some basic knowledge of Pauli operators.
For a single qubit, we consider the computational basis
$\set{\ket{0}, \ket{1}}$. The Pauli $X$ and $Z$ operators are defined by
\begin{equation}
X=\left[
\begin{array}{cc}
0&1\\
1&0
\end{array}
\right],\quad
Z=\left[
\begin{array}{cc}
1&0\\
0&-1
\end{array}
\right].
\end{equation}
Any single-qubit Pauli operator is given by $i^{-\alpha\beta}Z^{\alpha}X^{\beta}$
with $\alpha,\beta\in \mathbb{F}_2$. Hence, any $n$-qubit Pauli operator $Q$
can be written as $Q= \ot^n_{i=1} i^{-\alpha_i\beta_i}Z^{\alpha_i}_iX^{\beta_i}_i.$
Here,
the vector $(\bm \alpha, \bm\beta)\in \BFF_2^{2n}$  is called its symplectic representation, denoted by $\mathrm{symp}(Q)$.

Let $S_1,\dots,S_N$ be a
 partition of $\{1,\dots,m\}$ with  $m_t:=|S_t|$.
Let us consider the following Pauli Hamiltonian
\begin{align}
    H_{\mathbf g}=\sum^N_{t=1}g_t\sum_{i\in S_t}P_i,
\end{align}
where $g_1,\dots,g_N\in \R$, and the Pauli operators $P_i$ are distinct and mutually commuting.
For convenience, we relabel the Pauli operators $P_i$ for $i \in S_t$ as $P_{t,a}$ for $a \in [m_t]$. Defining the block operators $H_t := \sum_{a \in [m_t]} P_{t,a}$, the Hamiltonian $H_{\mathbf{g}}$ can be rewritten as
\begin{align}
H_{\mathbf{g}} = \sum_{t=1}^N g_t H_t.
\end{align}

\begin{lem}
   Given a commuting Hamiltonian $H_{\mathbf{g}} = \sum_{t=1}^N g_t H_t$ with $H_t := \sum_{a \in [m_t]} P_{t,a}$, and  a degree-$l$ polynomial $P(x)=\sum_{j=0}^l a_j x^j$, then $P(H_{\mathbf{g}})$ can be rewritten as 
   \begin{align}
       P(H_{\mathbf{g}})
       =\sum_{\Byy^{(1)}\in \BFF_2^{m_1},\cdots,\Byy^{(N)}\in \BFF_2^{m_N}}
r_{\Byy^{(1)},\dots,\Byy^{(N)}}
P_{\Byy^{(1)},\dots,\Byy^{(N)}},
   \end{align}
   where $P_{\mathbf{y}^{(1)}, \ldots, \mathbf{y}^{(N)}} = \prod_{t=1}^N \prod_{a=1}^{m_t} P_{t,a}^{(\mathbf{y}^{(t)})_a}$ is a Pauli operator, $(\mathbf{y}^{(t)})_a$ denotes the $a$-th entry of the vector $\mathbf{y}^{(t)}$, and the coefficients $r_{\mathbf{y}^{(1)}, \dots, \mathbf{y}^{(N)}}$ are given by
\begin{align}
\label{eq:def_r}
    r_{\Byy^{(1)},\ldots,\Byy^{(N)}}
:=
\sum_{j=0}^l a_j\, j!
\sum_{\substack{
\bmu^{(t)}\in \Z_{\ge 0}^{m_t}\ \forall t,\\
\sum_{t=1}^N |\bmu^{(t)}|=j,\\
\bmu^{(t)}\equiv \Byy^{(t)}\ (\mathrm{mod}\ 2)\ \forall t
}}
\frac{
\prod_{t=1}^N g_t^{|\bmu^{(t)}|}
}{
\prod_{t=1}^N \bmu^{(t)}!
},
\end{align}
with $\bmu^{(t)}!:=\prod_{a=1}^{m_t} (\bmu^{(t)})_a!$ and $|\bmu^{(t)}|:=\sum_{a=1}^{m_t} (\bmu^{(t)})_a$.

\end{lem}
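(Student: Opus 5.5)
The plan is to expand $P(H_{\mathbf g})=\sum_{j=0}^l a_j H_{\mathbf g}^j$ term by term and then use the two algebraic facts available for the Pauli operators $P_{t,a}$. They mutually commute, and each squares to the identity, since distinct Pauli operators appearing in a Hamiltonian are Hermitian with $P^2=I$. I would first write
\[
H_{\mathbf g}=\sum_{t=1}^N\sum_{a=1}^{m_t} g_t P_{t,a},
\]
a sum of $m=\sum_t m_t$ pairwise commuting operators. Because they commute, the multinomial theorem applies verbatim:
\[
H_{\mathbf g}^j=\sum_{\substack{\bmu^{(t)}\in\Z_{\ge0}^{m_t}\ \forall t\\ \sum_t|\bmu^{(t)}|=j}}\frac{j!}{\prod_t \bmu^{(t)}!}\prod_{t=1}^N\prod_{a=1}^{m_t}\bigl(g_tP_{t,a}\bigr)^{(\bmu^{(t)})_a}.
\]

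Next I would simplify each monomial. The scalars factor out as $\prod_t g_t^{|\bmu^{(t)}|}$. Since $P_{t,a}^2=I$, we have $P_{t,a}^{k}=P_{t,a}^{k \bmod 2}$. Commutativity then lets us reorder the product into the canonical form $P_{\Byy^{(1)},\dots,\Byy^{(N)}}$ with $\Byy^{(t)}:=\bmu^{(t)}\bmod 2\in\BFF_2^{m_t}$. Hence every monomial equals $\frac{j!\prod_t g_t^{|\bmu^{(t)}|}}{\prod_t\bmu^{(t)}!}P_{\Byy^{(1)},\dots,\Byy^{(N)}}$.

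Finally, I would regroup the double sum over $j$ and $\bmu$ according to the parity pattern $(\Byy^{(1)},\dots,\Byy^{(N)})$. Summing $a_j$ times the above over all $j\le l$ and all $\bmu$ with $\sum_t|\bmu^{(t)}|=j$ and $\bmu^{(t)}\equiv\Byy^{(t)}$ gives exactly the coefficient $r_{\Byy^{(1)},\dots,\Byy^{(N)}}$ in \eqref{eq:def_r}. The product $P_{\Byy}$ is a Pauli operator up to a phase; since the factors commute, it is Hermitian, so the phase is $\pm1$ and can be absorbed by the convention used in the definition.

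There is no real obstacle here. The only points needing care are justifying the multinomial expansion, which requires commutativity, and the parity reduction, which requires $P^2=I$. Beyond that the argument is bookkeeping to make sure the regrouping matches the index constraints in \eqref{eq:def_r}. Only the parity pattern matters for the operator part, while the full multiplicities $\bmu^{(t)}$ survive in the coefficient. Patterns with $\sum_t|\Byy^{(t)}|>l$ have empty inner sums and hence zero coefficient.
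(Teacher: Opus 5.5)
Your proposal is correct and follows essentially the same route as the paper: a multinomial expansion of $H_{\mathbf g}^j$, then regrouping by the parity pattern $(\mathbf{y}^{(1)},\dots,\mathbf{y}^{(N)})$. The paper expands in two stages, first over blocks $t$ and then within each $H_t^{k_t}$, which is equivalent to your single expansion over all $m$ commuting terms $g_tP_{t,a}$. You also state explicitly the reduction $P_{t,a}^{k}=P_{t,a}^{k \bmod 2}$ from $P_{t,a}^2=I$, which the paper's final equality uses without comment.
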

\begin{proof}
Since the Pauli operators $P_{t,a}$ all commute, expanding 
$(\sum_t g_t H_t)^j$ by the multinomial theorem and then 
expanding each $H^{k_t}_t=(\sum_a P_{t,a})^{k_t}$ gives
    \begin{align*}
P(H_{\mathbf g})
&=
\sum_{j=0}^l a_j \left(\sum_{t=1}^N g_t H_t\right)^j \\
&=
\sum_{j=0}^l a_j
\sum_{\substack{
\bmu^{(1)},\dots,\bmu^{(N)}\\
\bmu^{(t)}\in \Z_{\ge 0}^{m_t},\ 
\sum_{t=1}^N |\bmu^{(t)}|=j
}}
\frac{
j!\,\prod_{t=1}^N g_t^{|\bmu^{(t)}|}
}{
\prod_{t=1}^N \bmu^{(t)}!
}
\prod_{t=1}^N \prod_{a=1}^{m_t} P_{t,a}^{(\bmu^{(t)})_a} \\
&=
\sum_{\Byy^{(1)}\in \BFF_2^{m_1}}
\cdots
\sum_{\Byy^{(N)}\in \BFF_2^{m_N}}
r_{\Byy^{(1)},\dots,\Byy^{(N)}}
P_{\Byy^{(1)},\dots,\Byy^{(N)}}.
\end{align*}
\end{proof}

Note that the coefficient $r_{\mathbf{y}^{(1)}, \dots, \mathbf{y}^{(N)}}$ depends only on the block Hamming weights $j_t := |\mathbf{y}^{(t)}|$ for $t \in [N]$. Specifically, if $(\mathbf{x}^{(1)}, \dots, \mathbf{x}^{(N)})$ and $(\mathbf{y}^{(1)}, \dots, \mathbf{y}^{(N)})$ satisfy $|\mathbf{x}^{(t)}| = |\mathbf{y}^{(t)}|$ for all $t$, then $r_{\mathbf{x}^{(1)}, \dots, \mathbf{x}^{(N)}} = r_{\mathbf{y}^{(1)}, \dots, \mathbf{y}^{(N)}}$. Hence, we relabel these coefficients using the vector of block Hamming weights $\mathbf{j} = (j_1, \dots, j_N)$, denoted simply as $r_{\mathbf{j}}$.
Moreover, we denote $r_{\mathbf{j}}=0$ for $\sum_{t=1}^N j_t>l$.

We now define the following reference state:
\begin{align}
\ket{R^l(H_{\mathbf{g}})} 
= \sum_{\mathbf{j} \in T_l} \gamma_{\mathbf{j}} \ket{D_{m_1, j_1}} \otimes \dots \otimes \ket{D_{m_N, j_N}},
\end{align}
where $\ket{D_{m_t, j_t}}$ denotes the $m_t$-qubit Dicke state with weight $j_t$. 
The coefficients are defined as
$$\gamma_{\mathbf{j}} := \frac{\left( \prod_{t=1}^N \binom{m_t}{j_t} \right)^{1/2} r_{\mathbf{j}}}{\mathcal{N}},$$
where  \(\mathcal N\) is the normalization constant
$$\mathcal{N}^2 = \sum_{\mathbf{j} \in T_l} \left( \prod_{t=1}^N \binom{m_t}{j_t} \right) |r_{\mathbf{j}}|^2.$$
By the definition, the reference state can be rewritten as
\begin{align}
 \ket{R^l(H_{\mathbf g})}:=
\frac{1}{\CNN}
\sum_{\Byy^{(1)}\in \BFF_2^{m_1}}
\cdots
\sum_{\Byy^{(N)}\in \BFF_2^{m_N}}
r_{\Byy^{(1)},\dots,\Byy^{(N)}}
\ket{\Byy^{(1)}}\ot \cdots\ot \ket{\Byy^{(N)}}.
\label{eq:block_pauli_reference_state}
\end{align}

Furthermore, we consider the symplectic representation of the Hamiltonian $H_{\mathbf{g}}$, defined by the binary matrix $B^{\top}_{H_{\mathbf{g}}} \in \mathbb{F}_2^{2n \times m}$ as follows:
\begin{align}\label{eq:symH}
        B^{\top}_{H_{\mathbf g}}  = \begin{bmatrix}
        | & | &    | &      & | \\
        \mathrm{symp}(P_{1,1}) & \cdots& \mathrm{symp}(P_{1,m_1}) & \cdots & \mathrm{symp}(P_{N,m_N}) \\
        | & | &     | &     & | 
        \end{bmatrix}, 
    \end{align}
    where $\mathrm{symp}(P_{t,a})$ is the symplectic representation  of the Pauli operator $P_{t,a}$.

Here, we consider the classical linear code with $B^{\top}_{H_{\mathbf g}} $ as the parity-check matrix of a classical linear code. 
For a given integer $l \geq 0$, a weight-$l$ decoder $\mathcal{D}_{H_{\mathbf{g}}}^{(l)}$ for $H_{\mathbf{g}}$ is defined as a unitary operator that satisfies
\begin{align}\label{eq:oracle}
        \CDD_{H_{\mathbf g}} ^{ (l)} \ket{\mathbf{y}} \ket{B^\top_{H_{\mathbf g}}  \mathbf{y}} =  \ket{\bm 0} \ket{B^{\top}_{H_{\mathbf g}}  \mathbf{y}},
    \end{align} 
  for any $\mathbf{y} \in \mathbb{F}_2^m$ with Hamming weight $ |\mathbf{y}| \leq l$.
    
\begin{thm}\label{thm:block_pauli_poly_state}
Fix $N>0$, let $H_{\mathbf g}
=
\sum_{t=1}^N g_t \sum_{a=1}^{m_t} P_{t,a}$ be a commuting Hamiltonian as defined above. Given access to a weight-$l$ decoder for $H_{\mathbf{g}}$ as defined in \eqref{eq:oracle}, and a polynomial $P(x)$ of degree $l$, there exists a quantum algorithm that prepares the state
\begin{align}
    \rho_{P}(H_{\mathbf g})
=
\frac{P^2(H_{\mathbf g})}{\Tr{P^2(H_{\mathbf g})}},
\end{align}
using a single call to the decoder. In addition,
the classical preprocessing step takes time $\mathrm{poly}(m,N,\prod_{t=1}^N m_t)$,
and the subsequent quantum circuit has size $\mathrm{poly}(m,n,|T_l|)=\mathrm{poly}(m,n,\min\set{m,l}^N)$.
In particular, when $N$ is a fixed constant,
the classical preprocessing step takes time $\mathrm{poly}(m)$,
and the subsequent quantum circuit has size $
\mathrm{poly}(m,n)$.
\end{thm}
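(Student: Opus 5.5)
The plan follows the Hamiltonian DQI protocol of \cite{schmidhuber2025hamiltonian,bu2026hamiltonian}, replacing the single Dicke superposition with the block reference state $\ket{R^l(H_{\mathbf g})}$ from \eqref{eq:block_pauli_reference_state}. The target object is the purification
\[
\ket{\Psi}=\frac{1}{\sqrt{2^n}\,\mathcal N'}\,(P(H_{\mathbf g})\otimes I)\ket{\Phi^+}^{\otimes n},
\]
where $\ket{\Phi^+}^{\otimes n}$ is $n$ Bell pairs and $\mathcal N'$ is the normalization. Tracing out the second half of $\ket{\Psi}$ gives exactly $P^2(H_{\mathbf g})/\Tr{P^2(H_{\mathbf g})}$. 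The preceding Lemma writes $P(H_{\mathbf g})=\sum_{\mathbf y} r_{\mathbf y}P_{\mathbf y}$, where each $P_{\mathbf y}$ equals $\pm\prod P_{t,a}^{y_a}$ up to a phase fixed by the order of the factors. The operators $P_{\mathbf y}$ with distinct $B^\top_{H_{\mathbf g}}\mathbf y$ are distinct Paulis, and the states $(P_{\mathbf y}\otimes I)\ket{\Phi^+}^{\otimes n}$ form an orthonormal Bell basis labelled by the symplectic vector $B^\top_{H_{\mathbf g}}\mathbf y$. Hence $\ket{\Psi}$ is the image of $\sum_{\mathbf y} r_{\mathbf y}\ket{B^\top\mathbf y}$ under a Bell-basis rotation. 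Weights with $|\mathbf y|\le l$ and distinct syndromes remain orthogonal.

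The algorithm has four steps.
\begin{enumerate}[(1)]
\item \emph{Classical preprocessing.} Compute $r_{\mathbf j}$ for all $\mathbf j\in T_l$ from \eqref{eq:def_r}. The sum over $\bmu^{(t)}$ depends only on $j_t$ and the number of extra even increments. Writing $\mu^{(t)}_a=y_a+2k_a$, each block contributes a generating-function coefficient $[z^{k}]\,(\sinh)^{j_t}(\cosh)^{m_t-j_t}$-type expansion, which a dynamic program over blocks computes. This costs $\mathrm{poly}(m,N,\prod_t m_t)$ because the vector $\mathbf j$ ranges over at most $\prod_t(m_t+1)$ values.
\item \emph{Reference state.} Prepare $\ket{R^l(H_{\mathbf g})}=\sum_{\mathbf j}\gamma_{\mathbf j}\ket{D_{m_1,j_1}}\cdots\ket{D_{m_N,j_N}}$. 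First load $\sum_{\mathbf j}\gamma_{\mathbf j}\ket{j_1}\cdots\ket{j_N}$ into weight registers by generic state preparation over $|T_l|$ amplitudes. Then apply controlled Dicke preparation block by block, exactly as in Steps 1--2 of Section~\ref{sec:gen_MDQI}, and uncompute the weights from the Hamming weights.
\item \emph{Syndrome and decoding.} Apply the phases making each term proportional to $P_{\mathbf y}$, including the $i^{\cdot}$ factors from the symplectic representation; these are computable from $\mathbf y$ by a reversible circuit. Compute $B^\top_{H_{\mathbf g}}\mathbf y$ into a $2n$-qubit register. Call $\mathcal D^{(l)}_{H_{\mathbf g}}$ once to erase $\mathbf y$, which is valid since every term has $|\mathbf y|\le l$.
\item \emph{Bell-basis rotation.} Map $\ket{\boldsymbol\alpha,\boldsymbol\beta}\mapsto (Z^{\boldsymbol\alpha}X^{\boldsymbol\beta}\otimes I)\ket{\Phi^+}^{\otimes n}$ with Hadamards and CNOTs, then trace out the second copy.
\end{enumerate}

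The main obstacle is decoder injectivity. Two weight-$\le l$ vectors $\mathbf y\ne\mathbf y'$ with the same syndrome would be collapsed by the decoder. We need to argue that the decoder's defining property \eqref{eq:oracle} forces $B^\top\mathbf y\neq B^\top\mathbf y'$ for distinct $\mathbf y,\mathbf y'$ of weight $\le l$; otherwise the map is not unitary. This also shows that the resulting amplitudes on the Bell basis are precisely the coefficients of $P(H_{\mathbf g})$ in the Pauli basis, after merging terms with equal $P_{\mathbf y}$.

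A subtle point is that $P(H_{\mathbf g})$ may contain terms with $|\mathbf y|>l$ reduced modulo relations. However, by \eqref{eq:def_r} the parity vector of any $\bmu$ with $|\bmu|\le l$ has weight $\le l$, so all terms in $P(H_{\mathbf g})$ are covered. Products of commuting Paulis are Hermitian up to sign, so the phase bookkeeping is consistent.

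The final complexity count is as follows. The reference-state preparation costs $\mathrm{poly}(|T_l|,m)$, and the Dicke, syndrome and Bell steps cost $\mathrm{poly}(m,n)$. Moreover $|T_l|\le \min\{m,l\}^N$ up to constants, which gives the stated bounds, and these are polynomial when $N$ is fixed.
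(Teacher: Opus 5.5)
Your proposal is correct and follows essentially the same route as the paper. You prepare the block reference state $\ket{R^l(H_{\mathbf g})}$ as a superposition of product Dicke states, pass to the syndrome $B^\top_{H_{\mathbf g}}\mathbf y$, and erase $\mathbf y$ with a single decoder call. Injectivity on weight-$\le l$ vectors does follow from unitarity of the decoder, as you note. You then undo a Bell-basis map to get $(P(H_{\mathbf g})\otimes I)\ket{\Phi_n}$ and trace out the second copy.

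The differences are only cosmetic. You compute the syndrome directly and insert the phases relating $P_{\mathbf y}$ to $Z^{\boldsymbol\alpha}X^{\boldsymbol\beta}$ by hand, whereas the paper's controlled-$P_{\mathbf y}$ followed by a coherent Bell measurement carries those phases automatically. You also obtain $r_{\mathbf j}$ from the $\sinh^{j_t}\cosh^{m_t-j_t}$ generating function instead of the Krawtchouk formula of Lemma~\ref{lem:block_bj_explicit}.
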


We now describe the quantum algorithm used to prepare the polynomial state $\rho_{P}(H_{\mathbf{g}})$. The detailed complexity analysis
of Theorem \ref{thm:block_pauli_poly_state} is provided in Appendix~\ref{appen:C}. The procedure is structured as follows:
    
    \textbf{Step 1: Reference State Preparation.}
    First, we prepare $N$ weight registers jointly in the state:
    \begin{align*}
      \sum_{\mathbf{j} \in T_l} \gamma_{\mathbf{j}} \ket{j_1} \ot \cdots \ot\ket{j_N}.  
    \end{align*}
    Conditioned on the values $(j_1, \dots, j_N)$, we prepare the error registers in the corresponding Dicke states:
    \begin{align*}
        \ket{D_{m_t, j_t}} = \frac{1}{\sqrt{\binom{m_t}{j_t}}} \sum_{\substack{\mathbf{y}^{(t)} \in \mathbb{F}_2^{m_t} \\ |\mathbf{y}^{(t)}| = j_t}} \ket{\mathbf{y}^{(t)}}, \quad \forall t \in [N].
    \end{align*}
    By uncomputing the weight registers via a Hamming weight calculation on the error registers, we obtain the reference state in register $A$:
    \begin{align*}
        \ket{R^l(H_{\mathbf{g}})}_A = \sum_{\mathbf{j} \in T_l} \gamma_{\mathbf{j}} \ket{D_{m_1, j_1}} \ot \cdots \ot  \ket{D_{m_N, j_N}}.
    \end{align*}

  \textbf{Step 2: Control-Pauli Operation.}
  Prepare a maximally entangled state $\ket{\Phi_n}_{BC} = \frac{1}{\sqrt{2^n}} \sum_{\mathbf{x} \in \mathbb{F}_2^n} \ket{\mathbf{x}}\ot \ket{\mathbf{x}}$ on $n$ pairs of qubits in registers $B$ and $C$. The initial joint state is:
  \begin{align*}
      \ket{\psi_1} = \ket{R^l(H_{\mathbf{g}})}_A \otimes \ket{\Phi_n}_{BC}.
  \end{align*}
  We then apply a controlled-Pauli unitary that maps:
  \begin{align*}
      \ket{\mathbf{y}^{(1)}, \cdots, \mathbf{y}^{(N)}} \otimes \ket{\Phi_n} \longmapsto \ket{\mathbf{y}^{(1)}, \cdots , \mathbf{y}^{(N)}} \otimes \bigl(P_{\mathbf{y}^{(1)}, \dots, \mathbf{y}^{(N)}} \otimes I\bigr) \ket{\Phi_n}.
  \end{align*}
  The state evolves to:
  \begin{align*}
      \ket{\psi_2} = \frac{1}{\mathcal{N}} \sum_{\mathbf{y}^{(1)}, \dots, \mathbf{y}^{(N)}} r_{\mathbf{y}^{(1)}, \dots, \mathbf{y}^{(N)}} \ket{\mathbf{y}^{(1)},\cdots, \mathbf{y}^{(N)}} \otimes \bigl(P_{\mathbf{y}^{(1)}, \dots, \mathbf{y}^{(N)}} \otimes I\bigr) \ket{\Phi_n}.
  \end{align*}

\textbf{Step 3: Coherent Bell Measurement and Decoding. }
Apply a coherent Bell measurement on $BC$, which maps $(P \otimes I) \ket{\Phi_n} \mapsto \ket{\mathrm{symp}(P)}$ for any $n$-qubit Pauli operator $P$. The state becomes:
\begin{align*}
    \ket{\psi_3} = \frac{1}{\mathcal{N}} \sum_{\mathbf{y}^{(1)}, \dots, \mathbf{y}^{(N)}} r_{\mathbf{y}^{(1)}, \dots, \mathbf{y}^{(N)}} \ket{\mathbf{y}^{(1)}, \dots, \mathbf{y}^{(N)}} \otimes \ket{\mathrm{symp}(P_{\mathbf{y}^{(1)}, \dots, \mathbf{y}^{(N)}})}.
\end{align*}
Since $\mathrm{symp}(P_{\mathbf{y}^{(1)}, \dots, \mathbf{y}^{(N)}}) = B^\top_{H_{\mathbf g}} (\mathbf{y}^{(1)}, \dots, \mathbf{y}^{(N)})$ and the support of $\ket{R^l(H_{\mathbf{g}})}$ is restricted to total Hamming weights at most $l$, the weight-$l$ decoder uncomputes register $A$ in a single call. Discarding the zeroed-out register $A$ yields:$$\ket{\psi_4} = \frac{1}{\mathcal{N}} \sum_{\mathbf{y}^{(1)}, \dots, \mathbf{y}^{(N)}} r_{\mathbf{y}^{(1)}, \dots, \mathbf{y}^{(N)}} \ket{\mathrm{symp}(P_{\mathbf{y}^{(1)}, \dots, \mathbf{y}^{(N)}})}.$$

  \textbf{Step 4: Partial Tracing.} Undoing the coherent Bell measurement results in:
  \begin{align*}
      \ket{\psi_5} = \frac{1}{\mathcal{N}} \sum_{\mathbf{y}^{(1)}, \dots, \mathbf{y}^{(N)}} r_{\mathbf{y}^{(1)}, \dots, \mathbf{y}^{(N)}} \bigl(P_{\mathbf{y}^{(1)}, \dots, \mathbf{y}^{(N)}} \otimes I\bigr) \ket{\Phi_n} = \frac{1}{\mathcal{N}} \bigl(P(H_{\mathbf{g}}) \otimes I\bigr) \ket{\Phi_n}.
  \end{align*}
  Tracing out register $C$ yields the target  state:
\[
\rho_{P}(H_{\mathbf g})
=
\frac{P^2(H_{\mathbf g})}{\Tr{P^2(H_{\mathbf g})}}.
\]

\begin{cor}
Under the same conditions as Theorem \ref{thm:block_pauli_poly_state} with inverse temperature $\beta$, 
there exists a polynomial $P$ with degree  $l\leq 1.12\beta \norm{H_{\mathbf g}}+0.648\ln \frac{2}{\delta}$ such that $\rho_{P}(H_{\mathbf g})$ is close to the
Gibbs state as follows,
\begin{align}
   \norm{ \rho_{P}(H_{\mathbf g})-\frac{\exp(-\beta H_{\mathbf g})}{\Tr{\exp(-\beta H_{\mathbf g})}}}_1\leq \delta.
\end{align}
Moreover, if $N$ is a constant, then the total running time 
is $\text{poly}(m,n, \beta \norm{H_{\mathbf g}}, \ln \frac{2}{\delta})$.
\end{cor}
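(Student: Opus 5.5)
The statement to prove is the last corollary: $\rho_P(H_{\mathbf g})$ approximates the Gibbs state for some polynomial $P$ of the stated degree, with polynomial running time for constant $N$. The plan is to combine a univariate polynomial approximation of $x\mapsto e^{-\beta x/2}$ on the spectrum of $H_{\mathbf g}$ with Theorem~\ref{thm:block_pauli_poly_state}.

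Since $H_{\mathbf g}$ is Hermitian with spectrum in $[-\norm{H_{\mathbf g}},\norm{H_{\mathbf g}}]$, we write $\rho_P=P(H_{\mathbf g})^2/\Tr{P(H_{\mathbf g})^2}$. The target is $\sigma_\beta=e^{-\beta H_{\mathbf g}}/\Tr{e^{-\beta H_{\mathbf g}}}=Q^2/\Tr{Q^2}$ with $Q=e^{-\beta H_{\mathbf g}/2}$.

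\textbf{Step 1: approximate $Q$ in relative sup-norm.} I would take a truncated Chebyshev expansion of $e^{-\beta x/2}$ on $[-\|H\|,\|H\|]$, using the Bessel-function coefficients $I_k(\beta\|H\|/2)$. This produces a real polynomial $P$ of degree $l\le 1.12\beta\norm{H_{\mathbf g}}+0.648\ln\frac{2}{\delta}$ satisfying $|P(x)-e^{-\beta x/2}|\le \epsilon\, e^{-\beta x/2}$ on the interval, with $\epsilon$ of order $\delta$. The relative, rather than absolute, error is the essential requirement.

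I would cite the known bound from the Hamiltonian DQI literature \cite{schmidhuber2025hamiltonian,bu2026hamiltonian}; the constants $1.12$ and $0.648$ are exactly theirs. The point is that the result depends only on the spectrum of the commuting $H_{\mathbf g}$, not on the block structure. This step is the main obstacle: obtaining the constants requires Bessel tail bounds, and in particular bounding $e^{\beta\|H\|/2}$ times the tail relative to the minimum value $e^{-\beta\|H\|/2}$. I would simply import that lemma.

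\textbf{Step 2: relative error to trace distance.} Because $P(H)$ and $Q$ commute and are simultaneously diagonal, the relative bound gives $(1-\epsilon)^2Q^2\le P(H)^2\le(1+\epsilon)^2Q^2$ eigenvalue-wise. After normalization, each eigenvalue ratio of $\rho_P$ to $\sigma_\beta$ lies in $[(1-\epsilon)^2/(1+\epsilon)^2,\,(1+\epsilon)^2/(1-\epsilon)^2]$. Hence $\norm{\rho_P-\sigma_\beta}_1\le O(\epsilon)$, which is at most $\delta$ for a suitable choice of $\epsilon$ absorbed into the $\ln(2/\delta)$ term.

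\textbf{Step 3: runtime.} Apply Theorem~\ref{thm:block_pauli_poly_state} with this $P$. Preparation uses a single decoder call. When $N$ is constant, classical preprocessing takes $\mathrm{poly}(m)$ time and the circuit has size $\mathrm{poly}(m,n)$. Since $l$ is linear in $\beta\norm{H_{\mathbf g}}$ and $\ln(2/\delta)$, and the coefficient computation for the $r_{\mathbf j}$ is polynomial in $l$, the total running time is $\mathrm{poly}(m,n,\beta\norm{H_{\mathbf g}},\ln\frac2\delta)$. This assumes, as in the theorem, that a weight-$l$ decoder is available.
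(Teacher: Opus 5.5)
Your proposal is correct and follows the paper's proof. The paper imports the degree bound $l\le 1.12\beta\norm{H_{\mathbf g}}+0.648\ln\frac{2}{\delta}$ and the $\delta$-closeness in trace distance directly from \cite{schmidhuber2025hamiltonian}, then invokes Theorem~\ref{thm:block_pauli_poly_state} to get a single decoder call and $\mathrm{poly}(m,n,\beta\norm{H_{\mathbf g}},\ln\frac{2}{\delta})$ running time for constant $N$; your Steps 1--2 (the relative-error approximation of $e^{-\beta x/2}$ and the eigenvalue-ratio conversion to trace distance) correctly unpack what that citation supplies, which the paper does not spell out.
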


\begin{proof}
The result follows from~\cite{schmidhuber2025hamiltonian}, which shows that there exists a polynomial $P(x)$ of degree
$l \le 1.12\,\beta \|H_{\mathbf g}\| + 0.648 \ln \frac{2}{\delta},$
for which the state $\rho_{P}(H_{\mathbf g})$ is $\delta$-close in trace distance to the Gibbs state
$\exp(-\beta H_{\mathbf g})/\Tr{\exp(-\beta H_{\mathbf g})}$.
Hence, when $N$ is constant, Theorem~\ref{thm:block_pauli_poly_state} yields a total running time of
\[
\text{poly}\!\left(m,n,\beta\|H_{\mathbf g}\|,\ln \frac{2}{\delta}\right).
\]
\end{proof}

\begin{Rem}
Theorem~\ref{thm:block_pauli_poly_state} is the operator-valued analogue of the multivariate block decomposition developed in this paper. In the optimization setting, the block structure organizes the objective as a weighted sum of partial objectives. Here, the same idea organizes a commuting Pauli Hamiltonian according to blocks of terms with distinct coupling strengths.

This should be contrasted with the general Hamiltonian DQI construction of~\cite{bu2026hamiltonian} by the same authors, which applies to arbitrary Pauli Hamiltonians. For the structured class considered here—commuting Hamiltonians with a constant number of distinct weights—Theorem~\ref{thm:block_pauli_poly_state} yields a simpler protocol: the reference state is specified by a linear combination of product Dicke states, rather than by a general matrix-product-state (MPS) representation.

This simplification is specific to the constant-\(N\) regime. If the number of distinct weights grows with the system size, the block coefficient table may become superpolynomially large, and the general Hamiltonian DQI method of~\cite{bu2026hamiltonian} may be preferable.
\end{Rem}

\section{Conclusion}

In this work, we develop multivariate DQI for weighted optimization problems, with a particular focus on the weighted Max-LINSAT problem over a prime field. By grouping the constraints into \(N\) blocks according to their distinct weights, we introduce multivariate DQI states constructed from \(N\)-variable polynomials of bounded total degree, and derive a closed-form asymptotic expression for their optimal expectation value.
Moreover, we show that measurements of the multivariate DQI state are concentrated on a set of solutions whose objective values achieve the asymptotically optimal expectation value.
We also provide an explicit preparation circuit requiring only a single decoder call, and extend the analysis to the regime of imperfect decoding. In addition, we discuss the advantage of multivariate DQI over the weighted analogue of Prange’s algorithm for certain weighted OPI problems.
Finally, 
from the block-polynomial viewpoint, we considered Hamiltonian DQI for commuting Pauli Hamiltonians with block structure. 
In this structured setting, the same block-Hamming-weight organization used for weighted Max-LINSAT yields a simplified reference-state preparation based on product Dicke states or block coefficient tables, and leads to approximate Gibbs-state preparation through known polynomial approximations.

Several directions remain open for future investigation. First, while our analysis focuses on a fixed number of weight blocks \(N\), it would be interesting to understand the regime in which \(N\) grows with the problem size, and to determine whether multivariate DQI still admits an efficient description and preparation in that setting. Second, our asymptotic characterization raises the question of how the optimal multivariate polynomial should be chosen in finite-size instances, and whether one can derive explicit near-optimal constructions beyond the asymptotic limit. Third, although we extended the  multivariate DQI to imperfect decoding, a more refined understanding of the tradeoff between decoder failure rates, polynomial degree, and achievable approximation ratio would be valuable, especially for realistic noisy implementations. 
Fourth,  as discussed at the end of Section~\ref{subsec:compare-univariate},
a natural alternative to our multivariate framework is the weight-aware
univariate ansatz of Eq.~\eqref{eq:weight-aware-univariate}, which
constructs a univariate DQI state directly from the weighted objective.
A systematic analysis of this ansatz, including the identification of an optimal polynomial and a precise comparison with multivariate DQI, is left for future work.

Another important direction is to sharpen the comparison with classical algorithms. In particular, it would be desirable to identify broader families of weighted optimization problems for which multivariate DQI provably outperforms the best known classical approaches, and to clarify the complexity-theoretic evidence for such an advantage. It is also natural to ask whether the multivariate DQI framework can be extended beyond weighted Max-LINSAT and weighted OPI to more general constraint satisfaction and coding-theoretic optimization problems. Finally, on the Hamiltonian side, an interesting open problem is whether the present block-structured construction can be generalized beyond commuting Pauli Hamiltonians, potentially leading to new efficient schemes for preparing approximate Gibbs states of more general many-body systems.

\section{Acknowledgement}
K. Bu thanks Arthur Jaffe, Seth Lloyd, Peter Love, Feng Qian, Alexander Schmidhuber  for helpful discussions.
K. Bu is partly supported by the JobsOhio GR138220, and ARO Grant W911NF19-1-0302 and the ARO MURI Grant W911NF-20-1-0082.

\newpage
\begin{appendix}

\appendix

\section{Detailed Proof for the Performance }
\label{append:A}

Recall that $p$ is a prime, $B \in \mathbb{F}_p^{m \times n}$,
and $1\le r\le p-1$.
For each $i=1,\ldots,m$,  $L_i\subset \mathbb{F}_p$ is a subset of cardinality
\[
|L_i|=r,
\]
and  the corresponding $\{\pm1\}$-valued function $f_i$ is defined as follows
\[
f_i(y)=
\left\{
\begin{aligned}
&1, && \text{ if } y\in L_i,\\
&-1, && \text{ if } y\notin L_i.
\end{aligned}
\right.
\]
Since all $|L_i|$ are equal to $r$, all functions $f_i$ have the same mean value
\[
\bar f:=\frac1p\sum_{y\in \mathbb F_p} f_i(y)=\frac{2r}{p}-1.
\]

Let us define the normalization constant
\[
\varphi:=\left(\sum_{y\in \mathbb F_p}|f_i(y)-\bar f|^2\right)^{1/2}
=
\sqrt{4r\left(1-\frac{r}{p}\right)},
\]
and define the rescaled centered functions
\[
h_i(y):=\frac{\sqrt p}{\varphi}\bigl(f_i(y)-\bar f\bigr),\qquad i=1,\ldots,m.
\]
Then
\[
\frac1p\sum_{y\in \mathbb F_p} h_i(y)=0,
\qquad
\frac1p\sum_{y\in \mathbb F_p}|h_i(y)|^2=1,
\]
and, because $f_i(y)^2=1$, the functions $h_i$ satisfy the quadratic identity
\begin{align}
h_i(y)^2 = 1+\kappa\,h_i(y),
\qquad
\kappa:=\frac{p-2r}{\sqrt{r(p-r)}}.
\label{eq:block_hp_quadratic}
\end{align}
Equivalently,
\begin{align}
f_i(y)=\bar f+\frac{\varphi}{\sqrt p}\,h_i(y)
=
\left(\frac{2r}{p}-1\right)+\frac{2\sqrt{r(p-r)}}{p}\,h_i(y).
\label{eq:block_f_from_h}
\end{align}

Let us define $\chi_i(y)$ as follows
\[
\chi_i(y):=\frac{1}{\sqrt p}h_i(y)=\frac{1}{\varphi}\bigl(f_i(y)-\bar f\bigr),
\]
and let
\[
\tilde \chi_i(a):=\frac{1}{\sqrt p}\sum_{y\in \mathbb F_p}\omega_p^{ay}\chi_i(y),
\qquad
\omega_p=e^{2\pi i/p}.
\]
Then $\tilde \chi_i(0)=0$ and
\[
\sum_{a\in \mathbb F_p}|\tilde \chi_i(a)|^2=1.
\]
The inverse Fourier formula gives
\begin{align}
h_i(y)=\sum_{a\in \mathbb F_p^\ast}\tilde \chi_i(a)\,\omega_p^{-ay},
\label{eq:block_h_inverse_fourier}
\end{align}
where $\mathbb F_p^\ast=\mathbb F_p\setminus\{0\}$.

For the objective function 
$F_{\mathbf g}$, we consider the corresponding 
Hamiltonian 
\begin{align}
    H_{\mathbf g}
=\sum_{\Bxx\in \BFF_p^n}\sum^N_{t=1} g_t F_t(\Bxx)\proj{\Bxx}.
\end{align}
Hence, for any state $\psi$,
the expectation 
$\langle s_{\mathbf g} \rangle=\sum_{\Bxx\in \BFF_p^n}|\iinner{\Bxx}{\psi}|^2F_{\mathbf g}(\Bxx)$ is equal to 
the expected value of the
Hamiltonian $\bra{\psi}H_{\mathbf g}\ket{\psi}$.

Since 
$f_i(y)=\bar f+\frac{\varphi}{\sqrt p}\,h_i(y)
=
\left(\frac{2r}{p}-1\right)+\frac{2\sqrt{r(p-r)}}{p}\,h_i(y)$, 
the Hamiltonian can be rewritten as 
follows
\begin{align*}
     H_{\mathbf g}
     =&\bar f\left(\sum_{t=1}^N g_t m_t\right)I
     +\frac{\varphi}{\sqrt p}
    \sum_{\Bxx\in \BFF_p^n}\sum^N_{t=1}g_t\sum_{i\in S_t}
     h_i\!\left(\mathbf b_i\cdot \mathbf x\right)
     \proj{\Bxx}\\
     =&\bar f\left(\sum_{t=1}^N g_t m_t\right)I
     +\frac{\varphi}{\sqrt p}
     \sum^N_{t=1}g_tH_t,
\end{align*}
where 
\begin{align}
    H_t= \sum_{\Bxx\in \BFF_p^n}\sum_{i\in S_t}
     h_i\!\left(\mathbf b_i\cdot \mathbf x\right)
     \proj{\Bxx}.
\end{align}

For each $t\in [N]$ and each $k\geq 0$, denote 
\begin{align}
P^{(k)}_{H_t}
:=
\sum_{\mathbf{x} \in \mathbb{F}_p^n}
\left(
\sum_{\substack{s_1,\ldots,s_k\in S_t\\ s_1<\cdots<s_k}}
h_{s_1}\!\left(\mathbf{b}_{s_1} \cdot \mathbf{x} \right)\cdots
h_{s_k}\!\left(\mathbf{b}_{s_k} \cdot \mathbf{x} \right)
\right)
\ket{\mathbf{x}}\bra{\mathbf{x}}.
\end{align}
By \eqref{eq:block_hp_quadratic}, every higher-order factor $h_i^a$ can be expressed as a linear combination of $1$ and $h_i$.
Hence, the power $H_t^j$ can be written as a linear combination of
\[
P_{H_t}^{(0)},P_{H_t}^{(1)},\ldots,P_{H_t}^{(j)}.
\]

Hence, let us consider the block-symmetric polynomial state 
\begin{align}
    \ket{\mathbf{P}(\mathbf j)}:
    =\frac{1}{\sqrt{\prod^N_{t=1}\binom{m_t}{j_t}}}\frac{1}{\sqrt{p^n}}\sum_{\Bxx\in \mathbb{F}^n_p}\prod^N_{t=1}P^{(j_t)}_{H_t}\ket{\Bxx},
\end{align}
where each integer $j_t\geq 0$.

\begin{lem}
    \label{lem:pf_orthogonality}
Suppose $2l< d^\perp$,
and let $\mathbf j=(j_1,\cdots, j_N), \mathbf k=(k_1,\cdots, k_N)$ be index vectors such that $j_1+\cdots+j_N\le l$, $k_1+\cdots+k_N\le l$.
Then, we have
\begin{align}
    \iinner{\mathbf P(\mathbf j)}{\mathbf P(\mathbf k)}
    =\prod^N_{t=1}\delta_{j_t,k_t}.
\end{align}
\end{lem}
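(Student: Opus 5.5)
We propose to prove the orthonormality by working in the Fourier domain, mirroring the original DQI argument.

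\textbf{Step 1: Fourier expansion of a block state.} Expand each $h_i$ via the inverse Fourier formula \eqref{eq:block_h_inverse_fourier}. A product $h_{s_1}(\mathbf b_{s_1}\cdot\mathbf x)\cdots h_{s_k}(\mathbf b_{s_k}\cdot\mathbf x)$ over distinct indices then becomes
\[
\sum_{a_{s_1},\ldots,a_{s_k}\in\mathbb F_p^\ast}\prod_{u}\tilde\chi_{s_u}(a_{s_u})\,\omega_p^{-(\sum_u a_{s_u}\mathbf b_{s_u})\cdot\mathbf x}.
\]
Summing over the choices of supports inside each block $S_t$ with $|{\rm supp}|=j_t$, every pair (support, nonzero values) corresponds to a unique vector $\mathbf y\in\mathbb F_p^m$ with block weights $|y^{(t)}|=j_t$. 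We therefore obtain
\[
\ket{\mathbf P(\mathbf j)}=\frac{1}{\sqrt{\prod_t\binom{m_t}{j_t}}}\sum_{\mathbf y\in\mathcal E_{\mathbf j}}\Bigl(\prod_{i:y_i\ne0}\tilde\chi_i(y_i)\Bigr)\frac{1}{\sqrt{p^n}}\sum_{\mathbf x}\omega_p^{-(B^\top\mathbf y)\cdot\mathbf x}\ket{\mathbf x}.
\]
The inner state is a Fourier basis vector $\ket{\widehat{B^\top\mathbf y}}$, so this reproduces \eqref{eq:pf_tensor_basis_fourier} up to the direction of the Fourier transform.

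\textbf{Step 2: inner product and the distance condition.} Using orthonormality of the Fourier basis, we get
\[
\iinner{\mathbf P(\mathbf j)}{\mathbf P(\mathbf k)}=\frac{1}{\sqrt{\prod_t\binom{m_t}{j_t}\binom{m_t}{k_t}}}\sum_{\mathbf y\in\mathcal E_{\mathbf j},\,\mathbf z\in\mathcal E_{\mathbf k},\,B^\top\mathbf y=B^\top\mathbf z}\overline{\prod\tilde\chi_i(y_i)}\prod\tilde\chi_i(z_i).
\]
If $B^\top\mathbf y=B^\top\mathbf z$, then $\mathbf y-\mathbf z\in C^\perp$. Its weight satisfies $|\mathbf y-\mathbf z|\le|\mathbf y|+|\mathbf z|\le 2l<d^\perp$, so $\mathbf y=\mathbf z$. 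Equal vectors have equal block weights, so the sum vanishes unless $j_t=k_t$ for all $t$. This is the only place the distance hypothesis enters.

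\textbf{Step 3: normalization.} For $\mathbf j=\mathbf k$, the remaining sum is $\sum_{\mathbf y\in\mathcal E_{\mathbf j}}\prod_{i:y_i\ne0}|\tilde\chi_i(y_i)|^2$. For each fixed support $T$ with $|T\cap S_t|=j_t$, it factorizes as
\[
\prod_{i\in T}\sum_{a\in\mathbb F_p^\ast}|\tilde\chi_i(a)|^2=1,
\]
using $\tilde\chi_i(0)=0$ and Parseval. There are $\prod_t\binom{m_t}{j_t}$ such supports, which cancels the prefactor and gives $1$.

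\textbf{Main obstacle.} The delicate step is Step 1: correctly matching sums over increasing index tuples in each block with vectors $\mathbf y$ of prescribed block weights. This requires distinct indices, so that no $h_i^2$ terms appear, and the Fourier sum must run only over nonzero $a$, so that the supports are exactly recovered. We also need to track the normalization constants $p^{-n/2}$ and the Fourier sign convention. Once this bijection is set up, Steps 2 and 3 are mechanical.
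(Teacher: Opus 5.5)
Your proof is correct and follows essentially the paper's route: Fourier-expand via \eqref{eq:block_h_inverse_fourier}, use $2l<d^\perp$ to force equal syndromes to come from equal error vectors, and normalize using $\tilde\chi_i(0)=0$ and Parseval. The only cosmetic difference is the level at which you argue. The paper first proves orthonormality of the monomial states $\ket{h_T}$ for $|T|+|U|<d^\perp$ and then sums over supports $T$ with the given block profile. You work directly with individual error vectors $\mathbf y$ of block weights $\mathbf j$, which folds both of the paper's steps into a single syndrome-collision argument.
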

\begin{proof}
First,
for any subset $T\subseteq [m]$, let us define the monomial state
\[
\ket{h_T}
:=
\frac{1}{\sqrt{p^n}}
\sum_{\mathbf x\in \mathbb F_p^n}
\left(\prod_{i\in T} h_i(\mathbf b_i\cdot \mathbf x)\right)\ket{\mathbf x}.
\]
By \eqref{eq:block_h_inverse_fourier},  the quantum Fourier transform of the monomial state
is 
\[
F^{\otimes n}\ket{h_T}
=\sum_{\mathbf a\in (\mathbb F_p^\ast)^T}
\left(\prod_{i\in T}\tilde \chi_i(a_i)\right)
\ket{\sum_{i\in T} a_i \mathbf b_i}.
\]

If the set size satisfies $|T| < d^\perp$, then any two distinct vectors of coefficients $\mathbf{a}, \mathbf{a}' \in (\mathbb{F}_p^\ast)^T$ must result in distinct syndromes. Indeed, if they produced the same syndrome, then:
\[\sum_{i \in T} (a_i - a_i') \mathbf{b}_i = \mathbf{0},
\]
which implies that the difference vector would be a non-zero codeword in $C^\perp$ with a Hamming weight of at most $|T|$. Since $|T| < d^\perp$, this contradicts the definition of the minimum distance of $C^\perp$. Hence, the computational basis vectors appearing in the state $F^{\otimes n}\ket{h_T}$ are all distinct. It follows that:
\[
\iinner{h_T}{h_T} = \sum_{\mathbf{a} \in (\mathbb{F}_p^\ast)^T} \prod_{i \in T} |\tilde{\chi}_i(a_i)|^2 = \prod_{i \in T} \sum_{a \in \mathbb{F}_p^\ast} |\tilde{\chi}_i(a)|^2 = 1.
\]

Second, 
consider $T, U \subseteq [m]$ such that $|T| + |U| < d^\perp$ and $T \neq U$. Suppose a computational basis vector were to appear in the support of both $F^{\otimes n}\ket{h_T}$ and $F^{\otimes n}\ket{h_U}$. This would imply the existence of non-zero coefficients $a_i, b_j' \in \mathbb{F}_p^\ast$ such that:$$\sum_{i \in T} a_i \mathbf{b}_i = \sum_{j \in U} b_j' \mathbf{b}_j.$$Rearranging this equality, we obtain:$$\sum_{i \in T} a_i \mathbf{b}_i - \sum_{j \in U} b_j' \mathbf{b}_j = \mathbf{0},$$which defines a non-zero codeword in $C^\perp$ with support contained in $T \cup U$. The Hamming weight of this codeword is at most $|T \cup U| \leq |T| + |U| < d^\perp$. This contradicts the assumption that $d^\perp$ is the minimum distance of the code. Consequently, the sets of computational basis vectors appearing in the two states must be disjoint, leading to:
\[
\braket{h_U|h_T}=0\qquad\text{whenever }T\neq U,\ |T|+|U|<d^\perp.
\]

Finally, both states
$\ket{\mathbf P(\mathbf j)}$ and$\quad\ket{\mathbf P(\mathbf k)} $
can be expressed as linear combinations of the monomial states $\ket{h_T}$. By invoking the orthonormality established above, the result of the lemma follows immediately.

\end{proof}

Define
\[
T_l = \left\{(j_1,\ldots,j_N)\in \mathbb Z_{\ge 0}^N: j_t\le m_t\text{ for all }t,\text{ and } j_1+\cdots+j_N\le l\right\}.
\]
Then the collection of states
\[
\left\{\ket{\mathbf P(\mathbf j)} : \mathbf j=(j_1,\ldots,j_N)\in T_l\right\}
\]
forms an orthonormal basis for a subspace, which we denote by $V_l$. It follows that the dimension of this space is given by $\dim(V_l) = |T_l|$.

     Now, consider the operator $H_{\mathbf{g}} = \bar{f} \left( \sum_{t=1}^N g_t m_t \right) I + \frac{\varphi}{\sqrt{p}} \sum_{t=1}^N g_t H_t$ restricted to the subspace $V_l$. Since the first term is proportional to the identity, studying $H_{\mathbf{g}}$ on $V_l$ is equivalent to considering the restricted operator $\sum_{t=1}^N g_t H_t$ acting on the same subspace.
     
\begin{lem}\label{lem:pf_matrix}
Assume $2l+1<d^\perp$.
For any index vectors $\mathbf{j}=(j_1,\ldots,j_N)\in T_l$ and $\mathbf{k}=(k_1,\ldots,k_N)\in T_l$, we have
\begin{align}
&\bra{\mathbf P(\mathbf j)} \sum^N_{t=1}g_t H_t\ket{\mathbf P(\mathbf k)} \nonumber\\
= & \left\{
\begin{aligned}
&g_t\sqrt{j_t(m_t-j_t+1)}, && \text{ if } \mathbf{j}=\mathbf{k}+\mathbf{e}_t \text{ for some }t,\\
&g_t\sqrt{k_t(m_t-k_t+1)}, && \text{ if } \mathbf{j}=\mathbf{k}-\mathbf{e}_t \text{ for some }t,\\
&\kappa\sum_{t=1}^N g_t j_t, && \text{ if } \mathbf{j}=\mathbf{k},\\
&0, && \text{ otherwise, }
\end{aligned}
\right.
\label{eq:block_matrix_entries_general_p}
\end{align}
where $\mathbf{e}_t$ denotes the $t$-th standard basis vector in $\mathbb Z^N$.
\end{lem}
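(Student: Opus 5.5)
The plan is to compute the action of each block operator $H_t$ on the unnormalized block-symmetric vectors $\prod_{s}P^{(k_s)}_{H_s}$ and then use the orthonormality of Lemma~\ref{lem:pf_orthogonality}. Write $E_t^{(k)}:=P^{(k)}_{H_t}$, the diagonal operator whose $\mathbf x$-entry is the $k$-th elementary symmetric polynomial in $\{h_i(\mathbf b_i\cdot\mathbf x)\}_{i\in S_t}$. Since $H_t=E_t^{(1)}$, the first step is the identity
\[
E_t^{(1)}E_t^{(k)}=(k+1)E_t^{(k+1)}+\kappa\, k\,E_t^{(k)}+(m_t-k+1)E_t^{(k-1)}.
\]
To prove it, expand the product $h_i\cdot h_{s_1}\cdots h_{s_k}$. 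If $i\notin\{s_1,\ldots,s_k\}$, we obtain a $(k+1)$-subset, and each such subset arises exactly $k+1$ times. If $i=s_r$, then \eqref{eq:block_hp_quadratic} gives $h_i^2=1+\kappa h_i$. The $\kappa h_i$ part returns the same $k$-subset, contributing $k$ copies in total. The constant $1$ part produces a $(k-1)$-subset, and each such subset arises once for every choice of the removed element outside it, i.e. $m_t-(k-1)$ times. Because $H_t$ acts only on block $t$ and all operators are diagonal and commute, multiplying by $\prod_{s\ne t}E_s^{(k_s)}$ leaves the other factors untouched.

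The second step converts this into an action on the normalized states. Since $\ket{\mathbf P(\mathbf k)}=\bigl(\prod_s\binom{m_s}{k_s}\bigr)^{-1/2}p^{-n/2}\sum_{\mathbf x}\prod_sE_s^{(k_s)}\ket{\mathbf x}$, we get
\[
H_t\ket{\mathbf P(\mathbf k)}=(k_t+1)\sqrt{\tfrac{\binom{m_t}{k_t+1}}{\binom{m_t}{k_t}}}\ket{\mathbf P(\mathbf k+\mathbf e_t)}+\kappa k_t\ket{\mathbf P(\mathbf k)}+(m_t-k_t+1)\sqrt{\tfrac{\binom{m_t}{k_t-1}}{\binom{m_t}{k_t}}}\ket{\mathbf P(\mathbf k-\mathbf e_t)}.
\]
The binomial ratios simplify to $\sqrt{(k_t+1)(m_t-k_t)}$ and $\sqrt{k_t(m_t-k_t+1)}$. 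In the up-shift coefficient, setting $\mathbf j=\mathbf k+\mathbf e_t$ gives $\sqrt{j_t(m_t-j_t+1)}$, which matches the first case. Boundary cases need a separate check: when $k_t=0$ the down term vanishes, and when $k_t=m_t$ the up term vanishes because $E_t^{(m_t+1)}=0$.

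The final step takes the inner product with $\bra{\mathbf P(\mathbf j)}$, multiplies by $g_t$, and sums over $t$. Here we need orthonormality of $\{\ket{\mathbf P(\mathbf j)}\}$ over the index set $\mathbf j\in T_l$ together with the vectors $\mathbf k\pm\mathbf e_t$, whose total degree can reach $l+1$. This is the step I expect to be the main obstacle. The monomial-state argument in Lemma~\ref{lem:pf_orthogonality} requires $|T|+|U|<d^\perp$. With $|T|\le l$ and $|U|\le l+1$, this becomes $2l+1<d^\perp$, which is exactly the hypothesis here.

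I would state this extension explicitly: orthonormality holds between $\ket{\mathbf P(\mathbf j)}$ with $|\mathbf j|\le l$ and $\ket{\mathbf P(\mathbf k')}$ with $|\mathbf k'|\le l+1$. The proof of Lemma~\ref{lem:pf_orthogonality} carries over verbatim. With this in hand, the matrix entries follow:

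\begin{itemize}
\item $\mathbf j=\mathbf k$ gives $\kappa\sum_t g_tk_t$;
\item $\mathbf j=\mathbf k\pm\mathbf e_t$ give the stated off-diagonal terms;
\item all other pairs give $0$.
\end{itemize}
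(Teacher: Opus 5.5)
Your proposal is correct and follows essentially the same route as the paper. Both use the same three-term identity $H_tP^{(k)}_{H_t}=(k+1)P^{(k+1)}_{H_t}+\kappa k\,P^{(k)}_{H_t}+(m_t-k+1)P^{(k-1)}_{H_t}$, derived by the same subset counting, followed by the binomial-ratio normalization and orthonormality. You also point out that one needs orthogonality between degree-$\le l$ and degree-$\le l+1$ states, which follows from the monomial-level statement ($|T|+|U|<d^\perp$) in the proof of Lemma~\ref{lem:pf_orthogonality} and is exactly where $2l+1<d^\perp$ enters; this fills in a point the paper passes over by simply citing that lemma.
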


\begin{proof}
    Let us first consider the Hamiltonian $H_t$ for any $t\in [N]$. By definition,
\[
H_tP^{(k)}_{H_t}
=
\sum_{i\in S_t} h_i
\sum_{\substack{T\subseteq S_t\\ |T|=k}}
\prod_{j\in T} h_j.
\]

The terms are partitioned into two cases based on the index $i$.
First, if $i\notin T$, then we obtain
\[
h_i\prod_{j\in T}h_j=\prod_{j\in T\cup\{i\}}h_j.
\]
Each $(k+1)$-subset of $S_t$ occurs exactly $k+1$ times in this way, so these terms contribute
\[
(k+1)P_{H_t}^{(k+1)}.
\]
Second, if $i\in T$, then applying \eqref{eq:block_hp_quadratic} yields
\[
h_i\prod_{j\in T}h_j
=
h_i^2\prod_{j\in T\setminus\{i\}}h_j
=
\prod_{j\in T\setminus\{i\}}h_j
+
\kappa\prod_{j\in T}h_j.
\]
Summing over all pairs $(i,T)$ with $i\in T$ gives
\[
(m_t-k+1)P_{H_t}^{(k-1)} + \kappa\,k\,P_{H_t}^{(k)}.
\]

Therefore,
\begin{align}
H_tP_{H_t}^{(k)}
=
(k+1)P_{H_t}^{(k+1)} + \kappa\,k\,P_{H_t}^{(k)} + (m_t-k+1)P_{H_t}^{(k-1)}.
\label{eq:block_X_action}
\end{align}

Applying Lemma \ref{lem:pf_orthogonality}, we evaluate the coefficients for each case. For the case $\mathbf{j} = \mathbf{k} + \mathbf{e}_t$, the coefficient is given by

\[g_t \cdot \frac{(k_t+1)\sqrt{\binom{m_t}{k_t+1}}}{\sqrt{\binom{m_t}{k_t}}} = g_t\sqrt{(k_t+1)(m_t-k_t)}.
\]
By substituting $j_t = k_t + 1$, it is equivalent to
\[
g_t\sqrt{j_t(m_t-j_t+1)} \quad \text{for } \mathbf{j} = \mathbf{k} + \mathbf{e}_t.
\]
Similarly, the coefficient for $\mathbf{j} = \mathbf{k} - \mathbf{e}_t$ is
\[
g_t\sqrt{k_t(m_t-k_t+1)},
\]
while the diagonal coefficient is $\kappa\,g_t k_t$. Summing these contributions over $t = 1, \dots, N$ completes the proof of \eqref{eq:block_matrix_entries_general_p}.

\end{proof}

The multivariate DQI state $ \ket{P(F_1,F_2, \ldots, F_N)}$, where the polynomial $P$ has total degree at most $l$,  can be written as
a linear combination 
of  block-symmetric polynomial state 
$\ket{\mathbf{P}(\mathbf j)} $ with $ \mathbf j=(j_1,\ldots,j_N)\in T_l$. Specifically,
\begin{align}
    \ket{P(F_1, \ldots, F_N)}
    =\sum_{\mathbf j\in T_l}w_{\mathbf j}\ket{\mathbf P(\mathbf j)}.
\end{align}

\begin{lem}\label{lem:char_perf}
    Let $p$ be a prime, $B \in \mathbb{F}_p^{m \times n}$ be a matrix, and let   $f_i: \mathbb{F}_p \to \{\pm 1\}$ be functions  with $|f_i^{-1}(+1)|=r$ for all $i$. 
Consider a partition of the index set $[m] = S_1 \sqcup \dots \sqcup S_N$ with subset sizes $m_t := |S_t|$, and let $F_{\mathbf{g}}(\mathbf{x})$ be the weighted Max-LINSAT objective function:
$$F_{\mathbf{g}}(\mathbf{x}) = \sum_{t=1}^N g_t F_t(\mathbf{x}), \quad F_t(\mathbf{x}) = \sum_{i \in S_t} f_i(\mathbf{b}_i \cdot \mathbf{x}),\quad g_t>0, \forall t\in [N].$$
Let $P$ be an $N$-variable multivariate polynomial  with a total degree at most $l$ such that the normalized 
multivariate  DQI state$\ket{P(F_1,\ldots, F_N)}$ is decomposed as $$\ket{P(F_1,F_2, \ldots, F_N)}
    =\sum_{\mathbf j\in T_l}w_{\mathbf j}\ket{\mathbf P(\mathbf j)}.$$ 
Let $\langle s_{\mathbf{g}} \rangle$ denote the expectation value obtained upon measuring the corresponding multivariate  DQI state$\ket{P(F_1,\ldots, F_N)}$.
Suppose $2l + 1 < d^\perp$, where $d^\perp$ is the minimum distance of the dual code $C^\perp = \{ \mathbf{d} \in \mathbb{F}_p^m : B^\top \mathbf{d} = \mathbf{0} \}$, then 
\begin{align}
    \langle s_{\mathbf g} \rangle
=
\bar f\left(\sum_{t=1}^N g_t m_t\right)
+
\frac{\varphi}{\sqrt p}\, w^\dag A^{(\mathbf g,l,\kappa)} w,
\end{align}
where $\bar f=\frac{2r}{p}-1$, $\frac{\varphi}{\sqrt p}=\frac{2\sqrt{r(p-r)}}{p}$,
and $A^{(\mathbf{g},l,\kappa)}$ is a $|T_l|\times |T_l|$ matrix whose columns and rows are labeled by $T_l$,
and the $(\mathbf{j},\mathbf{k})$-entry of $A^{(\mathbf{g},l,\kappa)}$ is
\begin{align}\label{eq:form_A}
A^{(\mathbf{g},l,\kappa)}(\mathbf{j},\mathbf{k}) = \left\{
\begin{aligned}
&g_t\sqrt{j_t(m_t-j_t+1)}, && \text{ if } \mathbf{j}=\mathbf{k}+\mathbf{e}_t \text{ for some }t,\\
&g_t\sqrt{k_t(m_t-k_t+1)}, && \text{ if } \mathbf{j}=\mathbf{k}-\mathbf{e}_t \text{ for some }t,\\
&\kappa\sum_{t=1}^N g_t j_t, && \text{ if } \mathbf{j}=\mathbf{k},\\
&0, && \text{ otherwise. }
\end{aligned}
\right.
\end{align}

\end{lem}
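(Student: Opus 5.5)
The plan is to express $\langle s_{\mathbf g}\rangle$ as the expectation of the diagonal operator $H_{\mathbf g}$ in the normalized state, and then use the decomposition of $H_{\mathbf g}$ derived just before Lemma~\ref{lem:pf_orthogonality}. Since $\ket{P(F_1,\ldots,F_N)}$ is normalized and measurement in the computational basis gives output probabilities $|\iinner{\Bxx}{\psi}|^2$, we have $\langle s_{\mathbf g}\rangle=\bra{\psi}H_{\mathbf g}\ket{\psi}$ with $\ket\psi=\ket{P(F_1,\ldots,F_N)}$. Using
\[
H_{\mathbf g}=\bar f\Bigl(\sum_t g_t m_t\Bigr)I+\frac{\varphi}{\sqrt p}\sum_t g_tH_t ,
\]
the identity term contributes exactly $\bar f\sum_t g_t m_t$, because $\iinner{\psi}{\psi}=1$.

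For the remaining term, we insert the expansion $\ket\psi=\sum_{\mathbf j\in T_l}w_{\mathbf j}\ket{\mathbf P(\mathbf j)}$ given in the statement. Bilinearity then yields
\[
\bra{\psi}\sum_t g_tH_t\ket\psi=\sum_{\mathbf j,\mathbf k\in T_l}\overline{w_{\mathbf j}}\,w_{\mathbf k}\,\bra{\mathbf P(\mathbf j)}\sum_t g_tH_t\ket{\mathbf P(\mathbf k)}.
\]
Each matrix element here is supplied by Lemma~\ref{lem:pf_matrix}, which applies because $\mathbf j,\mathbf k\in T_l$ and $2l+1<d^\perp$. The entries coincide with \eqref{eq:form_A}, so the sum equals $w^\dagger A^{(\mathbf g,l,\kappa)}w$.

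The one point needing care is the hypothesis of Lemma~\ref{lem:pf_matrix}. The vector $H_t\ket{\mathbf P(\mathbf k)}$ involves $\ket{\mathbf P(\mathbf k+\mathbf e_t)}$, whose total degree can reach $l+1$. Consequently, its inner product with $\ket{\mathbf P(\mathbf j)}$ is only controlled by monomial orthogonality when $|T|+|U|\le 2l+1<d^\perp$, and this is precisely the assumed bound. If $\mathbf k+\mathbf e_t\notin T_l$, that term is orthogonal to every $\ket{\mathbf P(\mathbf j)}$ with $\mathbf j\in T_l$ by the same monomial argument, so it contributes nothing. No other obstacle is expected, since the lemma reduces to bookkeeping once the two preceding lemmas are in hand. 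Combining both contributions gives $\langle s_{\mathbf g}\rangle=\bar f\sum_t g_tm_t+\frac{\varphi}{\sqrt p}w^\dagger A^{(\mathbf g,l,\kappa)}w$, with $\frac{\varphi}{\sqrt p}=\frac{2\sqrt{r(p-r)}}{p}$.
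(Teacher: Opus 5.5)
Your proposal is correct and follows the paper's proof essentially line for line: write $\langle s_{\mathbf g}\rangle=\bra{\psi}H_{\mathbf g}\ket{\psi}$, remove the identity term using normalization, expand in the block-symmetric basis, and read off the entries from Lemma~\ref{lem:pf_matrix}. Your remark that the degree-$(l+1)$ component of $H_t\ket{\mathbf P(\mathbf k)}$ is killed by monomial orthogonality, since $|T|+|U|\le 2l+1<d^\perp$, spells out a step that the paper's Lemma~\ref{lem:pf_matrix} uses without stating it.
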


\begin{proof}
Since  $ H_{\mathbf g}
     =\bar f\left(\sum_{t=1}^N g_t m_t\right)I
     +\frac{\varphi}{\sqrt p}
     \sum^N_{t=1}g_tH_t$,  the expectation value  $\langle s_{\mathbf{g}} \rangle$ with respect to the state $\ket{P(F_1,\ldots, F_N)}$
     is given by
     \begin{align*}
     \langle s_{\mathbf{g}} \rangle
     =&\bra{P(F_1,\ldots, F_N)}H_{\mathbf g} \ket{P(F_1,\ldots, F_N)}\\
     =&\bar f\left(\sum_{t=1}^N g_t m_t\right)+
     \frac{\varphi}{\sqrt p}
     \bra{P(F_1,\ldots, F_N)}\sum^N_{t=1} g_tH_t\ket{P(F_1,\ldots, F_N)}\\
     =&\bar f\left(\sum_{t=1}^N g_t m_t\right)+
     \frac{\varphi}{\sqrt p}
     \sum_{\mathbf{j},\mathbf{k}\in T_l}\bar{w}_{\mathbf j}w_{\mathbf k}
     \bra{\mathbf{P}(\mathbf j)}\sum^N_{t=1} g_tH_t  \ket{\mathbf{P}(\mathbf k)}\\
     =&\bar f\left(\sum_{t=1}^N g_t m_t\right)+
     \frac{\varphi}{\sqrt p}
    w^\dag A^{(\mathbf g, l, \kappa)} w,
     \end{align*}
     where the matrix $A$ is defined as \eqref{eq:form_A}.

\end{proof}

Based on the above lemma, it follows that maximizing the expectation value
 $\langle s_{\mathbf g} \rangle$ over all
multivariate  DQI states $\ket{P(F_1, F_2, \ldots, F_N)}$  is equivalent to finding the largest eigenvalue 
of the matrix $ A^{(\mathbf g,l,\kappa)}$.

\begin{lem}[Upper bound on the largest eigenvalue]
\label{lem:pf_upper}
Assume $2l+1<d^\perp$ and write $\mu=l/m$.
Then
\[
\frac{\lambda_{\max}(A^{(\mathbf{g},l, \kappa)})}{m}
\le
\Gamma_{\mathbf g,\mathbf m, \kappa}(\mu),
\]
where 
\[
\Gamma_{\mathbf g,\mathbf m, \kappa}(\mu):=
\sup_{\substack{0\le \alpha_t\le 1\ \forall t\in [N]\\ \sum_{t=1}^N \frac{m_t}{m}\alpha_t\le \mu}}
\sum_{t=1}^N \frac{m_t}{m}g_t\left(\kappa\alpha_t+2\sqrt{\alpha_t(1-\alpha_t)}\right).
\]
\end{lem}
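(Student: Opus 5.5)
The plan is to bound $\lambda_{\max}(A^{(\mathbf g,l,\kappa)})$ with the Collatz--Wielandt formula, using a positive test vector of product form, and then to compare the resulting row bound directly with the variational problem defining $\Gamma_{\mathbf g,\mathbf m,\kappa}(\mu)$.

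\textbf{Reduction.} By Lemma~\ref{lem:char_perf}, the matrix $A:=A^{(\mathbf g,l,\kappa)}$ is real symmetric on the index set $T_l$. Its off-diagonal entries are nonnegative, and its diagonal entries $\kappa\sum_t g_t j_t$ are bounded below by $-|\kappa|\,l\max_t g_t$. Put $c:=|\kappa|\,l\max_t g_t$, so that $A+cI$ is entrywise nonnegative. The lattice graph on $T_l$ generated by the moves $\pm\mathbf e_t$ is connected, so $A+cI$ is irreducible. Since $c$ shifts every row bound and the Perron root by the same amount, Collatz--Wielandt then gives, for every vector $v>0$ on $T_l$,
\[
\lambda_{\max}(A)\le \max_{\mathbf j\in T_l}\frac{(Av)_{\mathbf j}}{v_{\mathbf j}}.
\]
It therefore suffices to find one positive $v$ with $(Av)_{\mathbf j}\le m\,\Gamma_{\mathbf g,\mathbf m,\kappa}(\mu)\,v_{\mathbf j}$ for every $\mathbf j\in T_l$.

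\textbf{Test vector.} Take $v_{\mathbf j}=\prod_{t=1}^N u_t(j_t)$, where each $u_t$ is a positive sequence on $\{0,\dots,m_t\}$ with ratios $\rho_t(j):=u_t(j+1)/u_t(j)$. A forward neighbour $\mathbf j+\mathbf e_t$ that leaves $T_l$ simply drops out of the row sum, and dropping it only decreases $(Av)_{\mathbf j}$. Hence
\[
\frac{(Av)_{\mathbf j}}{v_{\mathbf j}}\le \sum_{t=1}^N g_t\Bigl[\kappa j_t+\frac{\sqrt{j_t(m_t-j_t+1)}}{\rho_t(j_t-1)}+\sqrt{(j_t+1)(m_t-j_t)}\,\rho_t(j_t)\Bigr]=:\sum_{t=1}^N R_t(j_t).
\]
The point of the product ansatz is that the bound decouples over blocks. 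If each block satisfies
\[
R_t(j)\le m_t\,g_t\,\phi_\kappa(j/m_t),\qquad \phi_\kappa(\alpha)=\kappa\alpha+2\sqrt{\alpha(1-\alpha)},
\]
for all integers $j$ that can occur, then for each $\mathbf j\in T_l$ the choice $\alpha_t:=j_t/m_t$ is feasible in the definition of $\Gamma_{\mathbf g,\mathbf m,\kappa}(\mu)$. Indeed $0\le\alpha_t\le1$, and $\sum_t (m_t/m)\alpha_t=|\mathbf j|/m\le l/m=\mu$. So the maximal row bound is at most $m\,\Gamma_{\mathbf g,\mathbf m,\kappa}(\mu)$ with no error term, which is the clean inequality claimed.

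\textbf{Main obstacle: the one-block inequality.} What remains is to choose $\rho_t$ so that
\[
\frac{\sqrt{j(M-j+1)}}{\rho(j-1)}+\sqrt{(j+1)(M-j)}\,\rho(j)\le 2M\sqrt{\tfrac jM\bigl(1-\tfrac jM\bigr)}
\]
holds exactly, with $M=m_t$, for $0\le j\le \min\{m_t,l\}$. A naive choice such as $\rho(j)=\sqrt{(j+1)/(M-j)}$ gives a left side of $M+2$, which is too large. The first attempt will therefore be a geometric-mean ratio of the form $\rho(j)=\bigl(j(M-j+1)\bigr)^{1/4}\big/\bigl((j+1)(M-j)\bigr)^{1/4}$, possibly multiplied by a correction factor. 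The aim is that the two terms become comparable and each is close to $\sqrt{j(M-j)}$, after which the discrepancy is absorbed using concavity of $\sqrt{x(1-x)}$ and the slack in the endpoint factors $j+1$ and $M-j+1$. The edge rows $j=0$ and $j$ near $M$ need separate treatment. If the uniform per-block inequality cannot be made exact for every $j$, the fallback is to absorb the defect into the free choice of $\alpha_t$. This works if the defect only appears in an increasing direction of $\phi_\kappa$, so that $R_t(j)\le m_tg_t\phi_\kappa(\alpha_t')$ for some feasible $\alpha_t'$ close to $j/m_t$ and the budget constraint still holds. Failing that, the fallback is to use the truncation at $|\mathbf j|=l$, where forward terms vanish, to recover the slack.
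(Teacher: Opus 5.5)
There is a genuine gap, and it is the step you flagged as the main obstacle. No positive test vector satisfies the exact one-block inequality
\[
\frac{\sqrt{j(M-j+1)}}{\rho(j-1)}+\sqrt{(j+1)(M-j)}\,\rho(j)\le 2\sqrt{j(M-j)},
\]
so the pointwise reduction $R_t(j)\le m_tg_t\phi_\kappa(j/m_t)$ cannot be achieved however $\rho$ is tuned.

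\textbf{Why the inequality fails.} At $j=0$ the right side is $0$, while the left side is $\sqrt M\,\rho(0)>0$ whenever $l\ge1$.

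The failure is not confined to the edges, because the endpoint factors work against you. The quantity $\sqrt{j(M-j+1)}+\sqrt{(j+1)(M-j)}-2\sqrt{j(M-j)}$ is positive for all $0\le j<M$, and it is $\approx M/(2\sqrt{j(M-j)})\ge 1$ in the bulk.

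Suppose your inequality held on a window of consecutive $j$ for some $u>0$. Let $A_0$ be the tridiagonal one-block matrix with off-diagonal entries $\sqrt{j(M-j+1)}$. Drop the nonnegative contributions of neighbours outside the window and apply Collatz--Wielandt after a shift. Then the principal submatrix of $A_0-\mathrm{diag}\bigl(2\sqrt{j(M-j)}\bigr)$ on that window would have nonpositive top eigenvalue. But a smooth bump of width $w\gg\sqrt M$ placed in the bulk has Rayleigh quotient at least $1-o(1)-O(M/w^2)>0$.

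Your fallbacks are only sketched and do not repair this. Absorbing these defects by moving $\alpha_t$ needs budget slack, which rows with $|\mathbf j|$ just below $l$ do not have. Truncation removes forward terms only on the face $|\mathbf j|=l$. As sketched, the fallbacks would leave an error term rather than give the exact bound the lemma asserts.

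\textbf{The missing idea.} Dominate each block's row contribution by an \emph{affine} function of $j_t$ (a supporting line of the concave profile), not by $\phi_\kappa(j_t/m_t)$ itself. Then couple the blocks through the budget constraint with a minimax argument, instead of evaluating $\Gamma$ at $\alpha_t=j_t/m_t$.

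Concretely, take the reciprocal of the profile you discarded, with a free scale: $\rho_t(j)=x_t\sqrt{(m_t-j)/(j+1)}$. Equivalently, $v_{\mathbf j}=\prod_t\sqrt{\binom{m_t}{j_t}}\,x_t^{j_t}$. This is the paper's choice: it conjugates $A$ by $D=\mathrm{diag}\bigl(\sqrt{\prod_t\binom{m_t}{j_t}}\bigr)$ and uses the test vector $\prod_t x_t^{j_t}$.

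With this choice, exactly $R_t(j)=g_t\bigl[(m_t-j)x_t+j/x_t+\kappa j\bigr]$. The row bound is therefore affine in $\mathbf j$. Its maximum over $T_l$ is at most
\[
m\sup_{\alpha}\sum_{t=1}^N\frac{m_t}{m}g_t\Bigl((1-\alpha_t)x_t+\frac{\alpha_t}{x_t}+\kappa\alpha_t\Bigr),
\]
with the supremum taken over the polytope $\{0\le\alpha_t\le1,\ \sum_t(m_t/m)\alpha_t\le\mu\}$.

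Now take the infimum over $x$. Exchange $\inf$ and $\sup$ by Sion's theorem: the function is convex in $x$, affine in $\alpha$, and the $\alpha$-set is compact and convex. Finally use $\inf_{x>0}\bigl((1-\alpha)x+\alpha/x\bigr)=2\sqrt{\alpha(1-\alpha)}$. This gives exactly $m\,\Gamma_{\mathbf g,\mathbf m,\kappa}(\mu)$, with no error term.

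By AM--GM, this affine bound strictly exceeds $m_tg_t\phi_\kappa(j/m_t)$ except at the tangency point. That is why your pointwise target could never be met.

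The rest of your setup is sound and carries over unchanged: the shift by $c$, dropping forward neighbours outside $T_l$, the product ansatz, and the feasibility of $\alpha_t=j_t/m_t$.
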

\begin{proof}
Let $D$ be a diagonal matrix indexed by $T_l$, with diagonal entries given by
\[
D_{\mathbf{j},\mathbf{j}}=\sqrt{\prod_{t=1}^N \binom{m_t}{j_t}}, \quad \forall \mathbf{j}\in T_l.
\]

We define the matrix
\[
G := D^{-1}A^{(\mathbf{g},l, \kappa)}D.
\]
For $\mathbf j\in T_l$, the non-zero entries of $G$ are
\[
G_{\mathbf j,\mathbf j-\mathbf e_t}=g_tj_t, \qquad
G_{\mathbf j,\mathbf j+\mathbf e_t}=g_t(m_t-j_t), \qquad
G_{\mathbf j,\mathbf j}=\kappa\sum_{t=1}^N g_tj_t,
\]
provided the respective indices belong to $T_l$.

Since $A^{(\mathbf{g},l, \kappa)}$ and $G$ are similar, we have
\[
\lambda_{\max}(A^{(\mathbf{g},l, \kappa)})=\lambda_{\max}(G).
\]
Moreover, the off-diagonal entries of $G$ are nonnegative.
We can choose $c>0$ large enough such that  $G+cI$ is entrywise nonnegative.
Then by Lemma \ref{260320lem2},
for every strictly positive vector $u$,
\[
\lambda_{\max}(A^{(\mathbf{g},l, \kappa)})+c
=
\lambda_{\max}(G+cI)
\le
\max_{\mathbf j\in T_l}\frac{((G+cI)u)_{\mathbf j}}{u_{\mathbf j}},
\]
and therefore
\[
\lambda_{\max}(A^{(\mathbf{g},l,\kappa)})
\le
\max_{\mathbf j\in T_l}\frac{(Gu)_{\mathbf j}}{u_{\mathbf j}}.
\]

For $x_1,\ldots,x_N>0$, we define a strictly positive test vector
\[
u^{(x_1,\ldots,x_N)}_{\mathbf j}:=\prod_{t=1}^N x_t^{j_t},
\qquad
\mathbf j=(j_1,\ldots,j_N)\in T_l.
\]
For every $\mathbf j=(j_1,\ldots,j_N)\in T_l$, a direct calculation yields
\[
\frac{(Gu^{(x_1,\ldots,x_N)})_{\mathbf j}}{u^{(x_1,\ldots,x_N)}_{\mathbf j}}
\le
\sum_{t=1}^N g_t\left((m_t-j_t)x_t+\frac{j_t}{x_t}+\kappa\,j_t\right).
\]
Hence, the maximum eigenvalue satisfies
\[
\frac{\lambda_{\max}(A^{(\mathbf{g},l,\kappa)})}{m}
\le
\inf_{x_1,\ldots,x_N>0}
\sup_{\substack{\mathbf j\in T_l}}
\sum_{t=1}^N \frac{m_t}{m}g_t\left(\left(1-\frac{j_t}{m_t}\right)x_t+\frac{j_t/m_t}{x_t}+\kappa\,\frac{j_t}{m_t}\right).
\]
Since $\mathbf j\in T_l$ implies $0\le j_t/m_t\le 1$ and
\[
\sum_{t=1}^N \frac{m_t}{m}\cdot \frac{j_t}{m_t}
=
\frac{j_1+\cdots+j_N}{m}
\le
\frac{l}{m}
=
\mu,
\]
we obtain the upper bound
\[
\frac{\lambda_{\max}(A^{(\mathbf{g},l,\kappa)})}{m}
\le
\inf_{x_1,\ldots,x_N>0}
\sup_{\substack{0\le \alpha_t\le 1\ \forall t\in [N]\\ \sum_{t=1}^N \frac{m_t}{m}\alpha_t\le \mu}}
\sum_{t=1}^N \frac{m_t}{m}g_t\left((1-\alpha_t)x_t+\frac{\alpha_t}{x_t}+\kappa\alpha_t\right).
\]
The function inside the brackets is affine in $(\alpha_1,\ldots,\alpha_N)$ and convex in $(x_1,\ldots,x_N)$.
Moreover, the supremum diverges whenever some $x_t\to 0$ or some $x_t\to \infty$, so we may restrict $(x_1,\ldots,x_N)$ to a large compact rectangle and apply Sion's minimax theorem:
\[
\frac{\lambda_{\max}(A^{(\mathbf{g},l, \kappa)})}{m}
\le
\sup_{\substack{0\le \alpha_t\le 1\ \forall t\in [N]\\ \sum_{t=1}^N \frac{m_t}{m}\alpha_t\le \mu}}
\inf_{x_1,\ldots,x_N>0}
\sum_{t=1}^N \frac{m_t}{m}g_t\left((1-\alpha_t)x_t+\frac{\alpha_t}{x_t}+\kappa\alpha_t\right).
\]
The infimum separates blockwise across $t$, and
\[
\inf_{x_t>0}\left((1-\alpha_t)x_t+\frac{\alpha_t}{x_t}\right)
=
2\sqrt{\alpha_t(1-\alpha_t)},
\qquad\forall t\in [N].
\]
Therefore
\[
\frac{\lambda_{\max}(A^{(\mathbf{g},l,\kappa)})}{m}
\le
\Gamma_{\mathbf g,\mathbf m, \kappa}(\mu).
\]

\end{proof}

\begin{lem}[Collatz--Wielandt upper bound\cite{HornJohnson2012}]\label{260320lem2}
Let $M$ be an entry-wise nonnegative matrix, and let $u>0$ be a strictly positive vector. Then the spectral radius $\rho(M)$ of $M$ satisfies
\[
\rho(M)\le \max_i \frac{(Mu)_i}{u_i}.
\]

\end{lem}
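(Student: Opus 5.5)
We plan to prove the Collatz--Wielandt bound by a diagonal similarity followed by the elementary fact that the spectral radius is dominated by any induced matrix norm. This route avoids Perron--Frobenius entirely. Let $M$ be the given $n\times n$ entrywise nonnegative matrix and $u>0$ the given strictly positive vector. Set $D:=\mathrm{diag}(u_1,\ldots,u_n)$, which is invertible because every $u_i>0$, and put $\widetilde M:=D^{-1}MD$. Its entries are $\widetilde M_{ij}=M_{ij}u_j/u_i\ge 0$.

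Since $\widetilde M$ is similar to $M$, the two matrices have the same spectrum, and in particular $\rho(\widetilde M)=\rho(M)$. Because the entries of $\widetilde M$ are nonnegative, its $i$-th row sum equals
\[
\sum_j |\widetilde M_{ij}|=\frac{1}{u_i}\sum_j M_{ij}u_j=\frac{(Mu)_i}{u_i}.
\]
Consequently the operator norm of $\widetilde M$ induced by the $\ell^\infty$ vector norm (the maximum absolute row sum) is exactly $\max_i (Mu)_i/u_i$.

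It remains to verify that $\rho(\widetilde M)\le\|\widetilde M\|_{\infty}$. Take any eigenvalue $\lambda$ of $\widetilde M$ with an eigenvector $z\neq 0$, possibly complex. Then
\[
|\lambda|\,\|z\|_\infty=\|\widetilde M z\|_\infty\le \|\widetilde M\|_\infty\|z\|_\infty ,
\]
and dividing by $\|z\|_\infty>0$ gives $|\lambda|\le\|\widetilde M\|_\infty$. Taking the maximum over all eigenvalues yields
\[
\rho(M)=\rho(\widetilde M)\le \max_i \frac{(Mu)_i}{u_i}.
\]

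There is no genuine obstacle in this argument. The only points needing care are that $u>0$ is required for $D$ to be invertible, and that nonnegativity of $M$ is what lets the row sums of $\widetilde M$ coincide with the ratios $(Mu)_i/u_i$ rather than merely bound them.

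An alternative route uses Perron--Frobenius to obtain a nonnegative, nonzero left eigenvector $y$ with $y^\top M=\rho(M)y^\top$. Pairing this with the entrywise inequality $Mu\le c\,u$, where $c=\max_i (Mu)_i/u_i$, gives $\rho(M)\,y^\top u\le c\,y^\top u$. Since $y^\top u>0$, the same bound follows. We prefer the norm argument because it is self-contained.
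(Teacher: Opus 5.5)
Your argument is correct and complete. The similarity $\widetilde M=D^{-1}MD$ with $D=\mathrm{diag}(u)$ turns the ratios $(Mu)_i/u_i$ into the row sums of a nonnegative matrix, and $\rho(\widetilde M)\le\|\widetilde M\|_\infty$ then gives the bound. The paper does not prove this lemma itself but cites Horn--Johnson, where it is established by essentially this diagonal-similarity-plus-row-sum argument; your Perron--Frobenius alternative is also valid, so nothing further is needed.
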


\begin{lem}[Lower bound on the largest eigenvalue]\label{lem:pf_eigenvalue}
Assume $2l+1<d^\perp$ and let $\mu=l/m\in(0,1/2)$.
Then there exists a unit vector $w=(w_{\mathbf j})_{\mathbf j\in T_l}$ of product form
\[
w_{\mathbf j}=\prod_{t=1}^N a^{(t)}_{j_t},
\qquad \mathbf j=(j_1,\ldots,j_N)\in T_l,
\]
such that
\[
\frac{\langle w|A^{(\mathbf{g},l, \kappa)}|w\rangle}{m}
\ge
\Gamma_{\mathbf g,\mathbf m,\kappa}(\mu)
-
O_{\kappa}\!\left(
\frac{1}{m}\sum_{t=1}^N g_tm_t^{3/4}
\right).
\]
\end{lem}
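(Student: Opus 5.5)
The plan is to realize the supremum in $\Gamma_{\mathbf g,\mathbf m,\kappa}(\mu)$ by a product test vector whose factors are flat windows of width about $\sqrt{m_t}$ centered near $\alpha_t m_t$. First I fix a maximizer $(\alpha_1^\ast,\ldots,\alpha_N^\ast)$ of the continuous problem. It exists because the objective is continuous and the feasible set is compact. I then set $J_t:=\lfloor \alpha_t^\ast m_t\rfloor$ and $r_t:=\lfloor\sqrt{J_t}\rfloor$, and define
\[
a^{(t)}_j=r_t^{-1/2}\ \text{ for } J_t-r_t+1\le j\le J_t,\qquad a^{(t)}_j=0 \text{ otherwise}.
\]
This matches Step 1 of the preparation algorithm. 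If some $J_t$ is very small (e.g.\ $\alpha_t^\ast=0$), I take $a^{(t)}_j=\delta_{j,0}$ instead; the resulting loss is $O(g_t\sqrt{m_t})$, which is absorbed in the error term. Since every $j_t\le J_t$ and $\sum_t J_t\le \sum_t\alpha_t^\ast m_t\le \mu m=l$, the support of $w=\prod_t a^{(t)}_{j_t}$ lies in a rectangle inside $T_l$. Each factor is a unit vector, so $w$ is a unit vector.

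Next I use that $A^{(\mathbf g,l,\kappa)}$ is a sum over $t$ of terms acting only on the $t$-th coordinate, namely a tridiagonal Jacobi-type matrix $A_t$ tensored with identities. This holds on the rectangle because no boundary of $T_l$ is hit: moving by $\pm\mathbf e_t$ inside the support stays inside the rectangle, and moves leaving it hit zero coefficients. Hence
\[
\langle w|A|w\rangle=\sum_t g_t\,\langle a^{(t)}|A_t|a^{(t)}\rangle .
\]
The diagonal part gives $\kappa g_t\sum_j j|a^{(t)}_j|^2\ge \kappa g_t(J_t-r_t)$ when $\kappa\ge0$. For $\kappa<0$ the bound $\le \kappa g_t J_t$ goes the right way in the opposite direction. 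Either way the diagonal part is $\kappa g_t\alpha_t^\ast m_t-O_\kappa(g_t\sqrt{m_t})$.

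The off-diagonal part is
\[
2g_t\sum_{j=J_t-r_t+2}^{J_t} r_t^{-1}\sqrt{j(m_t-j+1)} .
\]
There are $r_t-1$ terms, and each equals $\sqrt{J_t(m_t-J_t)}+O(r_t\sqrt{m_t/J_t})$ by the mean value theorem; for $J_t$ near $m_t$ the derivative of $\sqrt{m_t-j}$ needs care. So the off-diagonal part is $2g_t\sqrt{J_t(m_t-J_t)}(1-1/r_t)-O(g_t r_t)$, roughly $2g_t m_t\sqrt{\alpha_t^\ast(1-\alpha_t^\ast)}-O(g_t m_t^{1/2}+g_t m_t/r_t)$.

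The main obstacle is balancing the errors uniformly in $\alpha_t^\ast$. The window edges cost about $m_t/r_t$, the slope of $\sqrt{j(m_t-j)}$ across the window costs about $r_t$, and the rounding $\alpha_t^\ast m_t\to J_t$ costs $O(\sqrt{m_t})$ through the Hölder-$1/2$ continuity of $\sqrt{\alpha(1-\alpha)}$. With $r_t\approx\sqrt{J_t}$ these sit below $m_t^{3/4}$ when $\alpha_t^\ast$ is bounded away from $0$. The delicate regime is $\alpha_t^\ast m_t$ small or close to $m_t$, where $\sqrt{\alpha(1-\alpha)}$ is non-Lipschitz. There I would split cases at the threshold $J_t\approx m_t^{1/2}$ (or $m_t-J_t\approx m_t^{1/2}$). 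For small $J_t$, I drop block $t$ (set $\alpha_t=0$), losing $2g_t\sqrt{J_t m_t}\lesssim g_t m_t^{3/4}$. For $J_t$ near $m_t$, I shift the window down by $m_t^{1/2}$, losing a comparable amount. This dichotomy is exactly where the exponent $3/4$ arises.

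Summing over $t$, dividing by $m$, and using that $\sum_t(m_t/m)g_t\phi_\kappa(\alpha_t^\ast)=\Gamma_{\mathbf g,\mathbf m,\kappa}(\mu)$ then yields the claimed bound with error $O_\kappa\bigl(\frac1m\sum_t g_t m_t^{3/4}\bigr)$.
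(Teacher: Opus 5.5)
Your proposal is correct and is essentially the paper's proof: the same product vector built from flat windows $[J_t-r_t+1,J_t]$ with $J_t=\lfloor\alpha_t m_t\rfloor$ and $r_t=\lfloor\sqrt{J_t}\rfloor$, the same blockwise factorization of the Rayleigh quotient over a rectangle inside $T_l$, and the same diagonal and rounding estimates. The only difference is that the paper avoids your mean-value-theorem case split by comparing each window term directly with the target via $|\sqrt{u}-\sqrt{v}|\le\sqrt{|u-v|}$, which gives $\bigl|\sqrt{(j+1)(m_t-j)}-m_t\sqrt{\alpha_t(1-\alpha_t)}\bigr|\le\sqrt{(2r_t+1)m_t}=O(m_t^{3/4})$ uniformly in $\alpha_t\in[0,1]$; this uniform bound, rather than a small/large dichotomy, is the source of the exponent $3/4$, and the window-edge loss is only $\sqrt{J_t m_t}/r_t=O(\sqrt{m_t})$, not of order $m_t/r_t$.
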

\begin{proof}
 Let $\alpha_1, \dots, \alpha_N \in [0,1]$ be parameters that attain the supremum in the definition of $\Gamma_{\mathbf{g}, \mathbf{m}, \kappa}(\mu)$. For each $t \in \{1, \dots, N\}$, we define
\begin{align}\label{eq:pf_J_choice}
 J_t := \lfloor \alpha_t m_t \rfloor.
\end{align}
It follows that $J_t \le \alpha_t m_t$ for all $t$, and consequently,
$$\sum_{t=1}^N J_t \le \sum_{t=1}^N \alpha_t m_t \le \mu m = l.$$

Next, we define $a^{(t)}=(a_j^{(t)})_{j\ge 0}$ blockwise. If $J_t=0$, we set$$a^{(t)}_0=1 \quad \text{and} \quad a^{(t)}_j=0 \text{ for } j\ge 1.$$If $J_t\ge 1$, we let $r_t:=\lfloor \sqrt{J_t}\rfloor$ and define
$$a^{(t)}_j:=
\begin{cases}
r_t^{-1/2}, & J_t-r_t+1\le j\le J_t,\\
0, & \text{otherwise.}
\end{cases}$$
The vector entry $w_{\mathbf j}$ is then defined as the product
$$w_{\mathbf j}:=\prod_{t=1}^N a^{(t)}_{j_t} \quad \text{for }\quad \mathbf j=(j_1,\dots,j_N)\in T_l.$$
By construction, the support of $w$ is contained in a rectangle $I_1\times\cdots\times I_N$ with $I_t\subseteq \{0,1,\ldots,J_t\}$. Moreover, the squared norm satisfies
$$\|w\|_2^2 = \prod_{t=1}^N \left(\sum_{j\ge 0}(a_j^{(t)})^2\right) = 1,$$
confirming that $w$ is a unit vector of product form.

Now, let us compute its Rayleigh quotient.
By the definition of $A^{(\mathbf{g},l,\kappa)}$,
\begin{align*}
\langle w|A^{(\mathbf{g},l,\kappa)}|w\rangle
=
&\sum_{t=1}^N 2g_t
\sum_{\mathbf j\in T_l: \mathbf j+\mathbf e_t\in T_l}
\sqrt{(j_t+1)(m_t-j_t)}\,w_{\mathbf j}w_{\mathbf j+\mathbf e_t}  +\kappa\sum_{\mathbf j\in T_l}\left(\sum_{t=1}^N g_tj_t\right)w_{\mathbf j}^2.
\end{align*}
Because the support of $w$ is a rectangle contained in $T_l$, both sums factor, and we obtain
\begin{align*}
\langle w|A^{(\mathbf{g},l,\kappa)}|w\rangle
=&
\sum_{t=1}^N 2g_t
\left( \sum_{j\ge 0}\sqrt{(j+1)(m_t-j)}\,a^{(t)}_ja^{(t)}_{j+1}\right) \prod_{s\neq t} \left( \sum_{r \ge 0 } (a_r^{(s)})^2\right)\\
&+  
\kappa\sum_{t=1}^N g_t \left( \sum_{j\ge 0}j\,(a^{(t)}_j)^2 \right) \prod_{s\neq t} \left( \sum_{r \ge 0 } (a_r^{(s)})^2\right)\\
=&
\sum_{t=1}^N 2g_t
\sum_{j\ge 0}\sqrt{(j+1)(m_t-j)}\,a^{(t)}_ja^{(t)}_{j+1}
+
\kappa\sum_{t=1}^N g_t\sum_{j\ge 0}j\,(a^{(t)}_j)^2.
\end{align*}
For the blocks with $J_t=0$, both contributions vanish.
Since then $\alpha_t<1/m_t$, the target block contribution satisfies
\[
g_tm_t\left(\kappa\alpha_t+2\sqrt{\alpha_t(1-\alpha_t)}\right)
=O_{\kappa}\bigl(g_t m_t^{1/2}\bigr).
\]

Now assume $J_t\ge 1$.
Then
\[
\sum_{j\ge 0}\sqrt{(j+1)(m_t-j)}\,a^{(t)}_ja^{(t)}_{j+1}
=
\frac1{r_t}\sum_{j=J_t-r_t+1}^{J_t-1}\sqrt{(j+1)(m_t-j)},
\]
and
\[
\sum_{j\ge 0}j\,(a^{(t)}_j)^2
=
\frac1{r_t}\sum_{j=J_t-r_t+1}^{J_t}j.
\]
For $J_t-r_t+1\le j\le J_t-1$, we have
\[
\left|\frac{j+1}{m_t}-\alpha_t\right|\le \frac{r_t}{m_t},
\qquad
\left|\frac{j}{m_t}-\alpha_t\right|\le \frac{r_t+1}{m_t}.
\]
Hence
\begin{align*}
&\left|\frac{j+1}{m_t}\left(1-\frac{j}{m_t}\right)-\alpha_t(1-\alpha_t)\right|
\le  \left|\frac{j+1}{m_t}-\alpha_t\right|+\left|\frac{j}{m_t}-\alpha_t\right|
\le \frac{2r_t+1}{m_t}.
\end{align*}
Using $|\sqrt{u}-\sqrt{v}|\le \sqrt{|u-v|}$ for $u,v\ge 0$, we obtain
\[
\left|\sqrt{(j+1)(m_t-j)}-m_t\sqrt{\alpha_t(1-\alpha_t)}\right|
\le
\sqrt{(2r_t+1)m_t}.
\]
Therefore
\begin{align*}
&\left| \left(\frac1{r_t}\sum_{j=J_t-r_t+1}^{J_t-1}\sqrt{(j+1)(m_t-j)} \right)-m_t\sqrt{\alpha_t(1-\alpha_t)}\right| \\
\le & \frac{r_t-1}{r_t}\sqrt{(2r_t+1)m_t}
+\frac{m_t}{r_t}\sqrt{\alpha_t(1-\alpha_t)}.
\end{align*}
Since $r_t\le \sqrt{J_t}\le \sqrt{m_t}$ and
\[
\sqrt{\alpha_t(1-\alpha_t)}\le \sqrt{\alpha_t}\le \sqrt{\frac{J_t+1}{m_t}},
\]
we get
\[
\frac{m_t}{r_t}\sqrt{\alpha_t(1-\alpha_t)}=O\bigl(m_t^{1/2}\bigr),
\qquad
\sqrt{(2r_t+1)m_t}=O\bigl(m_t^{3/4}\bigr).
\]
Hence
\[
\frac1{r_t}\sum_{j=J_t-r_t+1}^{J_t-1}\sqrt{(j+1)(m_t-j)}
=
m_t\sqrt{\alpha_t(1-\alpha_t)}+O\bigl(m_t^{3/4}\bigr).
\]
Also,
\[
\frac1{r_t}\sum_{j=J_t-r_t+1}^{J_t}j
=
J_t-\frac{r_t-1}{2}
=
\alpha_t m_t+O\bigl(m_t^{1/2}\bigr).
\]
It follows that
\[
\frac{2g_t}{r_t}\sum_{j=J_t-r_t+1}^{J_t-1}\sqrt{(j+1)(m_t-j)}
+
\kappa g_t\frac1{r_t}\sum_{j=J_t-r_t+1}^{J_t}j
=
g_tm_t\left(2\sqrt{\alpha_t(1-\alpha_t)}+\kappa\alpha_t\right)
+O_{\kappa}\bigl(g_tm_t^{3/4}\bigr).
\]
Summing over $t\in [N]$, including the $J_t=0$ blocks, we obtain
\[
\langle w|A^{(\mathbf{g},l,\kappa)}|w\rangle
=
m\,\Gamma_{\mathbf g,\mathbf m,\kappa}(\mu)
+
O_{\kappa}\!\left(\sum_{t=1}^N g_t m_t^{3/4}\right).
\]
That is, 
\begin{align}
\left|\frac{\langle w|A^{(\mathbf{g},l,\kappa)}|w\rangle}{m}
-\Gamma_{\mathbf g,\mathbf m,\kappa}(\mu)|
\right|=O_{\kappa}\!\left(\frac{1}{m}\sum_{t=1}^N g_t m_t^{3/4}\right).
\end{align}
Therefore, the lower bound is proved.

\end{proof}

\begin{mproof}[Proof of Theorem~\ref{thm:pf_semicircle_general}]
    By Lemma \ref{lem:char_perf}, the  optimal  expected value  is 
    \[
\langle s_{\mathbf g} \rangle
=
\bar f\left(\sum_{t=1}^N g_t m_t\right)
+
\frac{\varphi}{\sqrt p}\,\lambda_{\max}(A^{(\mathbf g,l,\kappa)}).
\]
Then the result comes from the Lemmas \ref{lem:pf_upper} and \ref{lem:pf_eigenvalue}.
\end{mproof}

To compare with the result from the measurement on  the original DQI, we derive the 
result in Proposition~\ref{prop:univariate-on-weighted-main} based on the techniques from \cite{jordan2024optimization,bu2025decoded}. 
Here, we give a proof for completeness. 

% \begin{prop}\label{260202thm1}
% Given a matrix $B \in \mathbb{F}_2^{m \times n}$ and coefficients $c_1,...,c_m \in \R$. 
% Let $$F_{\mathbf c}(\mathbf{x}) = \sum_{i=1}^m c_if_i (\sum_{j=1}^n B_{ij} x_j)$$ be a weighted max-XORSAT objective function. 
% Consider the original (univariate) DQI state $\ket{f(F)}=\sum_{\Bxx}f(F)\ket{\Bxx}$ with $F(\mathbf{x}) )= \sum_{i=1}^m f_i (\sum_{j=1}^n B_{ij} x_j)$ being 
% the unweighted objective function, let $\langle s \rangle$ be the expected value of the objective function $F_{\mathbf c}(\mathbf{x})$ for the symbol string $\Bxx$ obtained upon measuring the state $\ket{f(F)}$ in the computational basis. 
% Suppose $2l +1< d^\perp$, where $d^\perp$ is the minimum distance of the code $C^\perp = \{ \mathbf{y} \in \mathbb{F}_2^m : B^\top \mathbf{y} = \mathbf{0} \}$, \textit{i.e.} the minimum Hamming weight of any nonzero codeword in $C^\perp$. 
% In the limit $m \to \infty$, with $\mu = l/m\in (0,1/2)$ fixed, the optimal choice of the univariate polynomial $f$ to maximize $\langle s \rangle$ yields
% \begin{equation}
% \lim_{\substack{m,l\to\infty\\l/m=\mu}} \frac{\langle s \rangle}{m} = 2 (\Expect_{i} c_i ) \sqrt{\mu\left(1-\mu \right)}.
% \end{equation}  
% \end{prop}
\begin{mproof}[Proof of Proposition~\ref{prop:univariate-on-weighted-main}]
    Following the same techniques from \cite{jordan2024optimization,bu2025decoded}, we can show that the 
    expected value of the objective function $F_{\mathbf c}(\mathbf{x})$ for the symbol string $\Bxx$ obtained upon measuring the DQI 
    state for the unweighted objective function  $F(\mathbf{x}) = \sum_{i=1}^m f_i (\sum_{j=1}^n B_{ij} x_j)$
    is 
    \begin{align}
        \Expect_{\Bxx\sim \ket{f(F)}} F_c( \Bxx)
        = \mathbf{w}^\dagger A_1 \mathbf{w}.
    \end{align}
    where matrix \[A_1 = \left(\Expect_{i=1}^m c_i\right) A_0,\]
where $A_0$ is an $(l+1)\times (l+1)$ matrix whose $(k,j)$-entry is
\begin{align}\label{260202eq2}
A_0(k,j)
= 
\left\{
\begin{aligned} 
&  \sqrt{j(m-j+1) }    && \text{ if } j=k+1,\\
&  \sqrt{k(m-k+1) }    && \text{ if } j=k-1,\\
& 0   && \text{ otherwise. }
\end{aligned}
\right.
\end{align} 

As shown  in \cite{jordan2024optimization},
if $l\leq m/2$,
the maximum eigenvalue of $A_0(k,j)$ satisfies the 
semicircle law:
\begin{align}
    \lim_{\substack{m,l\to\infty\\l/m=\mu}}\frac{\lambda_{\max}(A_0)}{m}= 2\sqrt{\mu(1-\mu)}
\end{align}
where the limit is taken as both $m$ and $l$ tend to infinity with the ratio $\mu=l/m$ fixed.
Hence, we get the result.

\end{mproof}

\section{Detailed Proof for Concentration Phenomenon}
\label{Appendix:concen}

\begin{lem}\label{lem:hamiltonian_action}
Let \(\mathbf j=(j_1,\ldots,j_N)\in T_l\) be such that both \(\mathbf j-\mathbf e_t\in T_l\) and \(\mathbf j+\mathbf e_t\in T_l\). Then
\begin{equation}\label{eq:Xt_action_on_P_basis}
H_t\ket{\mathbf P(\mathbf j)}
=
\sqrt{j_t(m_t-j_t+1)}\,\ket{\mathbf P(\mathbf j-\mathbf e_t)}
+
\sqrt{(j_t+1)(m_t-j_t)}\,\ket{\mathbf P(\mathbf j+\mathbf e_t)}.
\end{equation}
\end{lem}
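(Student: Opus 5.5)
We prove this as an exact operator identity, not through matrix elements. The point is that $H_t$ and all the $P^{(k)}_{H_s}$ are diagonal in the computational basis, so they commute, and $H_t$ interacts only with the block-$t$ factor of $\ket{\mathbf P(\mathbf j)}$. We work in the $\mathbb F_2$ setting of this appendix ($p=2$, $r=1$), where $\kappa=\frac{p-2r}{\sqrt{r(p-r)}}=0$.

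\textbf{Step 1 (three-term recurrence).} We reuse the recurrence \eqref{eq:block_X_action} from the proof of Lemma~\ref{lem:pf_matrix}:
\[
H_tP_{H_t}^{(k)}=(k+1)P_{H_t}^{(k+1)}+\kappa kP_{H_t}^{(k)}+(m_t-k+1)P_{H_t}^{(k-1)}.
\]
With $\kappa=0$ the diagonal term drops out. This recurrence comes only from the identity $h_i^2=1+\kappa h_i$ and counting pairs $(i,T)$, so it holds as an identity of diagonal operators. No minimum-distance assumption is needed for it.

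\textbf{Step 2 (apply to the block-symmetric state).} Since all factors are diagonal, we can move $H_t$ next to the $t$-th factor:
\[
H_t\prod_{s}P^{(j_s)}_{H_s}=\Bigl(\prod_{s\neq t}P^{(j_s)}_{H_s}\Bigr)H_tP^{(j_t)}_{H_t}.
\]
Substituting the recurrence and applying the result to $\frac{1}{\sqrt{p^n}}\sum_{\mathbf x}\ket{\mathbf x}$ gives
\[
H_t\ket{\mathbf P(\mathbf j)}=\frac{1}{\sqrt{\prod_s\binom{m_s}{j_s}}}\Bigl[(j_t+1)\,\Psi_{\mathbf j+\mathbf e_t}+(m_t-j_t+1)\,\Psi_{\mathbf j-\mathbf e_t}\Bigr],
\]
where $\Psi_{\mathbf k}$ denotes the unnormalized state $\frac{1}{\sqrt{p^n}}\sum_{\mathbf x}\prod_sP^{(k_s)}_{H_s}\ket{\mathbf x}$.

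\textbf{Step 3 (renormalize).} We have $\Psi_{\mathbf k}=\sqrt{\prod_s\binom{m_s}{k_s}}\ket{\mathbf P(\mathbf k)}$, and the hypothesis $\mathbf j\pm\mathbf e_t\in T_l$ guarantees both neighbouring states are defined. Only the $t$-th binomial changes, so the coefficients become
\[
(j_t+1)\sqrt{\tbinom{m_t}{j_t+1}\big/\tbinom{m_t}{j_t}}=\sqrt{(j_t+1)(m_t-j_t)},
\qquad
(m_t-j_t+1)\sqrt{\tbinom{m_t}{j_t-1}\big/\tbinom{m_t}{j_t}}=\sqrt{j_t(m_t-j_t+1)}.
\]
These are exactly the coefficients in \eqref{eq:Xt_action_on_P_basis}.

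\textbf{Main point of care.} The computation is routine. The only things to watch are that the diagonal term vanishes because we are over $\mathbb F_2$ (for general $p$ an extra $\kappa j_t\ket{\mathbf P(\mathbf j)}$ term would appear), and that the neighbour indices stay inside $T_l$, which the hypothesis ensures.
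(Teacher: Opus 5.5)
Your proof is correct and follows the paper's argument. The paper also specializes the recurrence \eqref{eq:block_X_action} to $p=2$, where $\kappa=0$, and then renormalizes using the definition of $\ket{\mathbf P(\mathbf j)}$; you have simply written out the commutation of the diagonal factors and the binomial-ratio computation that the paper leaves implicit.
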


\begin{proof}
By \eqref{eq:block_X_action} with \(p=2\), we have
\[
H_t P_t^{(j)}
=
(j+1)P_t^{(j+1)}+(m_t-j+1)P_t^{(j-1)}.
\]
Combining this with the definition of \(\ket{\mathbf P(\mathbf j)}\) yields \eqref{eq:Xt_action_on_P_basis}.
\end{proof}

\begin{lem}\label{claim:weighted_claim32}
%\red{(Gu:) Under the same assumptions as Lemma \ref{lem:hamiltonian_action},}
For every fixed block $t$, the family of states $\ket{\rho_{\mathbf J,\mathbf r}}$ satisfies the following property
\begin{equation}\label{eq:block_approx_eigenstate_Xt}
\left\|
\left(
H_t-2\sqrt{\alpha_t(1-\alpha_t)}\,m_t
\right)
\ket{\rho_{\mathbf J,\mathbf r}}
\right\|
=o(m_t).
\end{equation}
Hence,
\begin{equation}\label{eq:weighted_approx_eigenstate_Hg}
\left\|
\left(
H_{\mathbf g}
-
\lambda_{\mathbf J}
\right)
\ket{\rho_{\mathbf J,\mathbf r}}
\right\|
=o(\sum_tg_tm_t),
\end{equation}
where
\[
\lambda_{\mathbf J}
:=
2\sum_{t=1}^N g_t\sqrt{J_t(m_t-J_t)}
=
(1+o(1))\sum_{t=1}^N2g_t\sqrt{\alpha_t(1-\alpha_t)}\,m_t.
\]
\end{lem}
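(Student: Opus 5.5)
We prove the blockwise estimate \eqref{eq:block_approx_eigenstate_Xt} by direct computation in the orthonormal basis $\{\ket{\mathbf P(\mathbf j)}\}$. Lemma~\ref{lem:pf_orthogonality} gives orthonormality, since $|\mathbf J|+\sum_t r_t+1<d^\perp/2$ keeps every index in $\mathcal R$ and its $\pm\mathbf e_t$ neighbours inside the orthonormal range. Write $c_t(j):=\sqrt{j(m_t-j+1)}$. Lemma~\ref{lem:hamiltonian_action} applies to every $\mathbf j\in\mathcal R$, because $j_t\ge J_t+1\ge1$ and $j_t+1\le J_t+r_t<m_t$. It gives
\[
H_t\ket{\rho_{\mathbf J,\mathbf r}}=\frac{1}{\sqrt{|\mathcal R|}}\sum_{\mathbf j\in\mathcal R}\Bigl(c_t(j_t)\ket{\mathbf P(\mathbf j-\mathbf e_t)}+c_t(j_t+1)\ket{\mathbf P(\mathbf j+\mathbf e_t)}\Bigr).
\]
Reindex so that everything is expanded in $\ket{\mathbf P(\mathbf k)}$. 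Since $\mathcal R$ is a product of intervals, the coefficient of $\ket{\mathbf P(\mathbf k)}$ is $|\mathcal R|^{-1/2}\bigl(c_t(k_t+1)\mathbf 1[\mathbf k+\mathbf e_t\in\mathcal R]+c_t(k_t)\mathbf 1[\mathbf k-\mathbf e_t\in\mathcal R]\bigr)$. Subtracting $\lambda_t:=2\sqrt{\alpha_t(1-\alpha_t)}\,m_t$ times $\mathbf 1[\mathbf k\in\mathcal R]$ splits the indices into two kinds.

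For interior $\mathbf k$, where $k_t$ lies strictly inside the $t$-th interval, the coefficient is $|\mathcal R|^{-1/2}(c_t(k_t)+c_t(k_t+1)-\lambda_t)$. Every $j$ in the window satisfies $j/m_t=\alpha_t+o(1)$, because $r_t=o(m_t)$ and $J_t=(\alpha_t+o(1))m_t$. Hence $c_t(j)=m_t\sqrt{\alpha_t(1-\alpha_t)}+o(m_t)$, using continuity of $\sqrt{x(1-x)}$ as in the proof of Lemma~\ref{lem:pf_eigenvalue}, and each interior coefficient is $|\mathcal R|^{-1/2}o(m_t)$ uniformly. Summing squares over at most $|\mathcal R|$ interior indices contributes $o(m_t)^2$.

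The boundary indices are the main point. These are the $\mathbf k$ with $k_t$ at either end of the window or just outside it, that is $k_t\in\{J_t,J_t+1,J_t+r_t-1,J_t+r_t\}$. For each, the coefficient is $O(m_t)\,|\mathcal R|^{-1/2}$, and the number of such $\mathbf k$ is $O(|\mathcal R|/r_t)$ because the other coordinates range freely. The squared contribution is therefore $O(m_t^2/r_t)=o(m_t^2)$, since $r_t\to\infty$. Adding the two contributions gives \eqref{eq:block_approx_eigenstate_Xt}. This boundary count is exactly where $r_t\to\infty$ is needed; it is the only delicate step, and the rest is bookkeeping over the rectangle structure.

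For \eqref{eq:weighted_approx_eigenstate_Hg}, take $p=2$, where $\kappa=0$ and $\bar f=0$. The paper's Hamiltonian is then $H_{\mathbf g}=\sum_t g_tH_t$, up to the scalar $\varphi/\sqrt p=1$ for $p=2,r=1$. The triangle inequality gives
\[
\Bigl\|\bigl(H_{\mathbf g}-\textstyle\sum_t g_t\lambda_t\bigr)\rho\Bigr\|\le\sum_t g_t\,o(m_t)=o\Bigl(\sum_t g_tm_t\Bigr),
\]
valid because $N$ is fixed. Finally, $|\lambda_{\mathbf J}-\sum_t g_t\lambda_t|=\sum_t g_t\,o(m_t)$, since $\sqrt{J_t(m_t-J_t)}=m_t\sqrt{\alpha_t(1-\alpha_t)}+o(m_t)$. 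This gives both the stated asymptotic form of $\lambda_{\mathbf J}$ and the bound with $\lambda_{\mathbf J}$ in place of $\sum_t g_t\lambda_t$.
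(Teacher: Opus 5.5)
Your proof is correct and is essentially the paper's argument. The paper writes $H_t\ket{\rho_{\mathbf J,\mathbf r}}=C_t(U_t^-+U_t^+)\ket{\rho_{\mathbf J,\mathbf r}}+O(r_t)\ket{\xi}$ with $C_t=\sqrt{J_t(m_t-J_t)}$ and uses $\|U_t^{\pm}\ket{\rho_{\mathbf J,\mathbf r}}-\ket{\rho_{\mathbf J,\mathbf r}}\|=O(r_t^{-1/2})$, which is exactly your boundary count, while your continuity bound on the interior coefficients replaces their derivative bound $c_t(j)=C_t+O(r_t)$ and the later comparison of $2C_t$ with $2\sqrt{\alpha_t(1-\alpha_t)}\,m_t$ (giving an unquantified $o(m_t)$, which is all the statement needs).
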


\begin{proof}
For any $t\in [N]$, let
\[
C_t:=\sqrt{J_t(m_t-J_t)},
\]
and 
\begin{equation}\label{eq:eps_t_concentration}
\varepsilon_t
:=
\left|\frac{J_t}{m_t}-\alpha_t\right|
+\frac{r_t}{m_t}
+\frac{1}{\sqrt{r_t}}
+\frac{1}{m_t}.
\end{equation}
Under the assumptions $\alpha_t\in (0,1)$ are constants, $J_t=(\alpha_t+o(1))m_t$, $r_t\to\infty$, and $r_t=o(m_t)$, we have $\varepsilon_t=o(1)$.

For every $\mathbf j\in\mathcal R$, the $t$-th coordinate satisfies $j_t=J_t+O(r_t)$. Since $\alpha_t\in(0,1)$ is fixed, the function
$x\mapsto \sqrt{x(m_t-x+1)}$
has a derivative that is uniformly bounded for $x=J_t+O(r_t)$ and sufficiently large $m_t$, and
the bound only depends on $\alpha_t$. Hence, for all
$\mathbf j\in\mathcal R$,
\[
\sqrt{j_t(m_t-j_t+1)}
=
C_t+O(r_t),
\]
and similarly
\[
\sqrt{(j_t+1)(m_t-j_t)}
=
C_t+O(r_t).
\]
By \eqref{eq:Xt_action_on_P_basis}, it follows that
\[
H_t\ket{\rho_{\mathbf J,\mathbf r}}
=
C_t\left(U_t^-+U_t^+\right)
\ket{\rho_{\mathbf J,\mathbf r}}
+
O(r_t)\ket{\xi},
\]
where $U_t^-$ and $U_t^+$ shift the $t$-th coordinate down and up by one,
respectively, on the basis vectors appearing in the rectangle, and 
$\ket{\xi}$ is a unit vector.

The vector $U_t^+\ket{\rho_{\mathbf J,\mathbf r}}$ is a normalized state with uniform amplitudes on the shifted rectangle $\mathcal R+\mathbf e_t$. 
Therefore 
\[
\left\|
U_t^+\ket{\rho_{\mathbf J,\mathbf r}}
-
\ket{\rho_{\mathbf J,\mathbf r}}
\right\|^2
=
O\!\left(\frac1{r_t}\right).
\]
Thus
\[
\left\|
U_t^+\ket{\rho_{\mathbf J,\mathbf r}}
-
\ket{\rho_{\mathbf J,\mathbf r}}
\right\|
=
O(r_t^{-1/2}).
\]
The same estimate holds for $U_t^-$. Since $C_t=O(m_t)$, we obtain
\[
H_t\ket{\rho_{\mathbf J,\mathbf r}}
=
2C_t\ket{\rho_{\mathbf J,\mathbf r}}
+
O\!\left(\frac{m_t}{\sqrt{r_t}}+r_t+1\right).
\]
Equivalently,
\begin{equation}\label{eq:Ht_minus_Ct_explicit}
\left\|
\left(H_t-2C_t\right)
\ket{\rho_{\mathbf J,\mathbf r}}
\right\|
=
O\!\left(\frac{m_t}{\sqrt{r_t}}+r_t+1\right).
\end{equation}

It remains to compare $2C_t$ with
$2\sqrt{\alpha_t(1-\alpha_t)}m_t$. Since the map
$x\mapsto \sqrt{x(1-x)}$ is Lipschitz on a fixed neighborhood of $\alpha_t$,
we have
\[
\left|
\frac{C_t}{m_t}
-
\sqrt{\alpha_t(1-\alpha_t)}
\right|
=
O\!\left(
\left|\frac{J_t}{m_t}-\alpha_t\right|
\right).
\]
Therefore
\[
\left|
2C_t
-
2\sqrt{\alpha_t(1-\alpha_t)}m_t
\right|
=
O\!\left(
m_t\left|\frac{J_t}{m_t}-\alpha_t\right|
\right).
\]
Combining this estimate with \eqref{eq:Ht_minus_Ct_explicit} gives
\[
\left\|
\left(
H_t-2\sqrt{\alpha_t(1-\alpha_t)}\,m_t
\right)
\ket{\rho_{\mathbf J,\mathbf r}}
\right\|
=
O(\varepsilon_t m_t),
\]
where $\varepsilon_t$ is defined in \eqref{eq:eps_t_concentration}. Since $\varepsilon_t=o(1)$, we have
\[
\left\|
\left(
H_t-2\sqrt{\alpha_t(1-\alpha_t)}\,m_t
\right)
\ket{\rho_{\mathbf J,\mathbf r}}
\right\|
=o(m_t).
\]

Moreover, summing \eqref{eq:Ht_minus_Ct_explicit} over $t$ with weights $g_t$
gives
\[
\left\|
\left(
H_{\mathbf g}
-
\lambda_{\mathbf J}
\right)
\ket{\rho_{\mathbf J,\mathbf r}}
\right\|
=
O\!\left(
\sum_{t=1}^N
g_t m_t
\left(
\frac{r_t}{m_t}
+
\frac{1}{\sqrt{r_t}}
+
\frac{1}{m_t}
\right)
\right),
\]
where
\[
\lambda_{\mathbf J}
:=
2\sum_{t=1}^N g_t\sqrt{J_t(m_t-J_t)}.
\]
Since
\[
2\sqrt{J_t(m_t-J_t)}
=
\left(2\sqrt{\alpha_t(1-\alpha_t)}+O\!\left(
\left|\frac{J_t}{m_t}-\alpha_t\right|
\right)\right)m_t,
\]
we also have
\[
\lambda_{\mathbf J}
=
\sum_{t=1}^N2g_t\sqrt{\alpha_t(1-\alpha_t)}\,m_t
+
O\!\left(
\sum_{t=1}^N
g_t m_t
\left|\frac{J_t}{m_t}-\alpha_t\right|
\right).
\]
By the definition of $\varepsilon_t$ in \eqref{eq:eps_t_concentration} and $\varepsilon_t=o(1)$, 
we have 
$$\lambda_{\mathbf J}=(1+o(1))\sum_{t=1}^N2g_t\sqrt{\alpha_t(1-\alpha_t)}\,m_t.$$

\end{proof}

Now, we are ready to prove the concentration result in Theorem \ref{thm:concentration}.

\begin{mproof}[Proof of Theorem \ref{thm:concentration}]
For each $t\in[N]$, define
\[
\varepsilon_t
:=
\left|\frac{J_t}{m_t}-\alpha_t\right|
+\frac{r_t}{m_t}
+\frac{1}{\sqrt{r_t}}
+\frac{1}{m_t}.
\]
By the assumptions on $J_t$ and $r_t$, we have $\varepsilon_t=o(1)$.

By the definition of $H_t$, the operator $H_t$ is diagonal in the computational
basis, with eigenvalue $m_t-2|(B\Bxx-\Bvv)_{S_t}|$
on $\ket{\Bxx}$. By Lemma \ref{claim:weighted_claim32}, we have
\begin{align*}
   &\sum_{\Bxx\in\mathbb{F}^n_2}
   \Pr_{\rho_{\mathbf J,\mathbf r}}(\Bxx)
   \left(
   m_t-2|(B\Bxx-\Bvv)_{S_t}|
   -2\sqrt{\alpha_t(1-\alpha_t)}m_t
   \right)^2 \\
=&
\left\|
\left(
H_t-2\sqrt{\alpha_t(1-\alpha_t)}m_t
\right)
\ket{\rho_{\mathbf J,\mathbf r}}
\right\|^2  \\
=&\ O(\varepsilon_t^2 m_t^2).
\end{align*}
Therefore, by Markov's inequality,
\begin{align*}
\Pr_{\rho_{\mathbf J,\mathbf r}}
\left(
\left|
m_t-2|(B\Bxx-\Bvv)_{S_t}|
-2\sqrt{\alpha_t(1-\alpha_t)}m_t
\right|
>
\varepsilon_t^{1/2}m_t
\right)  
\le
\frac{
O(\varepsilon_t^2m_t^2)
}{
\varepsilon_t m_t^2
}
=
O(\varepsilon_t).
\end{align*}
Equivalently, with probability at least $1-O(\varepsilon_t)$, we have
\[
m_t-2|(B\Bxx-\Bvv)_{S_t}|
=
2\sqrt{\alpha_t(1-\alpha_t)}m_t
+
O(\varepsilon_t^{1/2}m_t).
\]
This is the same as
\[
|(B\Bxx-\Bvv)_{S_t}|
=
\left(
\frac12-\sqrt{\alpha_t(1-\alpha_t)}
\right)m_t
+
O(\varepsilon_t^{1/2}m_t) = \beta_t m_t
+
O(\varepsilon_t^{1/2}m_t).
\] 

Define the explicit concentration set
\[
S_{\varepsilon}
:=
\left\{
\Bxx\in\mathbb F_2^n:
\left|
\frac{|(B\Bxx-\Bvv)_{S_t}|}{m_t}
-\beta_t
\right|
\le
C\varepsilon_t^{1/2}
\ \text{for every }t\in[N]
\right\},
\]
where $C>0$ is a sufficiently large constant independent of $m$. Applying the
union bound over the finitely many blocks gives
\[
\Pr_{\rho_{\mathbf J,\mathbf r}}(\Bxx\notin S_{\varepsilon})
\le
\sum_{t=1}^N O(\varepsilon_t).
\]
Hence
\[
\Pr_{\rho_{\mathbf J,\mathbf r}}(\Bxx\in S_{\varepsilon})
=
1-
O\!\left(
\sum_{t=1}^N\varepsilon_t
\right).
\]
Since $N$ is a fixed constant, this may equivalently be written as
\[
\Pr_{\rho_{\mathbf J,\mathbf r}}(\Bxx\in S_{\varepsilon})
=
1-
O\!\left(
\max_{t\in[N]}\varepsilon_t
\right).
\]
In particular, because each $\varepsilon_t=o(1)$, the set $S_{\varepsilon}$ is
of the form
\[
S_{\varepsilon}
\subseteq
\left\{
\Bxx\in\mathbb F_2^n:
|(B\Bxx-\Bvv)_{S_t}|
=
(\beta_t+o(1))m_t,\ \forall t\in[N]
\right\},
\]
and thus we have 
\[
\Pr_{\rho_{\mathbf J,\mathbf r}}(\Bxx\in S)=1-o(1).
\]

\end{mproof}

We now prove the result on the size of the set \(S\).
Let us start with the lower bound.

\begin{prop}\label{prop:weighted_many_solutions}
Assume that $B:\mathbb F_2^n\to\mathbb F_2^m$ has rank $n$,  $\Bvv\in \BFF_2^m$ is a fixed vector, the number of blocks $N$ is a fixed constant
and $m_t=\theta_t m+o(m), \forall t\in [N],
\sum_{t=1}^N\theta_t=1$.
Let $C=\{B\Bxx: \Bxx\in\BFF_2^n\}\subseteq\BFF_2^m 
$ be a code in $\BFF_2^m$ and let $d^\perp=d(C^\perp)$ be the dual distance with $d^\perp=\delta^\perp m+o(m)$ for some $\delta^\perp\in(0,1]$.  
Assume also that $0<\alpha_t<\frac12,
\forall t\in [N]$,
and that there is a constant $\gamma>0$ such that
\begin{equation}\label{eq:weighted_many_solutions_gap_app}
\sum_{t=1}^N\theta_t\alpha_t
\le
\frac{\delta^\perp}{2}-\gamma .
\end{equation}
Then, the size of the set $S$ defined in Theorem \ref{thm:concentration},
on which the state \(|\rho_{\mathbf J,\mathbf r}\rangle\) is concentrated,
has the following lower bound: there exists some constant 
$a>0$ such that 
\begin{align}
    |S|\geq 2^{an}.
\end{align}
\end{prop}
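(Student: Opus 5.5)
The plan is to combine the concentration result (Theorem~\ref{thm:concentration}) with a trivial per-point probability bound. Since $\Pr_{\rho_{\mathbf J,\mathbf r}}(\Bxx\in S)=1-o(1)$, it suffices to show that every single $\Bxx$ has probability at most $2^{-cn}$ for some constant $c>0$. Then
\[
1-o(1)\le |S|\max_{\Bxx}\Pr_{\rho_{\mathbf J,\mathbf r}}(\Bxx)\le |S|\,2^{-cn},
\]
which yields $|S|\ge 2^{an}$ for any $0<a<c$ and all large $m$.

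To bound $\max_{\Bxx}|\langle\Bxx|\rho_{\mathbf J,\mathbf r}\rangle|^2$, we work in the Fourier picture of \eqref{eq:tensor_basis_fourier}. There
\[
H^{\otimes n}\ket{\rho_{\mathbf J,\mathbf r}}=\sum_{\mathbf y} c_{\mathbf y}(-1)^{\mathbf v\cdot\mathbf y}\ket{B^\top\mathbf y},
\qquad
c_{\mathbf y}=\frac{1}{\sqrt{|\mathcal R|}}\prod_t\binom{m_t}{|y^{(t)}|}^{-1/2}
\]
for $(|y^{(t)}|)_t\in\mathcal R$. The support condition $|\mathbf J|+\sum_t r_t+1<d^\perp/2$ ensures that the syndromes are distinct and the vector is normalized, exactly as in Lemma~\ref{lem:pf_orthogonality}. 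Undoing the Hadamard gives
\[
\langle\Bxx|\rho_{\mathbf J,\mathbf r}\rangle=2^{-n/2}\sum_{\mathbf y}c_{\mathbf y}(-1)^{\mathbf v\cdot\mathbf y+\Bxx\cdot B^\top\mathbf y}.
\]
Cauchy--Schwarz then bounds this by $2^{-n/2}\sqrt{K}$, where $K$ is the number of $\mathbf y$ in the support, since $\sum_{\mathbf y}|c_{\mathbf y}|^2=1$. This gives $\Pr(\Bxx)\le K/2^n$. The support size is
\[
K=\sum_{\mathbf j\in\mathcal R}\prod_t\binom{m_t}{j_t}\le 2^{\sum_t m_t H(\alpha_t+o(1))+o(m)},
\]
so we need
\[
\sum_t\theta_t H(\alpha_t)<n/m-\Omega(1).
\]

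This exponent comparison is the main obstacle, and it is where the gap hypothesis \eqref{eq:weighted_many_solutions_gap_app} and the rank assumption enter. If $n/m$ is not large enough, the Cauchy--Schwarz bound is too weak. My first attempt is to use the dual distance in place of the crude count. Because all $\mathbf y$ of weight $<d^\perp/2$ have distinct syndromes in $\BFF_2^n$, the Hamming-ball size satisfies $\sum_{w<d^\perp/2}\binom{m}{w}\le 2^n$, giving
\[
n\ge m\,H(\delta^\perp/2)-o(m).
\]
It then remains to show $\sum_t\theta_tH(\alpha_t)\le H(\delta^\perp/2)-\Omega(1)$. Concavity of $H$ gives $\sum_t\theta_tH(\alpha_t)\le H\!\left(\sum_t\theta_t\alpha_t\right)$, and monotonicity of $H$ on $[0,1/2]$ together with the gap $\sum_t\theta_t\alpha_t\le\delta^\perp/2-\gamma$ gives
\[
H\!\left(\sum_t\theta_t\alpha_t\right)\le H(\delta^\perp/2-\gamma)<H(\delta^\perp/2)-c'(\gamma).
\]
Here we use $\delta^\perp/2\le 1/2$, and $c'(\gamma)>0$ because $H$ is strictly increasing on $[0,1/2]$.

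Putting this together, $\log_2\Pr(\Bxx)\le -c'(\gamma)m+o(m)$. Since $n\le m$, this is at most $-c'(\gamma)n+o(n)$, so we may take $a$ to be any constant below $c'(\gamma)$. A small care point is that $\mathcal R$ adds only $\prod_t r_t=2^{o(m)}$ layers, and the $o(1)$ shifts in $\alpha_t$ cost only $o(m)$ in the exponent.
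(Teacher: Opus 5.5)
Your argument is correct, but it bounds the per-bitstring probability by a different and more elementary route than the paper.

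\textbf{The paper's route.} The paper works in the computational basis. It writes
$\langle\Bxx|\mathbf P(\mathbf j)\rangle=\bigl(2^n\prod_t\binom{m_t}{j_t}\bigr)^{-1/2}\prod_t K_{j_t}(q_t;m_t)$ with $q_t=|(B\Bxx-\Bvv)_{S_t}|$.
Krawtchouk orthogonality then gives $K_{j_t}(q_t;m_t)^2\le 2^{m_t}\binom{m_t}{j_t}/\binom{m_t}{q_t}$, hence $\Pr(\Bxx)\le|\mathcal R|\,2^{m-n}/\prod_t\binom{m_t}{q_t}$.
For $\Bxx\in S$ this is $2^{m(1-\sum_t\theta_t h(\beta_t))-n+o(m)}$.
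The paper then needs the entropy inequality $h(\beta_t)>1-h(\alpha_t)$, proved via $h(x)>4x(1-x)$, to pass to $\sum_t\theta_t h(\alpha_t)$.
After that, the Hamming bound, concavity and the gap hypothesis finish exactly as in your last step.

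\textbf{Your route.} Your Fourier-side Cauchy--Schwarz bound $\Pr(\Bxx)\le K/2^n$ lands directly on the exponent $\sum_t\theta_t h(\alpha_t)-n/m$.
It needs no Krawtchouk estimates and no entropy inequality.
It also gives a bound of the right size for every $\Bxx$, not only those with the typical block profile.

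\textbf{What the paper's route buys.} Its pointwise bound is exponentially sharper on $S$, since $1-h(\beta_t)<h(\alpha_t)$ strictly.
So in principle it would go through under the weaker condition $1-\sum_t\theta_t h(\beta_t)<h(\delta^\perp/2)$.
The paper immediately relaxes it to $\sum_t\theta_t h(\alpha_t)<h(\delta^\perp/2)$, however, so nothing is lost with your approach for the proposition as stated.

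\textbf{Minor remarks.} Your distinct-syndrome count $\sum_{w<d^\perp/2}\binom{m}{w}\le 2^n$ is literally the same Hamming-bound inequality the paper uses.
In your argument the rank hypothesis enters only through $n\le m$ at the very end.
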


\begin{proof}
    Based on Theorem \ref{thm:concentration},
    the state $\ket{\rho_{\mathbf J,\mathbf r}}$ is asymptotically concentrated on bitstrings $\Bxx$ satisfying
\[
|(B\Bxx-\Bvv)_{S_t}|=(\beta_t+o(1))m_t,
\qquad \forall t\in [N].
\]
We will prove the upper bound on  the probability of observing any one such bitstring.

Fix $\Bxx$, let us  denote $q_t$ to be 
the Hamming weight of the vector  $B\Bxx-\Bvv$ restricted to 
$S_t$, that is 
\[
q_t:=|(B\Bxx-\Bvv)_{S_t}|,
\qquad \forall t\in [N].
\]

Define the binary Krawtchouk polynomial ~\cite{krawtchouk1929generalisation,nikiforov1991classical}  by
\begin{equation}\label{eq:Krawtchouk_def_section}
K_k(q;m):=
\sum_{a=0}^k(-1)^a\binom{q}{a}\binom{m-q}{k-a}.
\end{equation}
Then, for every $\mathbf j$, the block-symmetric polynomial state  can be rewritten as follows
\begin{equation}\label{eq:Pj_computational_basis_Krawtchouk}
\ket{\mathbf P(\mathbf j)}
=
\frac{1}{\sqrt{2^n\prod_{t=1}^N\binom{m_t}{j_t}}}
\sum_{\Bxx\in\BFF_2^n}
\left(
\prod_{t=1}^N
K_{j_t}\!\left(\wt((B\Bxx-\Bvv)_{S_t});m_t\right)
\right)
\ket{\Bxx}.
\end{equation}
That is, the coefficient 
\[
\langle\Bxx|\mathbf P({\mathbf j})\rangle
=
\frac{1}{\sqrt{2^n\prod_{t=1}^N\binom{m_t}{j_t}}}
\prod_{t=1}^NK_{j_t}(q_t;m_t).
\]
Hence, by the definition of the state $\ket{\rho_{\mathbf J,\mathbf r}}=
\frac{1}{\sqrt{|\mathcal R|}}
\sum_{\mathbf j\in\mathcal R}
\ket{\mathbf P(\mathbf j)}$, we have
\[
|\langle\Bxx|\rho_{\mathbf J,\mathbf r}\rangle|^2
\le
|\mathcal R|
\max_{\mathbf j\in\mathcal R}
\frac{1}{2^n\prod_{t=1}^N\binom{m_t}{j_t}}
\prod_{t=1}^N K_{j_t}(q_t;m_t)^2.
\]
Moreover, by the orthogonality relation of  Krawtchouk polynomials
\[
\sum_{q=0}^{m_t}\binom{m_t}{q}K_{j_t}(q;m_t)^2
=2^{m_t}\binom{m_t}{j_t},
\]
we have 
\[
K_{j_t}(q_t;m_t)^2
\le
\frac{2^{m_t}\binom{m_t}{j_t}}{\binom{m_t}{q_t}},
\]

 Therefore
\begin{equation}\label{eq:single_bitstring_probability_bound}
|\langle\Bxx|\rho_{\mathbf J,\mathbf r}\rangle|^2
\le
|\mathcal R|\,
\frac{2^{m-n}}{\prod_{t=1}^N\binom{m_t}{q_t}}.
\end{equation}

Since the 
dual code $C^\perp$ has length $m$, dimension $m-n$, and distance $d^\perp$,  the Hamming bound \cite{lint1999introduction} gives
\[
2^{m-n}
\le
\frac{2^m}{\sum_{i=0}^{\lfloor(d^\perp-1)/2\rfloor}\binom mi}
\le
2^{m(1-h(\delta^\perp/2))+o(m)},
\]
where $h(x)=-x\log_2x-(1-x)\log_2(1-x)$.
If $\Bxx$ lies in the concentration region, then
\[
\prod_{t=1}^N\binom{m_t}{q_t}
=
2^{m\sum_{t=1}^N\theta_t h(\beta_t)+o(m)}.
\]
Moreover, as $N$ is a fixed constant and $|\mathcal R|=\prod_{t=1}^N(r_t-1)=m^{O(1)}$, 
we have
\[
|\langle\Bxx|\rho_{\mathbf J,\mathbf r}\rangle|^2
\le
2^{m\left(1-h(\delta^\perp/2)-\sum_{t=1}^N\theta_t h(\beta_t)\right)+o(m)}.
\]
For $0<\alpha<1/2$ and $\beta=1/2-\sqrt{\alpha(1-\alpha)}$, we have
\[
h(\beta)>1-h(\alpha).
\]
Indeed, the elementary inequality $h(x)>4x(1-x)$ for $0<x<1$, $x\neq1/2$, gives
\[
h(\beta)>4\beta(1-\beta)=(1-2\alpha)^2=1-4\alpha(1-\alpha)>1-h(\alpha).
\]
Therefore
\begin{align*}
1-h(\delta^\perp/2)-\sum_{t=1}^N\theta_t h(\beta_t)
&<
-h(\delta^\perp/2)+\sum_{t=1}^N\theta_t h(\alpha_t)\\
&\le
-h(\delta^\perp/2)+h\left(\sum_{t=1}^N\theta_t\alpha_t\right)\\
&<0,
\end{align*}
where we used the concavity of $h$ and the gap condition \eqref{eq:weighted_many_solutions_gap_app}. Hence there is a constant $a>0$ such that every bitstring in the set $S$ has probability at most $2^{-am}$, 
that is, $\text{Pr}(\Bxx)\leq 2^{-am}$ for any $\Bxx\in S$. 
Since 
$\text{Pr}(\Bxx\in S)=1-o(1)=\sum_{\Bxx\in S}\text{Pr}(\Bxx)$,  then we have 
$|S|\geq 2^{am}$. Moreover, since $B:\mathbb F_2^n\to\mathbb F_2^m$ has rank $n$,  we have 
$m\geq n$. Hence, 
the size of $S$ has the following lower bound
$$|S|\geq 2^{an}.$$

\end{proof}

Next, we derive an upper bound on the size of the set \(S\), for which we first need some preparation.

For each \(t\in[N]\), the polynomials
\[
K_0(\cdot;m_t),\,K_1(\cdot;m_t),\,\dots,\,K_{m_t}(\cdot;m_t)
\]
form a basis for the space of all real-valued functions on \(\{0,1,\dots,m_t\}\). Therefore, the tensor-product functions
\[
\Phi_{\mathbf{k}}(\mathbf{x})
:=
\prod_{t=1}^N K_{k_t}(x_t;m_t),
\qquad
\mathbf{k}=(k_1,\dots,k_N),\quad 0\le k_t\le m_t,
\]
form a basis for the space of real-valued functions on $\prod_{t=1}^N \{0,1,\dots,m_t\}$.
Equivalently, every function
\[
\alpha:\prod_{t=1}^N \{0,1,\dots,m_t\}\to \mathbb{R}
\]
admits a unique expansion of the form
\begin{equation}\label{eq:block-krawtchouk-expansion}
\alpha(\mathbf{x})
=
\sum_{\mathbf{k}} \alpha_{\mathbf{k}}
\prod_{t=1}^N K_{k_t}(x_t;m_t).
\end{equation}

Let us prove the following result which 
is a Block-profile extension of Lemma 1 in \cite{krasikov2002estimates}.
\begin{prop}\label{prop:block-delsarte}
Let $C\subset \mathbb{F}_2^m$ be a binary linear code, and let
$[m]=S_1\sqcup \cdots \sqcup S_N,
|S_t|=m_t,
\sum_{t=1}^N m_t=m.$
For a block profile $\mathbf{i}=(i_1,\dots,i_N)$ with $0\le i_t\le m_t$, define
\[
A_{\mathbf{i}}(C)
:=
\bigl|\{c\in C:\ |c_{S_t}|=i_t\ \text{for all }t\in [N]\}\bigr|.
\]
Similarly, define $A_{\mathbf{k}}(C^\perp)$ for the dual code $C^\perp\le \mathbb{F}_2^m$.
Suppose that $\alpha$ is expanded as in \eqref{eq:block-krawtchouk-expansion}. Then
\begin{equation}\label{eq:block-delsarte}
|C|\sum_{\mathbf{k}}\alpha_{\mathbf{k}}\,A_{\mathbf{k}}(C^\perp)
=
\sum_{\mathbf{i}} A_{\mathbf{i}}(C)\,\alpha(\mathbf{i}).
\end{equation}
Equivalently,
\begin{equation}\label{eq:block-delsarte-zero}
|C|\,\alpha_{\mathbf{0}}
+
|C|\sum_{\mathbf{k}\neq \mathbf{0}}\alpha_{\mathbf{k}}\,A_{\mathbf{k}}(C^\perp)
=
\sum_{\mathbf{i}} A_{\mathbf{i}}(C)\,\alpha(\mathbf{i}),
\end{equation}
where $\mathbf{0}=(0,\dots,0)$.
\end{prop}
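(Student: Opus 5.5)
The plan is to prove \eqref{eq:block-delsarte} by the standard Fourier (MacWilliams/Poisson summation) argument, carried out blockwise. The key identity is the character-sum representation of a single binary Krawtchouk polynomial: for any $y\in\mathbb F_2^{m_t}$ with $|y|=q$,
\[
\sum_{\substack{z\in\mathbb F_2^{m_t}\\ |z|=k}}(-1)^{z\cdot y}=K_k(q;m_t),
\]
which follows from \eqref{eq:Krawtchouk_def_section} by counting the $z$ that meet the support of $y$ in exactly $a$ positions. Taking products over the blocks, for $c\in\mathbb F_2^m$ with block profile $\mathbf i$ we get
\[
\sum_{\substack{z\in\mathbb F_2^m\\ |z_{S_t}|=k_t\ \forall t}}(-1)^{z\cdot c}=\prod_{t=1}^N K_{k_t}(i_t;m_t)=\Phi_{\mathbf k}(\mathbf i).
\]

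Next I will use the Poisson summation formula for a linear code. For every $z\in\mathbb F_2^m$,
\[
\sum_{c\in C}(-1)^{z\cdot c}=|C|\,\mathbf 1_{\{z\in C^\perp\}}.
\]
Fix a profile $\mathbf k$ and sum this identity over all $z$ with block profile $\mathbf k$. The left side becomes $\sum_{c\in C}\Phi_{\mathbf k}(\text{profile of }c)$, after swapping the order of summation and applying the block Krawtchouk identity above. Grouping codewords by profile, this equals $\sum_{\mathbf i}A_{\mathbf i}(C)\Phi_{\mathbf k}(\mathbf i)$. The right side is $|C|\,A_{\mathbf k}(C^\perp)$.

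Then I multiply by $\alpha_{\mathbf k}$ and sum over $\mathbf k$. By linearity and the expansion \eqref{eq:block-krawtchouk-expansion}, $\sum_{\mathbf k}\alpha_{\mathbf k}\Phi_{\mathbf k}(\mathbf i)=\alpha(\mathbf i)$, which gives \eqref{eq:block-delsarte}. For the equivalent form \eqref{eq:block-delsarte-zero}, I separate out the $\mathbf k=\mathbf 0$ term. The zero vector is the only word with profile $\mathbf 0$, and it lies in $C^\perp$, so $A_{\mathbf 0}(C^\perp)=1$.

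The only step needing care is the first one: checking the sign convention and the index ranges in the Krawtchouk character identity, so that it matches \eqref{eq:Krawtchouk_def_section} exactly. Everything else is exchange of finite sums. The argument does not even use that the $\Phi_{\mathbf k}$ form a basis; the basis property only guarantees that the expansion coefficients $\alpha_{\mathbf k}$ exist.
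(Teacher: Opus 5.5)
Your proposal is correct and follows the paper's proof essentially step for step. Both arguments combine the blockwise Krawtchouk character identity with character orthogonality over $C$ to obtain the split-weight MacWilliams identity $\sum_{\mathbf i}A_{\mathbf i}(C)\prod_{t}K_{k_t}(i_t;m_t)=|C|\,A_{\mathbf k}(C^\perp)$, and then conclude by linearity in $\alpha_{\mathbf k}$. Your observation that $A_{\mathbf 0}(C^\perp)=1$ also supplies the justification for the equivalent form \eqref{eq:block-delsarte-zero}, which the paper leaves implicit.
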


\begin{proof}
For each $t\in [N]$ and each $u\in \mathbb{F}_2^{m_t}$, one has
\begin{equation}\label{eq:single-block-character-identity}
\sum_{\substack{z\in \mathbb{F}_2^{m_t}\\ |z|=k}}
(-1)^{u\cdot z}
=
K_k(|u|;m_t).
\end{equation}
Now fix a block profile $\mathbf{k}=(k_1,\dots,k_N)$. For any $c\in C$, applying \eqref{eq:single-block-character-identity} on each block yields
\[
\prod_{t=1}^N K_{k_t}(|c_{S_t}|;m_t)
=
\prod_{t=1}^N
\sum_{\substack{z_t\in \mathbb{F}_2^{S_t}\\ |z_t|=k_t}}
(-1)^{c_{S_t}\cdot z_t}
=
\sum_{\substack{y\in \mathbb{F}_2^m\\ |y_{S_t}|=k_t\ \forall t}}
(-1)^{c\cdot y}.
\]
Therefore,
\begin{align*}
\sum_{\mathbf{i}} A_{\mathbf{i}}(C)\prod_{t=1}^N K_{k_t}(i_t;m_t)
=
\sum_{c\in C}\prod_{t=1}^N K_{k_t}(|c_{S_t}|;m_t)
=
\sum_{\substack{y\in \mathbb{F}_2^m\\ |y_{S_t}|=k_t\ \forall t}}
\sum_{c\in C} (-1)^{c\cdot y}.
\end{align*}
By character orthogonality,
\[
\sum_{c\in C}(-1)^{c\cdot y}
=
\begin{cases}
|C|, & y\in C^\perp,\\
0, & y\notin C^\perp.
\end{cases}
\]
Hence
\begin{equation}\label{eq:block-macwilliams}
\sum_{\mathbf{i}} A_{\mathbf{i}}(C)\prod_{t=1}^N K_{k_t}(i_t;m_t)
=
|C|\,A_{\mathbf{k}}(C^\perp).
\end{equation}
Finally, multiply \eqref{eq:block-macwilliams} by $\alpha_{\mathbf{k}}$ and sum over $\mathbf{k}$. Using
\eqref{eq:block-krawtchouk-expansion}, we obtain
\begin{align*}
|C|\sum_{\mathbf{k}}\alpha_{\mathbf{k}}A_{\mathbf{k}}(C^\perp)
&=
\sum_{\mathbf{k}}\alpha_{\mathbf{k}}
\sum_{\mathbf{i}} A_{\mathbf{i}}(C)\prod_{t=1}^N K_{k_t}(i_t;m_t)\\
&=
\sum_{\mathbf{i}} A_{\mathbf{i}}(C)
\sum_{\mathbf{k}}\alpha_{\mathbf{k}}\prod_{t=1}^N K_{k_t}(i_t;m_t)\\
&=
\sum_{\mathbf{i}} A_{\mathbf{i}}(C)\,\alpha(\mathbf{i}),
\end{align*}
which is exactly \eqref{eq:block-delsarte}.  
\end{proof}

\begin{Rem}\label{rem:block-macwilliams-reference}
The coefficient identity \eqref{eq:block-macwilliams} is the binary split Hamming weight MacWilliams transform.
For the classical two-block case, see \cite[Ch. 5, Eq. (52)]{macwilliams1977theory}.
For the product-partition formulation, see \cite[Thm. 3.4]{gluesing2015fourier}.
The proof above is included only to keep the argument self-contained.
\end{Rem}

\begin{cor}\label{cor:block-kl02}
Assume that \(C^\perp\) has minimum distance \(d^\perp\). If
\[
\alpha_{\mathbf{k}}=0
\qquad
\text{for all }\mathbf{k}\text{ with }|\mathbf{k}|:=k_1+\cdots+k_N\ge d^\perp,
\]
then
\begin{equation}\label{eq:block-kl02}
|C|\,\alpha_{\mathbf{0}}
=
\sum_{\mathbf{i}} A_{\mathbf{i}}(C)\,\alpha(\mathbf{i}).
\end{equation}
\end{cor}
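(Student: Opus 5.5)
The plan is to apply Proposition~\ref{prop:block-delsarte} directly, in its equivalent form \eqref{eq:block-delsarte-zero}, and show that the sum over nonzero profiles on the left-hand side vanishes.

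First, write the left-hand side of \eqref{eq:block-delsarte-zero} as
\[
|C|\,\alpha_{\mathbf 0}+|C|\sum_{\mathbf k\neq\mathbf 0}\alpha_{\mathbf k}A_{\mathbf k}(C^\perp).
\]
Second, split the sum over $\mathbf k\neq\mathbf 0$ according to the total weight $|\mathbf k|=k_1+\cdots+k_N$.

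\textbf{Case $1\le|\mathbf k|\le d^\perp-1$.} By definition, $A_{\mathbf k}(C^\perp)$ counts the codewords $y\in C^\perp$ with $|y_{S_t}|=k_t$ for every $t$. Since the blocks $S_1,\ldots,S_N$ partition $[m]$, any such $y$ has total Hamming weight $|y|=\sum_t|y_{S_t}|=|\mathbf k|$. This weight is nonzero and strictly less than the minimum distance $d^\perp$ of $C^\perp$, so no such codeword exists. Hence $A_{\mathbf k}(C^\perp)=0$.

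\textbf{Case $|\mathbf k|\ge d^\perp$.} Here $\alpha_{\mathbf k}=0$ by hypothesis.

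In both cases every term of the sum over $\mathbf k\neq\mathbf 0$ vanishes. Substituting into \eqref{eq:block-delsarte-zero} yields exactly \eqref{eq:block-kl02}.

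There is no real obstacle. The only point to state explicitly is the additivity of Hamming weight over the disjoint blocks, which converts the block-profile condition into a condition on total weight, where the minimum distance applies. For completeness, one can also note that $A_{\mathbf 0}(C^\perp)=1$, counting only the zero codeword; this is what makes \eqref{eq:block-delsarte-zero} equivalent to \eqref{eq:block-delsarte}.
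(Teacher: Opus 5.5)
Your proof is correct and follows the paper's argument essentially verbatim: for $\mathbf{k}\neq\mathbf{0}$ with $|\mathbf{k}|<d^\perp$ you use additivity of Hamming weight over the disjoint blocks to get $A_{\mathbf{k}}(C^\perp)=0$, and for $|\mathbf{k}|\ge d^\perp$ the hypothesis gives $\alpha_{\mathbf{k}}=0$, so only the $\mathbf{k}=\mathbf{0}$ term survives in Proposition~\ref{prop:block-delsarte}. Your remark that $A_{\mathbf{0}}(C^\perp)=1$ states explicitly a step the paper leaves implicit.
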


\begin{proof}
If \(\mathbf{k}\neq \mathbf{0}\) and \(|\mathbf{k}|<d^\perp\), then no nonzero codeword of \(C^\perp\) can have block profile \(\mathbf{k}\), since every nonzero codeword of \(C^\perp\) has total Hamming weight at least \(d^\perp\). Therefore,
\[
A_{\mathbf{k}}(C^\perp)=0
\qquad
\text{for all }\mathbf{k}\neq \mathbf{0}\text{ with }|\mathbf{k}|<d^\perp.
\]
On the other hand, by assumption, \(\alpha_{\mathbf{k}}=0\) whenever \(|\mathbf{k}|\ge d^\perp\). Hence, in \eqref{eq:block-delsarte}, the only nonzero term on the left-hand side is the one corresponding to \(\mathbf{k}=\mathbf{0}\). This gives
\[
|C|\,\alpha_{\mathbf{0}}
=
\sum_{\mathbf{i}} A_{\mathbf{i}}(C)\,\alpha(\mathbf{i}),
\]
as claimed.
\end{proof}

 For any vector $v\in \mathbb{F}_2^m$ and a block profile
$\mathbf i=(i_1,\ldots,i_N)$,  we denote
\begin{align}\label{eq:codeword-from-v}
A_{\mathbf{i}}(v)
:=
\bigl|\{c\in C:\ |(c-v)_{S_t}|=i_t\ \forall t\in [N]\}\bigr|.
\end{align}

\begin{cor}[Shifted block-profile identity]\label{cor:block-shifted}
Let $\alpha$ be the function defined as in \eqref{eq:block-krawtchouk-expansion},
 we have
\begin{equation}\label{eq:block-shifted}
\sum_{\mathbf{i}} A_{\mathbf{i}}(v)\,\alpha(\mathbf{i})
=
|C|
\sum_{\mathbf{k}}\alpha_{\mathbf{k}}
\sum_{\substack{y\in C^\perp\\ |y_{S_t}|=k_t\ \forall t}}
(-1)^{v\cdot y}.
\end{equation}
In particular, if $\alpha_{\mathbf{k}}=0$ whenever $|\mathbf{k}|\ge d^\perp$, then
\begin{equation}\label{eq:block-shifted-truncated}
|C|\,\alpha_{\mathbf{0}}
=
\sum_{\mathbf{i}} A_{\mathbf{i}}(v)\,\alpha(\mathbf{i}).
\end{equation}
\end{cor}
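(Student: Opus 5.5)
The plan is to reduce the shifted identity to the block MacWilliams computation already carried out in the proof of Proposition~\ref{prop:block-delsarte}, now with a character twist by $v$. First I will fix a block profile $\mathbf k=(k_1,\dots,k_N)$ and evaluate the sum
\[
\sum_{c\in C}\prod_{t=1}^N K_{k_t}\bigl(|(c-v)_{S_t}|;m_t\bigr).
\]
Applying \eqref{eq:single-block-character-identity} blockwise with $u=(c-v)_{S_t}$ gives
\[
\prod_{t=1}^N K_{k_t}\bigl(|(c-v)_{S_t}|;m_t\bigr)=\sum_{\substack{y\in\mathbb F_2^m\\ |y_{S_t}|=k_t\ \forall t}}(-1)^{(c-v)\cdot y}.
\]
Summing over $c\in C$ and exchanging the order of summation, the factor $(-1)^{v\cdot y}$ (over $\mathbb F_2$ the sign of $-v$ is irrelevant) comes out, and character orthogonality $\sum_{c\in C}(-1)^{c\cdot y}=|C|\mathbf 1[y\in C^\perp]$ leaves
\[
\sum_{\mathbf i}A_{\mathbf i}(v)\prod_{t=1}^N K_{k_t}(i_t;m_t)=|C|\sum_{\substack{y\in C^\perp\\ |y_{S_t}|=k_t\ \forall t}}(-1)^{v\cdot y}.
\]
Here the left-hand side comes from grouping the codewords $c$ by the block profile of $c-v$, using the definition \eqref{eq:codeword-from-v}.

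Next I will multiply this identity by $\alpha_{\mathbf k}$ and sum over all $\mathbf k$. The expansion \eqref{eq:block-krawtchouk-expansion} turns the left-hand side into $\sum_{\mathbf i}A_{\mathbf i}(v)\alpha(\mathbf i)$, which gives \eqref{eq:block-shifted} directly.

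For the truncated statement, assume $\alpha_{\mathbf k}=0$ whenever $|\mathbf k|\ge d^\perp$. For any $\mathbf k\ne\mathbf 0$ with $|\mathbf k|<d^\perp$, no $y\in C^\perp$ has that block profile, since such a $y$ would be a nonzero codeword of weight below $d^\perp$; so the inner sum is empty. For $\mathbf k=\mathbf 0$, the only contributing vector is $y=0$, which contributes $(-1)^0=1$. Hence only $|C|\alpha_{\mathbf 0}$ survives, which is exactly the argument of Corollary~\ref{cor:block-kl02}.

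There is no real obstacle in this argument. The only point needing care is to check that the sign $(-1)^{(c-v)\cdot y}$ factors as $(-1)^{c\cdot y}(-1)^{v\cdot y}$ before invoking orthogonality, and that the regrouping by $|(c-v)_{S_t}|$ matches the definition of $A_{\mathbf i}(v)$.
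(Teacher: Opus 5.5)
Your proposal is correct and follows the paper's proof essentially verbatim. Like the paper, you rerun the block MacWilliams computation of Proposition~\ref{prop:block-delsarte} with $c$ replaced by $c-v$, factor out $(-1)^{v\cdot y}$ before applying character orthogonality, and then multiply by $\alpha_{\mathbf k}$ and sum over $\mathbf k$. Your treatment of the truncated case is slightly more careful than the paper's: you note explicitly that $y=0$ is the only contributor when $\mathbf k=\mathbf 0$, and that only \emph{nonzero} codewords of $C^\perp$ have weight at least $d^\perp$.
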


\begin{proof}
We
repeat the proof of Proposition~\ref{prop:block-delsarte} with $c$ replaced by $c-v$.
Then
\[
\prod_{t=1}^N K_{k_t}(|(c-v)_{S_t}|;m_t)
=
\sum_{\substack{y\in \mathbb{F}_2^m\\ |y_{S_t}|=k_t\ \forall t}}
(-1)^{(c-v)\cdot y},
\]
and summing over $c\in C$ gives
\[
\sum_{\mathbf{i}} A_{\mathbf{i}}(v)\prod_{t=1}^N K_{k_t}(i_t;m_t)
=
|C|
\sum_{\substack{y\in C^\perp\\ |y_{S_t}|=k_t\ \forall t}}
(-1)^{v\cdot y}.
\]
Multiplying by $\alpha_{\mathbf{k}}$ and summing over $\mathbf{k}$ yields \eqref{eq:block-shifted}.
The final statement follows from that every $y\in C^\perp $ has $wt(y)\ge d^\perp$.
\end{proof}

We will need the following multiplication formula for Krawtchouk polynomials.

\begin{lem}[Krawtchouk multiplication formula~\cite{lint1999introduction,krasikov1996integral}]
\label{lem:krawtchouk_multiplication}
Let
\[
K_r(x;m)
=
\sum_{a=0}^r (-1)^a \binom{x}{a}\binom{m-x}{r-a}
\]
denote the binary Krawtchouk polynomial of degree \(r\) and length \(m\).
Then, for any \(0\le i,j\le m\) and any \(0\le x\le m\),
\begin{equation}
\label{eq:krawtchouk_multiplication}
K_i(x;m)K_j(x;m)
=
\sum_{k=\max(0,i+j-m)}^{\min(i,j)}
\binom{m-i-j+2k}{k}
\binom{i+j-2k}{j-k}
K_{i+j-2k}(x;m).
\end{equation}
\end{lem}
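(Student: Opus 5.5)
The plan is to prove \eqref{eq:krawtchouk_multiplication} combinatorially, using the character-sum representation of Krawtchouk polynomials already recorded in \eqref{eq:single-block-character-identity}. For integer $x\in\{0,\dots,m\}$, fix any $u\in\mathbb F_2^m$ with $|u|=x$. Then
\[
K_i(x;m)=\sum_{|z|=i}(-1)^{u\cdot z},\qquad K_j(x;m)=\sum_{|w|=j}(-1)^{u\cdot w}.
\]
Multiplying these gives
\[
K_i(x;m)K_j(x;m)=\sum_{|z|=i,\ |w|=j}(-1)^{u\cdot(z+w)}.
\]
So the task reduces to regrouping the pairs $(z,w)$ according to the vector $y:=z+w$.

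First I classify the pairs by their overlap $k:=|\mathrm{supp}(z)\cap\mathrm{supp}(w)|$. This forces $|y|=i+j-2k$. The overlap satisfies $k\le\min(i,j)$. Since $|\mathrm{supp}(z)\cup\mathrm{supp}(w)|=i+j-k\le m$, it also satisfies $k\ge\max(0,i+j-m)$, which gives exactly the stated summation range.

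Next, for a fixed $y$ with $|y|=s=i+j-2k$, I count the pairs $(z,w)$ with $|z|=i$, $|w|=j$ and $z+w=y$. The overlap must lie in the complement of $\mathrm{supp}(y)$, and it can be chosen in $\binom{m-s}{k}$ ways. The support of $y$ must be split into the $z$-only part of size $i-k$ and the $w$-only part of size $j-k$, which can be done in $\binom{s}{j-k}$ ways. So the count is $\binom{m-i-j+2k}{k}\binom{i+j-2k}{j-k}$. It depends only on $k$, not on the particular $y$.

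Hence
\[
K_iK_j=\sum_k\binom{m-i-j+2k}{k}\binom{i+j-2k}{j-k}\sum_{|y|=i+j-2k}(-1)^{u\cdot y}.
\]
Applying \eqref{eq:single-block-character-identity} again to the inner sum turns it into $K_{i+j-2k}(x;m)$, which is the claim.

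The only delicate point is the bookkeeping in the counting step: making sure each pair $(z,w)$ is counted exactly once and that the range of $k$ matches. The identity is only needed at integer points $x\in\{0,\dots,m\}$, which is how it is used for block profiles. If one wanted it for real $x$, I would note that both sides are polynomials in $x$. I would then rerun the same argument for every length $m'\ge m$ to get enough points and conclude by polynomial interpolation. I do not expect this to be necessary here.
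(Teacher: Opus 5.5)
Your argument is correct, but it is not the paper's route: the paper gives no proof of this lemma and simply cites it from \cite{lint1999introduction,krasikov1996integral}. You write $K_i(x;m)=\sum_{|z|=i}(-1)^{u\cdot z}$ via \eqref{eq:single-block-character-identity}, multiply, and regroup the pairs $(z,w)$ by $y=z+w$ and overlap $k$. This is the standard intersection-number argument for the binary Hamming scheme. The coefficient $\binom{m-i-j+2k}{k}\binom{i+j-2k}{j-k}$ is exactly the number of pairs with $|z|=i$, $|w|=j$, $z+w=y$ for a fixed $y$ of weight $i+j-2k$. Your range $\max(0,i+j-m)\le k\le\min(i,j)$ is also right. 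What your version buys over the citation is self-containedness: it uses only the character identity the paper already invokes in the proof of Proposition~\ref{prop:block-delsarte}. It also proves exactly what is needed, namely an identity of functions on $\{0,1,\dots,m\}$, which is how \eqref{eq:block_KL02_square_expansion} is used in Proposition~\ref{prop:block_KL02_finite}. The citation is simply shorter.

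Your closing aside about real $x$ is wrong, though. The identity does \emph{not} extend to real $x$ in general, so no interpolation argument can produce it. When $i+j>m$, the left side has degree exactly $i+j$ in $x$, since the leading coefficient of $K_r(x;m)$ is $(-2)^r/r!$. Every term on the right has degree $i+j-2k\le 2m-i-j<m$. Concretely, for $m=1$ and $i=j=1$ one has $K_1(x;1)=1-2x$, so the left side is $(1-2x)^2$ and the right side is $1$. They agree at $x=0,1$ but not at $x=1/2$. So the condition $0\le x\le m$ in the statement must be read as integer $x$, which is exactly what you proved. When $i+j\le m$ the real-$x$ version does hold. In that case the $m+1$ integer points already exceed the degree $i+j$, so plain interpolation at fixed $m$ suffices. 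Varying the length to $m'\ge m$ does not help as you phrased it, because $K_r(\cdot;m')$ and $K_r(\cdot;m)$ are different polynomials.
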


\begin{prop}[Finite-length block-profile upper bound]\label{prop:block_KL02_finite}
Assume that $B:\mathbb F_2^n\to \mathbb F_2^m$, and set
\[
C:=\{B\Bxx:\Bxx\in\mathbb F_2^n\}\subseteq \mathbb F_2^m.
\]
Let $d^\perp$ be the minimum distance of $C^\perp$, and fix
$\Bvv\in\mathbb F_2^m$.
Let
\[
\boldsymbol\ell=(\ell_1,\ldots,\ell_N)\in \prod_{t=1}^N\{0,1,\ldots,m_t\}
\]
satisfying $2|\boldsymbol\ell|<d^\perp$.
Then
\begin{equation}\label{eq:block_KL02_finite_identity}
|C|\prod_{t=1}^N \binom{m_t}{\ell_t}
=
\sum_{\mathbf i} A_{\mathbf i}(\Bvv)\prod_{t=1}^N K_{\ell_t}(i_t;m_t)^2.
\end{equation}
Consequently, if $K_{\ell_t}(i_t;m_t)\neq 0$ for all $t$, then
\begin{equation}\label{eq:block_KL02_finite_raw}
A_{\mathbf i}(\Bvv)
\le
\frac{|C|\prod_{t=1}^N \binom{m_t}{\ell_t}}
{\prod_{t=1}^N K_{\ell_t}(i_t;m_t)^2}.
\end{equation}
Equivalently,
\begin{equation}\label{eq:block_KL02_finite_binomial}
A_{\mathbf i}(\Bvv)
\le
\frac{|C|\prod_{t=1}^N \binom{m_t}{i_t}}{2^{m}}
\prod_{t=1}^N
\frac{2^{m_t}\binom{m_t}{\ell_t}}
{\binom{m_t}{i_t}K_{\ell_t}(i_t;m_t)^2}.
\end{equation}
\end{prop}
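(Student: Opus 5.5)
The plan is to apply Corollary~\ref{cor:block-shifted} to the test function
\[
\alpha(\mathbf x):=\prod_{t=1}^N K_{\ell_t}(x_t;m_t)^2 .
\]
Using Lemma~\ref{lem:krawtchouk_multiplication} on each block, $K_{\ell_t}(x_t;m_t)^2$ expands as a linear combination of $K_{2\ell_t-2k}(x_t;m_t)$ with $0\le k\le \ell_t$. The $k$-th coefficient is $\binom{m_t-2\ell_t+2k}{k}\binom{2\ell_t-2k}{\ell_t-k}$, and the range of $k$ is truncated at $\max(0,2\ell_t-m_t)$.

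Taking the product over $t$ gives the block-Krawtchouk expansion \eqref{eq:block-krawtchouk-expansion}. Every index $\mathbf k$ appearing in it has $|\mathbf k|\le 2|\boldsymbol\ell|<d^\perp$, so $\alpha_{\mathbf k}=0$ whenever $|\mathbf k|\ge d^\perp$. Hence \eqref{eq:block-shifted-truncated} applies and yields
\[
|C|\,\alpha_{\mathbf 0}=\sum_{\mathbf i}A_{\mathbf i}(\Bvv)\,\alpha(\mathbf i).
\]

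It remains to identify $\alpha_{\mathbf 0}$. By uniqueness of the tensor expansion, $\alpha_{\mathbf 0}$ is the product of the blockwise constant terms, namely the coefficient of $K_0$, which is the term $k=\ell_t$. That term equals $\binom{m_t}{\ell_t}\binom{0}{0}=\binom{m_t}{\ell_t}$, provided $2\ell_t\le m_t$ so that $k=\ell_t$ lies in the summation range.

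I will check this against orthogonality as a sanity check. The relation $\sum_q\binom{m}{q}K_\ell(q)^2=2^m\binom{m}{\ell}$ together with $\sum_q\binom mq K_j(q)=2^m\delta_{j0}$ gives the constant coefficient $\binom{m_t}{\ell_t}$ directly. This is actually a cleaner derivation, because it avoids relying on the exact form of the multiplication formula and its range. It does need $2\ell_t\le m_t$; otherwise the degree-$0$ term may not appear in the truncated range. I would treat that edge case separately, and I expect it to be the only delicate point. Note that $2\ell_t\le 2|\boldsymbol\ell|<d^\perp\le m+1$ does not by itself give $2\ell_t\le m_t$. Still, the orthogonality computation shows the constant coefficient is $\binom{m_t}{\ell_t}$ in all cases, since the expansion in the basis $K_0,\dots,K_{m_t}$ is unique and degrees above $m_t$ never appear. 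This proves \eqref{eq:block_KL02_finite_identity}.

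The bound \eqref{eq:block_KL02_finite_raw} then follows because every summand on the right-hand side is nonnegative. Dropping all terms except the one indexed by $\mathbf i$ gives $A_{\mathbf i}(\Bvv)\prod_t K_{\ell_t}(i_t;m_t)^2\le |C|\prod_t\binom{m_t}{\ell_t}$, and we divide by the nonzero product. Finally, \eqref{eq:block_KL02_finite_binomial} is the same inequality rewritten: multiply and divide by $\prod_t\binom{m_t}{i_t}$, and use $\prod_t 2^{m_t}=2^m$.
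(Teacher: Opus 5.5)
Your proposal is correct and follows the paper's proof essentially step for step: the same test function $\prod_t K_{\ell_t}(x_t;m_t)^2$ expanded via the Krawtchouk multiplication formula, the same truncated shifted identity \eqref{eq:block-shifted-truncated}, and the same step of dropping nonnegative terms. The edge case you flag does not actually arise, because the constant term $k=\ell_t$ is the upper endpoint of the range $\max(0,2\ell_t-m_t)\le k\le \ell_t$, which always contains it since $\ell_t\le m_t$; the truncation removes only high-degree terms, and your orthogonality computation of $\alpha_{\mathbf 0}$ already confirms $\binom{m_t}{\ell_t}$ in any case.
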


\begin{proof}
Consider the test function
\[
\alpha(\mathbf x)
:=
\prod_{t=1}^N K_{\ell_t}(x_t;m_t)^2,
\qquad
\mathbf x=(x_1,\ldots,x_N).
\]
By the multiplication formula for Krawtchouk polynomials  in Lemma \ref{lem:krawtchouk_multiplication}, 
for each $t$ one has
\begin{equation}\label{eq:block_KL02_square_expansion}
K_{\ell_t}(x_t;m_t)^2
=
\sum_{u_t=0}^{\ell_t}
\binom{2u_t}{u_t}\binom{m_t-2u_t}{\ell_t-u_t}\,
K_{2u_t}(x_t;m_t).
\end{equation}
Hence the expansion of $\alpha$ in the tensor-product Krawtchouk basis is supported on multi-indices $\mathbf k=(k_1,\ldots,k_N)$ satisfying
\[
k_t=2u_t,\qquad 0\le u_t\le \ell_t,
\]
and therefore $|\mathbf k|\le 2|\boldsymbol\ell|<d^\perp $.
Moreover, the constant coefficient in \eqref{eq:block_KL02_square_expansion} is $\binom{m_t}{\ell_t}$, and hence
\[
\alpha_{\mathbf 0}
=
\prod_{t=1}^N \binom{m_t}{\ell_t}.
\]
Applying Corollary \ref{cor:block-shifted} in its truncated form \eqref{eq:block-shifted-truncated}, we get
\[
|C|\,\alpha_{\mathbf 0}
=
\sum_{\mathbf i}A_{\mathbf i}(\Bvv)\alpha(\mathbf i),
\]
which is exactly \eqref{eq:block_KL02_finite_identity}.

Since every term on the right-hand side of \eqref{eq:block_KL02_finite_identity} is nonnegative, the $\mathbf i$-term gives
\[
A_{\mathbf i}(\Bvv)
\prod_{t=1}^N K_{\ell_t}(i_t;m_t)^2
\le
|C|\prod_{t=1}^N \binom{m_t}{\ell_t}.
\]
This proves \eqref{eq:block_KL02_finite_raw}. 
Finally, since
\[
2^m 
=
\prod_{t=1}^N2^{m_t},
\]
we obtain \eqref{eq:block_KL02_finite_binomial}.
\end{proof}

The following lemma comes from the proof of Theorem 1 in \cite{krasikov2002estimates}.
\begin{lem}[\cite{krasikov2002estimates}]
\label{lem:KL02_ratio_estimate}
Let $n\to\infty$, and let $j=j(n)$ be a sequence of integers satisfying
\[
\frac{j}{n}\to \beta,
\qquad
0<\beta<1.
\]
Define
\[
\tau(\beta):=\frac12-\sqrt{\beta(1-\beta)}.
\]
%Assume first that $0<\beta<1/2$. 
Then there exists a sequence of integers $\ell=\ell(n)$ such that
\[
\frac{\ell}{n}\to \tau(\beta), 
\] and
\begin{equation}\label{eq:KL02_ratio_estimate}
R_n(\ell,j)
:=
\frac{2^n\binom{n}{\ell}}
{\binom{n}{j}K_\ell(j;n)^2}
=
O(n).
\end{equation}
\end{lem}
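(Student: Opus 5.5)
The approach is to pass to the dual variable via the reciprocity of Krawtchouk polynomials, and then read the ratio $R_n(\ell,j)$ as a point mass of a probability distribution on $\ell\in\{0,\ldots,n\}$.

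\textbf{Step 1 (reciprocity and orthogonality).} I use the reciprocity $\binom{n}{j}K_\ell(j;n)=\binom{n}{\ell}K_j(\ell;n)$. It gives
\[
R_n(\ell,j)=\frac{2^n\binom{n}{j}}{\binom{n}{\ell}K_j(\ell;n)^2}.
\]
The orthogonality relation already used in the proof of Proposition~\ref{prop:weighted_many_solutions}, now summed over the argument, gives $\sum_{q=0}^n\binom{n}{q}K_j(q;n)^2=2^n\binom nj$. Hence
\[
\pi_j(q):=\frac{\binom{n}{q}K_j(q;n)^2}{2^n\binom nj}
\]
is a probability distribution on $\{0,\ldots,n\}$, and $R_n(\ell,j)=1/\pi_j(\ell)$. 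So it suffices to exhibit $\ell$ with $\ell/n\to\tau(\beta)$ and $\pi_j(\ell)\ge c/n$.

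\textbf{Step 2 (limit law of $\pi_j$).} Take the case $N=1$ of the block-symmetric states, with no code constraint, i.e.\ $B$ the identity. Then $\pi_j$ is exactly the law of $q=|B\Bxx-\Bvv|$ under $\ket{\mathbf P(j)}$, by \eqref{eq:Pj_computational_basis_Krawtchouk}. Consequently, for every $k\ge 0$,
\[
\Expect_{\pi_j}\Bigl[\bigl(\tfrac{n-2q}{n}\bigr)^k\Bigr]=\bra{\mathbf P(j)}(H_1/n)^k\ket{\mathbf P(j)}.
\]
By the three-term action in Lemma~\ref{lem:hamiltonian_action}, this is a weighted count of length-$k$ lattice walks from $j$ back to $j$. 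Each step has weight $\sqrt{i(n-i+1)}/n$ or $\sqrt{(i+1)(n-i)}/n$, and for $|i-j|\le k$ these weights equal $\sqrt{\beta(1-\beta)}+o(1)$. Therefore the $k$-th moment converges to $\frac1{2\pi}\int_0^{2\pi}(2s\cos\theta)^k\,d\theta$, where $s:=\sqrt{\beta(1-\beta)}$. The limit law is the arcsine law on $[-2s,2s]$, which has compact support, so moment convergence implies weak convergence. Since $(n-2q)/n=\pm 2s$ corresponds to $q/n=\tfrac12\mp s$, the edge $+2s$ is $q/n=\tau(\beta)$.

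\textbf{Step 3 (choosing $\ell$).} Fix $\varepsilon>0$ and let $W_\varepsilon:=\{q:|q/n-\tau(\beta)|\le\varepsilon\}$. This window corresponds to a neighbourhood $[2s-2\varepsilon,2s]$ of the edge of the arcsine law, whose limiting mass is $\asymp\sqrt{\varepsilon}$. By weak convergence, $\pi_j(W_\varepsilon)\ge c\sqrt\varepsilon$ for $n\ge n_0(\varepsilon)$. Since $|W_\varepsilon|\le 2\varepsilon n+1$, some $\ell\in W_\varepsilon$ has
\[
\pi_j(\ell)\ge \frac{c\sqrt\varepsilon}{3\varepsilon n}\ge\frac{c}{3n}\quad(\varepsilon\le1).
\]
Now let $\varepsilon=\varepsilon_n\downarrow 0$ slowly enough that $n\ge n_0(\varepsilon_n)$, by a standard diagonal choice. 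This yields $\ell=\ell(n)$ with $\ell/n\to\tau(\beta)$ and $R_n(\ell,j)\le 3n/c=O(n)$.

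The main obstacle is Step 2. One must make the moment convergence, and thus the weak-convergence mass bound near the spectral edge, rigorous along a sequence $\varepsilon_n\to0$. The diagonal argument handles this, but it requires the edge-window mass bound $c\sqrt\varepsilon$ for each fixed $\varepsilon$, which I would take from the portmanteau theorem applied to the open interval $(2s-2\varepsilon,\infty)$. If this soft argument proved too weak, the fallback is the Krasikov--Litsyn route: take $\ell$ so that $j$ sits just inside the smallest root of $K_\ell$, and use their pointwise lower bound on $K_\ell(j)^2$ near the turning point.
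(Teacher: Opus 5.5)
Your argument is correct, and it takes a genuinely different route from the paper. The paper does not prove the estimate itself. For $0<\beta<1/2$ it cites Krasikov--Litsyn, it transfers the bound to $1/2<\beta<1$ via $K_\ell(n-j;n)=(-1)^\ell K_\ell(j;n)$, and for $\beta=1/2$ it sets $\ell=0$.

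You instead proceed in three steps:
\begin{itemize}
\item reciprocity plus orthogonality identify $1/R_n(\ell,j)$ as the point mass $\pi_j(\ell)$ of a probability law;
\item the three-term action of Lemma~\ref{lem:hamiltonian_action}, with $B$ the identity so that all $\mathbf P(i)$ are orthonormal, shows this law converges to the arcsine law on $[-2s,2s]$, whose edge sits at $q/n=\tau(\beta)$;
\item the $\sqrt{\varepsilon}$ edge mass plus pigeonhole then gives the required $\ell$.
\end{itemize}

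This is self-contained: apart from reciprocity and standard moment and portmanteau facts, it uses only ingredients already in the paper. It treats all $\beta\in(0,1)$ uniformly, and it even yields $R_n(\ell,j)\le 3\sqrt{\varepsilon_n}\,n/c=o(n)$.

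It also handles $\beta=1/2$ correctly, which the paper's choice $\ell=0$ does not in general. There $R_n(0,j)=2^n/\binom{n}{j}$, which is superpolynomial for, e.g., $j\approx n/2+n^{3/4}$. Your argument instead produces some $\ell=o(n)$ with $\pi_j(\ell)\gtrsim 1/n$.

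The price is that your route is soft. The sequence $\ell$ comes from a diagonal choice, with no rate for $\ell/n\to\tau(\beta)$ and a non-explicit, $\beta$-dependent constant. The cited pointwise analysis of Krawtchouk polynomials near their first zero is shorter to invoke and more quantitative.

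Two small bookkeeping repairs in Step~3:
\begin{itemize}
\item Apply the portmanteau bound to the open interval $(2s-2\varepsilon,2s+2\varepsilon)$, which lies inside the image of $W_\varepsilon$ under $q\mapsto 1-2q/n$. The set $(2s-2\varepsilon,\infty)$ you named also contains images of $q<(\tau(\beta)-\varepsilon)n$, which lie outside $W_\varepsilon$. Their mass tends to $0$, so nothing breaks, but the interval should be the bounded one.
\item Take $n_0(\varepsilon)\ge 1/\varepsilon$, so that $2\varepsilon n+1\le 3\varepsilon n$.
\end{itemize}
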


\begin{proof}
For $0<\beta<1/2$, this is exactly the one-dimensional estimate used in \cite[Theorem 1, Eq. (10)]{krasikov2002estimates}. 
For $1/2<\beta<1$, apply the previous case to $n-j$, whose relative value tends to $1-\beta<1/2$. Since
\[
\binom{n}{n-j}=\binom{n}{j}
\]
and the binary Krawtchouk polynomials satisfy
\[
K_\ell(n-j;n)=(-1)^\ell K_\ell(j;n),
\]
the square $K_\ell(j;n)^2$ is unchanged. 
Hence the same bound follows.
Finally,
when $\beta=1/2$,
we can choose $\ell =0$,
then \eqref{eq:KL02_ratio_estimate} also holds.
\end{proof}

\begin{thm}\label{thm:block_KL02_theorem1}
Under the same assumptions as Proposition \ref{prop:weighted_many_solutions}, 
let $\mathbf q=\mathbf q(m)=(q_1,\ldots,q_N)$
be a sequence of block profiles satisfying
\[
q_t=(\beta_t+o(1))m_t,
\qquad
\beta_t = \frac12 - \sqrt{\alpha_t (1-\alpha_t)}\in(0,1/2),
\qquad
\forall t\in [N].
\]
Then we have
\begin{equation}\label{eq:block_KL02_bound}
A_{\mathbf q}(\Bvv)
\le
O(m^N)\,
\frac{\prod_{t=1}^N \binom{m_t}{q_t}}{2^{m-n}}.
\end{equation}

\end{thm}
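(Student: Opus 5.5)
The plan is to combine the finite-length block identity of Proposition~\ref{prop:block_KL02_finite} with the one-dimensional ratio estimate of Lemma~\ref{lem:KL02_ratio_estimate}, applied separately in each block. For each $t\in[N]$, Lemma~\ref{lem:KL02_ratio_estimate} is applied with length $m_t$ and the sequence $j=q_t$, whose relative value tends to $\beta_t\in(0,1/2)$. It produces integers $\ell_t$ with $\ell_t/m_t\to\tau(\beta_t)$ and
\[
\frac{2^{m_t}\binom{m_t}{\ell_t}}{\binom{m_t}{q_t}K_{\ell_t}(q_t;m_t)^2}=O(m_t).
\]
This bound forces $K_{\ell_t}(q_t;m_t)\neq 0$. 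Substituting $\boldsymbol\ell=(\ell_1,\ldots,\ell_N)$ and $\mathbf i=\mathbf q$ into \eqref{eq:block_KL02_finite_binomial} gives
\[
A_{\mathbf q}(\Bvv)\le \frac{|C|\prod_t\binom{m_t}{q_t}}{2^m}\prod_t O(m_t).
\]
Since $B$ has rank $n$, we have $|C|=2^n$. Since $N$ is fixed, $\prod_t O(m_t)=O(m^N)$. This yields \eqref{eq:block_KL02_bound}.

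The one hypothesis of Proposition~\ref{prop:block_KL02_finite} still to verify is $2|\boldsymbol\ell|<d^\perp$. I expect this to be the main step, though it should reduce to algebra. The key computation is
\[
\tau(\beta_t)=\tfrac12-\sqrt{\beta_t(1-\beta_t)}.
\]
With $\beta_t=\tfrac12-s$ and $s=\sqrt{\alpha_t(1-\alpha_t)}$, we get $\beta_t(1-\beta_t)=\tfrac14-s^2=(\tfrac12-\alpha_t)^2$. Hence $\sqrt{\beta_t(1-\beta_t)}=\tfrac12-\alpha_t$, since $\alpha_t<1/2$, and so $\tau(\beta_t)=\alpha_t$. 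Therefore $\ell_t=(\alpha_t+o(1))m_t$, and
\[
|\boldsymbol\ell|=m\sum_t\theta_t\alpha_t+o(m)\le m\left(\tfrac{\delta^\perp}{2}-\gamma\right)+o(m).
\]
This uses the gap condition \eqref{eq:weighted_many_solutions_gap_app} together with $m_t=\theta_t m+o(m)$. Because $d^\perp=\delta^\perp m+o(m)$, it follows that $2|\boldsymbol\ell|\le d^\perp-2\gamma m+o(m)<d^\perp$ for all sufficiently large $m$. The constant gap $\gamma$ is exactly what absorbs the $o(m)$ errors.

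There is one subtle point to handle. If some $\theta_t$ were zero, $m_t$ might not tend to infinity and Lemma~\ref{lem:KL02_ratio_estimate} would not apply directly. I will assume $\theta_t>0$, as is implicit in the setting. Alternatively, for a block of bounded size one can take $\ell_t=0$, in which case the factor $2^{m_t}/\binom{m_t}{q_t}$ is $O(1)$. Finally, the constant in the $O(m^N)$ bound depends only on the fixed sequences chosen, which is all the statement requires.
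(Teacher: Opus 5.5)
Your proposal is correct and follows the paper's proof essentially step for step. You apply Lemma~\ref{lem:KL02_ratio_estimate} blockwise with $j=q_t$, use the gap condition to get $2|\boldsymbol\ell|<d^\perp$, and substitute into Proposition~\ref{prop:block_KL02_finite} with $|C|=2^n$ to collect the $O(m^N)$ factor; your explicit checks that $\tau(\beta_t)=\alpha_t$ and that $K_{\ell_t}(q_t;m_t)\neq 0$ fill in two small points the paper leaves implicit.
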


\begin{proof}
% For each $t$, define
% \[
% \tau_t:=\frac12-\sqrt{\beta_t(1-\beta_t)}
% =\frac12\psi(\beta_t).
% \]
Apply Lemma~\ref{lem:KL02_ratio_estimate} with
$n=m_t,j=q_t,\beta=\beta_t,$
there exists a sequence of integers $\ell_t=\ell_t(m)$ such that
\[
\frac{\ell_t}{m_t}\to \frac12 - \sqrt{\beta_t (1-\beta_t)} = \alpha_t
\]
and
\begin{equation}\label{eq:block_KL02_ratio_each_block}
R_{m_t}(\ell_t,q_t)
:=
\frac{2^{m_t}\binom{m_t}{\ell_t}}
{\binom{m_t}{q_t}K_{\ell_t}(q_t;m_t)^2}
=
O(m_t).
\end{equation} 

Since
\[
\sum_{t=1}^N\theta_t\alpha_t <\frac{\delta^\perp}2,
\]
we have
\[
2|\boldsymbol\ell|
=
2\sum_{t=1}^N\ell_t
=
\left(
2\sum_{t=1}^N\theta_t\alpha_t+o(1)
\right)m
% =
% \left(
% \sum_{t=1}^N\theta_t\psi(\beta_t)+o(1)
% \right)m
<
d^\perp,
\]
for all sufficiently large $m$.
Note that $|C| = 2^n$.
Therefore, by  Proposition~\ref{prop:block_KL02_finite}, we have
\[
A_{\mathbf q}(\Bvv)
\le
\frac{\prod_{t=1}^N\binom{m_t}{q_t}}{2^{m-n}}
\prod_{t=1}^N
R_{m_t}(\ell_t,q_t).
\]
Using \eqref{eq:block_KL02_ratio_each_block}, we get
\[
\prod_{t=1}^N R_{m_t}(\ell_t,q_t)
=
O\!\left(\prod_{t=1}^N m_t\right)
=
O(m^N),
\]
because $N$ is fixed and $m_t=\theta_t m+o(m)$ with $\theta_t>0$.
Hence
\[
A_{\mathbf q}(\Bvv)
\le
O(m^N)
\frac{\prod_{t=1}^N\binom{m_t}{q_t}}{2^{m-n}}.
\]
Dividing by $2^n$ yields
\[
\frac{A_{\mathbf q}(\Bvv)}{2^n}
\le
O(m^N)
\prod_{t=1}^N\frac{\binom{m_t}{q_t}}{2^{m_t}}.
\]
This proves the theorem.
\end{proof}

\begin{Rem}\label{rem:block_KL02_N1}
When $N=1$, condition \eqref{eq:weighted_many_solutions_gap_app} becomes
\[
1-2\sqrt{\beta(1-\beta)}<\delta^\perp.
\]
This is equivalent to
\[
\beta\in
\left(
\frac12-\frac12\sqrt{\delta^\perp(2-\delta^\perp)},
\frac12+\frac12\sqrt{\delta^\perp(2-\delta^\perp)}
\right).
\]
Moreover, \eqref{eq:block_KL02_bound} becomes
\[
A_q(\Bvv)
\le
O(m)\frac{\binom mq}{2^{m-n}}.
\]
For $\Bvv=0$, this is exactly the form of \cite[Theorem~1]{krasikov2002estimates}, since
$|C^\perp|=2^{m-n}$.
\end{Rem}

\begin{prop}
\label{prop:weighted_effective_support}
Under the same assumptions as in Proposition~\ref{prop:weighted_many_solutions}, there exists a constant \(0<b<1\) such that the set \(S\) defined in Theorem~\ref{thm:concentration} satisfies the following upper bound
\[
|S|\le 2^{bn}.
\]
\end{prop}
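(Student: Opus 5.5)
The plan is to count $S$ profile by profile and apply the block Krasikov--Litsyn type bound of Theorem~\ref{thm:block_KL02_theorem1} to each block profile.

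\textbf{Step 1: reduce to codeword counts.} Since $B$ has rank $n$, the map $\Bxx\mapsto B\Bxx$ is a bijection from $\mathbb F_2^n$ onto $C$. The defining condition of $S$ depends only on the block profile of $B\Bxx-\Bvv$. Hence
\[
|S|=\sum_{\mathbf q\in Q_m} A_{\mathbf q}(\Bvv),
\]
where $Q_m$ is the set of profiles $\mathbf q=(q_1,\ldots,q_N)$ with $|q_t/m_t-\beta_t|\le\varepsilon_m$ for all $t$. In particular $|Q_m|\le\prod_t(m_t+1)=m^{O(1)}$, because $N$ is fixed.

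\textbf{Step 2: bound each term uniformly.} Theorem~\ref{thm:block_KL02_theorem1} applies to any sequence of profiles with $q_t=(\beta_t+o(1))m_t$. Its hypotheses are the same as those of Proposition~\ref{prop:weighted_many_solutions}, and $\beta_t\in(0,1/2)$ because $0<\alpha_t<1/2$. I need the bound uniformly over $Q_m$, not just along one sequence. I will get this by a worst-case-sequence argument: for each $m$, pick $\mathbf q^*(m)\in Q_m$ maximizing the ratio of $A_{\mathbf q}(\Bvv)$ to $m^N\prod_t\binom{m_t}{q_t}/2^{m-n}$. This $\mathbf q^*(m)$ is itself a sequence of the admissible form, so the theorem bounds the maximal ratio by $O(1)$. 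Using $\binom{m_t}{q_t}=2^{m_t h(\beta_t)+o(m)}$, which is uniform on $Q_m$ by continuity of $h$, I obtain
\[
|S|\le m^{O(1)}\,2^{\,n-m\left(1-\sum_t\theta_t h(\beta_t)\right)+o(m)} .
\]

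\textbf{Step 3: extract a constant gap.} Each $\beta_t$ lies strictly below $1/2$, so $h(\beta_t)<1$. With $\sum_t\theta_t=1$ this gives a constant
\[
c:=1-\sum_t\theta_t h(\beta_t)>0 .
\]
Since $m\ge n$, I can write
\[
n-cm+o(m)\le n-\tfrac c2 n-\tfrac c2 m+o(m)\le\left(1-\tfrac c2\right)n
\]
for all large $m$, where the $-\tfrac c2 m$ term absorbs both the $o(m)$ and the $m^{O(1)}$ factors. Setting $b:=1-c/2\in(0,1)$ then yields $|S|\le 2^{bn}$. By Proposition~\ref{prop:weighted_many_solutions} we also have $|S|\ge 2^{an}$, so necessarily $a<b$, which is consistent with Theorem~\ref{thm:weighted_many_solutions}.

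\textbf{Main obstacle.} The delicate step is Step 2. Theorem~\ref{thm:block_KL02_theorem1} is stated along a single sequence with $o(1)$ deviations, and it must be upgraded to a bound uniform over the whole window $Q_m$. I will handle this with the maximizing-sequence argument above. A second point to check is that the $O(m_t)$ estimate of Lemma~\ref{lem:KL02_ratio_estimate} allows the $\ell_t$ to be chosen along such a sequence while keeping the condition $2|\boldsymbol\ell|<d^\perp$, which follows from the gap $\gamma$ in \eqref{eq:weighted_many_solutions_gap_app}.
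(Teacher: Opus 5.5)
Your proposal is correct and follows the paper's route: split $S$ by block profile, bound each $A_{\mathbf q}(\Bvv)$ via Theorem~\ref{thm:block_KL02_theorem1} together with Stirling's formula, and take a union bound over the polynomially many profiles to obtain $2^{\,n-\eta m+o(m)}$ with $\eta=\sum_t\theta_t(1-h(\beta_t))>0$. Your worst-case-sequence argument makes explicit the uniformity over the window that the paper uses tacitly. Your final step uses only $m\ge n$, whereas the paper appeals to the Hamming bound to get $n=\Theta(m)$; both are valid.
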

\begin{proof}
Recall the set $S=\{\Bxx\in\mathbb F_2^n:\ |(B\Bxx-\Bvv)_{S_t}|=(\beta_t+o(1))m_t, \forall t\in [N]\}$.
Let $\mathcal W$ be the set of block profiles $\mathbf q=(q_1,\ldots,q_N)$ satisfying
\[
q_t=(\beta_t+o(1))m_t,
\qquad \forall t\in [N].
\]
Since $N$ is fixed and the window is sublinear in each block, $|\mathcal W|=O(m^N)$.

Now, let us provide an upper bound on the number of bitstrings in  $S$.
For each $\mathbf q\in\mathcal W$, Theorem~\ref{thm:block_KL02_theorem1} gives
\[
\frac{A_{\mathbf q}(\Bvv)}{2^n}
\le
m^{O(1)}
\prod_{t=1}^N\frac{\binom{m_t}{q_t}}{2^{m_t}}.
\]
By Stirling's approximation, for every $\mathbf q\in\mathcal W$, 
\[
\prod_{t=1}^N\frac{\binom{m_t}{q_t}}{2^{m_t}}
=
2^{-m\sum_{t=1}^N\theta_t(1-h(\beta_t))+o(m)}.
\]
Since $\alpha_t\in(0,1/2)$ in every nontrivial block, we have $\beta_t\neq 1/2$, hence
\[
\eta:=\sum_{t=1}^N\theta_t(1-h(\beta_t))>0.
\]
Therefore
\[
A_{\mathbf q}(\Bvv)\le 2^n\,2^{-\eta m+o(m)}
\]
for every $\mathbf q\in\mathcal W$.
Taking a union bound over $\mathcal W$ gives
\[
\sum_{\mathbf q\in\mathcal W} A_{\mathbf q}(\Bvv)
\le
2^n\,2^{-\eta m+o(m)}.
\]
Since $d^\perp=\delta^\perp m+o(m)$, the Hamming bound applied to $C^\perp$ implies
$n=\Theta(m)$. Hence there exists a constant $b<1$ such that
\[
\sum_{\mathbf q\in\mathcal W} A_{\mathbf q}(\Bvv)
\le
2^{b n}
\]
for all sufficiently large $m$.
Note that as $B$ has rank $n$,
we have
\[
A_{\mathbf q}(\Bvv)
:=
\bigl|
\{\Bxx\in\mathbb F_2^n:\ |(B\Bxx-\Bvv)_{S_t}|=q_t\ \forall t\in [N]\}
\bigr|.
\]
Hence we have
\[|S|\le 2^{b n}.\]
\end{proof}

Now, we are ready to prove the Theorem \ref{thm:weighted_many_solutions}.

\begin{mproof}[Proof of Theorem \ref{thm:weighted_many_solutions}]
This result comes directly from Proposition \ref{prop:weighted_many_solutions} and Proposition \ref{prop:weighted_effective_support}.
\end{mproof}

\section{Detailed Proof for Performance Without Distance Assumption}
\label{append:B}
\begin{lem}\label{lem:f2_nd_general_expression}
We have
\[
\langle s_{\mathbf g}^{(\CDD)}\rangle
=
\frac{\mathbf w^\dagger A^{(\CDD)}\mathbf w}{\braket{\rho^{(\CDD)}|\rho^{(\CDD)}}},
\]
where $A^{(\CDD)}$ is the $|T_l|\times |T_l|$ matrix indexed by $T_l$ whose $(\mathbf j,\mathbf k)$-entry is
\begin{equation}\label{eq:f2_nd_AD}
A^{(\CDD)}(\mathbf j,\mathbf k)
=
\frac{1}{\sqrt{|\CEE_{\mathbf j}|\,|\CEE_{\mathbf k}|}}
\sum_{i=1}^m c_i
\sum_{(\mathbf y_1,\mathbf y_2)\in S^{(i,\CDD)}_{\mathbf j,\mathbf k}}
(-1)^{\mathbf v\cdot(\mathbf y_1-\mathbf y_2+\mathbf e_i)},
\end{equation}
with
\[
S^{(i,\CDD)}_{\mathbf j,\mathbf k}
:=
\left\{(\mathbf y_1,\mathbf y_2)\in \CDD_{\mathbf j}\times \CDD_{\mathbf k}:
B^\top(\mathbf y_1-\mathbf y_2+\mathbf e_i)=0\right\}.
\]
\end{lem}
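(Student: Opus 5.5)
The plan is a direct computation in the Fourier picture, combined with the identity $H^{\otimes n} Z^{\mathbf v\cdot} \ldots$ that turns each diagonal term of the objective into a shift on syndromes. The expectation of the normalized state is
\[
\langle s_{\mathbf g}^{(\CDD)}\rangle=\frac{\bra{\rho^{(\CDD)}}H_{\mathbf g}\ket{\rho^{(\CDD)}}}{\braket{\rho^{(\CDD)}|\rho^{(\CDD)}}},
\]
where $H_{\mathbf g}=\sum_{\Bxx}F_{\mathbf c}(\Bxx)\proj{\Bxx}=\sum_{i=1}^m c_i\sum_{\Bxx}(-1)^{v_i+\mathbf b_i\cdot\Bxx}\proj{\Bxx}$ for $p=2$. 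Since $\ket{\rho^{(\CDD)}}=H^{\otimes n}\ket{\widetilde\rho^{(\CDD)}}$, the numerator equals $\bra{\widetilde\rho^{(\CDD)}}H^{\otimes n}H_{\mathbf g}H^{\otimes n}\ket{\widetilde\rho^{(\CDD)}}$. The first step is to conjugate the diagonal operator $\sum_{\Bxx}(-1)^{\mathbf b_i\cdot \Bxx}\proj{\Bxx}=Z^{\mathbf b_i}$ by $H^{\otimes n}$, which gives the shift $X^{\mathbf b_i}$: $\ket{\mathbf s}\mapsto\ket{\mathbf s+\mathbf b_i}$. Hence
\[
H^{\otimes n}H_{\mathbf g}H^{\otimes n}=\sum_{i=1}^m c_i(-1)^{v_i}X^{\mathbf b_i}.
\]

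Next, substitute the expansion \eqref{eq:f2_nd_imperfect_tilde} on both sides. Write $\ket{\widetilde\rho^{(\CDD)}}=\sum_{\mathbf k}\frac{w_{\mathbf k}}{\sqrt{|\CEE_{\mathbf k}|}}\sum_{\mathbf y_2\in\CDD_{\mathbf k}}(-1)^{\mathbf v\cdot\mathbf y_2}\ket{B^\top\mathbf y_2}$. Applying $X^{\mathbf b_i}$ sends $\ket{B^\top\mathbf y_2}$ to $\ket{B^\top(\mathbf y_2+\mathbf e_i)}$, because $\mathbf b_i=B^\top\mathbf e_i$. Taking the inner product with $\bra{B^\top\mathbf y_1}$ for $\mathbf y_1\in\CDD_{\mathbf j}$ gives a nonzero contribution exactly when $B^\top\mathbf y_1=B^\top(\mathbf y_2+\mathbf e_i)$, i.e.\ when $B^\top(\mathbf y_1-\mathbf y_2+\mathbf e_i)=0$ over $\mathbb F_2$. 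This is precisely membership of $(\mathbf y_1,\mathbf y_2)$ in $S^{(i,\CDD)}_{\mathbf j,\mathbf k}$.

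The sign collected is $(-1)^{\mathbf v\cdot\mathbf y_1+\mathbf v\cdot\mathbf y_2+v_i}=(-1)^{\mathbf v\cdot(\mathbf y_1-\mathbf y_2+\mathbf e_i)}$ in characteristic $2$. With real $w$ (or conjugated $\bar w_{\mathbf j}$), the coefficient of $\bar w_{\mathbf j}w_{\mathbf k}$ is exactly $A^{(\CDD)}(\mathbf j,\mathbf k)$ in \eqref{eq:f2_nd_AD}. Dividing by the norm $\braket{\rho^{(\CDD)}|\rho^{(\CDD)}}=\braket{\widetilde\rho^{(\CDD)}|\widetilde\rho^{(\CDD)}}$, which is preserved by the unitary $H^{\otimes n}$, gives the claim.

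The only subtle point is bookkeeping rather than an obstacle. Distinct $\mathbf y_1\in\CDD$ have distinct syndromes, but the syndrome-register basis states $\ket{B^\top\mathbf y}$ are used directly, so no injectivity is actually needed for the inner-product evaluation: each pair $(\mathbf y_1,\mathbf y_2)$ is counted once with the indicator of syndrome equality. The remaining care is to check the sign conventions, i.e.\ which side carries $\mathbf e_i$ and that $-=+$ mod $2$.
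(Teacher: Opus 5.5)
Your proposal is correct and takes essentially the same route as the paper: write $H_{\mathbf g}=\sum_i c_i(-1)^{v_i}Z^{\mathbf b_i}$, conjugate by $H^{\otimes n}$ to obtain the syndrome shifts $X^{\mathbf b_i}$, expand $\ket{\widetilde\rho^{(\CDD)}}$ bilinearly, and read off the sign $(-1)^{\mathbf v\cdot(\mathbf y_1-\mathbf y_2+\mathbf e_i)}$ together with the syndrome-equality indicator that defines $S^{(i,\CDD)}_{\mathbf j,\mathbf k}$. Your remark that injectivity of syndromes on $\CDD$ is not needed for this lemma is also accurate; it is used only for the explicit norm formula \eqref{eq:f2_nd_norm}.
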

\begin{proof}
Recall that
\[
H_{\mathbf g}
=
\sum_{\Bxx\in \BFF_2^n}F_c(\Bxx)\ket{\Bxx}\bra{\Bxx}
=
\sum_{i=1}^m c_i H_{f_i},
\qquad
H_{f_i}=(-1)^{v_i}Z^{\mathbf b_i}.
\]
Hence
\[
\langle s_{\mathbf g}^{(\CDD)}\rangle
=
\frac{\bra{\rho^{(\CDD)}}H_{\mathbf g}\ket{\rho^{(\CDD)}}}{\braket{\rho^{(\CDD)}|\rho^{(\CDD)}}}.
\]
Using $H^{\otimes n}Z^{\mathbf b_i}H^{\otimes n}=X^{\mathbf b_i}$ and the formula of $\ket{\widetilde\rho^{(\CDD)}}$ in \eqref{eq:f2_nd_imperfect_tilde}, we obtain
\begin{align*}
&\bra{\rho^{(\CDD)}}H_{f_i}\ket{\rho^{(\CDD)}} \\
=&(-1)^{v_i}
\sum_{\mathbf j,\mathbf k\in T_l}
\frac{\overline{w_{\mathbf j}}w_{\mathbf k}}{\sqrt{|\CEE_{\mathbf j}|\,|\CEE_{\mathbf k}|}}
\sum_{\substack{\mathbf y_1\in \CDD_{\mathbf j}\\ \mathbf y_2\in \CDD_{\mathbf k}}}
(-1)^{\mathbf v\cdot(\mathbf y_1-\mathbf y_2)}
\braket{B^\top \mathbf y_1|X^{\mathbf b_i}|B^\top \mathbf y_2} \\
=&
\sum_{\mathbf j,\mathbf k\in T_l}
\frac{\overline{w_{\mathbf j}}w_{\mathbf k}}{\sqrt{|\CEE_{\mathbf j}|\,|\CEE_{\mathbf k}|}}
\sum_{\substack{\mathbf y_1\in \CDD_{\mathbf j}\\ \mathbf y_2\in \CDD_{\mathbf k}}}
(-1)^{\mathbf v\cdot(\mathbf y_1-\mathbf y_2+\mathbf e_i)}
\braket{B^\top \mathbf y_1|B^\top(\mathbf y_2-\mathbf e_i)}.
\end{align*}
Therefore
\[
\bra{\rho^{(\CDD)}}H_{\mathbf g}\ket{\rho^{(\CDD)}}
=
\mathbf w^\dagger A^{(\CDD)}\mathbf w,
\]
where $A^{(\CDD)}$ is given by \eqref{eq:f2_nd_AD}. Dividing by
$\braket{\rho^{(\CDD)}|\rho^{(\CDD)}}$
yields the claim.
\end{proof}

Replacing $\CDD_{\mathbf j}$ by $\CEE_{\mathbf j}$ in the construction above, we obtain the perfect matrix $A^{(\CEE)}$.

\begin{lem}\label{lem:f2_nd_average}
If $v_1,\ldots,v_m$ are chosen independently and uniformly at random from $\BFF_2$, then
\[
\Expect_{v_1,\ldots,v_m}A^{(\CEE)}=A^{(\mathbf g,l)},
\]
where  $A^{(\mathbf{g},l)}$ is a $|T_l|\times |T_l|$ matrix whose columns and rows are labeled by $T_l$,
and the $(\mathbf{j},\mathbf{k})$-entry of $A^{(\mathbf{g},l)}$ is
\begin{align}
A^{(\mathbf{g},l)}(\mathbf{j},\mathbf{k}) = \left\{
\begin{aligned}
&g_t\sqrt{j_t(m_t-j_t+1)}, && \text{ if } \mathbf{j}=\mathbf{k}+\mathbf{e}_t \text{ for some }t,\\
&g_t\sqrt{k_t(m_t-k_t+1)}, && \text{ if } \mathbf{j}=\mathbf{k}-\mathbf{e}_t \text{ for some }t,\\
&0, && \text{ otherwise. }
\end{aligned}
\right.
\end{align}
\end{lem}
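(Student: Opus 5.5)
The plan is to compute $\Expect_{\mathbf v}A^{(\CEE)}(\mathbf j,\mathbf k)$ entrywise from \eqref{eq:f2_nd_AD}, with $\CDD$ replaced by $\CEE$:
\[
A^{(\CEE)}(\mathbf j,\mathbf k)
=
\frac{1}{\sqrt{|\CEE_{\mathbf j}|\,|\CEE_{\mathbf k}|}}
\sum_{i=1}^m c_i
\sum_{\substack{(\mathbf y_1,\mathbf y_2)\in\CEE_{\mathbf j}\times\CEE_{\mathbf k}\\ B^\top(\mathbf y_1-\mathbf y_2+\mathbf e_i)=0}}
(-1)^{\mathbf v\cdot(\mathbf y_1-\mathbf y_2+\mathbf e_i)}.
\]
Since the $v_i$ are independent and uniform, $\Expect_{\mathbf v}(-1)^{\mathbf v\cdot \mathbf z}=\mathbf 1[\mathbf z=\mathbf 0]$ for $\mathbf z\in\BFF_2^m$. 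Therefore only pairs with $\mathbf y_1+\mathbf y_2+\mathbf e_i=\mathbf 0$ (over $\BFF_2$) survive, and for these the parity-check condition $B^\top(\cdot)=0$ holds automatically. No distance assumption is needed, because we are averaging over $\mathbf v$ and not using distinctness of syndromes. This reduces the problem to counting, for fixed $i$, the pairs $(\mathbf y_1,\mathbf y_2)\in\CEE_{\mathbf j}\times\CEE_{\mathbf k}$ with $\mathbf y_1=\mathbf y_2+\mathbf e_i$.

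Next I will fix $i\in S_t$. Flipping coordinate $i$ changes only the $t$-th block weight, and by exactly $\pm1$. So a surviving pair requires $\mathbf j=\mathbf k\pm\mathbf e_t$; all other entries, including the diagonal, have zero expectation.

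Suppose $\mathbf j=\mathbf k+\mathbf e_t$. The pairs correspond to $\mathbf y_2\in\CEE_{\mathbf k}$ with $(y_2)_i=0$. Summing over $i\in S_t$ with $c_i=g_t$, each $\mathbf y_2\in\CEE_{\mathbf k}$ is counted $m_t-k_t$ times, which gives $g_t(m_t-k_t)|\CEE_{\mathbf k}|$.

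Dividing by $\sqrt{|\CEE_{\mathbf j}||\CEE_{\mathbf k}|}$ and using $\binom{m_t}{k_t+1}/\binom{m_t}{k_t}=(m_t-k_t)/(k_t+1)$ yields
\[
g_t\sqrt{(m_t-k_t)(k_t+1)}=g_t\sqrt{j_t(m_t-j_t+1)}.
\]
This is the stated entry; the other block factors of $|\CEE|$ cancel. The case $\mathbf j=\mathbf k-\mathbf e_t$ follows by symmetry: either count the $\mathbf y_2$ with $(y_2)_i=1$, giving $k_t$ choices of $i$, or use the symmetry of $A^{(\CEE)}$.

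The only step needing care is the bookkeeping of counting multiplicities and the normalization ratio; everything else is linearity of expectation. I expect no real obstacle.
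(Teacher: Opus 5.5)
Your proposal is correct and follows essentially the same route as the paper: averaging over $\mathbf v$ keeps only the pairs with $\mathbf y_1-\mathbf y_2+\mathbf e_i=0$, which forces $\mathbf j=\mathbf k\pm\mathbf e_t$ for $i\in S_t$, and the rest is a direct count. The only cosmetic difference is bookkeeping: the paper counts, for each fixed $i\in S_t$, the $\binom{m_t-1}{k_t}\prod_{s\neq t}\binom{m_s}{k_s}$ admissible pairs and multiplies by $m_t$, whereas you double count over $\mathbf y_2$ to get $(m_t-k_t)|\CEE_{\mathbf k}|$, which is the same number.
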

\begin{proof}
For fixed $i,\mathbf j,\mathbf k$, split
\[
S^{(i,\CEE)}_{\mathbf j,\mathbf k}
=
S^{(i,\CEE,0)}_{\mathbf j,\mathbf k}
\cup
S^{(i,\CEE,1)}_{\mathbf j,\mathbf k},
\]
where
\[
S^{(i,\CEE,0)}_{\mathbf j,\mathbf k}
:=
\{(\mathbf y_1,\mathbf y_2)\in S^{(i,\CEE)}_{\mathbf j,\mathbf k}: \mathbf y_1-\mathbf y_2+\mathbf e_i=0\}
\]
and
\[
S^{(i,\CEE,1)}_{\mathbf j,\mathbf k}
:=
\{(\mathbf y_1,\mathbf y_2)\in S^{(i,\CEE)}_{\mathbf j,\mathbf k}: \mathbf y_1-\mathbf y_2+\mathbf e_i\neq 0\}.
\]
If $(\mathbf y_1,\mathbf y_2)\in S^{(i,\CEE,1)}_{\mathbf j,\mathbf k}$, then
\[
\Expect_{v_1,\ldots,v_m}(-1)^{\mathbf v\cdot(\mathbf y_1-\mathbf y_2+\mathbf e_i)}=0.
\]
Hence only the pairs in $S^{(i,\CEE,0)}_{\mathbf j,\mathbf k}$ contribute to the expectation.

If $S^{(i,\CEE,0)}_{\mathbf j,\mathbf k}\neq \emptyset$, then $\mathbf y_2=\mathbf y_1+\mathbf e_i$.
Thus $i\in S_t$ for some $t$, and necessarily $\mathbf j=\mathbf k+\mathbf e_t$ or
$\mathbf j=\mathbf k-\mathbf e_t$.
Assume for instance that $\mathbf j=\mathbf k+\mathbf e_t$.
Then for each fixed $i\in S_t$,
\[
\left|S^{(i,\CEE,0)}_{\mathbf k+\mathbf e_t,\mathbf k}\right|
=
\binom{m_t-1}{k_t}\prod_{s\neq t}\binom{m_s}{k_s}.
\]
Since $c_i=g_t$ on $S_t$, we obtain
\begin{align*}
&\Expect_{v_1,\ldots,v_m}A^{(\CEE)}(\mathbf k+\mathbf e_t,\mathbf k) \\
=&
\frac{g_t m_t\binom{m_t-1}{k_t}\prod_{s\neq t}\binom{m_s}{k_s}}
{\sqrt{\prod_{s=1}^N \binom{m_s}{k_s+\delta_{st}}\binom{m_s}{k_s}}}
=g_t\sqrt{(k_t+1)(m_t-k_t)} \\
=&A^{(\mathbf g,l)}(\mathbf k+\mathbf e_t,\mathbf k).
\end{align*}
The case $\mathbf j=\mathbf k-\mathbf e_t$ is the same by symmetry, and if
$\mathbf j-\mathbf k\notin\{\pm \mathbf e_1,\ldots,\pm \mathbf e_N\}$ then the relevant set is empty. Hence
$\Expect_{v_1,\ldots,v_m}A^{(\CEE)}=A^{(\mathbf g,l)}$.
\end{proof}

For fixed $i,\mathbf j,\mathbf k$, define
\[
T^{(i,\CFF)}_{\mathbf j,\mathbf k}
:=
\left\{(\mathbf y_1,\mathbf y_2)
\in
\CEE_{\mathbf j}\times \CFF_{\mathbf k}
\cup
\CFF_{\mathbf j}\times \CEE_{\mathbf k}:
\mathbf y_1-\mathbf y_2+\mathbf e_i=0\right\}.
\]

\begin{lem}\label{lem:f2_nd_decomposition}
If $v_1,\ldots,v_m$ are chosen independently and uniformly at random from $\BFF_2$, then
\[
\Expect_{v_1,\ldots,v_m}A^{(\CDD)}
=
A^{(\mathbf g,l)}-D^{(\CFF)},
\]
where $D^{(\CFF)}$ is the symmetric matrix indexed by $T_l$ whose $(\mathbf j,\mathbf k)$-entry is
\begin{equation}\label{eq:f2_nd_DF}
D^{(\CFF)}(\mathbf j,\mathbf k)
=
\frac{1}{\sqrt{|\CEE_{\mathbf j}|\,|\CEE_{\mathbf k}|}}
\sum_{i=1}^m c_i
\left|T^{(i,\CFF)}_{\mathbf j,\mathbf k}\right|.
\end{equation}
In particular, $D^{(\CFF)}(\mathbf j,\mathbf k)=0$ unless $\mathbf j-\mathbf k=\pm \mathbf e_t$ for some $t$.
\end{lem}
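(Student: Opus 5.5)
We need to prove Lemma \ref{lem:f2_nd_decomposition}: $\Expect_{\mathbf v} A^{(\CDD)} = A^{(\mathbf g,l)} - D^{(\CFF)}$. The plan is to reuse the averaging argument of Lemma \ref{lem:f2_nd_average} verbatim for $A^{(\CDD)}$, and then to express the surviving combinatorial count through inclusion--exclusion over $\CEE_{\mathbf j}=\CDD_{\mathbf j}\sqcup\CFF_{\mathbf j}$.

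\textbf{Step 1: averaging over $\mathbf v$.} Fix $i,\mathbf j,\mathbf k$ and split $S^{(i,\CDD)}_{\mathbf j,\mathbf k}$ according to whether $\mathbf y_1-\mathbf y_2+\mathbf e_i$ is zero. For every pair with $\mathbf y_1-\mathbf y_2+\mathbf e_i\neq 0$, independence and uniformity of the $v_s$ give $\Expect(-1)^{\mathbf v\cdot(\mathbf y_1-\mathbf y_2+\mathbf e_i)}=0$. For pairs with $\mathbf y_1-\mathbf y_2+\mathbf e_i=0$, the sign is identically $1$. Hence
\[
\Expect_{\mathbf v}A^{(\CDD)}(\mathbf j,\mathbf k)=\frac{1}{\sqrt{|\CEE_{\mathbf j}||\CEE_{\mathbf k}|}}\sum_{i}c_i\,\bigl|Z^{(i)}(\CDD_{\mathbf j},\CDD_{\mathbf k})\bigr|,
\]
where $Z^{(i)}(X,Y):=\{(\mathbf y_1,\mathbf y_2)\in X\times Y:\mathbf y_2=\mathbf y_1+\mathbf e_i\}$. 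Note that the syndrome condition is automatically satisfied on this set. The same computation with $\CEE$ in place of $\CDD$ is exactly what Lemma \ref{lem:f2_nd_average} evaluates, namely $\sum_i c_i|Z^{(i)}(\CEE_{\mathbf j},\CEE_{\mathbf k})|/\sqrt{\cdots}=A^{(\mathbf g,l)}(\mathbf j,\mathbf k)$.

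\textbf{Step 2: inclusion--exclusion.} Since $\CEE_{\mathbf j}\times\CEE_{\mathbf k}$ is the disjoint union of $\CDD_{\mathbf j}\times\CDD_{\mathbf k}$ and $(\CEE_{\mathbf j}\times\CFF_{\mathbf k})\cup(\CFF_{\mathbf j}\times\CEE_{\mathbf k})$, intersecting with the constraint $\mathbf y_2=\mathbf y_1+\mathbf e_i$ gives
\[
|Z^{(i)}(\CEE_{\mathbf j},\CEE_{\mathbf k})|=|Z^{(i)}(\CDD_{\mathbf j},\CDD_{\mathbf k})|+|T^{(i,\CFF)}_{\mathbf j,\mathbf k}|.
\]
Here the union in the definition of $T^{(i,\CFF)}_{\mathbf j,\mathbf k}$ is a set union, so pairs lying in $\CFF_{\mathbf j}\times\CFF_{\mathbf k}$ are counted once. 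This must be stated explicitly, and it is the one point needing care. Multiplying by $c_i/\sqrt{|\CEE_{\mathbf j}||\CEE_{\mathbf k}|}$ and summing over $i$ yields $\Expect A^{(\CDD)}=A^{(\mathbf g,l)}-D^{(\CFF)}$.

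\textbf{Step 3: symmetry and support of $D^{(\CFF)}$.} Over $\mathbb F_2$ the condition $\mathbf y_1-\mathbf y_2+\mathbf e_i=0$ is symmetric under swapping $\mathbf y_1$ and $\mathbf y_2$. Consequently $(\mathbf y_1,\mathbf y_2)\mapsto(\mathbf y_2,\mathbf y_1)$ is a bijection $T^{(i,\CFF)}_{\mathbf j,\mathbf k}\to T^{(i,\CFF)}_{\mathbf k,\mathbf j}$, so $D^{(\CFF)}$ is symmetric. For the support, $\mathbf y_2=\mathbf y_1+\mathbf e_i$ with $i\in S_t$ changes only the $t$-th block weight, and it changes that weight by exactly $\pm1$. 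Therefore $T^{(i,\CFF)}_{\mathbf j,\mathbf k}=\emptyset$ unless $\mathbf j-\mathbf k=\pm\mathbf e_t$.

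None of these steps is a real obstacle; the only subtle point is the disjointness bookkeeping in Step 2.
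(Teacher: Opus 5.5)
Your proposal is correct and follows the paper's proof essentially verbatim. You average over $\mathbf v$ so that only pairs with $\mathbf y_2=\mathbf y_1+\mathbf e_i$ survive, split $S^{(i,\CEE,0)}_{\mathbf j,\mathbf k}$ disjointly into $S^{(i,\CDD,0)}_{\mathbf j,\mathbf k}$ and $T^{(i,\CFF)}_{\mathbf j,\mathbf k}$, and then invoke Lemma~\ref{lem:f2_nd_average}. Your Step 3 also checks the symmetry and support of $D^{(\CFF)}$ explicitly, which the paper asserts without a separate argument.
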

\begin{proof}
For each fixed $i,\mathbf j,\mathbf k$, we have the disjoint decomposition
\[
S^{(i,\CEE,0)}_{\mathbf j,\mathbf k}
=
S^{(i,\CDD,0)}_{\mathbf j,\mathbf k}
\cup
T^{(i,\CFF)}_{\mathbf j,\mathbf k}.
\]
Averaging over $\mathbf v$ keeps only the pairs satisfying
$\mathbf y_1-\mathbf y_2+\mathbf e_i=0$, so that
\[
\Expect_{v_1,\ldots,v_m}A^{(\CEE)}(\mathbf j,\mathbf k)
=
\Expect_{v_1,\ldots,v_m}A^{(\CDD)}(\mathbf j,\mathbf k)
+
D^{(\CFF)}(\mathbf j,\mathbf k).
\]
Hence, we get the result from Lemma~\ref{lem:f2_nd_average}.
\end{proof}

For $i\in S_t$ and $\mathbf j+\mathbf e_t\in T_l$, define
\[
\CEE_{\mathbf j}^{(0,i)}:=\{\mathbf y\in \CEE_{\mathbf j}: y_i=0\},
\qquad
\CEE_{\mathbf j+\mathbf e_t}^{(1,i)}:=\{\mathbf y\in \CEE_{\mathbf j+\mathbf e_t}: y_i=1\},
\]
\[
\CFF_{\mathbf j}^{(0,i)}:=\CFF_{\mathbf j}\cap \CEE_{\mathbf j}^{(0,i)},
\qquad
\CFF_{\mathbf j+\mathbf e_t}^{(1,i)}:=\CFF_{\mathbf j+\mathbf e_t}\cap \CEE_{\mathbf j+\mathbf e_t}^{(1,i)}.
\]
We then define the blockwise effective failure rates
\begin{equation}\label{eq:f2_nd_gamma0_appendix}
\widetilde\gamma_{\mathbf j,t}^{(0)}
:=
\frac{\sum_{i\in S_t}|\CFF_{\mathbf j}^{(0,i)}|}{(m_t-j_t)|\CEE_{\mathbf j}|},
\qquad
\widetilde\gamma_{\mathbf j+\mathbf e_t,t}^{(1)}
:=
\frac{\sum_{i\in S_t}|\CFF_{\mathbf j+\mathbf e_t}^{(1,i)}|}{(j_t+1)|\CEE_{\mathbf j+\mathbf e_t}|},
\end{equation}
for every admissible pair $(\mathbf j,t)$ with $\mathbf j+\mathbf e_t\in T_l$, and set
\begin{equation}\label{eq:f2_nd_gamma_max}
\widetilde\gamma_{\max}
:=
\max_{\substack{\mathbf j\in T_l,\ 1\le t\le N\\ \mathbf j+\mathbf e_t\in T_l}}
\frac{\widetilde\gamma_{\mathbf j,t}^{(0)}+\widetilde\gamma_{\mathbf j+\mathbf e_t,t}^{(1)}}{2}.
\end{equation}
These quantities are the block analogue of the failure rates in the one-dimensional analysis: they measure the failure proportion not on $\CEE_{\mathbf j}$ itself, but on pairs $(\mathbf y,i)$ along the edge
$\mathbf j\leftrightarrow \mathbf j+\mathbf e_t$.

\begin{lem}\label{lem:f2_nd_failure_bound}
With $\widetilde\gamma_{\max}$ defined in \eqref{eq:f2_nd_gamma_max}, we have
\[
\|D^{(\CFF)}\|
\le
2\widetilde\gamma_{\max}\sum_{t=1}^N g_t(m_t+1).
\]
\end{lem}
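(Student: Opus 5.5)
We bound the operator norm of $D^{(\CFF)}$ by exploiting that it is a nonnegative symmetric matrix supported only on the edges $\mathbf j\leftrightarrow\mathbf j+\mathbf e_t$ of the lattice $T_l$ (Lemma~\ref{lem:f2_nd_decomposition}). The plan is to decompose
\[
D^{(\CFF)}=\sum_{t=1}^N D_t,
\]
where $D_t$ collects the entries with $\mathbf j-\mathbf k=\pm\mathbf e_t$. It then suffices to show
\[
\|D_t\|\le 2\widetilde\gamma_{\max}\,g_t(m_t+1)
\]
and apply the triangle inequality. For fixed $t$, $D_t$ is a direct sum of symmetric tridiagonal matrices with zero diagonal, one for each line in direction $\mathbf e_t$ (fixing the other coordinates). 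So it is enough to bound a zero-diagonal Jacobi matrix with off-diagonal entries $d_{\mathbf j}:=D^{(\CFF)}(\mathbf j+\mathbf e_t,\mathbf j)$.

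\textbf{Step 1: explicit edge entries.} For $i\in S_t$, a pair in $T^{(i,\CFF)}_{\mathbf j+\mathbf e_t,\mathbf j}$ has the form $(\mathbf y_1,\mathbf y_2)$ with $\mathbf y_1=\mathbf y_2+\mathbf e_i$, and either $\mathbf y_2\in\CFF_{\mathbf j}$ with $(y_2)_i=0$, or $\mathbf y_1\in\CFF_{\mathbf j+\mathbf e_t}$ with $(y_1)_i=1$. The map is a bijection onto these sets, and $i\notin S_t$ contributes nothing. Hence
\[
|T^{(i,\CFF)}|\le|\CFF^{(0,i)}_{\mathbf j}|+|\CFF^{(1,i)}_{\mathbf j+\mathbf e_t}|.
\]
Summing over $i\in S_t$ with $c_i=g_t$ and using the definitions \eqref{eq:f2_nd_gamma0_appendix} gives
\[
d_{\mathbf j}\le g_t\,\frac{(m_t-j_t)|\CEE_{\mathbf j}|\,\widetilde\gamma^{(0)}_{\mathbf j,t}+(j_t+1)|\CEE_{\mathbf j+\mathbf e_t}|\,\widetilde\gamma^{(1)}_{\mathbf j+\mathbf e_t,t}}{\sqrt{|\CEE_{\mathbf j}||\CEE_{\mathbf j+\mathbf e_t}|}}.
\]
Since $|\CEE_{\mathbf j+\mathbf e_t}|/|\CEE_{\mathbf j}|=(m_t-j_t)/(j_t+1)$, both terms carry the common factor $\sqrt{(j_t+1)(m_t-j_t)}$. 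Therefore
\[
d_{\mathbf j}\le g_t\sqrt{(j_t+1)(m_t-j_t)}\bigl(\widetilde\gamma^{(0)}_{\mathbf j,t}+\widetilde\gamma^{(1)}_{\mathbf j+\mathbf e_t,t}\bigr)\le 2\widetilde\gamma_{\max}\,g_t\sqrt{(j_t+1)(m_t-j_t)}.
\]

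\textbf{Step 2: norm of the Jacobi matrix.} Because $D_t$ is entrywise nonnegative, its norm is attained by a nonnegative vector, so entrywise domination is monotone. Thus $\|D_t\|\le 2\widetilde\gamma_{\max}g_t\|J_t\|$, where $J_t$ is the tridiagonal matrix with off-diagonal entries $\sqrt{(j+1)(m_t-j)}$, i.e.\ the full Krawtchouk/spin matrix for $j=0,\dots,m_t$ (truncations only decrease the norm). This is the matrix of $H_t$ on the symmetric subspace, i.e.\ of $\sum_{i\in S_t}X_i$ restricted to Dicke states, so $\|J_t\|=m_t\le m_t+1$. Alternatively, Gershgorin with the bound $\sqrt{(j+1)(m_t-j)}\le (m_t+1)/2$ gives each row sum $\le m_t+1$ directly, which matches the stated constant exactly.

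\textbf{Main obstacle.} The delicate point is Step 1: correctly matching the failure pairs along each edge with the sets $\CFF^{(0,i)}$ and $\CFF^{(1,i)}$, avoiding double counting between the two parts of the union, and checking that the normalizations $(m_t-j_t)$ and $(j_t+1)$ combine into the same square-root factor. Once the entrywise bound is in place, the norm estimate via Gershgorin is routine.
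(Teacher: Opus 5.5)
Your proposal is correct and follows essentially the paper's route: the same entrywise bound $D^{(\CFF)}(\mathbf j+\mathbf e_t,\mathbf j)\le 2\widetilde\gamma_{\max}\,g_t\sqrt{(j_t+1)(m_t-j_t)}$, obtained from $|T^{(i,\CFF)}|\le|\CFF^{(0,i)}_{\mathbf j}|+|\CFF^{(1,i)}_{\mathbf j+\mathbf e_t}|$ and $|\CEE_{\mathbf j+\mathbf e_t}|/|\CEE_{\mathbf j}|=(m_t-j_t)/(j_t+1)$, followed by a Gershgorin row-sum bound; your split by direction plus the triangle inequality is equivalent to the paper's single Gershgorin estimate with two neighbours per direction. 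Your alternative observation, that each directional block is entrywise dominated by the spin matrix of norm $m_t$, even gives the slightly sharper bound $2\widetilde\gamma_{\max}\sum_t g_t m_t$.
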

\begin{proof}
Fix $\mathbf j\in T_l$ and $t\in\{1,\ldots,N\}$ such that $\mathbf j+\mathbf e_t\in T_l$.
Only the coordinates $i\in S_t$ can contribute to
$D^{(\CFF)}(\mathbf j,\mathbf j+\mathbf e_t)$.
Moreover, the relation $\mathbf y_2=\mathbf y_1+\mathbf e_i$ gives
\[
\left|T^{(i,\CFF)}_{\mathbf j,\mathbf j+\mathbf e_t}\right|
\le
|\CFF_{\mathbf j}^{(0,i)}|+|\CFF_{\mathbf j+\mathbf e_t}^{(1,i)}|.
\]
Therefore, by \eqref{eq:f2_nd_DF},
\begin{align*}
&D^{(\CFF)}(\mathbf j,\mathbf j+\mathbf e_t)  
= 
\frac{g_t}{\sqrt{|\CEE_{\mathbf j}|\,|\CEE_{\mathbf j+\mathbf e_t}|}}
\sum_{i\in S_t}
\Big(|\CFF_{\mathbf j}^{(0,i)}|+|\CFF_{\mathbf j+\mathbf e_t}^{(1,i)}|\Big).
\end{align*}
Using \eqref{eq:f2_nd_gamma0_appendix} and the identity
\[
|\CEE_{\mathbf j+\mathbf e_t}|
=
\frac{m_t-j_t}{j_t+1}|\CEE_{\mathbf j}|,
\]
we obtain
\begin{align*}
&D^{(\CFF)}(\mathbf j,\mathbf j+\mathbf e_t) \\
\le&
\frac{g_t(m_t-j_t)|\CEE_{\mathbf j}|}{\sqrt{|\CEE_{\mathbf j}|\,|\CEE_{\mathbf j+\mathbf e_t}|}}
\widetilde\gamma_{\mathbf j,t}^{(0)}
+
\frac{g_t(j_t+1)|\CEE_{\mathbf j+\mathbf e_t}|}{\sqrt{|\CEE_{\mathbf j}|\,|\CEE_{\mathbf j+\mathbf e_t}|}}
\widetilde\gamma_{\mathbf j+\mathbf e_t,t}^{(1)} \\
=&
g_t\sqrt{(j_t+1)(m_t-j_t)}
\Big(\widetilde\gamma_{\mathbf j,t}^{(0)}+\widetilde\gamma_{\mathbf j+\mathbf e_t,t}^{(1)}\Big) \\
\le&
2g_t\sqrt{(j_t+1)(m_t-j_t)}\,\widetilde\gamma_{\max}
\le g_t(m_t+1)\widetilde\gamma_{\max},
\end{align*}
where in the last step we used $2\sqrt{ab}\le a+b$.
The same bound holds for $D^{(\CFF)}(\mathbf j+\mathbf e_t,\mathbf j)$ by symmetry.

Hence every row indexed by $\mathbf j$ has at most two nonzero neighbors in each block direction $t$, namely $\mathbf j\pm \mathbf e_t$, and so
\[
\sum_{\mathbf k\in T_l}\left|D^{(\CFF)}(\mathbf j,\mathbf k)\right|
\le
2\widetilde\gamma_{\max}\sum_{t=1}^N g_t(m_t+1).
\]
By Gershgorin's circle theorem, the row-sum bound now gives the claimed estimate for $\|D^{(\CFF)}\|$.
\end{proof}

\begin{mproof}[ Proof of Theorem \ref{thm:f2_nd_average}]
By Lemma~\ref{lem:f2_nd_general_expression} and the fact that
\eqref{eq:f2_nd_norm} is independent of $\mathbf v$, we have
\[
\Expect_{v_1,\ldots,v_m}\langle s_{\mathbf g}^{(\CDD)}\rangle
=
\frac{\Expect_{v_1,\ldots,v_m}(\mathbf w^\dagger A^{(\CDD)}\mathbf w)}{\braket{\rho^{(\CDD)}|\rho^{(\CDD)}}}.
\]
Moreover,
\[
\braket{\rho^{(\CDD)}|\rho^{(\CDD)}}
=
\sum_{\mathbf j\in T_l}|w_{\mathbf j}|^2(1-\gamma_{\mathbf j})
\ge
(1-\gamma_{\max})\|\mathbf w\|_2^2,
\]
and also
\[
\braket{\rho^{(\CDD)}|\rho^{(\CDD)}}
\le
\|\mathbf w\|_2^2.
\]
By Lemma~\ref{lem:f2_nd_decomposition},
\[
\Expect_{v_1,\ldots,v_m}A^{(\CDD)}
=
A^{(\mathbf g,l)}-D^{(\CFF)}.
\]
Therefore
\begin{align*}
&\Expect_{v_1,\ldots,v_m}\langle s_{\mathbf g}^{(\CDD)}\rangle  
= 
\frac{\mathbf w^\dagger A^{(\mathbf g,l)}\mathbf w}{\braket{\rho^{(\CDD)}|\rho^{(\CDD)}}}
-
\frac{\mathbf w^\dagger D^{(\CFF)}\mathbf w}{\braket{\rho^{(\CDD)}|\rho^{(\CDD)}}}.
\end{align*}
Since $\mathbf w\ge 0$ entrywise and $A^{(\mathbf g,l)}$ has nonnegative entries, the first term is nonnegative; hence
\[
\frac{\mathbf w^\dagger A^{(\mathbf g,l)}\mathbf w}{\braket{\rho^{(\CDD)}|\rho^{(\CDD)}}}
\ge
\frac{\mathbf w^\dagger A^{(\mathbf g,l)}\mathbf w}{\|\mathbf w\|_2^2}.
\]
For the second term, use
$|\mathbf w^\dagger D^{(\CFF)}\mathbf w|\le \|D^{(\CFF)}\|\,\|\mathbf w\|_2^2$
and Lemma~\ref{lem:f2_nd_failure_bound} to get
\[
-\frac{\mathbf w^\dagger D^{(\CFF)}\mathbf w}{\braket{\rho^{(\CDD)}|\rho^{(\CDD)}}}
\ge
-2\,\frac{\widetilde\gamma_{\max}}{1-\gamma_{\max}}
\sum_{t=1}^N g_t(m_t+1).
\]
Combining the two estimates proves the theorem.
\end{mproof}

\section{Detailed Proof for Block-Structured Hamiltonian DQI}\label{appen:C}

We first establish the following lemma to compute the coefficients $r_{\mathbf{y}^{(1)}, \dots, \mathbf{y}^{(N)}}$ (equivalently $r_{\mathbf{j}}$).

\begin{lem}\label{lem:block_bj_explicit}
Given a commuting Hamiltonian $H_{\mathbf g}
=
\sum_{t=1}^N g_t \sum_{a=1}^{m_t} P_{t,a}$, and a degree-$l$ polynomial 
$P(x)=\sum_{j=0}^l a_j x^j$. Then the coefficient $r_{\Byy^{(1)},\dots,\Byy^{(N)}}$ defined  in \eqref{eq:def_r} can be rewritten as 
\begin{align}
    r_{\Byy^{(1)},\dots,\Byy^{(N)}}
=
2^{-m}
\sum_{s_1=0}^{m_1}\cdots \sum_{s_N=0}^{m_N}
\left(
\prod_{t=1}^N K_{s_t}(|\Byy^{(t)}|;m_t)
\right)
P\!\left(
\sum_{t=1}^N g_t(m_t-2s_t)
\right),
\label{eq:block_bj_explicit_krawtchouk}
\end{align}
for $\sum^N_{t=1}|\Byy^{(t)}|\leq l$. Here, the 
 Krawtchouk polynomial $K_s(j;m)$ is defined as 
\[
K_s(j;m):=
\sum_{u=0}^{s}(-1)^u \binom{j}{u}\binom{m-j}{s-u},
\qquad
0\le s\le m.
\]

\end{lem}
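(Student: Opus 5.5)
Our plan is to identify $r_{\Byy^{(1)},\dots,\Byy^{(N)}}$ as a Fourier coefficient of the scalar function $\mathbf z\mapsto P(\sum_t g_t\sum_a z_{t,a})$ on the hypercube $\{\pm1\}^m$. We then evaluate that coefficient blockwise using the character identity \eqref{eq:single-block-character-identity}.

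First we introduce formal commuting variables $z_{t,a}$ with $z_{t,a}^2=1$, one for each Pauli $P_{t,a}$. The expansion proving the earlier lemma uses only commutativity and $P_{t,a}^2=I$. Hence the multilinear reduction of the polynomial $P\bigl(\sum_t g_t\sum_a z_{t,a}\bigr)$ equals
\[
Q(\mathbf z)=\sum_{\Byy}r_{\Byy}\prod_{t,a}z_{t,a}^{(\Byy^{(t)})_a},
\]
with exactly the coefficients $r_{\Byy}$ of \eqref{eq:def_r}. Indeed, reducing $\prod z_{t,a}^{(\bmu^{(t)})_a}$ modulo $z^2=1$ gives the monomial indexed by $\bmu\bmod 2$. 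As a function on $\{\pm1\}^m$, $Q$ agrees with $P(\sum_t g_t\sum_a z_{t,a})$. The restriction $\sum_t|\Byy^{(t)}|\le l$ is harmless here, since $r_{\Byy}$ vanishes automatically for larger weights: the parity constraint forces $|\bmu|\ge|\Byy|$. It is only kept to match the convention $r_{\mathbf j}=0$ beyond $l$.

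Next, by orthogonality of characters on $\{\pm1\}^m$,
\[
r_{\Byy}=2^{-m}\sum_{\mathbf z\in\{\pm1\}^m}P\Bigl(\sum_t g_t\sum_a z_{t,a}\Bigr)\prod_{t,a}z_{t,a}^{(\Byy^{(t)})_a}.
\]
We write $z_{t,a}=(-1)^{x_{t,a}}$ with $x^{(t)}\in\mathbb F_2^{m_t}$ and let $s_t=|x^{(t)}|$. Then $\sum_a z_{t,a}=m_t-2s_t$, and the character equals $\prod_t(-1)^{x^{(t)}\cdot\Byy^{(t)}}$. We group the sum by the block weights $(s_1,\dots,s_N)$. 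The $P$-factor depends only on $\mathbf s$, and the inner sum factorizes over blocks as
\[
\prod_t\sum_{|x^{(t)}|=s_t}(-1)^{x^{(t)}\cdot\Byy^{(t)}}=\prod_t K_{s_t}(|\Byy^{(t)}|;m_t).
\]
This is \eqref{eq:single-block-character-identity}, with the roles of $u$ and $z$ swapped; that swap is valid since the identity is symmetric in form. Substituting yields \eqref{eq:block_bj_explicit_krawtchouk}.

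The only delicate point is the first step. We must justify that the coefficients produced by the multinomial expansion in \eqref{eq:def_r} coincide with the Fourier coefficients of the function, that is, that the multilinear representation is unique. This holds because the characters $\prod z^{y}$ form a basis of functions on $\{\pm1\}^m$, and the multilinear reduction of $P$ is such a combination. The rest is routine bookkeeping.
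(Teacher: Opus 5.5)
Your proof is correct and is essentially the paper's argument. The paper inserts the blockwise parity indicator $\mathbf 1_{\bmu^{(t)}\equiv\Byy^{(t)}}=2^{-m_t}\sum_{\sigma^{(t)}}(\sigma^{(t)})^{\bmu^{(t)}+\Byy^{(t)}}$ into \eqref{eq:def_r} and resums with the multinomial theorem, reaching the same formula $r_{\Byy}=2^{-m}\sum_{\sigma}\prod_t(\sigma^{(t)})^{\Byy^{(t)}}P\bigl(\sum_t g_t\sum_a\sigma^{(t)}_a\bigr)$ that you obtain from multilinear reduction and character orthogonality; it then groups by block weights, counting $-1$'s inside and outside the support of $\Byy^{(t)}$, which is exactly your character-sum evaluation of $K_{s_t}$. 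One small remark: no role swap is needed when you invoke \eqref{eq:single-block-character-identity}, because summing $(-1)^{x^{(t)}\cdot\Byy^{(t)}}$ over $|x^{(t)}|=s_t$ with $\Byy^{(t)}$ fixed is that identity verbatim.
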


\begin{proof}
First, using a basic identity from the Fourier analysis of Boolean functions, we have:
\[
\mathbf 1_{u\equiv y\ (\bmod 2)}
=
\frac12\sum_{\varepsilon\in\{\pm1\}}\varepsilon^{u+y}.
\]
Extending this to each block $t \in [N]$, we obtain:
\begin{align}
\label{eq:identity}
  \mathbf 1_{\bmu^{(t)}\equiv \Byy^{(t)}\ (\bmod 2)}
=
2^{-m_t}
\sum_{\sigma^{(t)}\in \{\pm1\}^{m_t}}
(\sigma^{(t)})^{\bmu^{(t)}+\Byy^{(t)}},  
\end{align}
where $\sigma^{(t)}=(\sigma^{(t)}_1,\dots,\sigma^{(t)}_{m_t})\in \{\pm1\}^{m_t}$.
For any $\nu^{(t)}\in \Z_{\ge 0}^{m_t}$, we denote
\[
(\sigma^{(t)})^{\nu^{(t)}}
:=
\prod_{a=1}^{m_t} (\sigma^{(t)}_a)^{(\nu^{(t)})_a}.
\]
Recall that $r_{\mathbf{y}^{(1)}, \dots, \mathbf{y}^{(N)}}$, as defined in \eqref{eq:def_r}, is given by:  
\[
 r_{\Byy^{(1)},\ldots,\Byy^{(N)}}
=
\sum_{j=0}^l a_j\, j!
\sum_{\substack{
\bmu^{(t)}\in \Z_{\ge 0}^{m_t}\ \forall t,\\
\sum_{t=1}^N |\bmu^{(t)}|=j,\\
\bmu^{(t)}\equiv \Byy^{(t)}\ (\mathrm{mod}\ 2)\ \forall t
}}
\frac{
\prod_{t=1}^N g_t^{|\bmu^{(t)}|}
}{
\prod_{t=1}^N \bmu^{(t)}!
}.
\]
Substituting the identity from \eqref{eq:identity} into this expression, we have:
\begin{align}
    r_{\Byy^{(1)},\dots,\Byy^{(N)}}
&=
2^{-m}
\sum_{\sigma^{(1)}\in \{\pm1\}^{m_1}}
\cdots
\sum_{\sigma^{(N)}\in \{\pm1\}^{m_N}}
\left(\prod_{t=1}^N (\sigma^{(t)})^{\Byy^{(t)}}\right)
\nonumber\\
&\qquad \times
\sum_{j=0}^l a_j\, j!
\sum_{\substack{
\bmu^{(t)}\in \Z_{\ge 0}^{m_t}\ \forall t,\\
\sum_{t=1}^N |\bmu^{(t)}|=j
}}
\frac{
\prod_{t=1}^N g_t^{|\bmu^{(t)}|}(\sigma^{(t)})^{\bmu^{(t)}}}
{\prod_{t=1}^N \bmu^{(t)}!}.
\label{eq:block_rxy_sigma}
\end{align}

By the multinomial theorem, the innermost sum satisfies
\[
j!
\sum_{\substack{
\bmu^{(t)}\in \Z_{\ge 0}^{m_t}\ \forall t,\\
\sum_{t=1}^N |\bmu^{(t)}|=j
}}
\frac{
\prod_{t=1}^N g_t^{|\bmu^{(t)}|}(\sigma^{(t)})^{\bmu^{(t)}}}
{\prod_{t=1}^N \bmu^{(t)}!}
=
\left(
\sum_{t=1}^N g_t \sum_{a=1}^{m_t} \sigma^{(t)}_a
\right)^j.
\]
Substituting this back into \eqref{eq:block_rxy_sigma} and recognizing the expansion of the polynomial $P$, we obtain:
\begin{align}
r_{\Byy^{(1)},\dots,\Byy^{(N)}}
&=
2^{-m}
\sum_{\sigma^{(1)}\in \{\pm1\}^{m_1}}
\cdots
\sum_{\sigma^{(N)}\in \{\pm1\}^{m_N}}
\left(\prod_{t=1}^N (\sigma^{(t)})^{\Byy^{(t)}}\right)
P\!\left(
\sum_{t=1}^N g_t \sum_{a=1}^{m_t} \sigma^{(t)}_a
\right).
\label{eq:block_rxy_fourier_formula}
\end{align}

Now, consider the vectors $(\mathbf{y}^{(1)}, \dots, \mathbf{y}^{(N)})$ with Hamming weights $|\mathbf{y}^{(t)}| = j_t$. For each $t \in [N]$, let $S_t' := \{ a : (\mathbf{y}^{(t)})_a = 1 \}$ be the set of indices with non-zero entries, where $|S_t'| = j_t$. For any $\sigma^{(t)} \in \{ \pm 1 \}^{m_t}$, let $p_t$ and $q_t$ denote the number of $-1$ entries inside and outside the support $S_t'$, respectively:
\[
p_t:=\#\{a\in S_t':\sigma^{(t)}_a=-1\},
\qquad
q_t:=\#\{a\notin S_t':\sigma^{(t)}_a=-1\}.
\]
It follows that 
\[
(\sigma^{(t)})^{\Byy^{(t)}} = (-1)^{p_t},
\qquad
\sum_{a=1}^{m_t}\sigma^{(t)}_a = m_t-2(p_t+q_t).
\]
The number of such configurations for $\sigma^{(t)}$ is given by the product of binomial coefficients $\binom{j_t}{p_t} \binom{m_t - j_t}{q_t}$. 
Hence, 
\begin{align*}
    r_{\Byy^{(1)},\dots,\Byy^{(N)}}
=&
2^{-m}
\sum_{p_1=0}^{|\Byy^{(1)}|}\sum_{q_1=0}^{m_1-|\Byy^{(1)}|}
\cdots
\sum_{p_N=0}^{|\Byy^{(N)}|}\sum_{q_N=0}^{m_N-|\Byy^{(N)}|}
(-1)^{\sum_{t=1}^N p_t}\\
&\times
\left(
\prod_{t=1}^N
\binom{|\Byy^{(t)}|}{p_t}\binom{m_t-|\Byy^{(t)}|}{q_t}
\right)
P\!\left(
\sum_{t=1}^N g_t\bigl(m_t-2(p_t+q_t)\bigr)
\right).
\end{align*}
Now, let $s_t=p_t+q_t$, and consider the Krawtchouk polynomial~\cite{krawtchouk1929generalisation,nikiforov1991classical} defined as follows
\begin{align*}
    K_{s_t}(|\Byy^{(t)}|;m_t)
    =\sum_{p_t=0}^{s_t}
(-1)^{p_t}
\binom{|\Byy^{(t)}|}{p_t}\binom{m_t-|\Byy^{(t)}|}{s_t-p_t}.
\end{align*}
Therefore, we obtain the result in \eqref{eq:block_bj_explicit_krawtchouk}.

\end{proof}

\begin{mproof}[Proof of complexity in Theorem \ref{thm:block_pauli_poly_state}]
Let us analyze the time complexity in Theorem \ref{thm:block_pauli_poly_state}. 

First, in the step of reference state preparation. 
We first need to prepare the state
$$ \sum_{\mathbf{j} \in T_l} \gamma_{\mathbf{j}} \ket{j_1} \ot \cdots \ot\ket{j_N},$$
where the coefficient $\gamma_{\mathbf{j}} := \frac{\left( \prod_{t=1}^N \binom{m_t}{j_t} \right)^{1/2} r_{\mathbf{j}}}{\mathcal{N}}$.
By Lemma \ref{lem:block_bj_explicit}, each coefficient $\gamma_{\mathbf{j}}$ can be computed classically in $\mathrm{poly}(m, N, \prod_{t=1}^N m_t)$ time. Consequently, the total time required to prepare this weight state is $O(|T_l| \cdot \mathrm{poly}(m, N, \prod_{t=1}^N m_t))$.

Hence, the full reference state
$$\ket{R^l(H_{\mathbf{g}})} = \sum_{\mathbf{j} \in T_l} \gamma_{\mathbf{j}} \ket{D_{m_1, j_1}} \otimes \dots \otimes \ket{D_{m_N, j_N}},$$
which is a superposition of products of Dicke states, can be prepared using $O(\sum_t m^2_t)$ gates by applying the techniques from \cite{bartschi2022short, wang2024quantum}. We note that this step could potentially be further optimized using the sparse Dicke state preparation methods introduced in \cite{khattar2025verifiable}.

Second, in the step of control-Pauli operation, the preparation of Bell state need $O(n)$ gates, and 
the control-Pauli part need $O(m)$ control-Pauli gates. 

Third, in the step of coherent Bell measurement and decoding, 
the coherent Bell measurement  need $O(m)$ gates, and the classical decoding  is assumed to be efficient. 

Finally, the total complexity is $\text{poly}(m, n, |T_l|)$.
In addition, as $|T_l|\leq \prod_{t=1}^N (m_t+1)$. Hence, if $N$ is a fixed constant, then $|T_l|\leq m^N$. 
Therefore, the total complexity is $\text{poly}(m, n)$.

\end{mproof}

\end{appendix}

\bibliographystyle{unsrt}
\bibliography{reference}{}
\end{document}